\def\th@remark{%
	\thm@headfont{\bfseries}%
	\normalfont % body font
	\thm@preskip\topsep \divide\thm@preskip\tw@
	\thm@postskip\thm@preskip
}
\DeclarePairedDelimiter{\floor}{\lfloor}{\rfloor}
\DeclarePairedDelimiter{\ceil}{\lceil}{\rceil}
\newcommand{\overbar}[1]{\mkern 1.5mu\overline{\mkern-1.5mu#1\mkern-1.5mu}\mkern 1.5mu}
\newcommand{\Mod}[1]{\ \mathrm{mod}\ #1}
\newcommand\cmt[1]{\textcolor{black}{#1}}
\newcommand\dmt[1]{\textcolor{black}{#1}}
\begin{document}
%
% paper title
% Titles are generally capitalized except for words such as a, an, and, as,
% at, but, by, for, in, nor, of, on, or, the, to and up, which are usually
% not capitalized unless they are the first or last word of the title.
% Linebreaks \\ can be used within to get better formatting as desired.
% Do not put math or special symbols in the title.
\title{Capacity of Clustered Distributed Storage}
%
%
% author names and IEEE memberships
% note positions of commas and nonbreaking spaces ( ~ ) LaTeX will not break
% a structure at a ~ so this keeps an author's name from being broken across
% two lines.
% use \thanks{} to gain access to the first footnote area
% a separate \thanks must be used for each paragraph as LaTeX2e's \thanks
% was not built to handle multiple paragraphs
%

\author{Jy-yong~Sohn,~\IEEEmembership{Student Member,~IEEE,} Beongjun~Choi,~\IEEEmembership{Student Member,~IEEE,} Sung~Whan~Yoon,~\IEEEmembership{Member,~IEEE,}
%\author{Jy-yong~Sohn, Beongjun~Choi, Sung~Whan~Yoon,
        and~Jaekyun~Moon,~\IEEEmembership{Fellow,~IEEE}% <-this % stops a space
%\thanks{M. Shell was with the Department
%of Electrical and Computer Engineering, Georgia Institute of Technology, Atlanta,
%GA, 30332 USA e-mail: (see http://www.michaelshell.org/contact.html).}% <-this % stops a space
%\thanks{J. Doe and J. Doe are with Anonymous University.}% <-this % stops a space
\thanks{The authors are with the School of Electrical Engineering, Korea Advanced Institute of Science and Technology, Daejeon,
	34141, Republic of Korea (e-mail: \{jysohn1108, bbzang10, shyoon8\}@kaist.ac.kr, jmoon@kaist.edu). A part of this paper was presented \cite{sohn2016capacity} at the IEEE Conference on Communications (ICC), Paris, France, May 21-25, 2017.  
	This work is in part supported
	by the National Research Foundation of Korea under Grant No. 2016R1A2B4011298, and in part supported by the ICT R\&D program of MSIP/IITP [2016-0-00563, Research on Adaptive Machine Learning Technology Development for Intelligent Autonomous Digital Companion].}
}
% note the % following the last \IEEEmembership and also \thanks - 
% these prevent an unwanted space from occurring between the last author name
% and the end of the author line. i.e., if you had this:
% 
% \author{....lastname \thanks{...} \thanks{...} }
%                     ^------------^------------^----Do not want these spaces!
%
% a space would be appended to the last name and could cause every name on that
% line to be shifted left slightly. This is one of those "LaTeX things". For
% instance, "\textbf{A} \textbf{B}" will typeset as "A B" not "AB". To get
% "AB" then you have to do: "\textbf{A}\textbf{B}"
% \thanks is no different in this regard, so shield the last } of each \thanks
% that ends a line with a % and do not let a space in before the next \thanks.
% Spaces after \IEEEmembership other than the last one are OK (and needed) as
% you are supposed to have spaces between the names. For what it is worth,
% this is a minor point as most people would not even notice if the said evil
% space somehow managed to creep in.

% The paper headers
\markboth{To Appear at IEEE Transactions on Information Theory}%
{Shell \MakeLowercase{\textit{et al.}}: Bare Demo of IEEEtran.cls for IEEE Journals}
% The only time the second header will appear is for the odd numbered pages
% after the title page when using the twoside option.
% 
% *** Note that you probably will NOT want to include the author's ***
% *** name in the headers of peer review papers.                   ***
% You can use \ifCLASSOPTIONpeerreview for conditional compilation here if
% you desire.

% If you want to put a publisher's ID mark on the page you can do it like
% this:
%\IEEEpubid{0000--0000/00\$00.00~\copyright~2015 IEEE}
% Remember, if you use this you must call \IEEEpubidadjcol in the second
% column for its text to clear the IEEEpubid mark.

% use for special paper notices
%\IEEEspecialpapernotice{(Invited Paper)}

% make the title area
\maketitle

% As a general rule, do not put math, special symbols or citations
% in the abstract or keywords.
\begin{abstract}
	%\textcolor{red}{Abstract, Intro, Conclusion is same as in ICC2017 paper...}
	%\textcolor{blue}{Contents to be added in Journal versions : what is the payment for reducing $\gamma_c = 0$? Cap reduce for same $\gamma_I$ or required $\gamma_I$ increase for same capacity?}
A new system model reflecting the clustered structure of distributed storage is suggested to investigate 
interplay between storage overhead and repair bandwidth as storage node failures occur.
Large data centers with multiple racks/disks or local networks of storage devices (e.g. sensor network) are good applications of the suggested clustered model.
In realistic scenarios involving clustered storage structures, repairing storage nodes using intact nodes residing in other clusters is more bandwidth-consuming than restoring nodes based on information from intra-cluster nodes. Therefore, it is important to differentiate between intra-cluster repair bandwidth and cross-cluster repair bandwidth in modeling distributed storage. Capacity of the suggested model is obtained as a function of fundamental resources of distributed storage systems, namely, node storage capacity, intra-cluster repair bandwidth and cross-cluster repair bandwidth. The capacity is shown to be asymptotically equivalent to a monotonic decreasing function of number of clusters, as the number of storage nodes increases without bound.
Based on the capacity expression, feasible sets of required resources which enable reliable storage are obtained in a closed-form solution. Specifically, it is shown that the cross-cluster traffic can be minimized to zero (i.e., intra-cluster local repair becomes possible) by allowing extra resources on storage capacity and intra-cluster repair bandwidth, according to the law specified in the closed-form. The network coding schemes with zero cross-cluster traffic are defined as \textit{intra-cluster repairable codes}, which are shown to be a class of the previously developed \textit{locally repairable codes}.
%\cite{papailiopoulos2014locally}. 
\end{abstract}

% Note that keywords are not normally used for peerreview papers.
\begin{IEEEkeywords}
Capacity, Distributed storage, Network coding
%,  Coding theory, Cut-set bound
\end{IEEEkeywords}

% For peer review papers, you can put extra information on the cover
% page as needed:
% \ifCLASSOPTIONpeerreview
% \begin{center} \bfseries EDICS Category: 3-BBND \end{center}
% \fi
%
% For peerreview papers, this IEEEtran command inserts a page break and
% creates the second title. It will be ignored for other modes.
\IEEEpeerreviewmaketitle

\section{Introduction}
% The very first letter is a 2 line initial drop letter followed
% by the rest of the first word in caps.
% 
% form to use if the first word consists of a single letter:
% \IEEEPARstart{A}{demo} file is ....
% 
% form to use if you need the single drop letter followed by
% normal text (unknown if ever used by the IEEE):
% \IEEEPARstart{A}{}demo file is ....
% 
% Some journals put the first two words in caps:
% \IEEEPARstart{T}{his demo} file is ....
% 
% Here we have the typical use of a "T" for an initial drop letter
% and "HIS" in caps to complete the first word.
\IEEEPARstart{M}{any} 
enterprises, including Google, Facebook, Amazon and Microsoft, use cloud storage systems in order to support massive amounts of data storage requests from clients.
In the emerging Internet-of-Thing (IoT) era, the number of devices which generate data and connect to the network increases exponentially, so that efficient management of data center becomes a formidable challenge.
However, since cloud storage systems are composed of inexpensive commodity disks, failure events occur frequently, degrading the system reliability \cite{ghemawat2003google}. 

In order to ensure reliability of cloud storage, distributed storage systems (DSSs) with erasure coding have been considered to improve tolerance against storage node failures \cite{bhagwan2004total,dabek2004designing,rhea2003pond,shvachko2010hadoop, huang2012erasure, muralidhar2014f4}. %\textcolor{red}{Examples of DSS storage in enterprises goes here.} 
In such systems, the original file is encoded and distributed into multiple storage nodes. When a node fails, a newcomer node regenerates the failed node by contacting a number of survived nodes. This causes traffic burden across the network, taking up significant repair bandwidth.
Earlier distributed storage systems utilized the 3-replication code: the original file was replicated three times, and the replicas were stored in three distinct nodes. The 3-replication coded systems require the minimum repair bandwidth, but incur high storage overhead. Reed-Solomon (RS) codes are also used (\textit{e.g.} HDFS-RAID in Facebook \cite{borthakur2010hdfs}), which allow minimum storage overhead; however, RS-coded systems suffer from high repair bandwidth. 

The pioneering work of \cite{dimakis2010network} on distributed storage systems focused on the relationship between two required resources, the storage capacity $\alpha$ of each node and the repair bandwidth $\gamma$, when the system aims to reliably store a file $\mathcal{M}$ under node failure events. The optimal $(\alpha, \gamma)$ pairs are shown to have a fundamental trade-off relationship, to satisfy the maximum-distance-separable (MDS) property (i.e., any $k$ out of $n$ storage nodes can be accessed to recover the original file) of the system.   
Moreover, the authors of \cite{dimakis2010network} obtained 
capacity $\mathcal{C}$, the maximum amount of reliably storable data, as a function of $\alpha$ and $\gamma$. 
The authors related the failure-repair process of a DSS with the multi-casting problem in network information theory, and exploited the fact that a cut-set bound is achievable by network coding \cite{ahlswede2000network}. Since the theoretical results of \cite{dimakis2010network}, explicit network coding schemes \cite{rashmi2009explicit, rashmi2011optimal,shah2012interference}
which achieve the optimal $(\alpha, \gamma)$ pairs have also been suggested. These results are based on the assumption of homogeneous systems, i.e., each node has the same storage capacity and repair bandwidth. 

However, in real data centers, storage nodes are dispersed into multiple clusters (in the form of disks or racks) \cite{ford2010availability,huang2012erasure,muralidhar2014f4}, allowing high reliability against both node and rack failure events. 
In this clustered system, repairing a failed node gives rise to both intra-cluster and cross-cluster repair traffic.
While the current data centers have abundant intra-rack communication bandwidth, cross-rack communication is typically limited. 
According to \cite{rashmi2013solution}, nearly a $180$TB of cross-rack repair bandwidth is required everyday in the Facebook warehouse, limiting cross-rack communication for foreground map-reduce jobs. Moreover, surveys \cite{ahmad2014shufflewatcher, benson2010network, vahdat2010scale} on network traffic within data centers show that cross-rack communication is \textit{oversubscribed}; the available cross-rack communication bandwidth is typically $5-20$ times lower than the  intra-rack bandwidth in practical systems.
Thus, a new system model which reflects the imbalance between intra- and cross-cluster repair bandwidths is required.

\subsection{Main Contributions}
This paper suggests a new system model for \textit{clustered DSS} to reflect the clustered nature of real distributed storage systems wherein an imbalance exists between intra- and cross-cluster repair burdens.
This model can be applied to not only large data centers, but also local networks of storage devices 
such as the sensor networks or home clouds which are expected to be prevalent in the IoT era.
This model is also more general in the sense that when the intra- and cross-cluster repair bandwidths are \cmt{set to be equal}, the resulting structure reduces to the original DSS model of \cite{dimakis2010network}. 
\cmt{This paper only considers recovering a single node failure at a time, as in \cite{dimakis2010network}.}
The main contributions of this paper can be seen as twofold: one is the derivation of a closed-form expression for capacity, and the other is the analysis on feasible sets of system resources which enable reliable storage.

\subsubsection{Closed-form Expression for Capacity}
\cmt{Under the setting of functional repair,} storage capacity $\mathcal{C}$ of the clustered DSS is obtained as a function of node storage capacity $\alpha$, intra-cluster repair bandwidth $\beta_I$ and cross-cluster repair bandwidth $\beta_c$. 
The existence of the cluster structure manifested as the imbalance between intra/cross-cluster traffics makes the capacity analysis challenging;
Dimakis' proof in \cite{dimakis2010network} cannot be directly extended to handle the problem at hand.
We show that symmetric repair (obtaining the same amount of information from each helper node) is optimal in the sense of maximizing capacity given the storage node size and total repair bandwidth, as also shown in \cite{ernvall2013capacity} for the case of varying repair bandwidth across the nodes. 
However, we stress that in most practical scenarios, the need is greater for reducing cross-cluster communication burden, and we show that this is possible by trading with reduced overall storage capacity and/or increasing intra-repair bandwidth.
Based on the derived capacity expression, we analyzed how the storage capacity $\mathcal{C}$ changes as a function of $L$, the number of clusters. It is shown that the capacity is asymptotically equivalent to $\underline{C}$, some monotonic decreasing function of $L$.

%lower/upper bounds on $\mathcal{C}$, where the lower bound is a monotonically decreasing function of $L$. 

\subsubsection{Analysis on Feasible $(\alpha, \beta_I, \beta_c)$ Points}
Given the need for reliably storing file $\mathcal{M}$, the set of required resource pairs, node storage capacity $\alpha$, intra-cluster repair bandwidth $\beta_I$ and cross-cluster repair bandwidth $\beta_c$, which enables $\mathcal{C}(\alpha, \beta_I, \beta_c) \geq \mathcal{M}$
is obtained in a closed-form solution.
In the analysis, we introduce $\epsilon = \beta_c/\beta_I$, a useful parameter \cmt{which measures the ratio of the cross-cluster repair burden (per node) to the intra-cluster burden. This parameter represents how scarce the available cross-cluster bandwidth is, compared to the abundant intra-cluster bandwidth.}
%Introducing this parameter enables 
%depending on the network traffic constraint given in practical scenarios
% the cross-cluster repair burden (per node) compared to the intra-cluster burden. 
We here stress that the special case of $\epsilon=0$ corresponds to the scenario where repair is done only locally via intra-cluster communication, i.e., when a node fails the repair process requires intra-cluster traffic only without any cross-cluster traffic. Thus, the analysis on the $\epsilon=0$ case provides a guidance on the network coding for data centers for the scenarios where the available cross-cluster (cross-rack) bandwidth is very scarce.

Similar to the non-clustered case of \cite{dimakis2010network}, the required node storage capacity and the required repair bandwidth show a trade-off relationship. In the trade-off curve, two extremal points - the minimum-bandwidth-regenerating (MBR) point and the minimum-storage-regenerating (MSR) point - have been further analyzed for various $\epsilon$ values.
Moreover, from the analysis on the trade-off curve, it is shown that the minimum storage overhead $\alpha = \mathcal{M}/k$ is achievable if and only if $\epsilon \geq \frac{1}{n-k}$. This implies that in order to reliably store file $\mathcal{M}$ with minimum storage $\alpha = \mathcal{M}/k$, sufficiently large cross-cluster repair bandwidth satisfying $\epsilon \geq \frac{1}{n-k}$ is required.  
Finally, for the scenarios with the abundant intra-cluster repair bandwidth, the minimum required cross-cluster repair bandwidth $\beta_c$ to reliably store file $\mathcal{M}$ is obtained as a function of node storage capacity $\alpha$.

% Moreover, the condition for \textit{zero} cross-cluster repair bandwidth is obtained, which enables local repair via intra-cluster communication only.
%In order to support local repair while preserving the MDS property, extra amounts of resources are required, according to a 
%precise mathematical rule derived in a closed-form.

%Third, the trade-off between cross-cluster repair bandwidth and intra-cluster repair bandwidth is discussed; depending on the system constraint, the amounts of intra-/cross-cluster repair bandwidth can be chosen in the trade-off curve.

\subsection{Related Works}
%\textcolor{red}{Start from here...Insert BJ's paper also...Insert \cite{prakash2017storage}, \cite{hu2017optimal}, and..? Please refer Section V of \cite{7541298}, \cite{gaston2013realistic}...}
%% non-homogeneous
Several researchers analyzed practical distributed storage systems
with a goal in mind to reflect the non-homogeneous nature of storage nodes \cite{ernvall2013capacity, yu2015tradeoff, akhlaghi2010fundamental, shah2010flexible, gaston2013realistic, prakash2016generalization, prakash2017storage, choi2017secure, hu2017optimal, calis2016architecture,  ye2017explicit, tamo2016optimal}. A heterogeneous model was considered in \cite{ernvall2013capacity, yu2015tradeoff} where the storage capacity and the repair bandwidth for newcomer nodes are generally non-uniform. Upper/lower capacity bounds for the heterogeneous DSS are obtained in \cite{ernvall2013capacity}. 
%% asymmetric
An asymmetric repair process is considered in \cite{akhlaghi2010fundamental}, coining the terms, cheap and expensive nodes, based on the amount of data that can be transfered to any newcomer. The authors of \cite{shah2010flexible} considered a flexible distributed storage system where the amount of information from helper nodes may be non-uniform, as long as the total repair bandwidth is bounded from above.
The view points taken in these works are different from ours in that we adopt a notion of cluster and introduce imbalance between intra- and cross-cluster repair burdens.

\cmt{Recently, some researchers considered the clustered structure of data centers \cite{sohn2016capacity, gaston2013realistic, prakash2016generalization, prakash2017storage, choi2017secure, hu2017optimal, calis2016architecture,  ye2017explicit, tamo2016optimal}. 
% Modeling & finding fundamental aspects of clustered dss
Some recent works \cite{gaston2013realistic, prakash2016generalization, prakash2017storage, choi2017secure} provided new system models for clustered DSS and shed light on fundamental aspects of the suggested system. In \cite{gaston2013realistic}, the idea of \cite{akhlaghi2010fundamental} is developed to a two-rack system, by setting the communication burden within a rack much lower than the burden across different racks, similar to our analysis. 
However, the authors of \cite{gaston2013realistic} only considered systems with two racks, while the current paper considers a general setting of $L$ racks (clusters), and provides mathematical analysis on how the number of clusters (i.e., the dispersion of nodes) affects the capacity of clustered distributed storage. 
Similar to the present paper, the authors of \cite{prakash2016generalization, prakash2017storage} obtained the capacity of clustered distributed storage, and provided capacity-achieving regenerating coding schemes. However, the coding schemes considered in \cite{prakash2016generalization, prakash2017storage} do not satisfy the MDS property, and the capacity expression is obtained for limited scenarios when intra-cluster repair bandwidth $\beta_I$ is set to its maximum value. In contrast, the current paper provides the capacity expression for general values of $\beta_I, \beta_c$ parameters, and analyzes the behavior of capacity as a function of the ratio $\epsilon = \beta_c/\beta_I$ between intra- and cross-cluster repair bandwidths. 
%Moreover, the capacity-achieving coding schemes in the current paper satisfy the MDS property.
\dmt{Moreover, unlike in the previous work, the capacity-achieving coding schemes (whose existence is shown) here satisfy the MDS property.}
In \cite{choi2017secure}, the security issue in clustered distributed storage systems is considered, and the maximum amount of securely storable data in the existence of passive eavesdroppers is obtained.}

% Code design for clustered dss
\cmt{There also have been some recent works \cite{hu2017optimal, calis2016architecture,  ye2017explicit, tamo2016optimal} on network code design appropriate for clustered distributed storage. 
Motivated by the limited available cross-rack repair bandwidth in real data centers, the work of \cite{hu2017optimal} provides a network coding scheme which minimizes the cross-rack bandwidth in clustered distributed storage systems. However, the suggested coding scheme is applicable for some limited $(n,k,L)$ parameters and the minimum storage overhead ($\alpha = \mathcal{M}/k$) setting. On the other hand, the current paper provides the capacity expression for general $(n,k,L,\alpha)$ setting, and proves the existence of capacity-achieving coding scheme.
%The work \cite{hu2017optimal} first partitions a repair process into intra-cluster and cross-cluster steps, which can reduce the cross-cluster repair burden. 
%In contrast, this paper considers arbitrary network traffic scenarios where the helped nodes can come from anywhere
The authors of \cite{calis2016architecture} proposed coding schemes tolerant to rack failure events in multi-rack storage systems, but have not addressed the imbalance between intra- and cross-cluster repair burdens in node failure events, which is an important aspect of the current paper.
The authors of \cite{ye2017explicit} considers the scenario of having grouped (clustered) storage nodes where nodes in the same group are more accessible to each other, compared to the nodes in other groups. However, the focus is different to the present paper: \cite{ye2017explicit} focuses on the code construction which has the minimum amount of accessed data (called the optimal access property), while the scope of the present paper is on finding the optimal trade-off between the node storage capacity and the repair bandwidth, as a function of the ratio $\epsilon$ of intra- and cross-cluster communication burdens. Finally, a locally repairable code which can repair arbitrary node within each group is suggested in \cite{tamo2016optimal}; this code can suppress the inter-group repair bandwidth to zero. However, the coding scheme is suggested for the $\epsilon=0$ case only, while the present paper provides the capacity expression for general $0 \leq \epsilon \leq 1$, and proves the existence of an optimal coding scheme.}

Compared to the conference version \cite{sohn2016capacity} of the current work, this paper provides the formal proofs for the capacity expression, and obtains the feasible $(\alpha, \gamma)$ region for $0 \leq \epsilon \leq 1$ setting\footnote{\cmt{The reason why the present paper considers this regime is provided in Section \ref{Section:assumptions}.}} (only $\epsilon=0$ is considered in \cite{sohn2016capacity}). The present paper also shows the behavior of capacity as a function of $L$, the number of clusters, and provides the sufficient and necessary conditions on $\epsilon = \beta_c/\beta_I$, to achieve the minimum storage overhead $\alpha = \mathcal{M}/k$. Finally, the asymptotic behaviors of the MBR/MSR points are investigated in this paper, and the connection between what we call the intra-cluster repairable codes and the existing locally repairable codes \cite{papailiopoulos2014locally} is revealed.

%Finally, some other works focused on the code design applicable to multi-rack DSSs \cite{tebbi2014code, 7541298}.

\subsection{Organization}
This paper is organized as follows. Section \ref{Section:Background} reviews preliminary materials about distributed storage systems and the information flow graph, an efficient tool for analyzing DSS. Section \ref{Section:Capacity_of_Clustered_DSS} proposes a new system model for the clustered DSS, and derives a closed-form expression for the storage capacity of the clustered DSS. The behavior of the capacity curves is also analyzed in this section. 
Based on the capacity expression, Section \ref{Section:analysis on feasible points} provides results on the feasible resource pairs which enable reliable storage of a given file.
Further research topics on clustered DSS are discussed in Section \ref{Section:Future_Works}, and Section \ref{Section:Conclusion} draws conclusions.

\section{Background}\label{Section:Background}

\subsection{Distributed Storage System}
Distributed storage systems can maintain reliability by means of erasure coding \cite{dimakis2011survey}. The original data file is spread into $n$ potentially unreliable nodes, each with storage size $\alpha$. When a node fails, it is regenerated by contacting $d < n$ helper nodes and obtaining a particular amount of data, $\beta$, from each helper node. The amount of communication burden imposed by one failure event is called the repair bandwidth, denoted as $\gamma = d \beta$.
When the client requests a retrieval of the original file, assuming all failed nodes have been repaired, access to any $k<n$ out of $n$ nodes must guarantee a file recovery. The ability to recover the original data using any $k<n$ out of $n$ nodes is called the maximal-distance-separable (MDS) property.   
Distributed storage systems can be used in many applications such as large data centers, peer-to-peer storage systems and wireless sensor networks \cite{dimakis2010network}. 

\subsection{Information Flow Graph}\label{Section:Info_flow_graph}
Information flow graph is a useful tool to analyze the amount of information flow from source to data collector in a DSS, as utilized in \cite{dimakis2010network}. 
It is a directed graph consisting of three types of nodes: data source $\mathrm{S}$, data collector $\mathrm{DC}$, and storage nodes $x^i$ as shown in Fig. \ref{Fig:information_flow_graph}.
Storage node $x^{i}$ can be viewed as consisting of input-node $x_{in}^i$ and output-node $x_{out}^i$, which are responsible for the incoming and outgoing edges, respectively. $x_{in}^i$ and $x_{out}^i$ are connected by a directed edge with capacity identical to the storage size $\alpha$ of node $x^i$. 

Data from source $S$ is stored into $n$ nodes. This process is represented by $n$ edges going from $\mathrm{S}$ to $\{x^i\}_{i=1}^n$, where each edge capacity is set to infinity.
A failure/repair process in a DSS can be described as follows.
When a node $x^{j}$ fails, a new node $x^{n+1}$ joins the graph by connecting edges from $d$ survived nodes, where each edge has capacity $\beta$.
After all repairs are done, data collector $\mathrm{DC}$ chooses arbitrary $k$ nodes to retrieve data, as illustrated by the edges connected 
from $k$ survived nodes with infinite edge capacity.
%\cmt{Note that the set of paths from $S$ to $DC$ includes all edges in the information flow graph.} 
Fig. \ref{Fig:information_flow_graph} gives an example of information flow graph representing a distributed storage system with $n=4, k=3, d=3$.

%For a given flow graph $G$, a cut between $\mathrm{S}$ and $\mathrm{DC}$ is defined as a subset $C$ of edges which satisfies the following: every directed path from $\mathrm{S}$ to $\mathrm{DC}$ includes at least one edge in $C$. The cut between $\mathrm{S}$ and $\mathrm{DC}$ having the smallest total sum of edge capacities is called minimum cut.
%A min-cut value is defined as the sum of edge capacities included in the minimum cut.

\begin{figure}[!t]
	\centering
	\includegraphics[height=35mm]{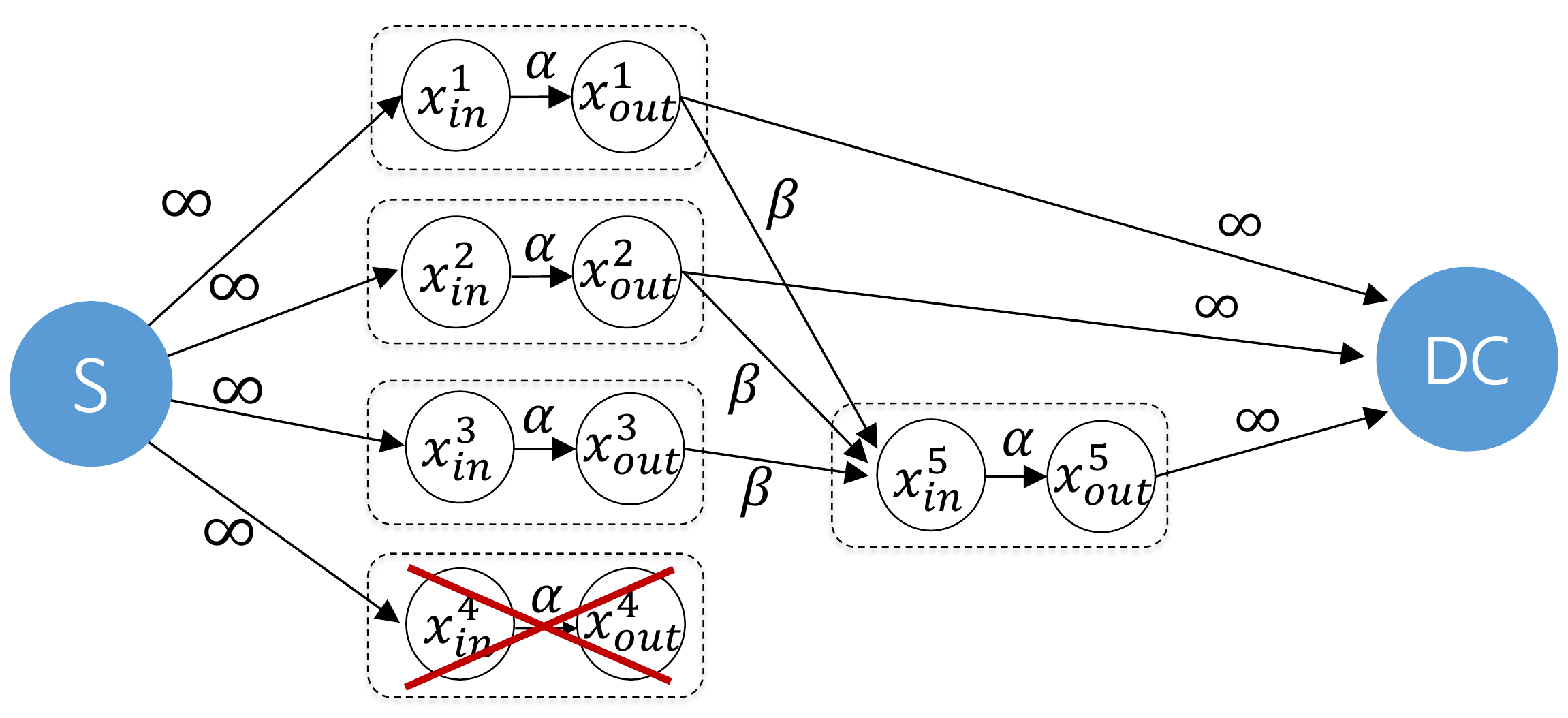}
	\caption{Information flow graph ($n=4, k=3, d=3$)}
	\label{Fig:information_flow_graph}
\end{figure}

%\subsection{Fundamental trade-off in non-clustered DSS}
%
%According to the max-flow min-cut theorem \textcolor{red}{Put reference here}, the amount of maximum transferable data from source to destination is equal to $c_{min}$($G$). Depending on the (possibly infinite) failure/repair process and $k$ nodes contacted from $DC$, various flow graph can be obtained. 
%Consider the set $\mathcal{G}$ of all possible flow graphs, and the min-cut minimizer $G^* \in \mathcal{G}$, i.e., $c_{min}$($G^*$) $\leq$ $c_{min}$($G$) for any $G \in \mathcal{G}$. 
%Then, any evolution of distributed storage system are guaranteed to transfer $c_{min}(G^*)$ amount of data from source to data collector, where the guaranteed amount is defined as capacity \cite{dimakis2010network}:
%\begin{equation}
%C= \sum_{i=0}^{\min\{d,k\}-1} \min\{(d-i)\beta, \alpha\}.
%\end{equation}
%A fundamental trade-off relationship between storage capacity $\alpha$ and repair bandwidth $\gamma = d\beta$ which suffice to transfer $\mathcal{M}$ amount of data is obtained.
%The result in \cite{dimakis2010network} also showed that, the repair bandwidth $\gamma$ is minimized when $d=n-1$, which means that newcomer need to contact every active node.

\begin{figure}[!t]
	\centering
	\includegraphics[height=25mm]{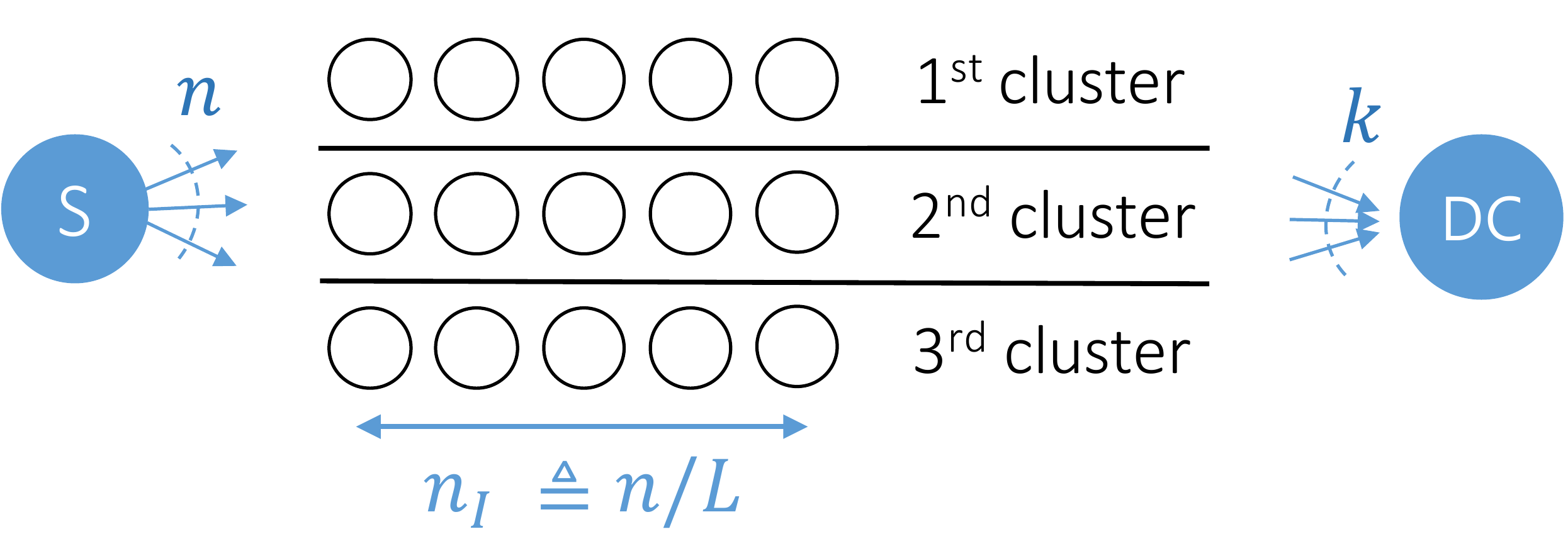}
	\caption{Clustered distributed storage system ($n=15,L=3$)}
	\label{Fig:Layered_DSS}
\end{figure}

\subsection{Notation used in the paper}\label{Subsection: notation}
This paper requires many notations related to graphs, because it deals with information flow graphs. Here we provide the definition of each notation used in the paper.
For the given system parameters, %$n, k, L, \beta_I, \beta_c$, 
we denote $\mathcal{G}$ as the set of all possible information flow graphs. A graph $G\in \mathcal{G}$ is denoted as $G=(V,E)$ where $V$ is the set of vertices and $E$ is the set of edges in the graph. 
%Note that every possible information flow graph starts at $\mathrm{S}$ and ends at $\mathrm{DC}$, since the storage data is generated at $\mathrm{S}$ and is collected at $\mathrm{DC}$.
For a given graph $G \in \mathcal{G}$, we call a set $c \subset E$ of edges as \textit{cut-set} \cite{bang2008digraphs} if it satisfies the following: every directed path from $\mathrm{S}$ to $\mathrm{DC}$ includes at least one edge in $c$. 
An arbitrary cut-set $c$ is usually denoted as $c=(U, \overbar{U})$ where $U \subset V$ and $\overbar{U} = V\setminus U$ (the complement of $U$) satisfy the following: the set of edges from $U$ to $\overbar{U}$ is the given cut-set $c$.
%The cut between $\mathrm{S}$ and $\mathrm{DC}$ having the smallest total sum of edge capacities is called minimum cut.
The set of all cut-sets available in $G$ is denoted as $C(G)$. 
For a graph $G\in \mathcal{G}$ and a cut-set $c \in C(G)$, we denote the sum of edge capacities for edges in $c$ as $w(G,c)$, which is called the \textit{cut-value} of $c$.

A vector is denoted as $\textbf{v}$ using the 
bold notation. For a vector $\textbf{v}$, the transpose of the vector is denoted as $\textbf{v}^T$. 
A set is denoted as $X = \{x_1, x_2, \cdots, x_k\}$, while a sequence $x_1, x_2, \cdots, x_N$ is denoted as $(x_n)_{n=1}^{N}$, or simply $(x_n)$. 
%***why not use \{ \} as opposed to ( ) for sequences as usual?*********************
For given sequences $(a_n)$ and $(b_n)$, we use the term ``$a_n$ is asymptotically equivalent to $b_n$" \cite{erdélyi1956asymptotic} if and only if
\begin{equation}
\lim\limits_{n \rightarrow \infty} \frac{a_n}{b_n} = 1.
\end{equation}
We utilize a useful notation:
\begin{equation*}
\mathds{1}_{i=j} = 
\begin{cases}
1, & \text{ if } i=j \\
0, & \text{ otherwise.}
\end{cases}
\end{equation*}
For a positive integer $n$, we use $[n]$ as a simplified notation for the set $\{1,2,\cdots, n\}$. For a non-positive integer $n$, we define  $[n] = \emptyset$. 
%****if \{ \} is used for sequence, then you can just say 
%"we use $\{n\}$ to mean the sequence $1,2,\cdots,n$."********************
Each storage node is represented as either $x^t = (x_{in}^t, x_{out}^t)$ as defined in Section \ref{Section:Info_flow_graph},
or $N(i,j)$ as defined in (\ref{Eqn:set_of_nodes}).
Finally, important parameters used in this paper are summarized in Table \ref{Table:Params}.

%\textcolor{red}{	$k=n_IQ+R$, where $R \in \{0, 1, \cdots, n_I-1\}$...}

%\textcolor{red}{In other words, 
%\begin{align}
%Q &= q(k,n_I) = 	\floor{\frac{k}{n_I}} \label{Eqn:Q}\\
%R &= r(k,n_I) = mod(k,n_I)	\label{Eqn:R}.	
%\end{align}}

\begin{table}[t]
	%\small
	\caption{Parameters used in this paper }
	\centering
	\label{Table:Params}
	\begin{tabular}{|c|c|}
		\hline
		$n$ & number of storage nodes \tabularnewline
		\hline
		$k$ & number of DC-contacting nodes \tabularnewline
		\hline
		$L$ & number of clusters \tabularnewline
		\hline
		$n_I = n/L$ & number of nodes in a cluster \tabularnewline
		\hline
		$d_I = n_I - 1$ & number of intra-cluster helper nodes \tabularnewline
		\hline
		$d_c = n- n_I$ & number of cross-cluster helper nodes \tabularnewline
		\hline
		$\mathbb{F}_q$ & \cmt{base field which contains each symbol} \tabularnewline
		\hline
		$\alpha$ & storage capacity of each node \tabularnewline
		\hline
		$\beta_I $ & intra-cluster repair bandwidth (per node) \tabularnewline
		\hline
		$\beta_c $ & cross-cluster repair bandwidth (per node) \tabularnewline
		\hline
		$\gamma_I = d_I\beta_I$ & intra-cluster repair bandwidth \tabularnewline
		\hline
		$\gamma_c = d_c\beta_c$ & cross-cluster repair bandwidth \tabularnewline
		\hline
		$\gamma = \gamma_I + \gamma_c$ & repair bandwidth \tabularnewline
		\hline
		$\epsilon = \beta_c/\beta_I$ & ratio of $\beta_c$ to $\beta_I$ ($0 \leq \epsilon \leq 1$) \tabularnewline
		\hline
		$\xi = \gamma_c/\gamma$ & ratio of $\gamma_c$ to $\gamma$ ($ 0 \leq \xi < 1$) \tabularnewline
		\hline
%		$\Omega$ & over-subscription ratio of $\gamma_c$ \tabularnewline
%		\hline
		$R=k/n$ & ratio of $k$ to $n$ ($0 < R \leq 1$) \tabularnewline
		\hline	
	\end{tabular}
\end{table}

\newtheorem{theorem}{Theorem}
\newtheorem{lemma}{Lemma}
\newtheorem{corollary}{Corollary}
\newtheorem{definition}{Definition}
\newtheorem{prop}{Proposition}
\theoremstyle{remark}
\newtheorem{remark}{Remark}

\section{Capacity of Clustered DSS}\label{Section:Capacity_of_Clustered_DSS}

\subsection{Clustered Distributed Storage System}\label{Section:Clustered_DSS}

A distributed storage system with multiple clusters is shown in Fig. \ref{Fig:Layered_DSS}. 
Data from source $\mathrm{S}$ is stored at $n$ nodes which are grouped into $L$ clusters. The number of nodes in each cluster is fixed and denoted as $n_I = n/L$. The storage size of each node is denoted as $\alpha$. 
When a node fails, a newcomer node is regenerated by contacting $d_I$ helper nodes within the same cluster, and $d_c$ helper nodes from other clusters.
\cmt{This paper considers functional repair\cite{dimakis2011survey} in the regeneration process; the newcomer node may store different content from that of the failed node, while maintaining the MDS property of the code.}
%Each helper node reside in the same cluster transmits $\beta_I$ amount of data
The amount of data a newcomer node receives within the same cluster is 
$\gamma_I = d_I \beta_I$ (each node equally contributes to $\beta_I$), and that from other clusters is $\gamma_c = d_c \beta_c$  (each node equally contributes to $\beta_c$).
Fig. \ref{Fig:Repair Process in clustered DSS} illustrates an example of information flow graph representing the repair process in a clustered DSS.
%Throughout the paper, we use the simplified notations of
%\begin{align}
%q &= \floor{\frac{k}{n_I}}, \label{Eqn:quotient}\\
%r &= (k \Mod{n_I})	, \label{Eqn:remainder}
%\end{align}
%which represent the quotient and remainder of $k/n_I$. 
%Note that 
%\begin{equation}\label{Eqn:quotient_remainder_relation}
%qn_I + r = k.
%\end{equation} 

\begin{figure}[!t]
	\centering
	\includegraphics[height=30mm]{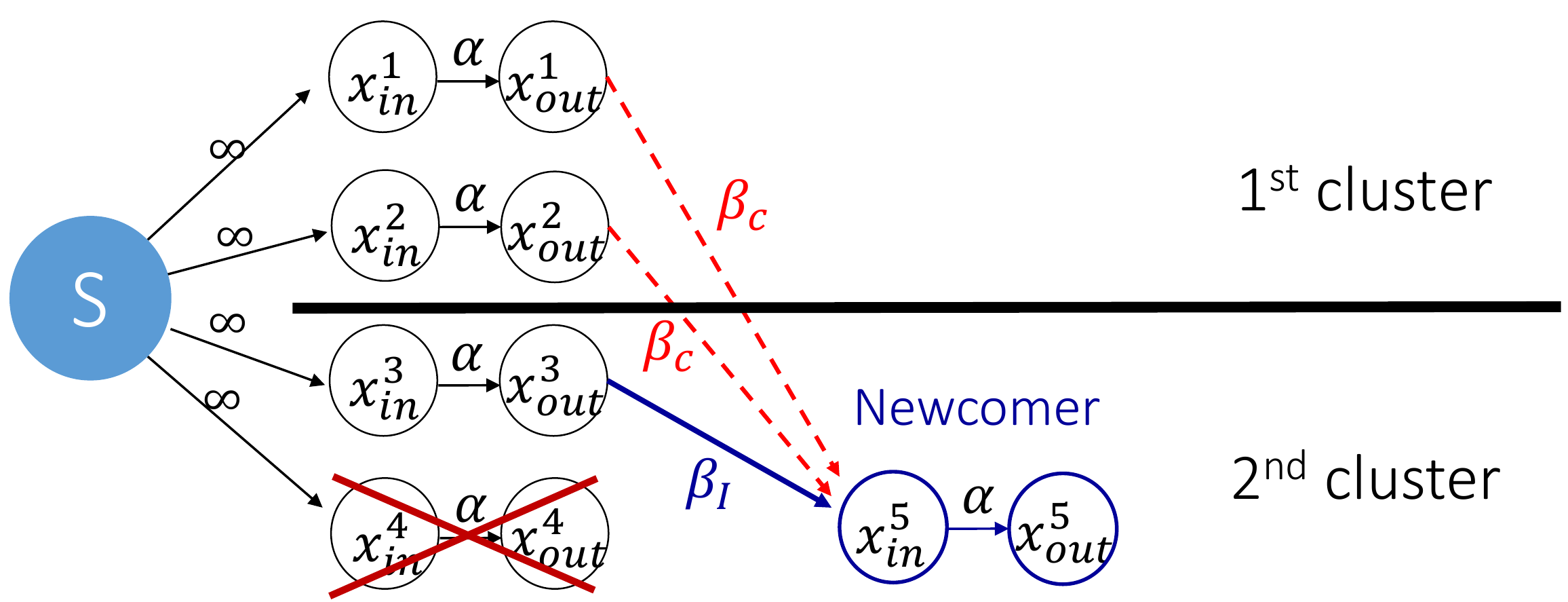}
	\caption{Repair process in clustered DSS ($n=4, L=2, d_I =1, d_c = 2$) }
	\label{Fig:Repair Process in clustered DSS}
\end{figure}

%The repair process in clustered DSS is illustrated as a flow graph in Fig. \ref{Fig:Repair Process in clustered DSS}. 

\subsection{Assumptions for the System}\label{Section:assumptions}
We assume that $d_c$ and $d_I$ have the maximum possible values ($d_c = n-n_I, d_I = n_I - 1$), since this is the capacity-maximizing choice, as formally stated in the following proposition. The proof of the proposition is in Appendix \ref{Section:max_helper_node_assumption}.

\begin{prop}\label{Prop:max_helper_nodes}
	Consider a clustered distributed storage system with given $\gamma$ and $\gamma_c$. Then, setting both $d_I$ and $d_c$ to their maximum values maximizes storage capacity. 
\end{prop}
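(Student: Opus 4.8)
The plan is to argue entirely through the information-flow-graph picture, using the cut-set/network-coding characterization of capacity from \cite{dimakis2010network}: the storage capacity equals $\min_{G \in \mathcal{G}} \min_{c \in C(G)} w(G,c)$, the smallest cut-value over all information flow graphs consistent with the system parameters. Varying $d_I$ and $d_c$ leaves $n$, $k$, $L$, $n_I$ untouched, so the family of admissible data collectors and the combinatorial skeleton of the flow graphs do not change; only two things move, namely the repair-edge capacities $\beta_I = \gamma_I/d_I$ and $\beta_c = \gamma_c/d_c$ --- where $\gamma_I = \gamma - \gamma_c$ is held fixed by hypothesis --- and the number of repair edges entering each newcomer. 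Hence it suffices to show that the min-cut value is non-decreasing in $d_I$ and in $d_c$, and then appeal to the hard caps $d_I \le n_I - 1$ and $d_c \le n - n_I$, which yields the claimed joint maximizer.

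\textbf{Structure of a minimizing cut.} First I would record the shape of an optimal cut. In a flow graph obtained from a sequence of single-node repairs with a data collector attached to $k$ output-nodes, any cut $(U,\overbar{U})$ of finite value cuts none of the infinite-capacity edges, so its value splits as a sum of per-node contributions over the $k$ DC-contacted storage nodes. Ordering those nodes by repair time, the $i$-th of them contributes either $\alpha$ (when its storage edge is cut; this is the only option for a node fed directly by the source) or, for a regenerated node, the total capacity of the repair edges reaching it from the source side of the cut; the latter is smallest when the newcomer is taken to download from the most recently regenerated helpers inside and outside its cluster, in which case the source-side contribution is exactly $(d_I - a_i)^+\beta_I + (d_c - b_i)^+\beta_c$, where the integers $a_i, b_i$ count how many earlier DC-contacted nodes lie in, respectively outside, the cluster of node $i$. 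Crucially $a_i$ and $b_i$ are functions of the combinatorial pattern of the cut and of $(n,k,L,n_I)$ only, not of the bandwidths. Therefore the capacity equals $\min_{(a_i,b_i)\in\Pi}\sum_{i=1}^{k}\min\bigl(\alpha,\,(d_I - a_i)^+\beta_I + (d_c - b_i)^+\beta_c\bigr)$ for a fixed index set $\Pi$ that does not depend on $d_I$ or $d_c$.

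\textbf{Monotonicity.} With this formula the conclusion is termwise. Since $\beta_I = \gamma_I/d_I$, we have $(d_I - a_i)^+\beta_I = \gamma_I\max\bigl(0,\,1 - a_i/d_I\bigr)$, which for fixed $a_i$ is non-decreasing in $d_I$; symmetrically $(d_c - b_i)^+\beta_c = \gamma_c\max\bigl(0,\,1 - b_i/d_c\bigr)$ is non-decreasing in $d_c$. Because $t\mapsto\min(\alpha,t)$ is non-decreasing, every summand is non-decreasing in the pair $(d_I,d_c)$, hence so is the sum for each fixed pattern in $\Pi$, hence so is the minimum over $\Pi$ --- that is, the capacity. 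It is therefore maximized by taking $d_I$ and $d_c$ as large as the system permits, i.e.\ $d_I = n_I - 1$ and $d_c = n - n_I$.

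\textbf{Main obstacle.} The real work is the second step: showing that a minimizing cut has precisely the additive form above, with helper-counts $(d_I - a_i)^+$ and $(d_c - b_i)^+$ governed by a bandwidth-independent pattern. This is the same combinatorial core that underlies the capacity theorem of Section~\ref{Section:Capacity_of_Clustered_DSS}, so if that theorem is taken as given one simply reads the formula off and checks the termwise monotonicity above; otherwise I would prove it here by induction on the repair sequence, using the standard observation that letting each newcomer draw from the freshest available helpers (within its cluster and across clusters) simultaneously minimizes the cut and makes each node's source-side helper count a function of the pattern alone. A minor point to verify along the way is that increasing $d_I$ or $d_c$ does not enlarge the family of cuts in a way that could create a strictly smaller one --- since each cluster always has $n_I-1$ other nodes and there are always $n-n_I$ nodes outside, every $d_I\le n_I-1$ and $d_c\le n-n_I$ is feasible, so the index set $\Pi$ is genuinely the same at the smaller and larger parameter values and the minimization is over a fixed set whose every member only weakly increases.
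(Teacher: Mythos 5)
Your proposal is correct and follows essentially the same route as the paper's Appendix~\ref{Section:max_helper_node_assumption}: start from the min-cut characterization $\mathcal{C}(d_I,d_c)=\min_{\bm{s},\bm{\pi}}\sum_{i=1}^k\min\{\alpha,\omega_i\}$ inherited from the proof of Theorem~\ref{Thm:Capacity of clustered DSS}, observe that the pattern set does not depend on $(d_I,d_c)$, rewrite each $\omega_i$ so that the $d_I$-dependence is isolated as $\gamma_I\max(0,1-a_i/d_I)$ (equivalently, the paper's $C_1 - \gamma_I\min\{d_I,e_i\}/d_I$), and conclude by termwise monotonicity in $d_I$ and symmetrically in $d_c$. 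Your $(a_i,b_i)$ are exactly the paper's $(e_i(\bm\pi),\,i-1-e_i(\bm\pi))$, and the minor point you flag about the feasible cut-patterns being the same for all admissible $(d_I,d_c)$ is the same implicit step the paper relies on.
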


Note that the authors of \cite{dimakis2010network} already showed that in the non-clustered scenario with given repair bandwidth $\gamma$, maximizing the number of helper nodes $d$ is the capacity-maximizing choice. Here, we are saying that a similar property also holds for clustered scenario considered in the present paper.
%since this setting most efficiently utilizes the system resources: node storage size and repair bandwidth \cite{dimakis2010network}.
%Although the authors of \cite{dimakis2010network} showed this property in non-clustered case only, it holds for clustered scenario considered in the present paper.
Under the setting of the maximum number of helper nodes, the overall repair bandwidth for a failure event is denoted as

\begin{equation}\label{Eqn:gamma}
\gamma = \gamma_I + \gamma_c = (n_I-1)\beta_I + (n-n_I)\beta_c
\end{equation}
Data collector $\mathrm{DC}$ contacts any $k$ out of $n$ nodes in the clustered DSS. 
Given that the typical intra-cluster communication bandwidth is larger than the cross-cluster bandwidth in real systems, 
we assume \begin{equation*}
\beta_I \geq \beta_c
\end{equation*} throughout the present paper; \cmt{this assumption limits our interest to $\epsilon \leq 1$.}
Moreover, motivated by the security issue, we assume that 
 a file cannot be retrieved entirely by contacting any single cluster having $n_I$ nodes. Thus, the number $k$ of nodes contacted by the data collector satisfies
\begin{align}\label{Eqn:k_constraint}
k &> n_I.
\end{align}
%\textcolor{red}{Need to be changed...}
%
We also assume 
\begin{equation}\label{Eqn:L_constraint}
L \geq 2 
\end{equation}
%\begin{align}\label{Eqn:k_constraint}
%k &\geq n_I,
%\end{align}
which holds for most real DSSs. Usually, all storage nodes cannot be squeezed in a single cluster, i.e., $L=1$ rarely happens in practical systems, to prevent losing everything when the cluster is destroyed. Note that many storage systems \cite{huang2012erasure, muralidhar2014f4, rashmi2013solution} including those of Facebook uses $L=n$, i.e., every storage node reside in different racks (clusters), \dmt{to tolerate the rack failure events}. 
%Thus, $L \geq 2$ is assumed. 
%Note that simulations for distributed storage systems in literatures utilize $k \geq n/2$ settings: \cite{pamies2016opening, tamo2013zigzag} use $n-k=2$, while \cite{rashmi2015having, rashmi2011optimal,shah2012interference} use $n=2k$. Thus, we mainly consider scenarios with $k \geq n/2$. Combined with $L \geq 2$, we have $k \geq n/2  \geq n/L = n_I$ condition. 
%The assumptions of (\ref{Eqn:L_constraint}) and (\ref{Eqn:k_constraint}) are used for proving Corollary \ref{Corollary:Feasible Points_large_epsilon} and Proposition \ref{Prop:bound} only; the rest of the mathematical results hold irrespective of $k$ values. 
Finally, according to \cite{rashmi2013solution}, nearly $98 \%$ of data recoveries in real systems deal with single node recovery. In other words, the portion of simultaneous multiple nodes failure events is small. Therefore, the present paper focuses single node failure events.

%  \begin{figure}[!t]
%    \centering
%    \includegraphics[height=35mm]{Repair_Model.pdf}
%    \caption{Repair Process in clustered DSS (2D structure)}
%    \label{Fig:Suggested System Model}
%\end{figure} 
%

\subsection{The closed-form solution for Capacity}

Consider a clustered DSS with fixed $n, k, L$ values. In this model, we want to find the set of \textit{feasible} parameters ($\alpha, \beta_I, \beta_c$) which enables storing data of size $\mathcal{M}$. 
In order to find the feasible set, min-cut analysis on the information flow graph is required, similar to \cite{dimakis2010network}.
%The result of (possibly infinite) failure-repair process and $k$ node contacting at $\mathrm{DC}$ can be expressed as a flow graph $G$. 
Depending on the failure-repair process and $k$ nodes contacted by $\mathrm{DC}$, various information flow graphs can be obtained. 

Let $\mathcal{G}$ be the set of all possible flow graphs. 
Consider a graph $G^* \in \mathcal{G}$ with minimum min-cut, the construction of which is specified in Appendix \ref{Section:Proof of Thm 1}.
%Consider an arbitrary information flow graph $G \in \mathcal{G}$.
Based on the max-flow min-cut theorem in \cite{ahlswede2000network}, 
the maximum information flow from source to data collector for arbitrary $G \in \mathcal{G}$ is greater than or equal to 
%\begin{equation}\label{Eqn:capacity expression}
%\sum_{i=0}^{\min\{d,k\}-1} \min\{\textcolor{red}{(d-i)\beta}, \textcolor{blue}{\alpha}\}
%\end{equation}
\begin{equation*}\label{Eqn:capacity expression}
\mathcal{C}(\alpha, \beta_I, \beta_c) \coloneqq \text{min-cut of }G^*,
\end{equation*}
which is called the \textit{capacity} of the system. 
In order to send data $\mathcal{M}$ from the source to the data collector, $\mathcal{C} \geq \mathcal{M}$ should be satisfied. 
Moreover, if $\mathcal{C} \geq \mathcal{M}$ is satisfied, there exists a linear network coding scheme \cite{ahlswede2000network} to store a file with size $\mathcal{M}$. 
Therefore, the set of $(\alpha, \beta_I, \beta_c)$ points which satisfies $\mathcal{C} \geq \mathcal{M}$ is \textit{feasible} in the sense of reliably storing the original file of size $\mathcal{M}$.
Now, we state our main result in the form of a theorem which offers a closed-form solution for the capacity $\mathcal{C}(\alpha,  \beta_I, \beta_c)$ of the clustered DSS.
Note that setting $\beta_I = \beta_c$ reduces to capacity of the non-clustered DSS obtained in \cite{dimakis2010network}.

%In this subsection, we directly suggest our main result, which states the storage capacity of the clustered distributed storage system.

\begin{theorem} \label{Thm:Capacity of clustered DSS}
%	Consider a $\beta_I \geq \beta_c$ case. 
	The capacity of the clustered distributed storage system with parameters $(n,k,L,\alpha, \beta_I, \beta_c)$ is 
%	\begin{equation}\label{Eqn:Capacity of clustered DSS}
%	\mathcal{C}(\alpha, \gamma_I, \gamma_c) = \sum_{i=1}^{n_I} \sum_{j=1}^{g(i)} \min \{\alpha, x(i)\gamma_I + y(i,j) \gamma_c \},
%	\end{equation}
%		where
%	\begin{align}
%	x(i) &= \frac{n_I - i}{n_I - 1} \label{Eqn:x_i}\\
%	y(i,j) &= \frac{n-(n_I-i)-\sum_{m=1}^{i-1}g(m) - j}{n - n_I}\label{Eqn:y_i_j}\\
%	g(i) &=
%	\begin{cases}
%	\floor{\frac{k}{n_I}} + 1, & i \leq mod(k,n_I) \\
%	\floor{\frac{k}{n_I}}, \text{otherwise.}\label{Eqn:g_i}
%	\end{cases}
%	\end{align}
%	

	\begin{equation}\label{Eqn:Capacity of clustered DSS_rev}
%	\mathcal{C}(\alpha, \beta_I, \beta_c) = \sum_{i=1}^{n_I} \sum_{j=1}^{g_i} \min \{\alpha, \rho_i\beta_I + \phi_{i}^{(j)} \beta_c \},
	\mathcal{C}(\alpha, \beta_I, \beta_c) = \sum_{i=1}^{n_I} \sum_{j=1}^{g_i} \min \{\alpha, \rho_i\beta_I + (n-\rho_i - j - \sum_{m=1}^{i-1}g_m) \beta_c \},
	\end{equation}
	where
	\begin{align}
	\rho_i &= n_I - i, \label{Eqn:rho_i}\\
%	\phi_{i}^{(j)} &= n-t_{i}^{(j)}-\rho_i\label{Eqn:phi_i_j}\\
%	t_{i}^{(j)} &= (\sum_{m=1}^{i-1}g_m) + j\label{Eqn:t_i_j}\\
	g_m &=
	\begin{cases}
	\floor{\frac{k}{n_I}} + 1, & m \leq (k \Mod{n_I}) \\
	\floor{\frac{k}{n_I}}, & otherwise.\label{Eqn:g_m}
	\end{cases}
	\end{align}
\end{theorem}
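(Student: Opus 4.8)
The plan is to work entirely inside the information-flow-graph framework of \cite{dimakis2010network}: by definition $\mathcal C$ is the min-cut of a worst-case graph $G^{*}\in\mathcal G$, and because a cut-set bound for a single-source multicast is achievable by linear network coding \cite{ahlswede2000network}, it suffices to establish two matching bounds --- (i) exhibit one concrete graph $G^{*}$ and one concrete cut of it whose cut-value equals the right-hand side of \eqref{Eqn:Capacity of clustered DSS_rev}, and (ii) show that \emph{every} cut of \emph{every} $G\in\mathcal G$ has cut-value at least that right-hand side. Item (ii) shows every graph's min-cut is at least the claimed quantity, so combined with the cut from (i) it proves $\min_{G\in\mathcal G}(\text{min-cut of }G)$ equals it and that the constructed $G^{*}$ attains the minimum.

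For (i), I would construct $G^{*}$ from a regeneration schedule arranged in $n_{I}$ rounds. Writing $k=qn_{I}+r$ with $0\le r<n_{I}$, the data collector will contact all $n_{I}$ nodes of $q$ clusters and $r$ nodes of one more cluster; this is precisely the allocation coded by \eqref{Eqn:g_m}, where $g_{i}$ counts the clusters that contribute their $i$-th contacted node. In round $i$, for $j=1,\dots,g_{i}$, we fail and regenerate one node in the $j$-th of those clusters, processing the rounds in the order $1,2,\dots,n_{I}$. Since $d_{I}$ and $d_{c}$ take their maximal values, the newcomer has no choice of helpers: it draws $\beta_{I}$ from every one of the $n_{I}-1$ other nodes in its cluster and $\beta_{c}$ from every one of the $n-n_{I}$ nodes outside it. Running the rounds in order guarantees that, for the node that is the $i$-th regeneration of its cluster, the $i-1$ intra-cluster helpers that are already-regenerated nodes are exactly nodes that $\mathrm{DC}$ contacts, while the other $\rho_{i}=n_{I}-i$ intra-cluster helpers are untouched originals. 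Now take the cut that puts on the source side $\mathrm{S}$ together with every original node, and for each contacted (hence regenerated) node keeps whichever is cheaper between its storage edge and its whole input vertex. Every helper edge coming from an already-regenerated node costs nothing (that node's output vertex is forced onto the sink side by the infinite-capacity edge to $\mathrm{DC}$); what remains to be paid for the node that is the $i$-th regeneration of its cluster and the $j$-th regeneration of round $i$ is exactly $\rho_{i}$ intra-cluster edges and $n-\rho_{i}-j-\sum_{m=1}^{i-1}g_{m}$ cross-cluster edges from untouched originals, i.e.\ $\min\{\alpha,\ \rho_{i}\beta_{I}+(n-\rho_{i}-j-\sum_{m=1}^{i-1}g_{m})\beta_{c}\}$. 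Summing over $i\in[n_{I}]$ and $j\in[g_{i}]$ gives $k$ terms in total and reproduces \eqref{Eqn:Capacity of clustered DSS_rev}, so $\text{min-cut of }G^{*}\le$ RHS.

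For the converse (ii) I would take an arbitrary $G$ with a cut $(U,\overbar U)$ and first observe that a contacted original node must pay $\alpha$ (its input vertex is fed from $\mathrm{S}$ by an infinite-capacity edge), so the worst case has $\mathrm{DC}$ contacting $k$ regenerated nodes; listing these as $v_{1},\dots,v_{k}$ in topological order and grouping them by cluster, one sees that if the storage edge of $v_{i}$ is not cut then at most the number $a_{i}$ of earlier contacted nodes in $v_{i}$'s own cluster and the number $b_{i}=(i-1)-a_{i}$ of earlier contacted nodes in other clusters can serve as ``free'' helpers, so $v_{i}$ contributes at least $\min\{\alpha,\ (d_{I}-a_{i})\beta_{I}+\max\{0,\,d_{c}-b_{i}\}\,\beta_{c}\}$, whence the cut-value is bounded below by the sum of these quantities over the $(a_{i},b_{i})$ determined by the $\mathrm{DC}$ choice and the ordering. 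The remaining step, which I expect to be the crux, is a combinatorial minimization: one must show that, over all admissible cluster-multiplicities (at most $n_{I}$ contacted nodes per cluster, $k>n_{I}$ total) and all topological orderings, this sum is smallest for the level-by-level, maximally concentrated configuration realized by $G^{*}$, in which case it collapses to \eqref{Eqn:Capacity of clustered DSS_rev}. This is delicate because each summand couples the $\beta_{I}$ and $\beta_{c}$ parts through the outer $\min\{\alpha,\cdot\}$, so the optimization does not separate; it has to be carried out by an exchange argument --- move a contacted node from a cluster with more of them to one with fewer, re-index the repair order, and verify term by term that the total does not increase --- which relies on the concavity of $t\mapsto\min\{\alpha,t\}$ and, crucially, on the standing assumption $\beta_{I}\ge\beta_{c}$: it is precisely $\beta_{c}\le\beta_{I}$ that makes exchanging a ``$\beta_{I}$-heavy'' summand for a ``$\beta_{c}$-heavy'' one non-increasing. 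A secondary technical point, namely that a non-contacted regenerated node sitting on the sink side cannot be used to cheat the per-node bound, is dealt with as in \cite{dimakis2010network} by restricting attention to a minimal cut. Once (i) and (ii) are combined, $\mathcal C=\text{min-cut of }G^{*}=$ RHS as claimed.
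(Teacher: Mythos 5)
Your overall plan --- exhibit a graph $G^{*}$ and a cut achieving the claimed value, show that every cut of every graph in $\mathcal G$ is at least that value, and invoke achievability of the cut-set bound --- is exactly the structure of the paper's proof, and your Part~(i) construction (rounds of regenerations, with the cut putting originals on the source side and charging whichever of $\alpha$ or the fresh-helper total is cheaper at each regenerated node) reproduces the paper's Part~I. Your per-node lower bound $u_i\ge\omega_i$ is also the paper's starting point, with one technicality glossed over: the paper bounds against the topologically first $k$ \emph{output nodes lying in $\overline U$}, which need not coincide with the DC-contacted nodes, since a regenerated but non-contacted node may sit topologically between two contacted ones and be forced to the sink side. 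Your appeal to restricting to a minimal cut as in \cite{dimakis2010network} does not obviously eliminate this possibility, whereas the paper's formulation sidesteps it entirely.

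The genuine gap is the combinatorial minimization, which you correctly identify as the crux but do not carry out --- and your sketch of it would not close as written. Your exchange moves a contacted node from a more-occupied cluster to a less-occupied one and asserts the total does not increase; iterating that would show the evenly spread allocation minimizes the sum, whereas the actual minimizer (the horizontal selection vector $\bm{s}_h$) is the maximally \emph{concentrated} one, so the exchange must run the opposite way. More importantly, the paper does not close the converse with a single exchange: it decouples the optimization into Lemma~\ref{Lemma:optimal ordering vector} (for any fixed selection vector, the \emph{vertical ordering} $\bm{\pi}_v$ minimizes the lower bound, proved by showing that running-sum maximizers are min-cut minimizers and that $\bm{\pi}_v$ is a running-sum maximizer via a packing argument) and Lemma~\ref{Lemma:optimal selection vector} (with the vertical ordering fixed, the horizontal selection $\bm{s}_h$ minimizes, proved by a majorization-style induction on the point vector). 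Concavity of $t\mapsto\min\{\alpha,t\}$ and $\beta_I\ge\beta_c$ are ingredients, but not sufficient ones: one also needs the invariance $\sum_i\omega_i=\textit{const.}$ over orderings (Proposition~\ref{Prop:weight vector}) and the explicit packing and majorization steps to pin down the extremal ordering and selection. Supplying these two lemmas, or an equivalent, correctly oriented exchange argument, is what is missing from the proposal.
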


The proof is in Appendix \ref{Section:Proof of Thm 1}.
Note that the parameters used in the statement of Theorem \ref{Thm:Capacity of clustered DSS} have the following property, the proof of which is in Appendix \ref{Section:proof_of_omega_bound_gamma}.

\begin{prop}\label{Prop:omega_i_bounded_by_gamma}
	For every $(i,j)$ with $i \in [n_I], j \in [g_i]$, we have 
	\begin{equation}\label{Eqn:omega_is_bounded_by_gamma}
%	\rho_i \beta_I + \phi_{i}^{(j)}\beta_c \leq \gamma.
		\rho_i \beta_I + (n-\rho_i - j - \sum_{m=1}^{i-1}g_m)\beta_c \leq \gamma.
	\end{equation}
Moreover,
\begin{equation}\label{Eqn:sum of g is k}
\sum_{m=1}^{n_I} g_m = k
\end{equation}
holds.
\end{prop}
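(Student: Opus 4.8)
The plan is to prove the two claims separately, starting with the simpler identity \eqref{Eqn:sum of g is k}. Writing $k = n_I \lfloor k/n_I \rfloor + (k \bmod n_I)$, I would sum the piecewise definition \eqref{Eqn:g_m} over $m \in [n_I]$: exactly $(k \bmod n_I)$ of the terms equal $\lfloor k/n_I \rfloor + 1$ and the remaining $n_I - (k \bmod n_I)$ equal $\lfloor k/n_I \rfloor$, so the sum is $n_I \lfloor k/n_I \rfloor + (k \bmod n_I) = k$. This is a one-line computation and should be dispatched first since it is used in the second part.

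For the inequality \eqref{Eqn:omega_is_bounded_by_gamma}, recall from \eqref{Eqn:gamma} that $\gamma = (n_I - 1)\beta_I + (n - n_I)\beta_c$, and from \eqref{Eqn:rho_i} that $\rho_i = n_I - i \le n_I - 1$ for $i \in [n_I]$. The left side of \eqref{Eqn:omega_is_bounded_by_gamma} is $\rho_i \beta_I + (n - \rho_i - j - \sum_{m=1}^{i-1} g_m)\beta_c$. Since $\beta_I \ge \beta_c \ge 0$ by the standing assumption, the plan is to bound the two coefficients against those of $\gamma$. The $\beta_I$-coefficient satisfies $\rho_i \le n_I - 1$, which is the exact coefficient of $\beta_I$ in $\gamma$. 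The difference $\gamma - [\text{LHS}]$ then equals $(n_I - 1 - \rho_i)\beta_I + \bigl[(n - n_I) - (n - \rho_i - j - \sum_{m=1}^{i-1} g_m)\bigr]\beta_c = (i-1)\beta_I + (\rho_i + j + \sum_{m=1}^{i-1} g_m - n_I)\beta_c$. Substituting $\rho_i = n_I - i$ this becomes $(i-1)\beta_I + (j - i + \sum_{m=1}^{i-1} g_m)\beta_c$, so it suffices to show this quantity is nonnegative.

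The main obstacle is controlling the sign of the $\beta_c$-coefficient $j - i + \sum_{m=1}^{i-1} g_m$, which can in principle be negative (e.g. when $i$ is large, $j=1$, and the $g_m$ are small). Here I would use $\beta_I \ge \beta_c$: it is enough to show the sum of the two coefficients is nonnegative, i.e. $(i-1) + (j - i + \sum_{m=1}^{i-1} g_m) = j - 1 + \sum_{m=1}^{i-1} g_m \ge 0$, which is immediate since $j \ge 1$ and each $g_m \ge 0$. To make the trade rigorous, I would split into two cases: if the $\beta_c$-coefficient is already nonnegative we are done since $\beta_I, \beta_c \ge 0$; if it is negative, then $(i-1)\beta_I + (j-i+\sum_{m=1}^{i-1}g_m)\beta_c \ge (i-1)\beta_c + (j-i+\sum_{m=1}^{i-1}g_m)\beta_c = (j-1+\sum_{m=1}^{i-1}g_m)\beta_c \ge 0$, using $\beta_I \ge \beta_c$ together with the negativity of the multiplier to justify the first inequality. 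Either way $\gamma - [\text{LHS}] \ge 0$, completing the proof. I expect no serious difficulty beyond bookkeeping the index shift $\rho_i = n_I - i$ and keeping the case split clean.
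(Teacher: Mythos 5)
Your proof is correct, and it takes a genuinely different route from the paper's for the inequality \eqref{Eqn:omega_is_bounded_by_gamma}. The paper bounds each coefficient separately against the corresponding coefficient in $\gamma = (n_I-1)\beta_I + (n-n_I)\beta_c$: it uses $\rho_i \le n_I - 1$ for the $\beta_I$ part, and for the $\beta_c$ part it invokes the extra structural fact that $\sum_{m=1}^{i-1} g_m \ge i-1$ whenever $g_i \ge 1$ (which follows because $(g_m)$ is non-increasing), yielding $n - \rho_i - j - \sum_{m=1}^{i-1} g_m \le n - n_I$. You instead form the difference $\gamma - \text{LHS} = (i-1)\beta_I + (j - i + \sum_{m=1}^{i-1} g_m)\beta_c$ and trade $\beta_I$ for $\beta_c$ in the first term, reducing the whole thing to $(j - 1 + \sum_{m=1}^{i-1} g_m)\beta_c \ge 0$. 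Your version is more elementary: it needs only $j \ge 1$, $g_m \ge 0$, $\beta_I \ge \beta_c \ge 0$, and avoids the monotonicity of $(g_m)$ entirely. The paper's version yields a slightly stronger intermediate fact (each coefficient is individually dominated, which it reuses elsewhere), at the cost of the extra observation. One small remark on your write-up: the case split on the sign of $j - i + \sum_{m=1}^{i-1} g_m$ is unnecessary, and the justification ``using \dots the negativity of the multiplier'' for the first inequality is off the mark --- the step $(i-1)\beta_I \ge (i-1)\beta_c$ holds unconditionally from $i \ge 1$ and $\beta_I \ge \beta_c$, so the single chain works in all cases. For the identity \eqref{Eqn:sum of g is k}, your computation matches the paper's (which phrases it via $q = \lfloor k/n_I\rfloor$ and $r = k \bmod n_I$).
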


%\begin{theorem} \label{Thm:Capacity of clustered DSS_for_presentation}
%	Consider a $\beta_I \geq \beta_c$ case. 
%	The capacity of the clustered distributed storage system with parameters $(n,k,L,\alpha, \gamma_I, \gamma_c)$ is 
%	\begin{equation}\label{Eqn:Capacity of clustered DSS_for_presentation}
%		\mathcal{C}(\alpha, \gamma_I, \gamma_c) = \sum_{i=1}^{n_I} \sum_{j=1}^{g_i} \min \{\alpha, \frac{\epsilon_i}{n_I-1} \gamma_I + \frac{\delta_{ij}}{n-n_I} \gamma_c \},
%	\end{equation}
%	where
%	\begin{align}
%	\epsilon_i &= n_I - i \label{Eqn:x_i}\\
%	\delta_{ij} &= n-t_{ij}-\epsilon_i\label{Eqn:y_i_j}\\
%	t_{ij} &= \sum_{m=1}^{i-1}g_i + j\\
%	g_i &=
%	\begin{cases}
%	\floor{\frac{k}{n_I}} + 1, & i \leq mod(k,n_I) \\
%	\floor{\frac{k}{n_I}}, & otherwise.\label{Eqn:g_i}
%	\end{cases}
%	\end{align}
%\end{theorem}

\subsection{Relationship between $\mathcal{C}$ and $\epsilon = \beta_c/\beta_I$}\label{Section:C_versus_kappa}

In this subsection, we analyze the capacity of a clustered DSS as a function of an important parameter 
\begin{equation}\label{Eqn:epsilon}
\epsilon \coloneqq  \beta_c/\beta_I,
\end{equation}
the cross-cluster repair burden per intra-cluster repair burden.
In Fig. \ref{Fig:capacity_versus_kappa_plot}, capacity is plotted as a function of $\epsilon$. From (\ref{Eqn:gamma}), the total repair bandwidth can be expressed as 
\begin{align}\label{Eqn:capacity versus kappa}
\gamma &= \gamma_I + \gamma_c = (n_I-1)\beta_I + (n-n_I)\beta_c \nonumber\\
&= \Big( n_I - 1 + (n-n_I)\epsilon \Big) \beta_I.
\end{align}
%******************use a different bracket shape for the outer bracket*********************
Using this expression, the capacity is expressed as 
%\begin{equation*} 
%\mathcal{C}(\epsilon) = \sum_{i=1}^{n_I} \sum_{j=1}^{g(i)} \min \{\alpha, \frac{(n-n_I)y(i,j)\epsilon + (n_I-1)x(i) }{(n-n_I)\epsilon + n_I - 1}   \gamma \}.
%\end{equation*} 
\begin{equation}\label{Eqn:capacity_kappa} 
%\mathcal{C}(\epsilon) = \sum_{i=1}^{n_I} \sum_{j=1}^{g_i} \min \{\alpha, \frac{\phi_{i}^{(j)}\epsilon + \rho_i }{(n-n_I)\epsilon + n_I - 1}   \gamma \}.
\mathcal{C}(\epsilon) = \sum_{i=1}^{n_I} \sum_{j=1}^{g_i} \min \Big\{\alpha, \frac{(n-\rho_i - j - \sum_{m=1}^{i-1}g_m)\epsilon + \rho_i }{(n-n_I)\epsilon + n_I - 1}   \gamma \Big\}.
\end{equation} 
%****use a bigger bracket for { }*************************
For fair comparison on various $\epsilon$ values, 
capacity is calculated for a fixed ($n, k, L, \alpha, \gamma$) set.
%($n, k, L, \alpha, \gamma$) values are 
%%the node storage size $\alpha$ and total repair bandwidth $\gamma$ in (\ref{Eqn:capacity versus kappa}) is 
%fixed for calculating the capacity. 
The capacity is an increasing function of $\epsilon$ as shown in Fig. \ref{Fig:capacity_versus_kappa_plot}. 
%\textcolor{red}{Can this be mathematically proved?}
%
%\begin{corollary}\label{coro:kappa_inc}
%For a fixed $\alpha$ and $\gamma$ setting, the capacity $\mathcal{C}(\epsilon)$ in (\ref{Eqn:capacity_kappa}) is a monotonic increasing function of $\epsilon = \beta_c/\beta_I$.
%\end{corollary}
This implies that for given resources $\alpha$ and $\gamma$, allowing a larger $\beta_c$ (until it reaches $\beta_I$) is always beneficial, in terms of storing a larger file. For example, under the setting in Fig. \ref{Fig:capacity_versus_kappa_plot}, allowing $\beta_c = \beta_I$ (i.e., $\epsilon = 1$) can store $\mathcal{M}=48$, while setting $\beta_c = 0$ (i.e., $\epsilon = 0$) cannot achieve the same level of storage.
This result is consistent with the previous work on asymmetric repair in \cite{ernvall2013capacity}, which proved that the symmetric repair maximizes capacity. 
Therefore, when the total communication amount $\gamma$ is fixed, a loss of storage capacity is the cost we need to pay in order to reduce the communication burden $\beta_c$ across different clusters.

\begin{figure}[t]
	\centering
	\includegraphics[width=90mm]{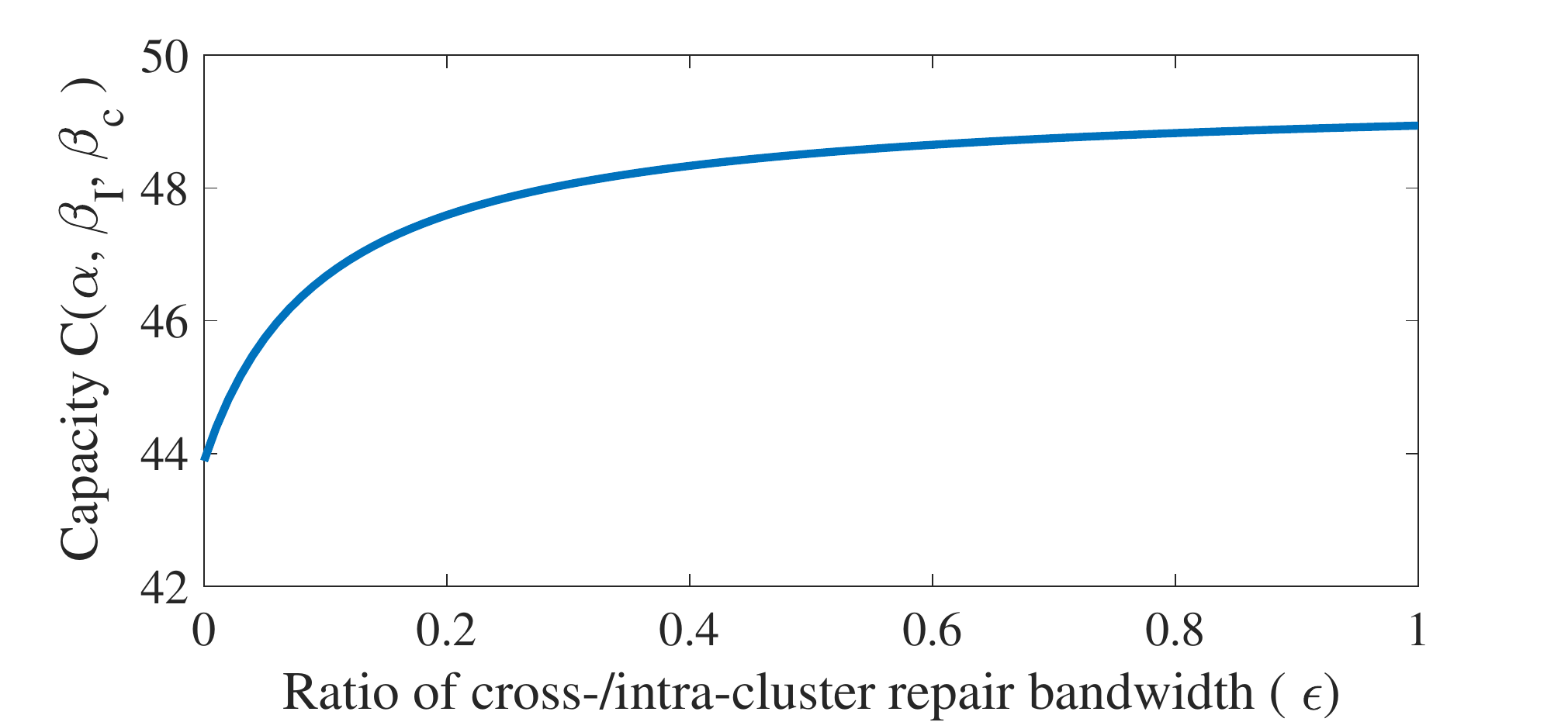}
	\caption{Capacity as a function of $\epsilon$, under the setting of $n=100, k=85, L=10, \alpha = 1, \gamma = 1$}
	\label{Fig:capacity_versus_kappa_plot}
\end{figure}

%The capacity value is also analyzed as a function of $L$, illustrated in  Fig. \ref{Fig:capacity_versus_L_plot}.
%\textcolor{red}{Delete? Not mathematically analyzed. }
%
%
%\begin{figure}[t]
%	\centering
%	\includegraphics[height=60mm]{capacity_versus_L_plot.pdf}
%	\caption{Capacity as a function of $L$ ($n=100, k=85, \alpha = 1, \gamma = 1$)}
%	\label{Fig:capacity_versus_L_plot}
%\end{figure}

\subsection{Relationship between $\mathcal{C}$ and $L$}\label{Section:C_versus_L}

In this subsection, we analyze the capacity of a clustered DSS as a function of $L$, the number of clusters.  For fair comparison, 
	($n, k, \alpha, \gamma$) values are fixed for calculating capacity. 
	In Fig. \ref{Fig:capacity_versus_L_plot}, capacity curves for two scenarios are plotted over a range of $L$ values.
	%when $L=1,2,4,5,10,20,25,50,$ and $100$. 
	First, the solid line corresponds to the 
	%case of symmetric repair (when $\beta_I = \beta_c$ holds, which is identical to the setting in \cite{dimakis2010network}), 
	scenario when the system has abundant cross-rack bandwidth resources $\gamma_c$. In this ideal scenario which does not suffer from the over-subscription problem, the system can store $\mathcal{M}=80$ irrespective of the dispersion of nodes. 

However, consider a practical situation where \cmt{the available cross-rack bandwidth is scarce compared to the intra-rack bandwidth; for example, $\xi=\gamma_c/\gamma=1/5$}. 
%In order to solve the over-subscription problem, we need to reduce the cross-rack repair bandwidth $\gamma_c$ by $1/5$; in this case the allowed intra-rack repair bandwidth $\gamma_I$ increases since we fixed the total available repair bandwidth $\gamma = \gamma_I + \gamma_c$ for fair comparison. 
The dashed line in Fig. \ref{Fig:capacity_versus_L_plot} corresponds to this scenario where the system has not enough cross-rack bandwidth resources.
%Note that both scenarios in Fig. \ref{Fig:capacity_versus_L_plot} consider the same overall repair bandwidth $\gamma=10$.
%, so that it suppresses the cross-rack repair bandwidth usage to prevent the over-subscription problem.	
In this practical scenario, reducing $L$ (i.e., gathering the storage nodes into a smaller number of clusters) increases capacity. However, note that sufficient dispersion of data into a fair number of clusters is typically desired, in order to guarantee the reliability of storage in rack-failure events. Finding the optimal number $L^*$ of clusters in this trade-off relationship remains as an important topic for future research.

\begin{figure}[t]
	\centering
	\includegraphics[width=85mm]{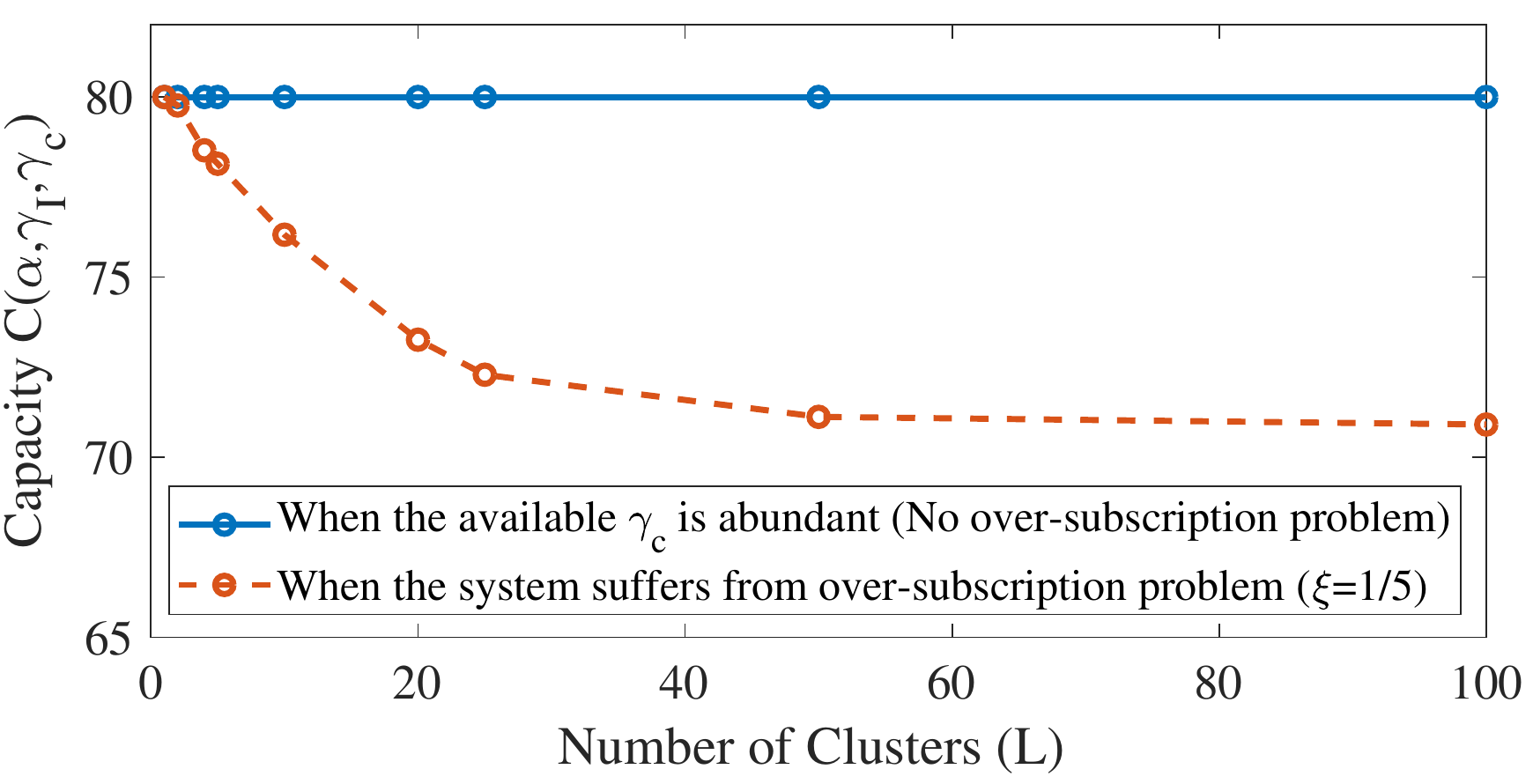}
	\caption{Capacity as a function of $L$, under the setting of $n=100, k=80,\alpha = 1, \gamma = 10$}
	\label{Fig:capacity_versus_L_plot}
\end{figure}

\begin{figure}[t]
	\centering
	\includegraphics[width=65mm]{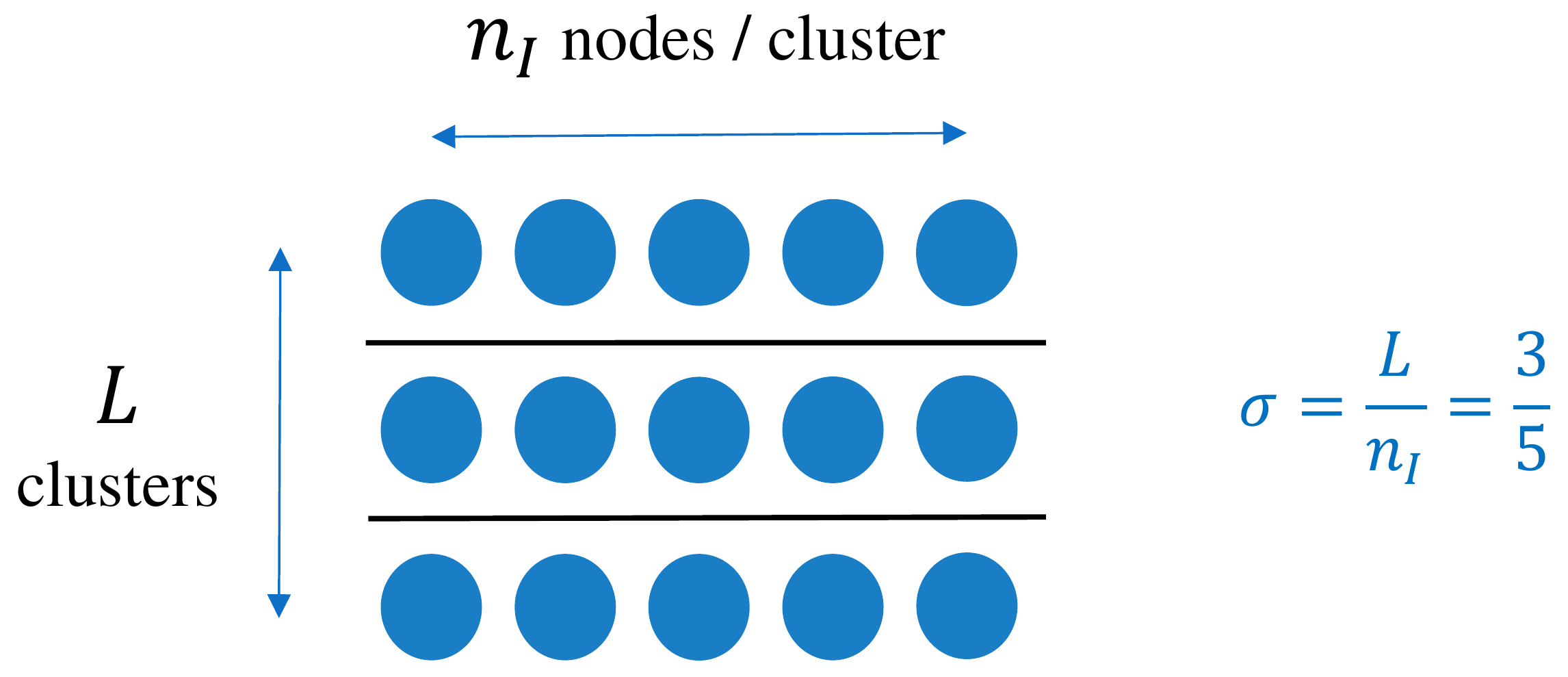}
	\caption{An example of DSS with dispersion ratio $\sigma = 5/3$, when the parameters are set to $L=3, n_I = 5, n=n_IL = 15$}
	\label{Fig:sigma}
\end{figure}
%******do not use bold letters and words in the figure*******************

In Fig. \ref{Fig:capacity_versus_L_plot}, the capacity is a monotonic decreasing function of $L$ when the system suffers from an over-subscription problem. However, in general $(n,k,\alpha,\gamma)$ parameter settings, capacity is not always a monotonic decreasing function of $L$. Theorem \ref{Thm:cap_dec_ftn_L} illustrates the behavior of capacity as $L$ varies, focusing on the special case of 
%\begin{equation*}
$\gamma = \alpha.$
%\end{equation*}
%which is called the \textit{bandwidth-limited regime}. 
Before formally stating the next main result, we need to define  \begin{equation}\label{Eqn:sigma}
\sigma \coloneqq \frac{L}{n_I} = \frac{L^2}{n},
\end{equation}
the \textit{dispersion factor} of a clustered storage system, as illustrated in Fig. \ref{Fig:sigma}.
In the two-dimensional representation of a clustered distributed storage system, $L$ represents the number of rows (clusters), while $n_I = n/L$ represents the number of columns (nodes in each cluster). The dispersion factor $\sigma$ is the ratio of the number of rows to the number of columns. If $L$ increases for a fixed $n_I$, then $\sigma$ grows and the nodes become more dispersed into multiple clusters. 
%This behavior is reflected in the definition of $\sigma$, and this is why we call it the \textit{dispersion factor}.

Now we state our second main result, which is about the behavior of $\mathcal{C}$ versus $L$. 

\begin{theorem}\label{Thm:cap_dec_ftn_L}
\cmt{Consider the $\gamma = \alpha$ case when $\sigma, \gamma, R$ and $\xi$ are fixed.} In the asymptotic regime of large $n$, capacity $\mathcal{C}(\alpha, \beta_I, \beta_c)$ is asymptotically equivalent to   
\begin{equation}
\underline{C} = \frac{k}{2} \left(\gamma + \frac{n-k}{n(1-1/L)} \gamma_c \right), \label{Eqn:cap_lower}
\end{equation}
a monotonically decreasing function of $L$. 
%for arbitrary fixed $n,k,\gamma, \gamma_c$. 
This can also be stated as
\begin{equation}
%\lim\limits_{n \rightarrow \infty} \mathcal{C}/\underline{C} = 1
\mathcal{C} \sim \underline{C}\label{Eqn:Cap_asymp_equiv}
\end{equation}
as $n \rightarrow \infty$ for a fixed $\sigma$.
%\footnote{Note that \eqref{Eqn:Cap_asymp_equiv} also holds when $n \rightarrow \infty$ for a fixed $n_I$}.
\end{theorem}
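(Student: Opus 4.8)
The plan is to exploit the hypothesis $\gamma=\alpha$ to strip the minimum from the capacity formula (\ref{Eqn:Capacity of clustered DSS_rev}), evaluate the resulting double sum in closed form, and then split off a leading term equal to $\underline{C}$ plus a remainder that vanishes relative to $\underline{C}$ once $\sigma=L^{2}/n$ is held fixed and $n\to\infty$ (so that $L=\sqrt{\sigma n}\to\infty$ and $n_I=\sqrt{n/\sigma}\to\infty$, while $\gamma_I=(1-\xi)\gamma$ and $\gamma_c=\xi\gamma$ stay constant).

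\emph{Step 1 (remove the min).} By Proposition \ref{Prop:omega_i_bounded_by_gamma}, each summand $\omega_{i,j}\coloneqq\rho_i\beta_I+(n-\rho_i-j-\sum_{m=1}^{i-1}g_m)\beta_c$ of (\ref{Eqn:Capacity of clustered DSS_rev}) satisfies $\omega_{i,j}\le\gamma=\alpha$, so $\min\{\alpha,\omega_{i,j}\}=\omega_{i,j}$ and $\mathcal{C}=\beta_I A+\beta_c B$, where, writing $G_i=\sum_{m=1}^{i-1}g_m$ and using $\rho_i=n_I-i$ and $\sum_{j=1}^{g_i}j=\tfrac12 g_i(g_i+1)$,
\[
A=\sum_{i=1}^{n_I}g_i(n_I-i),\qquad B=\sum_{i=1}^{n_I}\Big[g_i(n-n_I+i-G_i)-\tfrac12 g_i(g_i+1)\Big].
\]
The hypothesis $\gamma=\alpha$ is indispensable here: for general $\gamma$ the sum would mix saturated ($=\alpha$) and unsaturated terms and no such closed form would be available.

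\emph{Step 2 (linearize).} Next I would linearize in the fluctuations of the $g_i$'s. Put $\bar g=k/n_I$, $\delta_i=g_i-\bar g$; by (\ref{Eqn:g_m}) we have $|\delta_i|<1$, and by (\ref{Eqn:sum of g is k}) we have $\sum_i\delta_i=0$, so $D_i\coloneqq G_i-(i-1)\bar g=\sum_{m<i}\delta_m$ obeys $|D_i|<n_I$. Substituting $g_i=\bar g+\delta_i$, $G_i=(i-1)\bar g+D_i$ and using $(n_I-1)\beta_I=\gamma_I$, $\beta_c=\gamma_c/(n-n_I)$, $\gamma_I=\gamma-\gamma_c$ (from (\ref{Eqn:gamma})) and $n-n_I=n(1-1/L)$: the $\bar g$-only part of $\beta_I A$ equals $\tfrac12 k\gamma_I$; in $B$ the $\bar g$-quadratic terms collapse via $-\tfrac{k^{2}(n_I-1)}{2n_I}-\tfrac{k^{2}}{2n_I}=-\tfrac{k^{2}}{2}$, leaving $B=kn-\tfrac12 kn_I-\tfrac12 k^{2}+\mathcal E$, where $\mathcal E$ collects all $\delta_i,D_i$ contributions and, by term-by-term bounds (the dominant piece being $\bar g\sum_i D_i$), $|\mathcal E|=O(kn_I+n_I^{2})=O(kn_I)$ using (\ref{Eqn:k_constraint}). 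Adding $\beta_I A$ and $\beta_c B$, the ``main'' part equals $\underline{C}+\tfrac{k\gamma_c n_I}{2(n-n_I)}$, which exactly cancels the $-\tfrac{\gamma_c k n_I}{2(n-n_I)}$ produced by the $-\tfrac12 kn_I$ term of $B$; hence $\mathcal{C}=\underline{C}+R_n$ with $R_n=-\beta_I\sum_i i\delta_i+\gamma_c\mathcal E/(n-n_I)$ and $|R_n|=O(n_I)$ (using $n-n_I\ge n/2$ from $L\ge2$ and $n_I\le n$).

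\emph{Step 3 (conclude) and monotonicity.} Since $\underline{C}\ge\tfrac12 k\gamma=\tfrac12 R\gamma\,n=\Theta(n)$ and $n_I=\sqrt{n/\sigma}=o(n)$, we get $|\mathcal{C}/\underline{C}-1|=|R_n|/\underline{C}=O(n_I/n)=O(n^{-1/2})\to0$, i.e.\ $\mathcal{C}\sim\underline{C}$, which is (\ref{Eqn:Cap_asymp_equiv}). Monotonicity is immediate once $n,k,\gamma,\gamma_c$ are regarded as fixed: $\underline{C}=\tfrac12 k\gamma+\tfrac{k(n-k)\gamma_c}{2n}\cdot\tfrac{L}{L-1}$, and $L\mapsto L/(L-1)$ is strictly decreasing on $L>1$ with a nonnegative coefficient. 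I expect Step 2 to be the main obstacle: because the $g_i$ lie only within $1$ of $k/n_I$, the partial sums $G_i$ carry an $O(n_I)$ error which, once multiplied by $g_i=\Theta(k/n_I)$ and summed over the $n_I$ indices, must be shown to stay $O(kn_I)$ — hence, after division by $n-n_I$, only $O(n_I)=o(\underline{C})$; the exact identity $\sum_i g_i=k$ and the quadratic cancellation above are precisely what keep this bookkeeping from producing a spuriously large remainder.
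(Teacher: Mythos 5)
Your proposal is correct, and Steps 1 and 3 coincide with the paper's argument (strip the $\min$ using Proposition~\ref{Prop:omega_i_bounded_by_gamma} because $\gamma=\alpha$, then observe $\underline C=\Theta(n)$ while the remainder is $O(n_I)=O(\sqrt{n})$). Step 2, however, follows a genuinely different route from the paper's Lemma~\ref{Lemma:cap_upper_lower}. You linearize both coefficient sums $A$ and $B$ around $\bar g=k/n_I$, track the fluctuations $\delta_i=g_i-\bar g$ and their partial sums $D_i$, and bound the error $\mathcal E$ term by term. The paper instead exploits the structural identity that $B_0=\sum_{i=1}^k(n-i)-A_0$ and $B_0'=\sum_{i=1}^k(n-i)-A_0'$, so that the $\sum(n-i)$ parts cancel and
\begin{equation*}
\mathcal C-\underline C=(A_0-A_0')\beta_I+(B_0-B_0')\beta_c=(A_0-A_0')(\beta_I-\beta_c),
\end{equation*}
and then computes $A_0-A_0'=\tfrac{r(n_I-r)}{2}$ \emph{exactly}, where $r=k\bmod n_I$. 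The paper's factorization is cleaner: it gives the two-sided pinch $\underline C\le\mathcal C\le\underline C+\tfrac{n_I^2}{8}(\beta_I-\beta_c)$ at once, so $\underline C$ is established as a genuine lower bound and the error bound is tight (attained when $r\approx n_I/2$). Your linearization yields only $|\mathcal C-\underline C|=O(n_I)$ without the sign information, and it requires the more delicate bookkeeping you flag in Step 2 (the $\bar g\sum_iD_i$ term), whereas the paper's identity collapses that bookkeeping to a single quadratic in $r$. Both bounds are $O(n_I)$, so both suffice to conclude $\mathcal C/\underline C\to 1$; the monotonicity arguments are identical.
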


\cmt{Note that under the setting of Theorem \ref{Thm:cap_dec_ftn_L}, we have}
\begin{align*}
n_I&= \frac{n}{L} = \frac{n}{\sqrt{n\sigma}} = \Theta(\sqrt{n}), \nonumber\\
k&= nR = \Theta(n), \quad
\alpha= \gamma = \text{constant}, \nonumber\\
\beta_I &= \frac{\gamma_I}{n_I-1} = \Theta(\frac{1}{\sqrt{n}}), \quad \beta_c = \frac{\gamma_c}{n-n_I} = \Theta(\frac{1}{n})
\end{align*}
\cmt{in the asymptotic regime of large $n$.}
The proof of Theorem \ref{Thm:cap_dec_ftn_L} is based on the following two lemmas. 

\begin{lemma}\label{Lemma:cap_upper_lower}
	\cmt{In the case of $\gamma = \alpha$}, capacity $ \mathcal{C}(\alpha,\beta_I,\beta_c)$ is upper/lower bounded as 
	\begin{equation} \label{Eqn:cap_upper_lower_bound}
	\underline{C} \leq \mathcal{C}(\alpha,\beta_I,\beta_c) \leq \underline{C} + \delta  
	\end{equation}
	where
	\begin{align}
	\delta &=  n_I^2(\beta_I-\beta_c)/8 \label{Eqn:cap_delta_val}
	\end{align}
	and $\underline{C}$ is defined in (\ref{Eqn:cap_lower}). 
\end{lemma}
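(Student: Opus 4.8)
The plan is to start from the exact capacity formula (\ref{Eqn:Capacity of clustered DSS_rev}) and analyze the summand $\min\{\alpha, \rho_i\beta_I + (n-\rho_i-j-\sum_{m=1}^{i-1}g_m)\beta_c\}$ term by term. The key structural fact I would establish first is that, under $\gamma=\alpha$, Proposition \ref{Prop:omega_i_bounded_by_gamma} gives $\rho_i\beta_I + (n-\rho_i-j-\sum_{m=1}^{i-1}g_m)\beta_c \le \gamma = \alpha$, so every $\min$ is resolved in favor of the second argument. Hence the capacity collapses to the clean double sum
\begin{equation*}
\mathcal{C} = \sum_{i=1}^{n_I}\sum_{j=1}^{g_i}\left(\rho_i\beta_I + \Bigl(n-\rho_i-j-\sum_{m=1}^{i-1}g_m\Bigr)\beta_c\right),
\end{equation*}
with no $\min$ remaining. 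This removes the main source of irregularity, and the remaining task is purely to evaluate and estimate this sum.

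Next I would split the sum into its $\beta_I$-part and its $\beta_c$-part. For the $\beta_I$-part, $\sum_i \rho_i g_i \beta_I = \sum_i (n_I-i) g_i \beta_I$; using $\sum_m g_m = k$ (again Proposition \ref{Prop:omega_i_bounded_by_gamma}) and the fact that the $g_i$ are as equal as possible — each within $1$ of $k/n_I$ — this is $\tfrac{k}{2}\gamma_I$ up to an error of order $n_I^2\beta_I$, since replacing the near-uniform $g_i$ by the exact average $k/n_I$ perturbs $\sum_i(n_I-i)g_i$ by $O(n_I^2)$ and telescoping $\sum_{i=1}^{n_I}(n_I-i)=\binom{n_I}{2}$. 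For the $\beta_c$-part I would substitute the running index $s = j + \sum_{m=1}^{i-1}g_m$, which ranges over $1,\dots,k$ as $(i,j)$ ranges over the double sum, so that the inner content $n-\rho_i-s$ becomes, after again approximating $\rho_i=n_I-i$ by its average behaviour, $n - s$ plus a correction that is $O(n_I)$ per term. Then $\sum_{s=1}^{k}(n-s)\beta_c = k\bigl(n - \tfrac{k+1}{2}\bigr)\beta_c$, and using $\beta_c = \gamma_c/(n-n_I)$ together with $n-n_I = n(1-1/L)$ this matches the $\beta_c$-term of $\underline{C}$ in (\ref{Eqn:cap_lower}) up to lower-order corrections. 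Collecting the two parts reproduces $\underline{C} = \tfrac{k}{2}\bigl(\gamma + \tfrac{n-k}{n(1-1/L)}\gamma_c\bigr)$ as the leading term.

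The remaining work is to show the error is exactly sandwiched between $0$ and $\delta = n_I^2(\beta_I-\beta_c)/8$ — not merely $O(n_I^2\beta_I)$. For the lower bound $\underline{C}\le\mathcal{C}$ I would argue directly that each discrepancy introduced by replacing $g_i$ with its average and $\rho_i$ with its average has a definite sign, or alternatively bound $\mathcal{C}$ from below by a convexity/rearrangement argument on the partition $(g_i)$; I expect Jensen or a rearrangement inequality on $\sum_i(n_I-i)g_i$ and on the nested index sum to give the one-sided bound cleanly. For the upper bound, the gap is maximized when the $g_i$ deviate maximally from uniform and when $\rho_i$ is paired adversarially with the index; a careful accounting shows the worst case contributes at most $n_I^2/8$ units each weighted by $(\beta_I-\beta_c)$, giving $\delta$. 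The asymptotic-equivalence statement (\ref{Eqn:Cap_asymp_equiv}) then follows immediately: with $\sigma$ fixed and $n\to\infty$ we have $n_I=\Theta(\sqrt n)$, $\beta_I=\Theta(1/\sqrt n)$, hence $\delta = \Theta(\sqrt n)$, while $\underline{C}=\Theta(n)$, so $\delta/\underline{C}\to 0$ and $\mathcal{C}/\underline{C}\to 1$. The main obstacle I anticipate is pinning down the \emph{exact} constant $1/8$ and the correct sign in the two-sided bound (\ref{Eqn:cap_upper_lower_bound}) — the leading-order computation is routine, but showing that all the rounding errors from the floor functions in $g_m$ and from the interaction between the $\rho_i$ factor and the nested summation index land precisely in $[0,\delta]$ requires a careful, somewhat delicate case analysis on the residue $k \bmod n_I$ rather than a soft order-of-magnitude estimate.
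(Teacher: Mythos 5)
Your opening step is correct and matches the paper's approach: under $\gamma=\alpha$, Proposition~\ref{Prop:omega_i_bounded_by_gamma} forces every $\min$ in (\ref{Eqn:Capacity of clustered DSS_rev}) to resolve to its second argument, so the capacity becomes the plain double sum; equivalently, after your reindexing $s=j+\sum_{m<i}g_m$, the single sum $\sum_{s=1}^k\omega_s$. Splitting into a $\beta_I$-part with coefficient $A_0=\sum_{i=1}^{n_I}(n_I-i)g_i$ and a $\beta_c$-part with coefficient $B_0=\sum_{s=1}^k(n-s)-A_0$ is also right, as is your identification of $\tfrac{k}{2}\gamma_I$ and $\sum_{s=1}^k(n-s)\beta_c$ as the leading terms. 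But the proposal then stalls: you treat the $\beta_I$-error and the $\beta_c$-error as two independent estimation problems, invoke Jensen/rearrangement heuristically, speak of ``replacing $\rho_i$ with its average'' (which is never needed --- $\rho_i=n_I-i$ stays exact; only the $g_i$ deviate from uniform), and defer the actual constant to a ``delicate case analysis'' on $k\bmod n_I$ that you do not carry out. As written this is not yet a proof, and it overestimates what remains.

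The idea you are missing is that the two errors are exact negatives of one another, so the whole discrepancy collapses into a single scalar times $(\beta_I-\beta_c)$. Writing $\underline{C}=A_0'\beta_I+B_0'\beta_c$ with $A_0'=(n_I-1)k/2$ and $B_0'=\sum_{s=1}^k(n-s)-A_0'$, one has $A_0+B_0=A_0'+B_0'$ by construction, hence
\begin{equation*}
\mathcal{C}-\underline{C}=(A_0-A_0')\beta_I+(B_0-B_0')\beta_c=(A_0-A_0')(\beta_I-\beta_c).
\end{equation*}
All that remains is to evaluate the single number $A_0-A_0'$. Substituting the explicit $g_i$ (each equal to $q=\lfloor k/n_I\rfloor$ or $q+1$, with $r=k\bmod n_I$ copies of the larger) yields the closed form $A_0-A_0'=r(n_I-r)/2$, which is $\geq 0$ (giving the lower bound) and $\leq n_I^2/8$ (giving $\delta$, with the maximum near $r\approx n_I/2$). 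Since $\beta_I\geq\beta_c$ by assumption, this delivers both sides of~(\ref{Eqn:cap_upper_lower_bound}) simultaneously --- no case analysis and no soft $O(\cdot)$ bookkeeping. Your rearrangement intuition for the sign of $A_0-A_0'$ is in fact sound (the $g_i$ are sorted in the same order as the weights $n_I-i$, so Chebyshev's sum inequality gives $A_0\geq A_0'$), but without the factorization through $(\beta_I-\beta_c)$ you cannot extract the constant $1/8$ or obtain the clean two-sided sandwich.
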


\begin{lemma}\label{Lemma:Scale}
	\cmt{In the case of $\gamma = \alpha$}, $\underline{C}$ and $\delta$  defined in (\ref{Eqn:cap_lower}), (\ref{Eqn:cap_delta_val}) satisfy 
\begin{align}
\underline{C} &= \Theta(n), \nonumber\\
\delta &= O(n_I) \nonumber
\end{align}
\cmt{when $\gamma, R$ and $\xi$ are fixed.}
\end{lemma}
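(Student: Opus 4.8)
The plan is to read off both asymptotic orders directly from the closed-form expressions \eqref{Eqn:cap_lower} and \eqref{Eqn:cap_delta_val}, using the scaling of the parameters that holds under the hypotheses of Theorem \ref{Thm:cap_dec_ftn_L}. Recall the regime: $\sigma = L^2/n$, $R = k/n$ and $\xi = \gamma_c/\gamma$ are held fixed while $\gamma = \alpha$ is a constant and $n \to \infty$. From $\sigma$ fixed we get $L = \Theta(\sqrt{n})$ and hence $n_I = n/L = \Theta(\sqrt{n})$; from $R$ fixed we get $k = Rn = \Theta(n)$; and $\gamma_c = \xi \gamma$ is a constant.

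First I would handle $\underline{C}$. Rewrite \eqref{Eqn:cap_lower} as
\begin{equation*}
\underline{C} = \frac{k}{2}\left(\gamma + \frac{n-k}{n}\cdot\frac{L}{L-1}\,\gamma_c\right).
\end{equation*}
Here $k = \Theta(n)$; the factor $\gamma$ is a positive constant; $\frac{n-k}{n} = 1-R$ is a constant in $(0,1)$ (using $k \le n$, indeed $k > n_I$ and $k<n$ from the standing assumptions, so $0<1-R<1$); $\frac{L}{L-1} \to 1$, and in particular is bounded between $1$ and $2$ for all $L \ge 2$; and $\gamma_c$ is a constant. Hence the parenthesized term is bounded above and below by positive constants for all large $n$, so $\underline{C} = \Theta(n)$. (If one wants to be careful about the lower bound, note $\gamma_c \ge 0$, so the bracket is at least $\gamma > 0$.)

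Next, $\delta$. From \eqref{Eqn:cap_delta_val}, $\delta = n_I^2(\beta_I - \beta_c)/8$. Since $\beta_c \ge 0$ we have $\beta_I - \beta_c \le \beta_I$, and $\beta_I = \gamma_I/(n_I-1) \le \gamma/(n_I - 1)$ because $\gamma_I \le \gamma$. Therefore
\begin{equation*}
0 \le \delta \le \frac{n_I^2}{8}\cdot\frac{\gamma}{n_I-1} = \frac{\gamma}{8}\cdot\frac{n_I^2}{n_I-1} = \Theta(n_I),
\end{equation*}
using $\gamma$ constant and $n_I^2/(n_I-1) = n_I + 1 + 1/(n_I-1) = \Theta(n_I)$. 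This gives $\delta = O(n_I)$, as claimed. I would also note in passing that $O(n_I) = O(\sqrt n)$ under the dispersion-factor normalization, which is the form actually needed to conclude $\delta/\underline{C} \to 0$ in the proof of Theorem \ref{Thm:cap_dec_ftn_L}.

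I do not anticipate a genuine obstacle here: the lemma is essentially bookkeeping on the orders of the quantities already computed in Lemma \ref{Lemma:cap_upper_lower}. The only point requiring a little care is making sure each named parameter's dependence on $n$ is pinned down correctly — in particular that $\beta_I = \Theta(1/\sqrt n)$ and $\beta_c = \Theta(1/n)$ follow from $\gamma_I, \gamma_c$ being $\Theta(1)$ together with $n_I - 1 = \Theta(\sqrt n)$ and $n - n_I = \Theta(n)$ — and that the constants hidden in $\Theta(\cdot)$ depend only on the fixed quantities $\sigma, \gamma, R, \xi$ and not on $n$. Once that scaling dictionary (which is exactly the display following the statement of Theorem \ref{Thm:cap_dec_ftn_L}) is in hand, both assertions are immediate.
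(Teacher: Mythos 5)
Your proposal is correct and follows essentially the same route as the paper: both rewrite $\underline{C}$ in terms of the fixed quantities $R,\xi,\gamma$ and use $L \geq 2$ to bound the $L$-dependent factor between constants, giving $\underline{C} = \Theta(n)$. For $\delta$ your argument is a slight streamlining of the paper's — you discard $\beta_c \geq 0$ up front and bound $\beta_I \leq \gamma/(n_I-1)$ directly, whereas the paper carries the full expression $(1-\xi)/(n_I-1) - \xi/(n-n_I)$ through the substitution $n = n_I L$ before reading off $O(n_I)$ — but the content is the same.
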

The proofs of these lemmas are in Appendix \ref{Section:Proofs_of_Lemmas}. 
Here we provide the proof of Theorem \ref{Thm:cap_dec_ftn_L} by using Lemmas \ref{Lemma:cap_upper_lower} and \ref{Lemma:Scale}. 
\begin{proof}[proof (of Theorem \ref{Thm:cap_dec_ftn_L})] 
From Lemma \ref{Lemma:Scale},
\begin{equation} \label{Eqn:delta_over_C}
\delta / \underline{C} = O(1/L) = O(\sqrt{1/n\sigma}) \rightarrow 0 
\end{equation}
as $n\rightarrow \infty$ for a fixed $\sigma$.
Moreover, dividing (\ref{Eqn:cap_upper_lower_bound}) by $\underline{C}$ results in 
\begin{equation}\label{Eqn:cap_bound_revisited}
1 \leq \mathcal{C}/\underline{C} \leq 1 + \delta/\underline{C}.
\end{equation} 
Putting (\ref{Eqn:delta_over_C}) into (\ref{Eqn:cap_bound_revisited}) completes the proof.
\end{proof}

%Now we analyze the tightness of two bounds. The lower bound is $\underline{C}$, while the gap between the upper bound and the lower bound is $\delta$. 
%The asymptotic behavior of $\underline{C}$ and $\delta$ can be shown in the following proposition, the proof of which is in Appendix \ref{Section:proof_of_prop_scale}.

%The ratio $\delta / \underline{C}$ indicates the ratio of the gap $\delta$ over approximated capacity $\underline{C}$. Directly from Proposition \ref{Prop:scale}, it is shown that
%\begin{equation} \label{deloverC}
%\delta / \underline{C} = O(1/L) = O(\sqrt{1/n\sigma}) \rightarrow 0 
%\end{equation}
%as $n\rightarrow \infty$ for a fixed $\sigma$.
%
%
%
%
%in Appendix \ref{Section:Proof_of_thm_cap_dec_ftn_L}.

%\subsection{Relationship between $\mathcal{C}$ and $n, L$}
%\textcolor{red}{
%	\begin{itemize}
%		\item May merge with prev. subsection
%		\item What is the behavior of $\mathcal{C}$ as a function of $n$, number of storage nodes? Any asymptotic behavior?
%		\item What is the behavior of $\mathcal{C}$ as a function of $n/L^2$, the ratio of vertical and horizontal?
%	\end{itemize}}
%

\section{Discussion on Feasible ($\alpha, \beta_I, \beta_c$)}\label{Section:analysis on feasible points}

In the previous section, we obtained the capacity of the clustered DSS. 
This section analyzes the feasible ($\alpha, \beta_I, \beta_c$) points which satisfy $\mathcal{C}(\alpha, \beta_I, \beta_c) \geq \mathcal{M}$ for a given file size $\mathcal{M}$.
Using 
\begin{equation*}
\gamma = \gamma_I + \gamma_c = (n_I-1)\beta_I + (n-n_I)\beta_c
\end{equation*}
in (\ref{Eqn:gamma}), the behavior of the feasible set of $(\alpha, \gamma)$ points can be observed.
Essentially, the feasible points demonstrate a trade-off relationship. Two extreme points \textendash \ minimum storage regenerating (MSR) point and minimum bandwidth regenerating (MBR) point \textendash \ of the trade-off has been analyzed. Moreover, a special family of network codes which satisfy $\epsilon=0$, which we call the \textit{intra-cluster repairable codes}, is compared with the \textit{locally repairable codes} considered in \cite{papailiopoulos2014locally}. Finally, the set of feasible $(\alpha, \beta_c)$ points is discussed, when the system allows maximum $\beta_I$.

%Similarly, section \ref{Section:zero_gammac} observes the set of feasible $(\alpha, \gamma_I)$ points in systems with \textit{zero} cross-cluster repair traffic ($\beta_c = 0$), which enables the local repair within each cluster. 
%
%
%, $(\alpha, \gamma_I)$ points and $(\gamma_I, \gamma_c)$ points.
%In particular, Section \ref{Section:zero_gammac} considers 
%system with zero cross-cluster repair traffic ($\beta_c = 0$), which enables the local repair within each cluster. 
%Section \ref{Section:MSR_MBR} observes two extremal points, minimum-bandwidth-regenerating (MBR) point and minimum-storage-regenerating (MSR) point. 

\subsection{Set of Feasible $(\alpha, \gamma)$ Points}\label{Section:nonzero_gammac}

We provide a closed-form solution for the set of feasible $(\alpha, \gamma)$ points which enable reliable storage of data $\mathcal{M}$. Based on the range of $\epsilon$ defined in (\ref{Eqn:epsilon}), the set of feasible points show different behaviors as stated in Corollary \ref{Corollary:Feasible Points_large_epsilon}. 
%Corollary 1 considers $0 < \epsilon \leq 1$ case,with the proof given in Appendix \ref{Section: Proof of Corollary:Feasible Points}.

%, when the cross-cluster communication is non-zero, i.e., $\beta_c \neq 0$. 
%We use $\epsilon = \beta_c/\beta_I$ defined in (\ref{Eqn:epsilon}), and this section deals with $0 < \epsilon \leq 1$ case. 
%The set of feasible points is formally stated in the following Corollary, which is proved in Appendix \ref{Section: Proof of Corollary:Feasible Points}.

% ies: Corollary \ref{Corollary:Feasible Points_large_epsilon} is for $\frac{1}{n-k} \leq \epsilon \leq 1$ case, while Corollary \ref{Corollary:Feasible Points_intermediate_epsilon} is for $0 < \epsilon < \frac{1}{n-k}$ case.
%These corollaries are proved in Appendix \ref{Section: Proof of Corollary:Feasible Points}.

%Based on these properties, we obtain the clsoed-form solution for the set of feasible points $(\alpha, \gamma)$, as formally stated in following Corollary. The proof is in Appendix \ref{Section: Proof of Corollary:Feasible Points}.

\begin{corollary}\label{Corollary:Feasible Points_large_epsilon}
	Consider a clustered DSS 
	for storing data $\mathcal{M}$, when $\epsilon = \beta_c/\beta_I$ satisfies $0 \leq  \epsilon \leq 1$. 
	For any $\gamma \geq \gamma^*(\alpha)$, the data $\mathcal{M}$ can be reliably stored, i.e., $\mathcal{C} \geq \mathcal{M}$, while it is impossible to reliably store data $\mathcal{M}$ when $\gamma < \gamma^*(\alpha)$. The threshold function $\gamma^*(\alpha)$ can be obtained as:
	
	\begin{enumerate}
		\item \text{ if }$\frac{1}{n-k} \leq \epsilon \leq 1$ \begin{equation} \label{Eqn:Feasible Points Result}
		\gamma^*(\alpha) = 
		\begin{cases}
		\infty, & \ \alpha \in (0,\frac{M}{k})  \\
		\frac{M-t\alpha}{s_t}, &  
		\ \alpha \in [\frac{M}{t+1+s_{t+1}y_{t+1}} ,\frac{M}{t+s_{t}y_{t}} )
		,\\
		& \ \ \ \ \ \ \ \ \ \ (t = k-1, k-2, \cdots, 1)\\
		\frac{M}{s_0 }, & \ \alpha \in [\frac{M}{s_0}, \infty) 
		\end{cases}
		\end{equation}

		\item  \text{Otherwise (if } $0 \leq \epsilon < \frac{1}{n-k}$)			
		\begin{equation} \label{Eqn:Feasible Points Result_intermediate_epsilon}
		\gamma^*(\alpha) = 
		\begin{cases}
		%\infty, & \alpha \in (0,\textcolor{red}{\frac{M}{\tau + s_{\tau} \{n_I-1+(n-n_I)\epsilon\}}})  \\
		\infty, & \ \alpha \in (0,\frac{M}{\tau + \sum_{i=\tau+1}^{k}z_i})  \\
		\frac{\mathcal{M} - \tau \alpha}{s_{\tau}}, & \ \alpha \in [\frac{M}{\tau + \sum_{i=\tau+1}^{k}z_i},\frac{\mathcal{M}}{\tau + s_{\tau}y_{\tau}})  \\
		\frac{M-t\alpha}{s_t}, &  
		\ \alpha \in [\frac{M}{t+1+s_{t+1}y_{t+1}} ,\frac{M}{t+s_{t}y_{t}} )
		,\\
		& \ \ \ \ \ \ \ \ \ \ (t = \tau-1, \tau-2, \cdots, 1)\\
		\frac{M}{s_0 }, & \ \alpha \in [\frac{M}{s_0}, \infty) 
		\end{cases}
		\end{equation}
	\end{enumerate}
	where 
	\begin{align}
	\tau & = \max \{ t \in \{0,1,\cdots, k-1\} : z_t \geq 1 \}\label{Eqn:tau}\\
	s_t &=
	\begin{cases}
	 \frac{\sum_{i=t+1}^{k} z_i}{(n_I-1) + \epsilon (n-n_I)}, &  t =0, 1, \cdots, k-1 \label{Eqn:s_t}\\
	0, & t=k
	\end{cases}\\
	y_t &= \frac{(n_I-1)+\epsilon (n-n_I)}{z_t}\label{Eqn:y_t},\\
	z_t &= 
	\begin{cases}
	(n-n_I-t+h_t)\epsilon + (n_I-h_t), & t \in [k] \\
	\infty, & t=0 \label{Eqn:z_t}
	\end{cases}\\
	h_t &= \min \{s \in [n_I]: \sum_{l=1}^s g_l \geq t \}, \label{Eqn:h_t}
	\end{align}
	and $\{g_l\}_{l=1}^{n_I}$ is defined in (\ref{Eqn:g_m}).
\end{corollary}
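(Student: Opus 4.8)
The plan is to collapse the double sum in the capacity formula \eqref{Eqn:capacity_kappa} into a single-index sum and then solve a one-dimensional inversion problem in $\gamma$ for each fixed $\alpha$. Writing $D:=(n_I-1)+\epsilon(n-n_I)$ for the common denominator appearing in \eqref{Eqn:capacity_kappa}, I would enumerate the $k$ summands in the order $(i=1;j=1,\dots,g_1),(i=2;j=1,\dots,g_2),\dots$, which is legitimate since $\sum_{m=1}^{n_I}g_m=k$ by Proposition~\ref{Prop:omega_i_bounded_by_gamma}. If $t=\sum_{m=1}^{i-1}g_m+j$ indexes the summand sitting in row $i$, column $j$, then $i=h_t$ with $h_t$ as in \eqref{Eqn:h_t} and $\sum_{m=1}^{i-1}g_m=t-j$, so a one-line substitution turns the numerator of its $\gamma$-coefficient into $(n-n_I-t+h_t)\epsilon+(n_I-h_t)=z_t$. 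Hence \eqref{Eqn:capacity_kappa} reads $\mathcal C=\sum_{t=1}^{k}\min\{\alpha,(z_t/D)\gamma\}$, and with $c_t:=z_t/D$ we recognize $y_t=1/c_t$ and $s_t=\sum_{i=t+1}^{k}c_i$.

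Next I would record the structural facts that make $\gamma\mapsto\mathcal C(\alpha,\cdot)$ tractable: (i) $z_t$ is non-increasing in $t$, since within a fixed row the coefficient of $\epsilon$ drops by one per column while across a row boundary the constant part $n_I-h_t$ drops by one, hence $c_1\ge\cdots\ge c_k$ and $y_1\le\cdots\le y_k$; (ii) $z_1=D$, so $c_1=1$ and $y_1=1$; (iii) from $s_{t-1}-s_t=c_t=1/y_t$ one gets the gluing identities $s_{t-1}y_t=s_ty_t+1$ and $s_ty_{t+1}=s_{t+1}y_{t+1}+1$; (iv) $t+s_ty_t$ is non-decreasing in $t$. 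Consequently, for fixed $\alpha$ the map $\gamma\mapsto\mathcal C(\alpha,\gamma)$ is continuous, non-decreasing and piecewise linear: on $\gamma\in[\alpha y_t,\alpha y_{t+1}]$ (with the convention $y_0:=0$) exactly the first $t$ summands are saturated, so $\mathcal C=t\alpha+s_t\gamma$, the slopes $s_0\ge s_1\ge\cdots\ge s_k=0$ are non-increasing, and the breakpoint value at $\gamma=\alpha y_t$ is $(t+s_ty_t)\alpha$.

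I would then pin down the two regimes using the bound that a helper storing $\alpha$ symbols transmits no more than $\alpha$, i.e.\ $\beta_I\le\alpha$, equivalently $\gamma\le D\alpha$. For fixed $\alpha$ the largest attainable capacity is therefore $\mathcal C(\alpha,D\alpha)=\alpha\sum_{t=1}^{k}\min\{1,z_t\}$. Since $z_t$ is non-increasing and $z_t\ge1$ exactly for $t\le\tau$ (with $\tau$ as in \eqref{Eqn:tau}; note $z_k=(n-k)\epsilon$, so $z_k\ge1\iff\epsilon\ge\frac{1}{n-k}$), this maximum equals $k\alpha$ when $\epsilon\ge\frac{1}{n-k}$ and $(\tau+\sum_{i=\tau+1}^{k}z_i)\alpha$ otherwise. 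Thus $\gamma^*(\alpha)=\infty$ precisely for $\alpha<\mathcal M/k$ in case~1 and for $\alpha<\mathcal M/(\tau+\sum_{i>\tau}z_i)$ in case~2 -- this is the only place the two displayed formulas genuinely diverge.

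For $\alpha$ above the threshold, $\gamma^*(\alpha)$ is obtained by inverting $\mathcal C(\alpha,\gamma)=\mathcal M$: the level $\mathcal M$ is reached inside a unique linear piece, say with $t$ saturated terms, giving $\gamma^*=(\mathcal M-t\alpha)/s_t$, and the admissible $\alpha$-range for that $t$ is read off from $\alpha y_t\le(\mathcal M-t\alpha)/s_t\le\min\{\alpha y_{t+1},D\alpha\}$. By the identities of step~(2) the left inequality becomes $\alpha\le\mathcal M/(t+s_ty_t)$ and the $y_{t+1}$-side becomes $\alpha\ge\mathcal M/(t+1+s_{t+1}y_{t+1})$; the extra $\gamma\le D\alpha$ bound is active only on the top piece $t=\tau$ of case~2, where it reads $\alpha\ge\mathcal M/(\tau+\sum_{i>\tau}z_i)$, producing the separate clause $\gamma^*=(\mathcal M-\tau\alpha)/s_\tau$. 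Running $t$ downward from $k-1$ (resp.\ $\tau$) to $0$ and gluing the last piece $\gamma^*=\mathcal M/s_0$ at $\alpha=\mathcal M/s_0$ via $y_1=1$ and $s_0=1+s_1y_1$, monotonicity fact~(iv) guarantees these intervals tile $(0,\infty)$ without gaps or overlaps and reproduce \eqref{Eqn:Feasible Points Result}--\eqref{Eqn:Feasible Points Result_intermediate_epsilon}. I expect the main obstacle to be purely organizational: the combinatorial re-indexing that identifies the single-index coefficient as exactly $z_t/D$ (it hinges on $h_t$ and Proposition~\ref{Prop:omega_i_bounded_by_gamma}), together with isolating the one segment in case~2 where it is the storage bound $\beta_I\le\alpha$, rather than a capacity breakpoint, that fixes $\gamma^*$.
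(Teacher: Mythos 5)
Your proposal is correct and follows essentially the same route as the paper's Appendix F.A proof: collapse the double sum to $\mathcal C=\sum_{t=1}^k\min\{\alpha,\gamma/y_t\}$ using the $h_t$ re-indexing, exploit monotonicity of $z_t$ and $y_t$ (the paper's Remark~\ref{Rmk:z_dec_y_inc}), impose $\gamma\le\overbar\gamma=D\alpha$ from $\beta_I\le\alpha$, split according to whether $z_k=(n-k)\epsilon\ge1$, and invert the resulting piecewise-linear map. The explicit gluing identities $s_{t-1}y_t=s_ty_t+1$ that you state are exactly what the paper encodes via the breakpoint values $E_t=(t-1+s_{t-1}y_t)\alpha=(t+s_ty_t)\alpha$, so the two write-ups coincide up to presentation.
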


\begin{figure}[t]
	\centering
	\includegraphics[height=55mm]{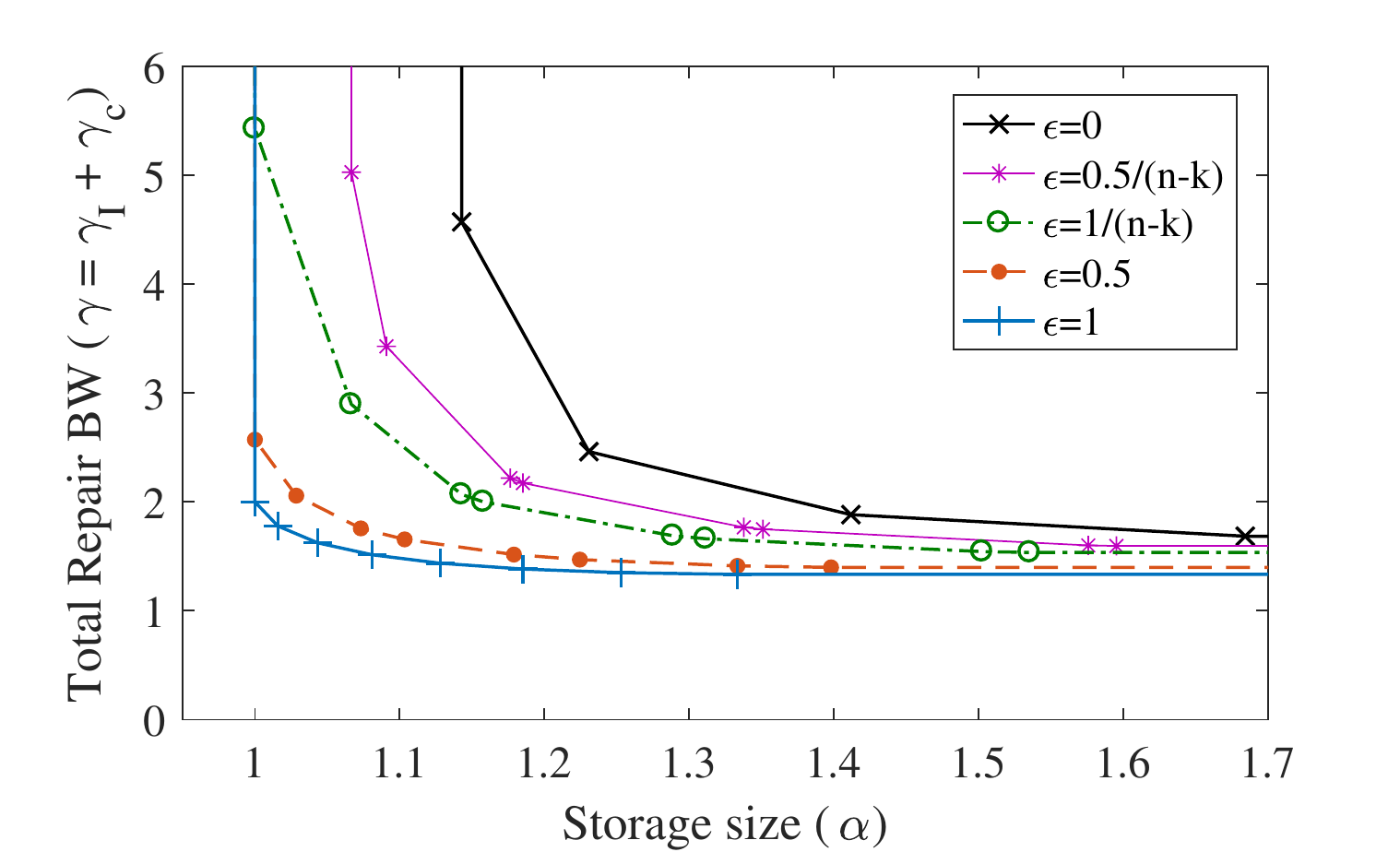}
	\caption{Optimal tradeoff between node storage size $\alpha$ and total repair bandwidth $\gamma$, under the setting of $n=15, k=8, L=3, \mathcal{M} = 8$}
	\label{Fig:kappa_various_latex}
\end{figure}

\begin{figure}[t]
	\centering
	\includegraphics[height=50mm]{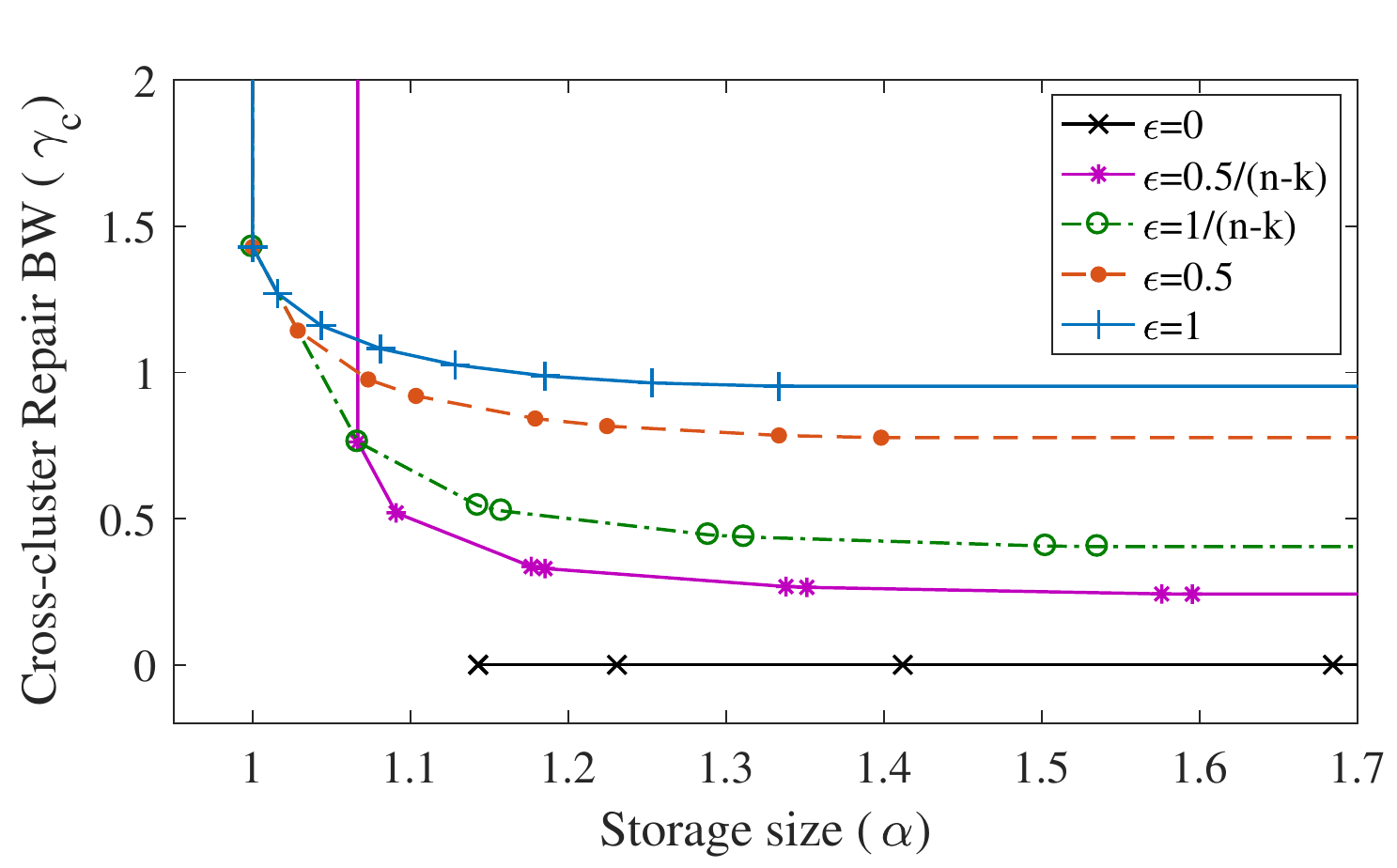}
	\caption{Optimal tradeoff between node storage size $\alpha$ and cross-rack repair bandwidth $\gamma_c$, under the setting of $n=15, k=8, L=3, \mathcal{M} = 8$}
	\label{Fig:gamma_c_various_latex}
\end{figure}

An example for the trade-off results of Corollary \ref{Corollary:Feasible Points_large_epsilon} is illustrated in Fig. \ref{Fig:kappa_various_latex}, for various $0 \leq \epsilon \leq 1$ values.
Here, the $\epsilon = 1$ (i.e., $\beta_I = \beta_c$) case corresponds to the symmetric repair in the non-clustered scenario \cite{dimakis2010network}.
The plot for $\epsilon = 0$ (or $\beta_c = \gamma_c = 0$) shows that the cross-cluster repair bandwidth can be reduced to zero with extra resources ($\alpha$ or $\gamma_I$), where the amount of required resources are specified in Corollary \ref{Corollary:Feasible Points_large_epsilon}.
%\cmt{Note that $\epsilon = 0$ means that the \textit{local repair in each cluster is possible}, i.e., a failed node is fully repaired by contacting helper nodes in the same cluster only. }
\cmt{Note that in the case of $\epsilon=0$, the storage system is \textit{completely localized}, i.e., nodes can be repaired exclusively from their cluster.}
%The proof of the following Corollary is in Appendix \ref{Section: Proof of Corollary:Feasible Points}.
From Fig. \ref{Fig:kappa_various_latex}, we can confirm that as $\epsilon$ decreases, extra resources ($\gamma$ or $\alpha$) are required to reliably store the given data $\mathcal{M}$. %\textcolor{red}{State that this coincides with mathematical result - capacity is an increasing function of $\epsilon$ - if the result is finally obtained...}
Moreover, Corollary \ref{Corollary:Feasible Points_large_epsilon} suggests a mathematically interesting result, stated in the following Theorem, the proof of which is in Appendix \ref{Section:proof_of_thm_condition_for_min_storage}.

\begin{theorem} \label{Thm:condition_for_min_storage}
	%(Condition for achieving the minimum storage overhead): 
	A clustered DSS can reliably store file $\mathcal{M}$ with the minimum storage overhead $\alpha = \mathcal{M}/k$ if and only if 
	\begin{equation}
	\epsilon \geq \frac{1}{n-k}.
	\end{equation}
\end{theorem}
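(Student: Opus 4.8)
The plan is to extract the statement directly from the capacity formula~(\ref{Eqn:Capacity of clustered DSS_rev}), because fixing $\alpha=\mathcal M/k$ turns the condition $\mathcal C\ge\mathcal M$ into a rigid requirement on a single summand. I would first record two observations. First, by~(\ref{Eqn:sum of g is k}) the double sum in~(\ref{Eqn:Capacity of clustered DSS_rev}) has exactly $\sum_{m=1}^{n_I}g_m=k$ summands, each of the form $\min\{\alpha,\cdot\}\le\alpha$; hence $\mathcal C\le k\alpha$, so with $\alpha=\mathcal M/k$ one has $\mathcal C\le\mathcal M$, and $\mathcal C=\mathcal M$ \emph{exactly when every summand equals $\alpha$}. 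Second, after relabelling the summands by $\ell=j+\sum_{m=1}^{i-1}g_m\in\{1,\dots,k\}$, the $\ell$-th term sits in some cluster $i$ with $\rho_i=n_I-i\ge0$ and equals $\min\{\alpha,\ \rho_i\beta_I+(n-\rho_i-\ell)\beta_c\}$; using the standing assumption $\beta_I\ge\beta_c$, the bracket is bounded below by $(\rho_i+n-\rho_i-\ell)\beta_c=(n-\ell)\beta_c\ge(n-k)\beta_c$, and for $\ell=k$ — i.e.\ the pair $(i,j)=(n_I,g_{n_I})$, where $\rho_{n_I}=0$ — it equals exactly $(n-k)\beta_c$.

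For the ``only if'' direction, assume $\mathcal M$ is reliably storable with $\alpha=\mathcal M/k$, so there is an admissible choice $\beta_c\le\beta_I\le\alpha$, $\beta_c=\epsilon\beta_I$, with $\mathcal C\ge\mathcal M$. By the first observation $\mathcal C=\mathcal M$ and every summand equals $\alpha$; applying this to the $(n_I,g_{n_I})$ summand forces $(n-k)\beta_c\ge\alpha$. Substituting $\beta_c=\epsilon\beta_I\le\epsilon\alpha$ gives $\alpha\le(n-k)\epsilon\alpha$, i.e.\ $\epsilon\ge\frac1{n-k}$. For the ``if'' direction, given $\epsilon\ge\frac1{n-k}$ I would exhibit an admissible point attaining the bound: put $\beta_I=\alpha=\mathcal M/k$ and $\beta_c=\epsilon\alpha$ (so $\beta_c\le\beta_I$ since $\epsilon\le1$, and $\beta_I\le\alpha$ with equality). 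By the uniform lower bound from the second observation, every bracket satisfies $\rho_i\beta_I+(n-\rho_i-\ell)\beta_c\ge(n-k)\beta_c=(n-k)\epsilon\alpha\ge\alpha$, so each $\min$ equals $\alpha$ and $\mathcal C=k\alpha=\mathcal M$.

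The only substantive step is the uniform estimate $\rho_i\beta_I+(n-\rho_i-\ell)\beta_c\ge(n-k)\beta_c$ over all $k$ summands, together with the observation that it is tight precisely at $(i,j)=(n_I,g_{n_I})$; the rest is arithmetic using~(\ref{Eqn:sum of g is k}). The place where care is essential — and where the ``only if'' direction would otherwise fail — is the admissibility constraint $\beta_c\le\beta_I\le\alpha$ (a helper cannot hand over more than it stores): without $\beta_I\le\alpha$ the inequality $(n-k)\epsilon\beta_I\ge\alpha$ can be met for \emph{any} $\epsilon>0$ simply by enlarging $\beta_I$, so it is precisely this cap that pins the threshold at $\tfrac1{n-k}$. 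The same effect is already built into Corollary~\ref{Corollary:Feasible Points_large_epsilon}, so alternatively the theorem follows by noting that $\alpha=\mathcal M/k$ is feasible iff $\gamma^\ast(\mathcal M/k)<\infty$: in case~(1) ($\epsilon\ge\frac1{n-k}$) the value $\mathcal M/k$ is the left endpoint of the $t=k-1$ interval because $s_k=0$, and there $\gamma^\ast=\mathcal M/(k\,s_{k-1})<\infty$; in case~(2) one checks $\tau+\sum_{i=\tau+1}^{k}z_i<k$ (each $z_i<1$ for $i>\tau$, including $z_k=(n-k)\epsilon<1$), placing $\mathcal M/k$ in the range where $\gamma^\ast=\infty$.
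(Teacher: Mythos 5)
Your proof is correct and takes a genuinely different route from the paper's. The paper derives Theorem~\ref{Thm:condition_for_min_storage} as a consequence of Corollary~\ref{Corollary:Feasible Points_large_epsilon}: for $\epsilon \geq \tfrac{1}{n-k}$ it reads off from case~(1) of (\ref{Eqn:Feasible Points Result}) that $\gamma^*(\mathcal{M}/k)$ is finite (using $s_k=0$ and $z_k=(n-k)\epsilon$), and for $\epsilon < \tfrac{1}{n-k}$ it invokes the MSR-point formula $\alpha_{msr}=\mathcal{M}/(\tau+\sum_{i>\tau}z_i)$ together with $z_i<1$ for $i>\tau$ to conclude the denominator is strictly less than $k$. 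Your primary argument bypasses Corollary~\ref{Corollary:Feasible Points_large_epsilon} entirely: after relabelling the $k$ summands of (\ref{Eqn:Capacity of clustered DSS_rev}) by $\ell=j+\sum_{m<i}g_m$ (a bijection onto $[k]$ since all $g_m\geq 1$ when $k>n_I$), you bound each bracket below by $(n-\ell)\beta_c\geq(n-k)\beta_c$ using $\beta_I\geq\beta_c$ and $\rho_i\geq0$, observe this is tight at the $\ell=k$ term where $\rho_{n_I}=0$, and note that with $\alpha=\mathcal{M}/k$ the requirement $\mathcal{C}\geq\mathcal{M}$ forces every $\min$ to equal $\alpha$. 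The threshold is then pinned by the cap $\beta_I\leq\alpha$, which the paper also uses but only implicitly inside the derivation of $\overbar{\gamma}$ in Appendix~\ref{Section: Proof of Corollary:Feasible Points}. Your approach is more self-contained and makes visible that a single summand — the one at $(i,j)=(n_I,g_{n_I})$ — is the binding constraint, and that without the physical cap $\beta_I\leq\alpha$ the ``only if'' direction would collapse; the paper's route is shorter given that Corollary~\ref{Corollary:Feasible Points_large_epsilon} is already in hand, but carries the $z_t,s_t,\tau,h_t$ bookkeeping. Your closing sketch via Corollary~\ref{Corollary:Feasible Points_large_epsilon} is essentially the paper's argument.
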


Note that $\alpha = \mathcal{M}/k$ is the minimum storage overhead which can satisfy the MDS property, as stated in \cite{dimakis2010network}. 
The implication of Theorem \ref{Thm:condition_for_min_storage} is shown in Fig. \ref{Fig:kappa_various_latex}. Under the $\mathcal{M} = k = 8$ setting, data $\mathcal{M}$ can be reliably stored with minimum storage overhead $\alpha=\mathcal{M}/k=1$ for $\frac{1}{n-k} \leq \epsilon \leq 1$, while it is impossible to achieve minimum storage overhead $\alpha=\mathcal{M}/k=1$ for $0 \leq \epsilon < \frac{1}{n-k}$.
Finally, since reducing the cross-cluster repair burden $\gamma_c$ is regarded as a much harder problem compared to reducing the intra-cluster repair burden $\gamma_I$, we also plotted feasible $(\alpha, \gamma_c)$ pairs for various $\epsilon$ values in Fig. \ref{Fig:gamma_c_various_latex}.
The plot for $\epsilon = 0$ obviously has zero $\gamma_c$, while $\epsilon > 0$ cases show a trade-off relationship. As $\epsilon$ increases, the minimum $\gamma_c$ value increases gradually.

\subsection{Minimum-Bandwidth-Regenerating (MBR) point and Minimum-Storage-Regenerating (MSR) point}\label{Section:MSR_MBR}

\begin{figure}[t]
	\centering
	\includegraphics[height=50mm]{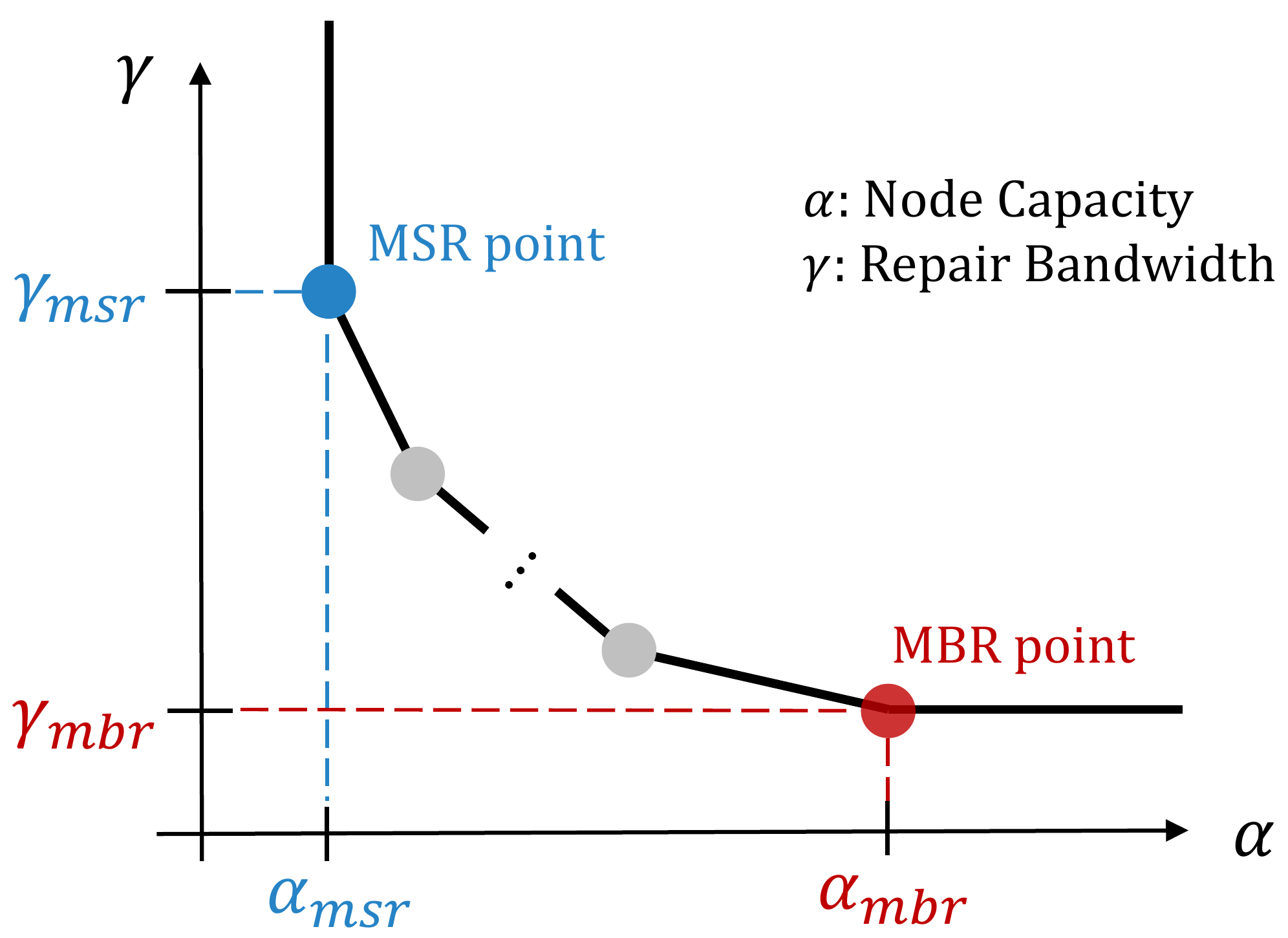}
	\caption{Set of Feasible $(\alpha, \gamma)$ Points}
	\label{Fig:MBR_MSR_points}
\end{figure}

According to Corollary \ref{Corollary:Feasible Points_large_epsilon}, the set of feasible $(\alpha, \gamma)$ points shows a trade-off curve as in Fig. \ref{Fig:MBR_MSR_points}, for arbitrary $n,k,L,\epsilon$ settings.
Here we focus on two extremal points: the minimum-bandwidth-regenerating (MBR) point and the minimum-storage-regenerating (MSR) point. As originally defined in \cite{dimakis2010network}, we call the point on the trade-off with minimum bandwidth $\gamma$ as MBR. Similarly, we call the point with minimum storage $\alpha$ as MSR\footnote{\cmt{Among multiple points with minimum $\gamma$, the point having the smallest $\alpha$ is called the MBR point. Similarly, among points with minimum $\alpha$, the point with the minimum $\gamma$ is called the MSR point}.}.
Let $(\alpha_{msr}^{(\epsilon)}, \gamma_{msr}^{(\epsilon)})$ be the $(\alpha, \gamma)$ pair of the MSR point for given $\epsilon$. Similarly, define $(\alpha_{mbr}^{(\epsilon)}, \gamma_{mbr}^{(\epsilon)})$ as the parameter pair for MBR points. 
 According to Corollary \ref{Corollary:Feasible Points_large_epsilon}, the explicit $(\alpha, \gamma)$ expression for the MSR and MBR points are as in the following Corollary, the proof of which is given in Appendix \ref{Section:Proof of Corollary_msr_mbr_points}.

\begin{corollary}\label{Coro:msr_mbr_points}
For a given $0 \leq \epsilon \leq 1$, we have
\begin{align}
&(\alpha_{msr}^{(\epsilon)}, \gamma_{msr}^{(\epsilon)}) \nonumber\\
&=
\begin{cases}
(\frac{\mathcal{M}}{\tau + \sum_{i=\tau+1}^{k}z_i},  \frac{\mathcal{M}}{\tau + \sum_{i=\tau+1}^{k}z_i} \frac{\sum_{i=\tau+1}^{k}z_i}{s_{\tau}}
%\frac{\sum_{i=\tau+1}^{k}z_i}{s_{\tau}(\tau+\sum_{i=\tau+1}^{k}z_i)}\mathcal{M}
), & 0 \leq \epsilon < \frac{1}{n-k} \\
(\frac{\mathcal{M}}{k}, \frac{\mathcal{M}}{k}\frac{1}{s_{k-1}}), & \frac{1}{n-k} \leq \epsilon \leq 1
\end{cases} \label{Eqn:MSR_point}
\end{align}
\begin{align}
(\alpha_{mbr}^{(\epsilon)}, \gamma_{mbr}^{(\epsilon)}) &=
(\frac{\mathcal{M}}{s_0}, \frac{\mathcal{M}}{s_0}).
\label{Eqn:MBR_point}
\end{align}
\end{corollary}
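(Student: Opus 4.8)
The plan is to read off the MSR and MBR points directly from the threshold function $\gamma^*(\alpha)$ obtained in Corollary \ref{Corollary:Feasible Points_large_epsilon}. The feasible region is $\{(\alpha,\gamma): \gamma \geq \gamma^*(\alpha)\}$, so the MBR point is the $(\alpha,\gamma)$ pair minimizing $\gamma$ along the boundary curve $\gamma = \gamma^*(\alpha)$, and the MSR point is the pair minimizing $\alpha$ along that curve (with the stated tie-breaking: smallest $\alpha$ among minimum-$\gamma$ points, and smallest $\gamma$ among minimum-$\alpha$ points). First I would establish that $\gamma^*(\alpha)$ is a non-increasing, piecewise-linear, convex function of $\alpha$ on the range where it is finite: each piece has the form $\frac{\mathcal{M}-t\alpha}{s_t}$ with $s_t$ decreasing in $t$ (since $s_t = \frac{\sum_{i=t+1}^k z_i}{(n_I-1)+\epsilon(n-n_I)}$ is a decreasing function of $t$, as each $z_i \geq 0$), and the breakpoints are arranged so that the slopes $-t/s_t$ increase in magnitude as $\alpha$ decreases. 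Convexity and monotonicity together guarantee the minimum of $\gamma$ is attained at the right end (largest $\alpha$), and the minimum of $\alpha$ with finite $\gamma$ is attained at the left end of the finite region.

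For the MBR point: as $\alpha \to \infty$, $\gamma^*(\alpha)$ is constant and equal to $\mathcal{M}/s_0$ by the last case of both \eqref{Eqn:Feasible Points Result} and \eqref{Eqn:Feasible Points Result_intermediate_epsilon}. The smallest $\alpha$ achieving this constant minimum value $\gamma = \mathcal{M}/s_0$ is the left endpoint of the interval $[\mathcal{M}/s_0, \infty)$, namely $\alpha = \mathcal{M}/s_0$. Hence $(\alpha_{mbr}^{(\epsilon)}, \gamma_{mbr}^{(\epsilon)}) = (\mathcal{M}/s_0, \mathcal{M}/s_0)$, which is \eqref{Eqn:MBR_point}, valid for all $0 \leq \epsilon \leq 1$ since the $\alpha \to \infty$ piece is identical in both cases of Corollary \ref{Corollary:Feasible Points_large_epsilon}.

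For the MSR point I would split on the two regimes of Corollary \ref{Corollary:Feasible Points_large_epsilon}. When $\frac{1}{n-k} \leq \epsilon \leq 1$, the region where $\gamma^*$ is finite is $\alpha \geq \mathcal{M}/k$, and at the infimum $\alpha = \mathcal{M}/k$ the relevant piece is $t = k-1$, giving $\gamma = \frac{\mathcal{M}-(k-1)\alpha}{s_{k-1}}$; substituting $\alpha = \mathcal{M}/k$ yields $\gamma = \frac{\mathcal{M}/k}{s_{k-1}}$, i.e.\ the second line of \eqref{Eqn:MSR_point} — here I should double-check that the $t=k-1$ interval $[\frac{\mathcal{M}}{k+s_k y_k}, \frac{\mathcal{M}}{(k-1)+s_{k-1}y_{k-1}})$ indeed has left endpoint $\mathcal{M}/k$, using $s_k = 0$. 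When $0 \leq \epsilon < \frac{1}{n-k}$, Theorem \ref{Thm:condition_for_min_storage} tells us $\alpha = \mathcal{M}/k$ is not achievable, and the finite-$\gamma$ region instead begins at $\alpha = \frac{\mathcal{M}}{\tau + \sum_{i=\tau+1}^k z_i}$; plugging this $\alpha$ into the active piece $\gamma = \frac{\mathcal{M}-\tau\alpha}{s_\tau}$ and simplifying $\mathcal{M} - \tau\alpha = \mathcal{M}\cdot\frac{\sum_{i=\tau+1}^k z_i}{\tau + \sum_{i=\tau+1}^k z_i}$ gives the first line of \eqref{Eqn:MSR_point}. The main obstacle is the bookkeeping at the breakpoints: I must verify that the candidate MSR abscissa really is the left endpoint of the correct linear segment (so that the right formula is evaluated there) and that $\gamma^*$ is genuinely finite and minimized there rather than at an interior point — this reduces to checking the monotonicity/convexity claims about the $s_t$ and the ordering of breakpoints, which follows from $z_t \geq 0$ and the definition of $\tau$ in \eqref{Eqn:tau}, but requires care to state cleanly.
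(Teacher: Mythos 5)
Your proposal is correct and takes essentially the same approach as the paper: both read off the MSR and MBR coordinates by identifying the extreme endpoints of the piecewise-linear threshold curve $\gamma^*(\alpha)$ from Corollary \ref{Corollary:Feasible Points_large_epsilon} (using $s_k=0$ to simplify the MSR abscissa in the $\frac{1}{n-k}\le\epsilon\le 1$ case, and evaluating the active piece $\gamma=\frac{\mathcal{M}-t\alpha}{s_t}$ at the left endpoint). The only difference is that you spell out the monotonicity and convexity of $\gamma^*(\alpha)$ that justify why the endpoints are the extreme points, whereas the paper delegates this to Fig.~\ref{Fig:MBR_MSR_points}.
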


%For a given $0 \leq \epsilon \leq 1$, let
%\begin{equation}
%\alpha_{min}^{(\epsilon)} = \alpha_{MSR}
%\end{equation}
%be the minimum node storage $\alpha$ among the feasible points. 
%Similarly, let 
%\begin{equation}
%\gamma_{min}^{(\epsilon)} = \gamma_{MBR}
%\end{equation}
%be the minimum repair bandwidth $\gamma$ among the feasible points. 
%Using these definitions, the optimal trade-off curves for two extreme cases ($\epsilon = 0$ and $\epsilon = 1$) are as in Fig. \ref{Fig:tradeoff_min}. Important values in this figure is obtained as follows.

Now we compare the MSR and MBR points for two extreme cases of $\epsilon = 0$ and $\epsilon = 1$. Using $R\coloneqq k/n$ and the dispersion ratio $\sigma$ defined in (\ref{Eqn:sigma}), the asymptotic behaviors of MBR and MSR points are illustrated in the following theorem, the proof of which is in Appendix \ref{Proof_of_Thm_MSR_MBR}.

\begin{theorem}\label{Thm:MSR_MBR}
	Consider the MSR point $(\alpha_{msr}^{(\epsilon)}, \gamma_{msr}^{(\epsilon)})$ and the MBR point $(\alpha_{mbr}^{(\epsilon)}, \gamma_{mbr}^{(\epsilon)})$ for $\epsilon = 0, 1$. The minimum node storage for $\epsilon=0$ is asymptotically equivalent to the minimum node storage for $\epsilon=1$, i.e., 
	\begin{equation}\label{Eqn:msr_property}
	\alpha_{msr}^{(0)} \sim  \alpha_{msr}^{(1)} = \mathcal{M}/k
	\end{equation}
	as $n \rightarrow \infty$ for arbitrary fixed $\sigma$ and $R=k/n$.
	Moreover, the MBR point for $\epsilon=0$ approaches the MBR point for $\epsilon=1$, i.e.,
	\begin{equation}\label{Eqn:mbr_property}
	(\alpha_{mbr}^{(0)}, \gamma_{mbr}^{(0)}) \rightarrow  (\alpha_{mbr}^{(1)}, \gamma_{mbr}^{(1)}),
	\end{equation}
	as $R=k/n \rightarrow 1$. The ratio between $\gamma_{mbr}^{(0)}$ and $\gamma_{mbr}^{(1)}$ is expressed as
	\begin{equation}\label{Eqn:mbr_ratio}
\frac{\gamma_{mbr}^{(0)}}{\gamma_{mbr}^{(1)}} \leq 2- \frac{k-1}{n-1} 
	\end{equation}
\end{theorem}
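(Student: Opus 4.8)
The plan is to push everything through the explicit MSR/MBR formulas of Corollary~\ref{Coro:msr_mbr_points} and simplify the auxiliary quantities $z_t$, $\tau$, $s_0$ at the two endpoints $\epsilon=0$ and $\epsilon=1$. Since $k<n$ we have $0=\epsilon<\tfrac1{n-k}$ and $\tfrac1{n-k}\le\epsilon=1$, so (\ref{Eqn:MSR_point}) gives $\alpha_{msr}^{(1)}=\mathcal M/k$ directly while the first branch applies at $\epsilon=0$; likewise (\ref{Eqn:MBR_point}) gives $(\alpha_{mbr}^{(\epsilon)},\gamma_{mbr}^{(\epsilon)})=(\mathcal M/s_0,\mathcal M/s_0)$ at both endpoints, so the MBR point always lies on the line $\alpha=\gamma$. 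The first real step is to evaluate $h_t$ and $z_t$ at $\epsilon=0$: the index set $[k]$ is partitioned into consecutive blocks of sizes $g_1,\dots,g_{n_I}$, so $h_i$ is the index of the block containing $i$, and by (\ref{Eqn:z_t}) with $\epsilon=0$ one has $z_i=n_I-h_i$. Because $k\bmod n_I<n_I$, the last block has size $g_{n_I}=\lfloor k/n_I\rfloor=:q$, so $h_i=n_I$ exactly on the last $q$ indices; hence $\tau=k-q$ and $\sum_{i=\tau+1}^{k}z_i=0$, giving $\alpha_{msr}^{(0)}=\mathcal M/(k-q)$. Summing $\sum_{i=1}^k h_i=\sum_{s=1}^{n_I}s\,g_s$ in closed form then yields $s_0^{(0)}=\frac1{n_I-1}\sum_{i=1}^k(n_I-h_i)=\frac k2+\frac{r(n_I-r)}{2(n_I-1)}$ with $r=k\bmod n_I$, and at $\epsilon=1$ an easier computation gives $z_i=n-i$ and $s_0^{(1)}=\frac1{n-1}\sum_{i=1}^k(n-i)=\frac{k(2n-k-1)}{2(n-1)}$.

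The three assertions then follow by short arguments. For (\ref{Eqn:msr_property}): $\alpha_{msr}^{(0)}/\alpha_{msr}^{(1)}=k/(k-q)=1/(1-q/k)$, and $q/k\le 1/n_I=\sqrt{\sigma/n}\to0$ since $n_I=n/L=\sqrt{n/\sigma}$ for fixed dispersion factor $\sigma$, so this ratio tends to $1$. For (\ref{Eqn:mbr_ratio}): $r(n_I-r)\ge0$ gives $s_0^{(0)}\ge k/2$, whence $\gamma_{mbr}^{(0)}/\gamma_{mbr}^{(1)}=s_0^{(1)}/s_0^{(0)}\le 2s_0^{(1)}/k=(2n-k-1)/(n-1)=2-\tfrac{k-1}{n-1}$. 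For (\ref{Eqn:mbr_property}): writing $\gamma_{mbr}^{(0)}/\gamma_{mbr}^{(1)}=\frac{1+(n-k)/(n-1)}{1+r(n_I-r)/(k(n_I-1))}$, both correction terms vanish as $R\to1$ — the numerator's is at most $(n-k)/k$, and the denominator's is at most $(n-k)/k$ as well since $r\le n_I-1$ and $n_I-r\le n-k$ (the latter because $n-k=(L-q)n_I-r\ge n_I-r$, using $q\le L-1$) give $r(n_I-r)\le(n_I-1)(n-k)$ — so the ratio tends to $1$, and since both MBR points lie on $\alpha=\gamma$ this is precisely $(\alpha_{mbr}^{(0)},\gamma_{mbr}^{(0)})\to(\alpha_{mbr}^{(1)},\gamma_{mbr}^{(1)})$.

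I expect the only genuine obstacle to be the bookkeeping in the first step: identifying the active branch of Corollary~\ref{Coro:msr_mbr_points} at each endpoint, verifying $g_{n_I}=q$, and correctly evaluating $\tau$, $\sum_{i=\tau+1}^{k}z_i$ and $\sum_{i=1}^{k}h_i$ from the block structure of the $g_l$'s (the last sum is where an arithmetic slip is most likely). Everything downstream of the closed forms for $s_0^{(0)}$, $s_0^{(1)}$ and $\alpha_{msr}^{(0)}$ is elementary. As a sanity check on the algebra I would also verify the boundary case $n-k=1$, where (\ref{Eqn:mbr_ratio}) degenerates to $\gamma_{mbr}^{(0)}\le\gamma_{mbr}^{(1)}$ and in fact $s_0^{(0)}=s_0^{(1)}$.
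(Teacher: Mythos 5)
Your proof is correct and takes essentially the same route as the paper: you identify the active branches of Corollary \ref{Coro:msr_mbr_points} at $\epsilon=0,1$, reduce $z_t$, $\tau$ via the block structure of $(g_l)$ to get $\alpha_{msr}^{(0)}=\mathcal{M}/(k-\lfloor k/n_I\rfloor)$, and compute $\sum z_i$ in closed form at both endpoints to control $s_0$. The paper's proof uses the same simplifications and the identity $qn_I^2+r^2-k=k(n_I-1)-r(n_I-r)$, though it keeps $\sum z_i$ as an inequality ($\gamma_{mbr}^{(0)}\le 2\mathcal{M}/k$) and establishes (\ref{Eqn:mbr_property}) by bounding the \emph{difference} $\gamma_{mbr}^{(0)}-\gamma_{mbr}^{(1)}$, whereas you extract the cleaner closed form $s_0^{(0)}=\tfrac{k}{2}+\tfrac{r(n_I-r)}{2(n_I-1)}$ and argue via the \emph{ratio}, bounding both correction terms by $(n-k)/k$ (your observation $n_I-r\le n-k$ via $q\le L-1$ is a neat extra step the paper does not need). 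Both are equivalent since everything is bounded for fixed $n$ as $R\to1$. One small slip in your closing sanity check: at $n-k=1$ the bound (\ref{Eqn:mbr_ratio}) reads $\gamma_{mbr}^{(0)}/\gamma_{mbr}^{(1)}\le n/(n-1)$, not $\le 1$; your exact computation $s_0^{(0)}=s_0^{(1)}=n/2$ shows the true ratio is $1$, so the bound is simply not tight there rather than ``degenerating'' to equality — this does not affect the proof.
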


\begin{figure}[t]
	\centering
	\includegraphics[height=50mm]{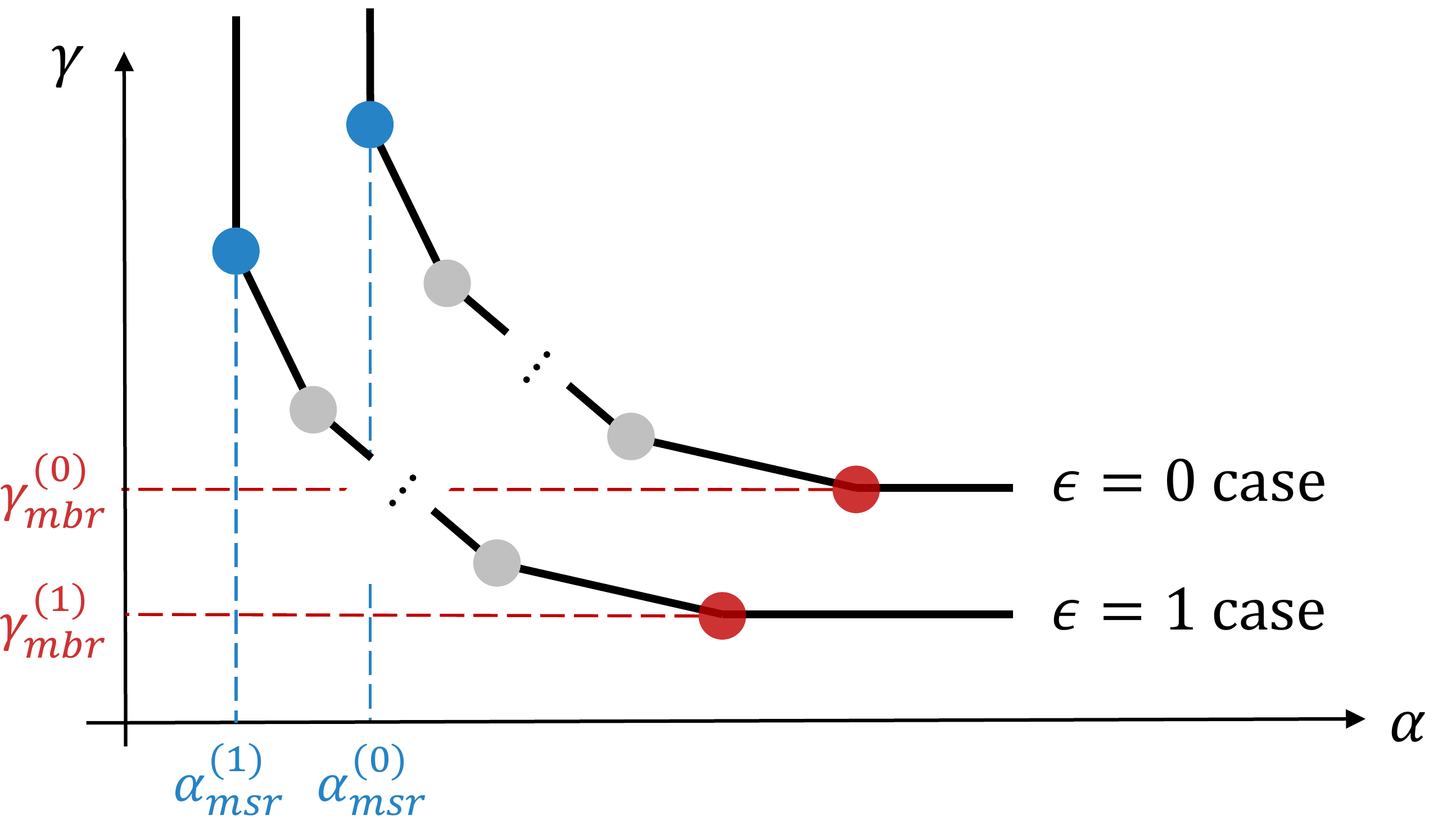}
	\caption{Optimal trade-off curves for $\epsilon = 0, 1$}
	\label{Fig:tradeoff_min}
\end{figure}

\cmt{Note that under the setting of Theorem \ref{Thm:MSR_MBR}, we have}
\begin{align*}
\alpha_{msr}^{(1)}&= \text{constant}, \quad k= nR = \Theta(n), \quad \mathcal{M}= \Theta(n)
\end{align*}
\cmt{in the asymptotic regime of large $n$.}
According to Theorem \ref{Thm:MSR_MBR}, the minimum storage for $\epsilon=0$ can achieve $\mathcal{M}/k$ as $n \rightarrow \infty$ with fixed $R=k/n$. This result coincides with the result of Theorem \ref{Thm:condition_for_min_storage}. According to Theorem \ref{Thm:condition_for_min_storage}, the sufficient and necessary condition for achieving the minimum storage of $\alpha=\mathcal{M}/k$ is
\begin{equation}
\epsilon \geq \frac{1}{n-k} = \frac{1}{n(1-R)}.
\end{equation}
As $n$ increases with a fixed $R$, the lower bound on $\epsilon$ reduces, so that in the asymptotic regime, $\epsilon=0$ can achieve $\alpha = \mathcal{M}/k$.

\begin{figure}
	\centering
	\subfloat[][$\epsilon = 1$]{\includegraphics[width=80mm ]{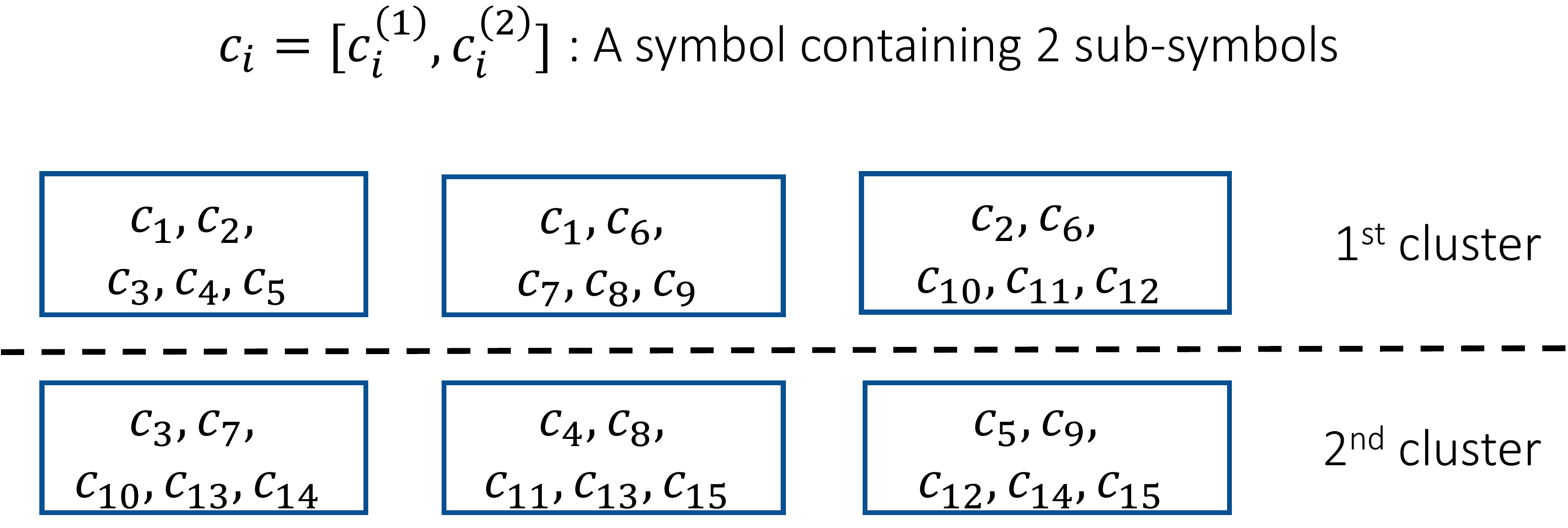}\label{Fig:mbr_epsilon_1}}
	\quad \quad
	\vspace{5mm}
	\subfloat[][$\epsilon = 0$]{\includegraphics[width=80mm]{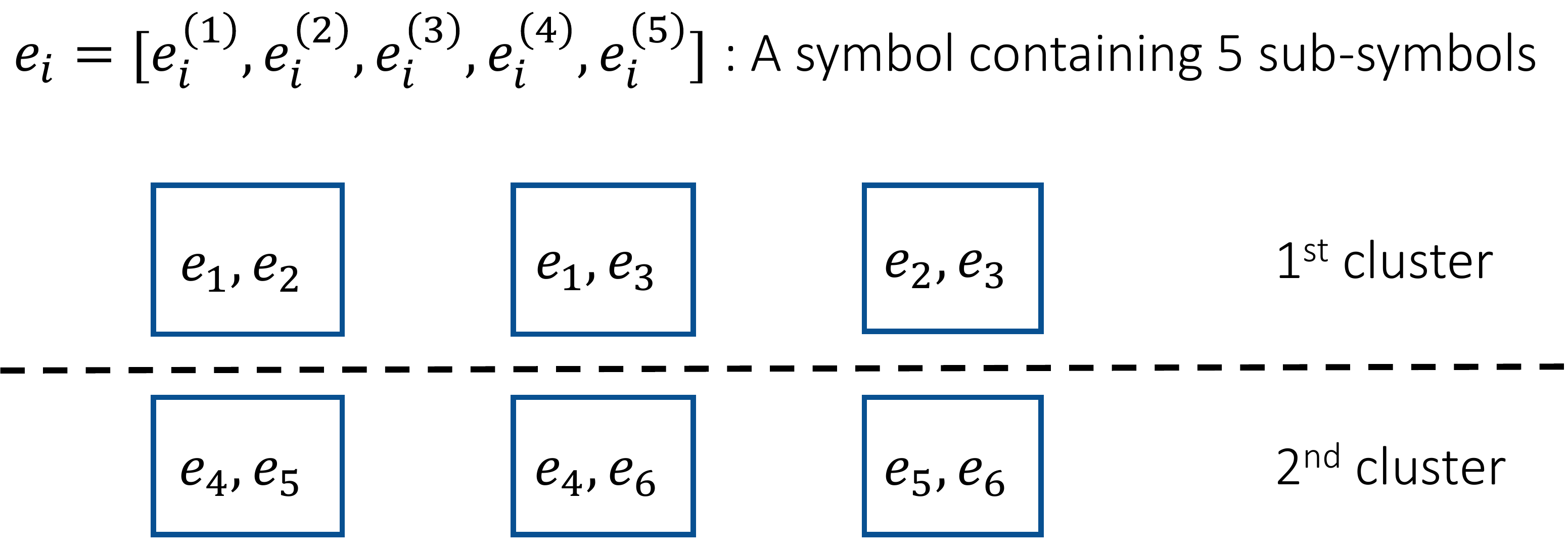}\label{Fig:mbr_epsilon_0}}
	\caption{MBR coding schemes for $n=6, k=5, L=2, \alpha = 10$ [sub-symbol], $\gamma = 10$  [sub-symbol]}
	\label{Fig:MBR_coding_schemes}
\end{figure}

Moreover, Theorem \ref{Thm:MSR_MBR} states that the MBR point for $\epsilon=0$ approaches the MBR point for $\epsilon=1$ as $R=k/n$ goes to 1. Fig. \ref{Fig:MBR_coding_schemes} provides two MBR coding schemes with $(n,k,L) = (6,5,2)$, which has different $\epsilon$ values; one coding scheme in Fig. \ref{Fig:mbr_epsilon_1} satisfies $\epsilon = 1$, while the other in Fig. \ref{Fig:mbr_epsilon_0} satisfies $\epsilon=0$. 
The RSKR coding scheme \cite{rashmi2009explicit} is applied to the six nodes in Fig. \ref{Fig:mbr_epsilon_1}.
Each node (illustrated as a rectangular box) contains five $c_i$ symbols, where each symbol $c_i$ consists of two sub-symbols, $c_i^{(1)}$ and $c_i^{(2)}$. Note that any symbol $c_i$ is shared by exactly two nodes in Fig. \ref{Fig:mbr_epsilon_1}, which is due to the property of RSKR coding. 
This system can reliably store fifteen symbols $\{c_i\}_{i=1}^{15}$, or $\mathcal{M} = 30$ sub-symbols $\{c_i^{(1)}, c_i^{(2)}\}_{i=1}^{15}$, since it satisfies two properties \textendash \ the exact repair property and the data recovery property \textendash \ as illustrated below. 
First, when a node fails, five other nodes transmit five symbols (one distinct symbol by each node), which exactly regenerates the failed node. 
%This property is called as the exact repair property. 
Second, we can retrieve data, $\mathcal{M} = 30$ sub-symbols $\{c_i^{(1)}, c_i^{(2)}\}_{i=1}^{15}$, by contacting any $k=5$ nodes. 
%This property is called as the data recovery property. 
In Fig. \ref{Fig:mbr_epsilon_0}, each node contains two $e_i$ symbols, where each symbol $e_i$ consists of five sub-symbols, $\{e_i^{(j)}\}_{j=1}^{5}$. Note that in Fig. \ref{Fig:mbr_epsilon_0}, any symbol $e_i$ is shared by exactly two nodes which reside in the same cluster. This is because we applied RSKR coding at each cluster in the system of Fig. \ref{Fig:mbr_epsilon_0}. This system can reliably store six symbols $\{e_i\}_{i=1}^{6}$, or $\mathcal{M} = 30$ sub-symbols $\bigcup_{i \in [6], j\in [5]} \{e_i^{(j)}\}$, 
since it satisfies the exact repair property and the data recovery property.

Note that both DSSs in Fig. \ref{Fig:MBR_coding_schemes} reliably store $\mathcal{M} = 30$ sub-symbols, by using the node capacity of $\alpha = 10$ sub-symbols and the repair bandwidth of $\gamma = 10$ sub-symbols. 
However, the former system requires $\gamma_c = 6$ cross-cluster repair bandwidth for each node failure event, while the latter system requires $\gamma_c = 0$ cross-cluster repair bandwidth. For example, if the leftmost node of the $1^{st}$ cluster fails in Fig. \ref{Fig:mbr_epsilon_1}, then four sub-symbols $\{c_1^{(i)}, c_2^{(i)}\}_{i=1}^2$ are transmitted within that cluster, while six sub-symbols $\{c_3^{(i)}, c_4^{(i)},  c_5^{(i)}\}_{i=1}^2$ are transmitted from the $2^{nd}$ cluster. In the case of $\epsilon=0$ in Fig. \ref{Fig:mbr_epsilon_0}, ten sub-symbols $\{e_1^{(i)}, e_2^{(i)}\}_{i=1}^5$ are transmitted within the $1^{st}$ cluster and no sub-symbols are transmitted across the clusters, when the leftmost node of the $1^{st}$ cluster fails. Thus, transition from the former system (Fig. \ref{Fig:mbr_epsilon_1}) to the latter system (Fig. \ref{Fig:mbr_epsilon_0}) reduces the cross-cluster repair bandwidth to zero, while maintaining the storage capacity $\mathcal{M}$ and the required resource pair $(\alpha, \gamma)$.
Likewise, we can reduce the cross-cluster repair bandwidth to \textit{zero} while maintaining the storage capacity, in the case of $R=k/n \rightarrow 1$.

Note that $\gamma_{mbr}^{(\epsilon)} = \alpha_{mbr}^{(\epsilon)}$ for $0 \leq \epsilon \leq 1$, from (\ref{Eqn:MBR_point}). Thus, the result of (\ref{Eqn:mbr_ratio}) in Theorem \ref{Thm:MSR_MBR} can be expressed as Fig. \ref{Fig:MBR_relationship} at the asymptotic regime of large $n,k$. 
According to Fig. \ref{Fig:MBR_relationship}, intra-cluster only repair ($\epsilon = 0$ or $\beta_c = 0$) is possible by using additional resources ($\alpha$ and $\gamma$) in the $1-R$ portion, compared to the symmetric repair ($\epsilon =1$) case.

\begin{figure}[t]
	\centering
	\includegraphics[height=50mm]{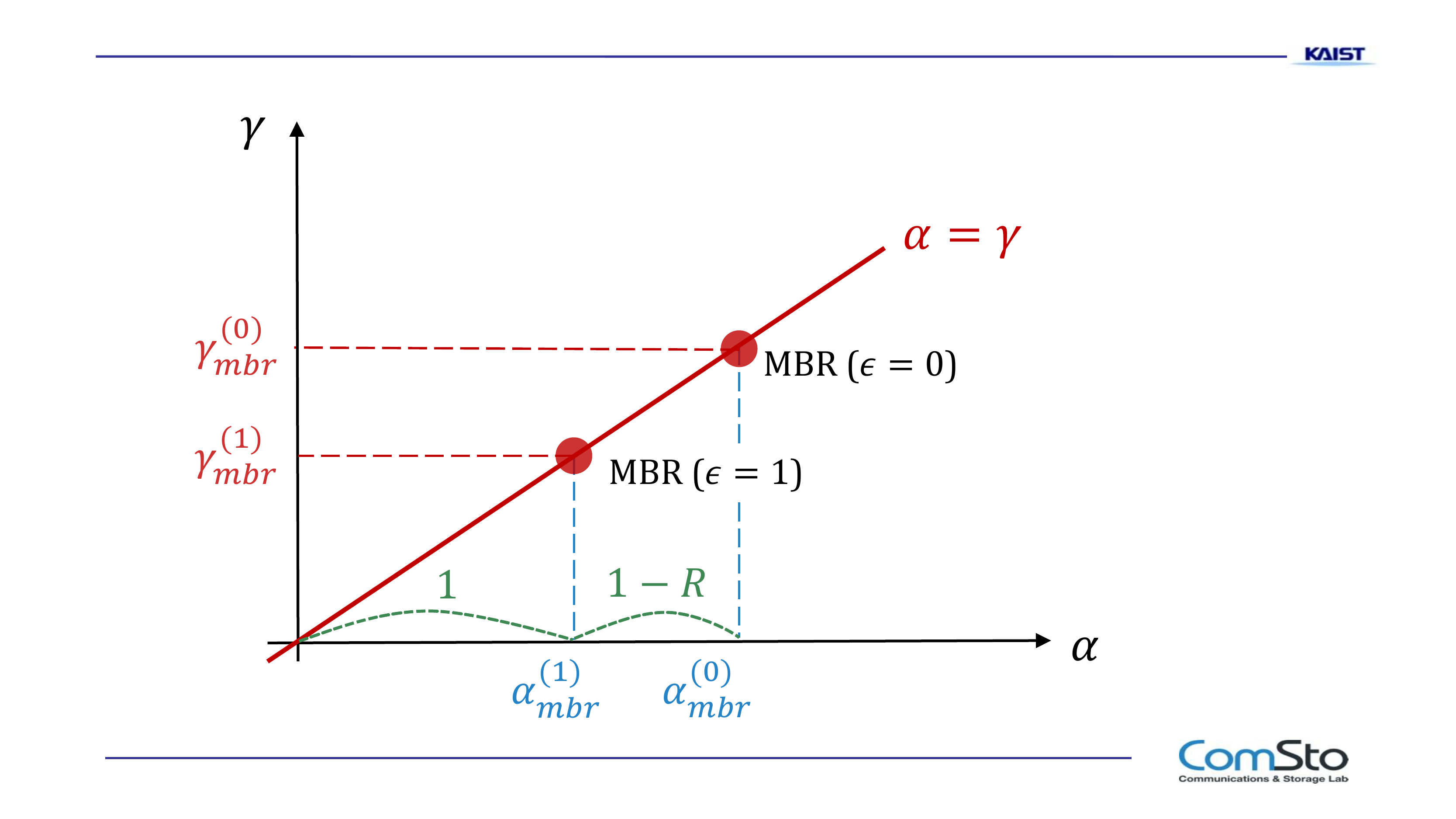}
	\caption{Relationship between MBR point of $\epsilon = 0$ and that of $\epsilon = 1$}
	\label{Fig:MBR_relationship}
\end{figure}

\subsection{Intra-cluster Repairable Codes versus Locally Repairable Codes \cite{papailiopoulos2014locally}}\label{Section:IRC_versus_LRC}

Here, we define a family of network coding schemes for clustered DSS, which we call the \textit{intra-cluster repairable codes}. 
In Corollary \ref{Corollary:Feasible Points_large_epsilon}, we considered DSSs with $\epsilon=0$, which can repair any failed node by using intra-cluster communication only. The optimal $(\alpha, \gamma)$ trade-off curve which satisfies $\epsilon=0$ is illustrated as the solid line with cross markers in Fig. \ref{Fig:kappa_various_latex}. Each point on the curve is achievable (i.e., there exists a  network coding scheme), according to the result of \cite{ahlswede2000network}. We call the network coding schemes for the points on the curve of $\epsilon=0$ the \textit{intra-cluster repairable codes}, since these coding schemes can repair any failed node by using intra-cluster communication only. 
The relationship between the intra-cluster repairable codes and the locally repairable codes (LRC) of \cite{papailiopoulos2014locally} are investigated in Theorem \ref{Thm:IRC_versus_LRC}, the proof of which is given in Appendix \ref{Section:Proof_of_IRC_versus_LRC}. 
\cmt{Note that according to the definition in \cite{papailiopoulos2014locally}, an $(n,l_0,m_0,\mathcal{M},\alpha)$-LRC encodes a file of size $\mathcal{M}$ into $n$ coded symbols, where each symbol contains $\alpha$ bits. In addition, any coded symbol of the LRC is regenerated by accessing at most $l_0$ other symbols (i.e., the code has repair locality of $l_0$), while the minimum distance of the code is $m_0$.}

\begin{theorem}\label{Thm:IRC_versus_LRC}
	The intra-cluster repairable codes with storage overhead $\alpha$ are the $(n,l_0,m_0,\mathcal{M},\alpha)$-LRC codes of \cite{papailiopoulos2014locally} where
	\begin{align*}
	l_0 &= n_I-1, \\
	m_0 &= n-k+1.
	\end{align*}
	It is confirmed that Theorem 1 of \cite{papailiopoulos2014locally},
	\begin{equation}\label{Eqn:locality_ineq}
	m_0 \leq n - \ceil[\bigg]{\frac{\mathcal{M}}{\alpha}} - \ceil[\bigg]{\frac{\mathcal{M}}{l_0\alpha}} + 2,
	\end{equation}
	holds for every intra-cluster repairable code with storage overhead $\alpha$. Moreover, the equality of (\ref{Eqn:locality_ineq}) holds if
	\begin{align}
	\alpha = \alpha_{msr}^{(0)},
	(k  \Mod{n_I}) \neq 0. \nonumber
	\end{align}
\end{theorem}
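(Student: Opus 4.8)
The plan is to verify the claimed identification of the intra-cluster repairable codes with a specific LRC family, then to check the locality inequality and its equality condition directly from the capacity expressions already derived. First I would establish the locality parameter $l_0 = n_I - 1$: by definition an intra-cluster repairable code repairs any failed node using intra-cluster communication only ($\epsilon = 0$, hence $\beta_c = 0$), and since the paper assumes $d_I$ takes its maximum value $n_I - 1$, a failed node is regenerated by accessing the other $n_I - 1$ nodes in its cluster, so the repair locality is exactly $n_I - 1$. Next I would establish $m_0 = n - k + 1$: the capacity-achieving codes here satisfy the MDS property (stated in the introduction and in Section~\ref{Section:assumptions}), so any $k$ of the $n$ coded symbols recover $\mathcal{M}$, which is precisely the statement that the minimum distance of the code is $n - k + 1$. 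This identifies the intra-cluster repairable code as an $(n, n_I - 1, n - k + 1, \mathcal{M}, \alpha)$-LRC.

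Next I would verify that the Singleton-type bound \eqref{Eqn:locality_ineq} of \cite{papailiopoulos2014locally} holds for every such code. Substituting $l_0 = n_I - 1$ and $m_0 = n - k + 1$, the inequality to check becomes $n - k + 1 \leq n - \ceil{\mathcal{M}/\alpha} - \ceil{\mathcal{M}/((n_I-1)\alpha)} + 2$, i.e. $\ceil{\mathcal{M}/\alpha} + \ceil{\mathcal{M}/((n_I-1)\alpha)} \leq k + 1$. I would derive this from the feasibility constraint $\mathcal{C}(\alpha, \beta_I, 0) \geq \mathcal{M}$: plugging $\epsilon = 0$ into the capacity formula \eqref{Eqn:capacity_kappa} (or directly into \eqref{Eqn:Capacity of clustered DSS_rev} with $\beta_c = 0$), each min term reduces to $\min\{\alpha, \rho_i \beta_I\}$, and since there are $g_i$ copies for each $i$ with $\sum g_i = k$, a feasible $(\alpha, \gamma_I)$ pair must supply enough "$\alpha$-sized" contributions. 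Counting how many of the $k$ terms are capped at $\alpha$ versus limited by $\rho_i \beta_I$, and using $\gamma_I = (n_I-1)\beta_I$ as the total intra-cluster budget, should translate the inequality $\mathcal C \ge \mathcal M$ into the ceiling bound after standard manipulations with the floor/ceiling of $\mathcal{M}/\alpha$.

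Finally I would check the equality condition. Under $\alpha = \alpha_{msr}^{(0)}$, Corollary~\ref{Coro:msr_mbr_points} gives $\alpha_{msr}^{(0)} = \mathcal{M}/(\tau + \sum_{i=\tau+1}^{k} z_i)$ with the $z_i, \tau$ from Corollary~\ref{Corollary:Feasible Points_large_epsilon} specialized to $\epsilon = 0$, where $z_t = n_I - h_t$ and $h_t = \min\{s : \sum_{l=1}^s g_l \geq t\}$. The added hypothesis $(k \bmod n_I) \neq 0$ makes the $g_l$ sequence non-uniform in a controlled way (some $g_l = \floor{k/n_I}+1$, others $\floor{k/n_I}$), so that $\ceil{\mathcal{M}/\alpha_{msr}^{(0)}}$ and $\ceil{\mathcal{M}/((n_I-1)\alpha_{msr}^{(0)})}$ attain the exact integer values needed to turn \eqref{Eqn:locality_ineq} into an equality; I would verify this by computing both ceilings explicitly in terms of $\floor{k/n_I}$ and $(k \bmod n_I)$. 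I expect the main obstacle to be precisely this last step: carefully evaluating the two ceiling functions at the MSR storage value and matching them to $k+1$ requires tracking the interaction between the non-uniform block sizes $g_l$, the quantity $h_t$, and the divisibility hypothesis, and getting the off-by-one bookkeeping exactly right. The rest—identifying $l_0$, $m_0$, and deriving the inequality—should follow cleanly from the already-established capacity formula and MDS property.
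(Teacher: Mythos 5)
Your identifications of $l_0 = n_I - 1$ (repair locality from intra-cluster contact) and $m_0 = n - k + 1$ (from the MDS property) match the paper exactly, and your endgame for the equality case --- substituting $\alpha_{msr}^{(0)}$ and evaluating the two ceilings in terms of $q = \lfloor k/n_I \rfloor$ and $r = k \bmod n_I$ --- is precisely what the paper does. Where you diverge is the middle step. You propose to derive the bound $\lceil \mathcal{M}/\alpha\rceil + \lceil\mathcal{M}/(l_0\alpha)\rceil \le k+1$ for a general feasible $\alpha$ by expanding $\mathcal{C}(\alpha,\beta_I,0) \ge \mathcal{M}$ and counting which min-terms are clamped at $\alpha$; this is more work than needed and is the step most likely to get tangled, because the capacity constraint binds a pair $(\alpha,\gamma_I)$ and the only $\alpha$-only information it yields is $\alpha \ge \alpha_{msr}^{(0)}$. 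The paper's shortcut is to observe (implicitly) that the right-hand side of \eqref{Eqn:locality_ineq} is non-decreasing in $\alpha$ --- larger $\alpha$ shrinks both ceilings --- so it suffices to verify the inequality at the single point $\alpha = \alpha_{msr}^{(0)} = \mathcal{M}/(k - \lfloor k/n_I\rfloor)$, after which the arithmetic with $q$ and $r$ you already planned closes both the inequality and the equality case in one computation. If you carry out your counting argument carefully it should reduce to the same thing, but you would save yourself the bookkeeping by invoking $\alpha \ge \alpha_{msr}^{(0)}$ plus monotonicity up front.
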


\subsection{Required $\beta_c$ for a given $\alpha$}

Here we focus on the following question: when the available intra-cluster repair bandwidth is abundant, how much cross-cluster repair bandwidth is required to reliably store file $\mathcal{M}$? We consider scenarios when the intra-cluster repair bandwidth (per node) has its maximum value, i.e., $\beta_I = \alpha$. Under this setting, Theorem \ref{Thm:beta_c_alpha_trade} specifies the minimum required $\beta_c$ which satisfies $\mathcal{C}(\alpha, \beta_I, \beta_c) \geq \mathcal{M}$. The proof of Theorem \ref{Thm:beta_c_alpha_trade} is given in Appendix \ref{Section:Proof_of_Thm_beta_c_alpha_trade}.

%For various $\alpha$ settings, Fig. \ref{Fig:gamma_I_gamma_c} illustrates the feasible $(\gamma_I, \gamma_c)$ pair which enables reliably storing file $\mathcal{M}$. Depending on the $\alpha$ value, the curves show different behaviors. Here, we focus on the fact that the feasible cross-cluster repair bandwidth $\gamma_c$ (and $\beta_c = \frac{\gamma_c}{n-n_I}$) is lower bounded by a threshold value, even though the available intra-cluster repair bandwidth $\gamma_I$ (or $\beta_I = \frac{\gamma_I}{n_I-1}$) is set to it's maximum value. The theorem below formally states this property. 

\begin{theorem}\label{Thm:beta_c_alpha_trade}
	Suppose the intra-cluster repair bandwidth is set at the maximum value, i.e., $\beta_I = \alpha$. For a given node capacity $\alpha$, the clustered DSS can reliably store data $\mathcal{M}$ if and only if $\beta_c \geq \beta_c^*$ where
	\begin{align}
	\beta_c^* &= 
	\begin{cases}
	\frac{\mathcal{M} - (k-1)\alpha}{n-k}, & \text{ if } \alpha \in [\frac{\mathcal{M}}{k}, \frac{\mathcal{M}}{f_{k-1}}  ) \\
	\frac{\mathcal{M} - m\alpha}{\sum_{i=m+1}^{k} (n-i)}, & \text{ if } \alpha \in [\frac{\mathcal{M}}{f_{m+1}}, \frac{\mathcal{M}}{f_m}  ) \\
	& \quad \quad (m=k-2, k-3, \cdots, s+1) \\
	\frac{\mathcal{M} - k_0 \alpha}{\sum_{i=k_0+1}^{k} (n-i)}, & \text{ if } \alpha \in [\frac{\mathcal{M}}{f_{k_0+1}}, \frac{\mathcal{M}}{k_0} ) \\
	0, & \text{ if } \alpha \in [ \frac{\mathcal{M}}{k_0}, \infty ),
	\end{cases} \label{Eqn:beta_c_star}\\
	f_m &= m + \frac{\sum_{i=m+1}^{k} (n-i)}{n-m}, \label{Eqn:f_m}\\
	k_0 &= k-\floor{\frac{k}{n_I}} \label{Eqn:k_0}.
	\end{align}
\end{theorem}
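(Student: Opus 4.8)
\emph{Proof plan.} I would specialize the capacity formula (\ref{Eqn:Capacity of clustered DSS_rev}) of Theorem~\ref{Thm:Capacity of clustered DSS} to $\beta_I=\alpha$, collapse its double sum to a one-parameter family, and then invert the resulting monotone function of $\beta_c$. For the collapse, write $c_{ij}=n-\rho_i-j-\sum_{m=1}^{i-1}g_m$ for the coefficient of $\beta_c$, so the $(i,j)$ term of (\ref{Eqn:Capacity of clustered DSS_rev}) is $\min\{\alpha,\,\rho_i\beta_I+c_{ij}\beta_c\}$. With $\beta_I=\alpha$, every term with $i\le n_I-1$ (equivalently $\rho_i=n_I-i\ge 1$) equals $\alpha$: if $c_{ij}\ge 0$ this is clear from $\rho_i\alpha\ge\alpha$, and if $c_{ij}<0$ then, since $0\le\beta_c\le\beta_I=\alpha$, $\rho_i\alpha+c_{ij}\beta_c\ge(\rho_i+c_{ij})\alpha=(n-j-\sum_{m=1}^{i-1}g_m)\alpha\ge(n-\sum_{m=1}^{i}g_m)\alpha\ge(n-k)\alpha\ge\alpha$, using (\ref{Eqn:sum of g is k}) and $n>k$. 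These terms therefore contribute $(\sum_{i=1}^{n_I-1}g_i)\alpha=(k-g_{n_I})\alpha=k_0\alpha$, since $g_{n_I}=\floor{k/n_I}$ (as $k\bmod n_I<n_I$) gives $k-g_{n_I}=k_0$ as in (\ref{Eqn:k_0}). For $i=n_I$ we have $\rho_{n_I}=0$ and $\sum_{m=1}^{n_I-1}g_m=k_0$; substituting and re-indexing $j\mapsto i=k_0+j$ yields the compact form
\begin{equation*}
\mathcal{C}(\alpha,\alpha,\beta_c)=k_0\alpha+\sum_{i=k_0+1}^{k}\min\{\alpha,\,(n-i)\beta_c\}.
\end{equation*}

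\textbf{Inversion.} The right side is continuous, piecewise linear and nondecreasing in $\beta_c$ on $[0,\alpha]$, running from $k_0\alpha$ at $\beta_c=0$ to $k\alpha$ at $\beta_c=\alpha$ (each $(n-i)\alpha\ge\alpha$ as $i\le k\le n-1$). Hence the minimal feasible $\beta_c$ is $0$ exactly when $k_0\alpha\ge\mathcal{M}$ (the last branch of (\ref{Eqn:beta_c_star}), valid on $[\mathcal{M}/k_0,\infty)$), it is forced to satisfy $\alpha\ge\mathcal{M}/k$ (else $k\alpha<\mathcal{M}$ is unreachable even at $\beta_c=\beta_I$), and otherwise it is the unique root of $\mathcal{C}(\alpha,\alpha,\beta_c)=\mathcal{M}$. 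The $i$-th summand switches from the sloped branch $(n-i)\beta_c$ to the flat branch $\alpha$ at $\beta_c=\alpha/(n-i)$; since $n-i$ decreases in $i$, on the window where exactly the summands $i=k_0+1,\dots,m$ are saturated one has $\mathcal{C}(\alpha,\alpha,\beta_c)=m\alpha+\beta_c\sum_{i=m+1}^{k}(n-i)$, and consecutive pieces agree at the breakpoints. Solving this affine equation gives $\beta_c^{*}=(\mathcal{M}-m\alpha)/\sum_{i=m+1}^{k}(n-i)$, the $m$-th branch of (\ref{Eqn:beta_c_star}).

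\textbf{Range of validity.} It remains to convert the window $\beta_c^{*}\in[\alpha/(n-m),\,\alpha/(n-m-1))$ into a range for $\alpha$. Clearing denominators (all positive, since $m\le k-1\le n-2$), the lower bound $\beta_c^{*}\ge\alpha/(n-m)$ is equivalent to $\mathcal{M}\ge f_m\alpha$, i.e.\ $\alpha\le\mathcal{M}/f_m$ with $f_m$ as in (\ref{Eqn:f_m}), and the upper bound is equivalent to $\alpha>\mathcal{M}/f_{m+1}$; using $f_k=k$ produces the top branch $m=k-1$, and for the bottom branch $m=k_0$ there is no summand left to saturate, so the lower window-bound is vacuous and that branch runs down to $\beta_c^{*}=0$, i.e.\ up to $\alpha=\mathcal{M}/k_0$. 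Finally $\beta_c^{*}<\alpha/(n-m-1)\le\alpha=\beta_I$ because $n-m-1\ge n-k\ge 1$, so the constraint $\beta_c\le\beta_I$ never binds.

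\textbf{Main obstacle.} The only step that is not routine bookkeeping is the collapse of the double sum: one must verify that all terms with $i<n_I$ equal $\alpha$ even when $c_{ij}<0$, which is exactly where the hypotheses $\beta_c\le\beta_I$, (\ref{Eqn:sum of g is k}) and $n>k$ are used together. The remaining fussiness is matching the half-open interval conventions of (\ref{Eqn:beta_c_star}) at the breakpoints (where a saturated and an unsaturated summand give the same $\beta_c^{*}$) and treating the degenerate top ($m=k$, $f_k=k$) and bottom ($m=k_0$) cases separately.
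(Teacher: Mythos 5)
Your proof is correct and follows essentially the same route as the paper: both collapse the capacity expression to the compact form $\mathcal{C}=k_0\alpha+\sum_{i=k_0+1}^{k}\min\{\alpha,(n-i)\beta_c\}$ and then invert the resulting piecewise-linear, nondecreasing function of $\beta_c$. The only difference is in how the $k_0\alpha$ part is extracted---the paper uses monotonicity of the $\omega_i$ sequence together with the single evaluation $\omega_{k_0}\geq\alpha$, whereas you argue termwise; as a side remark, your $c_{ij}<0$ branch is actually vacuous (one can check $c_{ij}\geq(n_I-i)(L-1)\geq 0$ for all valid $(i,j)$, since $c_{ij}$ is precisely the number of cross-cluster helper edges crossing the cut), so that portion of your case analysis, while harmlessly correct, is never invoked.
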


\begin{figure}[!t]
	\centering
	\includegraphics[height=50mm]{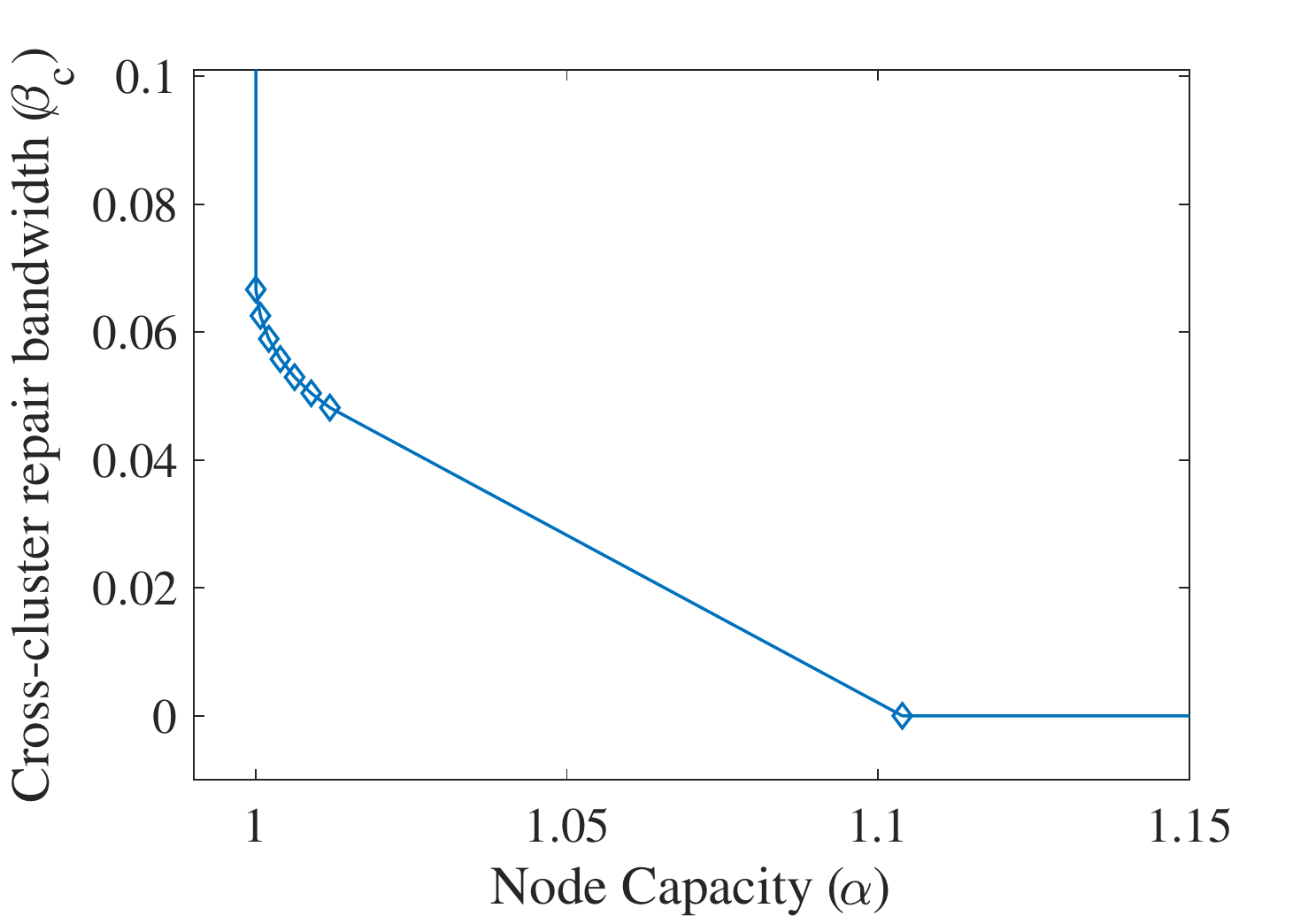}
	\caption{Optimal tradeoff between cross-cluster repair bandwidth and node capacity, when $n=100, k=85, L=10, \mathcal{M} = 85$ and $\beta_I = \alpha$}
	\label{Fig:beta_c_alpha}
\end{figure}

Fig. \ref{Fig:beta_c_alpha} provides an example of the optimal trade-off relationship between $\beta_c$ and $\alpha$, explained in Theorem \ref{Thm:beta_c_alpha_trade}. For $\alpha \geq \mathcal{M}/k_0 = 1.104$, the cross-cluster burden $\beta_c$ can be reduced to zero. 
However, as $\alpha$ decreases from $\mathcal{M}/k_0$, the system requires a larger $\beta_c$ value. For example, if $\alpha = \beta_I= 1.05$ in Fig. \ref{Fig:beta_c_alpha}, $\beta_c \geq 0.03$ is required to satisfy $\mathcal{C}(\alpha, \beta_I, \beta_c) \geq \mathcal{M} = 85$. Thus, for each node failure event, a cross-cluster repair bandwidth of $\gamma_c = (n-n_I)\beta_c \geq 2.7$ is required. Theorem \ref{Thm:beta_c_alpha_trade} provides an explicit equation for the cross-cluster repair bandwidth we need to pay, in order to reduce node capacity $\alpha$.

\section{Further Comments \& Future Works}\label{Section:Future_Works}

%\subsection{Connection with Locally Repairable Codes}
%
%\textcolor{red}{Fill in here...}

\subsection{Explicit coding schemes for clustered DSS}

According to the part I proof of Theorem \ref{Thm:Capacity of clustered DSS}, there exists an information flow graph $G^*$ which has the min-cut value of $\mathcal{C}$, the capacity of clustered DSS. Thus, according to \cite{ahlswede2000network}, there exists a linear network coding scheme which achieves capacity $\mathcal{C}$. Although the existence of a coding scheme is verified, explicit network coding schemes which achieve capacity need to be specified for implementing practical systems. 
%Finding the explicit network code design (e.g. MSR and MBR codes) for clustered DSS is an interesting further research topic. 
\cmt{Recently, under the setting of clustered DSS modeled in the present paper, MBR codes for all $n,k,L,\epsilon$ are constructed in \cite{sohn2018explicit} and MSR codes for limited parameters are designed in \cite{sohn2018class}. Explicit code construction for general parameters and/or construction of codes that requires small field size are interesting remaining issues.}

\subsection{Optimal number of clusters}

According to Theorem \ref{Thm:cap_dec_ftn_L}, capacity $\mathcal{C}$ is asymptotically a monotonically decreasing function of $L$, the number of clusters. Thus, reducing the number of clusters (i.e., gathering storage nodes into a smaller number of clusters) increases storage capacity. However, as mentioned in Section \ref{Section:C_versus_L}, we typically want to have a sufficiently large $L$, to tolerate the failure of a cluster. Then, the remaining question is in finding optimal $L^*$ which not only allows sufficiently large storage capacity, but also a tolerance to cluster failures. We regard this problem as a future research topic, the solution to which will provide a guidance on the strategy for distributing storage nodes into multiple clusters.

\subsection{Extension to general $d_I, d_c$ settings}

The present paper assumed a maximum helper node setting, $d_I = n_I-1$ and $d_c = n-n_I$, since it maximizes the capacity as stated in Proposition \ref{Prop:max_helper_nodes}. However, 
waiting for all helper nodes gives rise to a latency issue. If we reduce the number of helper nodes $d_I$ and $d_c$, low latency repair would be possible, while the achievable storage capacity decreases. 
Thus, we consider obtaining the capacity expression for general $d_I, d_c$ settings, and discover the trade-off between capacity and latency for various $d_I, d_c$ values.

\subsection{Scenarios of aggregating the helper data within each cluster}

\cmt{In recent work [28], [38] on multi-rack (multi-cluster) distributed storage, the authors discuss aggregation of repair data leaving a given cluster. The ideas is to allow aggregation and compression of all helper data leaving each cluster to aid reconstruction taking place in some other cluster containing a failed node. This type of repair link aggregation has been shown to reduce the cross-cluster repair burden in [28], [38]. We expect that the same method would also change the tradeoff picture for our distributed cluster model. This is certainly an interesting and important topic to investigate, but careful analysis including the effect of security breach on links will provide a more complete assessment of the merits and potential perils of repair link aggregation. We will leave this as a future endeavor.}
%However, these works focus on the scenarios of having maximum intra-cluster repair bandwidth, i.e., $\beta_I = \alpha$. Under general setting 

%\textcolor{red}{
%	\begin{itemize}
%		\item Connection with latency issue
%		\begin{itemize}
%			\item If the capacity is obtained for general $d_I, d_c$ case
%			\item In case of intra-cluster repair, we have reduced $d$, contacting nodes
%		\end{itemize}
%	\item explicit coding scheme for points other than MBR
%	\item Finding the optimal number of clusters (optimal design of data centers): based on failure model and latency model
%	\item mathematically relate locally repairable code \cite{papailiopoulos2014locally} and intra-cluster repair code
%	\item Discuss on how clustered system can survive (tolerate failure events and eavesdroppers attack), compared to the centralized system
%	\end{itemize}}

\section{Conclusion}\label{Section:Conclusion}
This paper considered a practical distributed storage system where storage nodes are dispersed  into several clusters. Noticing that the traffic burdens of intra- and cross-cluster communications are typically different, a new system model for clustered distributed storage systems is suggested. Based on the cut-set bound analysis of information flow graph, the storage capacity $\mathcal{C}(\alpha, \beta_I, \beta_c)$ of the suggested model is obtained in a closed-form, as a function of three main resources: node storage capacity $\alpha$, intra-cluster repair bandwidth $\beta_I$ and cross-cluster repair bandwidth $\beta_c$. It is shown that the asymmetric repair ($\beta_I > \beta_c$) degrades capacity, which is the cost for lifting the cross-cluster repair burden. 
Moreover, in the asymptotic regime of a large number of storage nodes,
capacity is shown to be asymptotically equivalent to a monotonic decreasing function of $L$, the number of clusters. Thus, reducing $L$ (i.e., gathering nodes into less clusters) is beneficial for increasing capacity, although we would typically need to guarantee sufficiently large $L$ to tolerate rack failure events. 

Using the capacity expression, we obtained the feasible set of ($\alpha, \beta_I, \beta_c$) triplet which satisfies $\mathcal{C}(\alpha, \beta_I, \beta_c) \geq \mathcal{M}$, i.e., it is possible to reliably store file $\mathcal{M}$ by using the resource value set ($\alpha, \beta_I, \beta_c$). The closed-form solution on the feasible set shows a different behavior depending on $\epsilon = \beta_c/\beta_I$, the ratio of cross- to intra-cluster repair bandwidth. 
It is shown that the minimum storage of $\alpha = \mathcal{M}/k$ is achievable if and only if $\epsilon \geq \frac{1}{n-k}$. Moreover, in the special case of $\epsilon = 0$, we can construct a reliable storage system without using cross-cluster repair bandwidth. A family of network codes which enable $\epsilon = 0$, called the \textit{intra-cluster repairable codes}, has been shown to be a class of the \textit{locally repairable codes} defined in \cite{papailiopoulos2014locally}.

%Moreover, local repair ($\gamma_c = 0$) is shown to be possible at the expense of extra resources ($\alpha$ and $\gamma_I$), where the amounts of required extra resources are specified in a mathematical form.
%Finally, the feasible ($\gamma_I, \gamma_c$) pairs showed a clear trade-off relation for large enough storage size $\alpha$. 
%
%%Under constraint of arbitrary fixed total repair bandwidth $\gamma = \gamma_I + \gamma_c$, asymmetric repair ($\beta_I = \beta_c$) maximized the capacity, coinciding with the result of previous works. 

% if have a single appendix:
%\appendix[Proof of the Zonklar Equations]
% or
%\appendix  % for no appendix heading
% do not use \section anymore after \appendix, only \section*
% is possibly needed

% use appendices with more than one appendix
% then use \section to start each appendix
% you must declare a \section before using any
% \subsection or using \label (\appendices by itself
% starts a section numbered zero.)
%

\appendices
\numberwithin{equation}{section}

\section{Proof of Theorem \ref{Thm:Capacity of clustered DSS}}\label{Section:Proof of Thm 1}
Here, we prove Theorem \ref{Thm:Capacity of clustered DSS}. First, denote the right-hand-side (RHS) of (\ref{Eqn:Capacity of clustered DSS_rev}) as
\begin{equation}\label{Eqn:Capacity_value}
%R \coloneqq  \sum_{i=1}^{n_I} \sum_{j=1}^{g_i} \min \{\alpha, \rho_i\beta_I + \phi_i^{(j)} \beta_c \}.
T \coloneqq  \sum_{i=1}^{n_I} \sum_{j=1}^{g_i} \min \{\alpha, \rho_i\beta_I + (n-\rho_i - j - \sum_{m=1}^{i-1}g_m)
 \beta_c \}.
\end{equation}
For other notations used in this proof, refer to subsection \ref{Subsection: notation}.
 The proof proceeds in two parts.

\vspace{5mm}
\textit{Part I}. Show an information flow graph $G^* \in \mathcal{G}$ and a cut-set $c^* \in C(G^*)$ such that $w(G^*, c^*) = T$:

\begin{figure*}
	\centering
	\includegraphics[width=\textwidth-45mm]{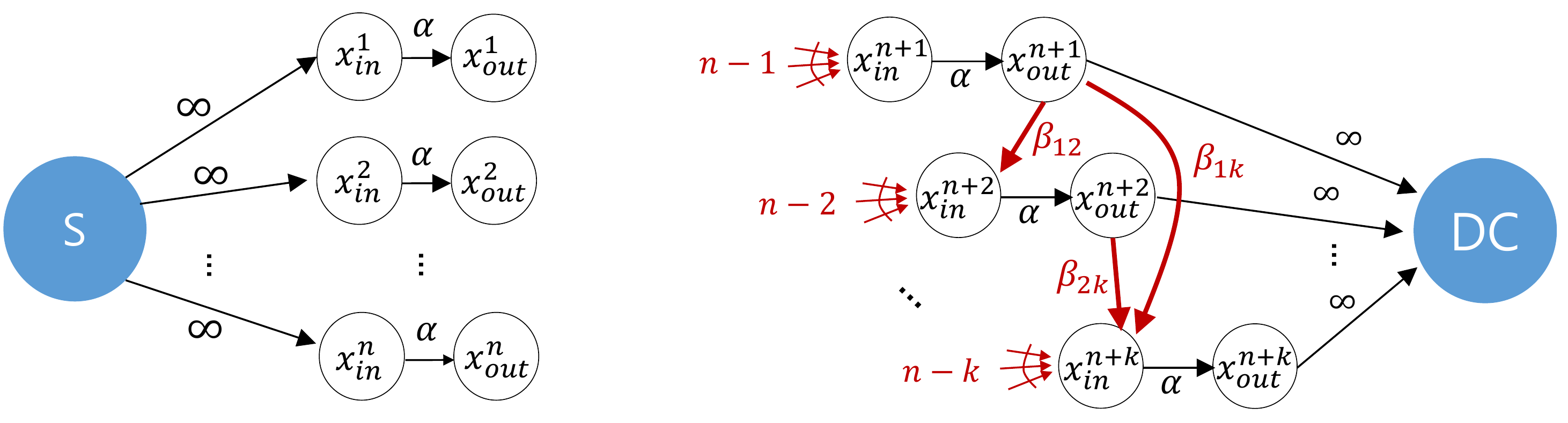}
	\caption{The information flow graph $G^*$ for the part I proof of Theorem 1}
	\label{Fig:partI_proof_Thm1}
\end{figure*}

Consider the information flow graph $G^*$ illustrated in Fig. \ref{Fig:partI_proof_Thm1}, which is obtained by the following procedure. 
First, data from source node $S$ is distributed into $n$ nodes labeled from $x^1$ to $x^n$. As mentioned in Section \ref{Section:Info_flow_graph}, the storage node $x^i = (x^i_{in}, x^i_{out})$ consists of an input-node $x_{in}^i$ and an output-node $x_{out}^i$. Second, storage node $x^t$ fails and is regenerated at the newcomer node $x^{n+t}$ for $t \in [k]$. The newcomer node $x^{n+t}$ connects to $n-1$ survived nodes $\{x^{m}\}_{m=t+1}^{n+t-1}$ to regenerate $x^t$. 
Third, data collector node $DC$ contacts $\{x^{n+t}\}_{t=1}^{k}$ to retrieve data. This whole process is illustrated in the information flow graph $G^*$. 

\begin{figure}[!t]
	\centering
	\includegraphics[height=45mm]{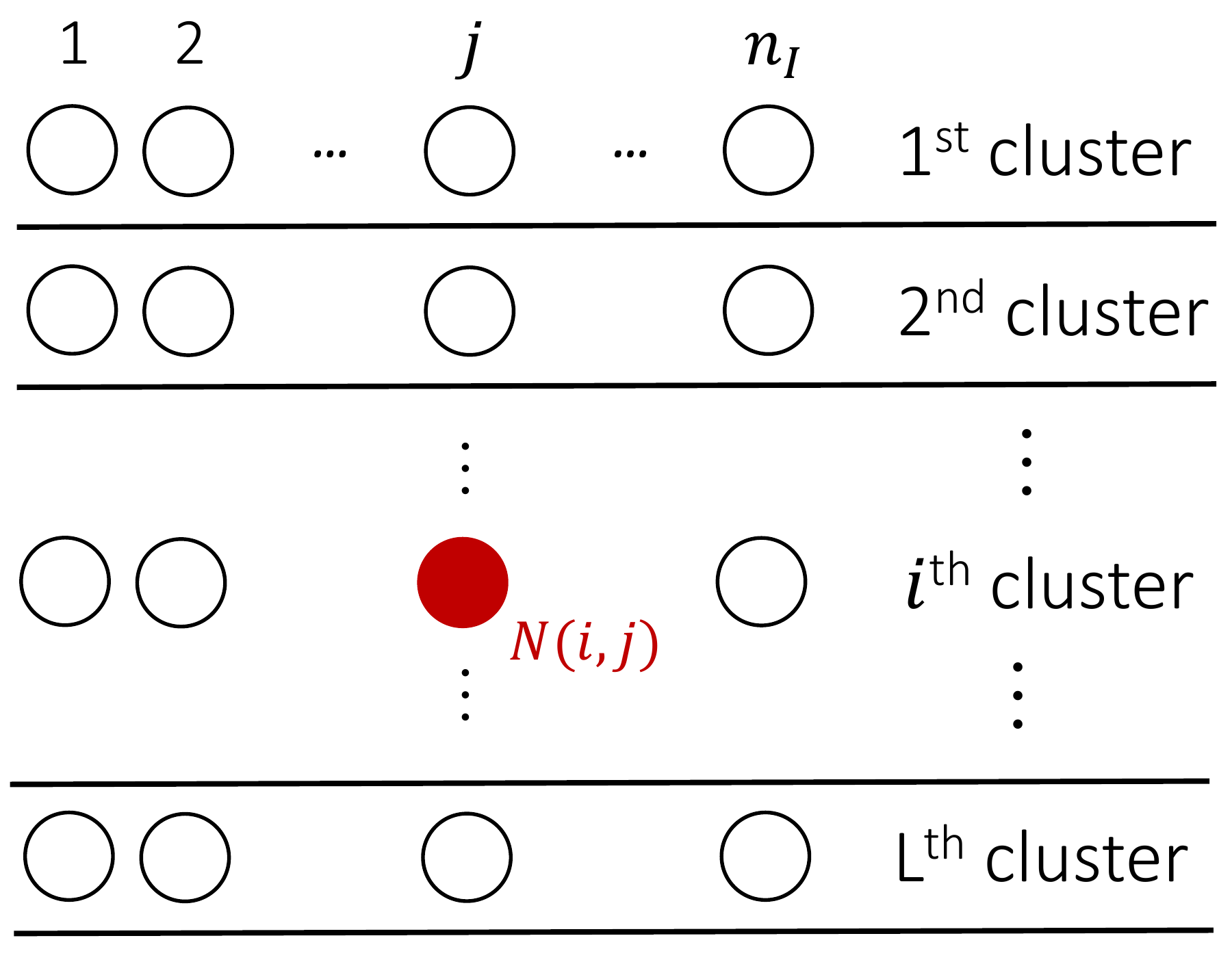}
	\caption{2-dim. structure representation}
	\label{Fig:partI_proof_Thm1_2}
\end{figure} 

To specify $G^*$, here we determine the 2-dimensional location of the $k$ newcomer nodes $\{x^{n+t}\}_{t=1}^{k}$.  
First, consider the 2-dimensional structure representation of clustered distributed storage, illustrated in Fig. \ref{Fig:partI_proof_Thm1_2}. In this figure, each row represents each cluster, and each node is represented as a 2-dimensional $(i,j)$ point for $i \in [L]$ and $j \in [n_I]$. 
The symbol $N(i,j)$ denotes the node at $(i,j)$ point. 
Here we define the set of $n$ nodes,
\begin{equation}\label{Eqn:set_of_nodes}
\mathcal{N} \coloneqq \{N(i,j) : i \in [L], j \in [n_I] \}.
\end{equation}
For $t \in [k]  $, consider selecting the newcomer node $x^{n+t}$ as
\begin{equation}\label{Eqn:mapping_btw_t_and_ij}
x^{n+t} = N(i_t,j_t)
\end{equation}
where
\begin{align}
i_t &= \min\{\nu \in [n_I] : \sum_{m=1}^{\nu} g_m \geq t \},\label{Eqn:i_from_t}\\ 
j_t &= t - \sum_{m=1}^{i_t-1} g_m \label{Eqn:j_from_t},
\end{align}
and $g_m$ used in the method is defined in (\ref{Eqn:g_m}). 
%\textcolor{red}{Can we make it as an algorithm? (Selecting $x^t$ node in 2-dim structure)}
%
%\vspace{5 mm}
%Step $1$: For $1 \leq t \leq g_1$, set $x^t = \mathcal{N}(1,t)$.
%
%Step $2$: For $g_1 + 1 \leq t \leq g_1 + g_2$, set $x^t = \mathcal{N}(2,t-g_1)$.
%
%$\quad \quad \vdots$
%
%Step $n_I$: For $1 + \sum_{i=1}^{n_I - 1} g_i \leq t \leq k = \sum_{i=1}^{n_I} g_i$, set $x^t = \mathcal{N}(n_I, t-\sum_{i=1}^{n_I - 1} g_i)$.
%\vspace{5 mm}
The location of $k$ newcomer nodes selected by this method are illustrated in Fig. \ref{Fig:partI_proof_Thm1_3}. Moreover, for the $n=12, L=3, k=9$ case, the newcomer nodes $\{x^{n+t}\}_{t=1}^k$ are depicted in Fig. \ref{Fig:partI_proof_Thm1_4}. In these figures, the node with number $t$ inside represents the newcomer node labeled as $x^{n+t}$. 

\begin{figure}[!t]
	\centering
	\includegraphics[width=85mm]{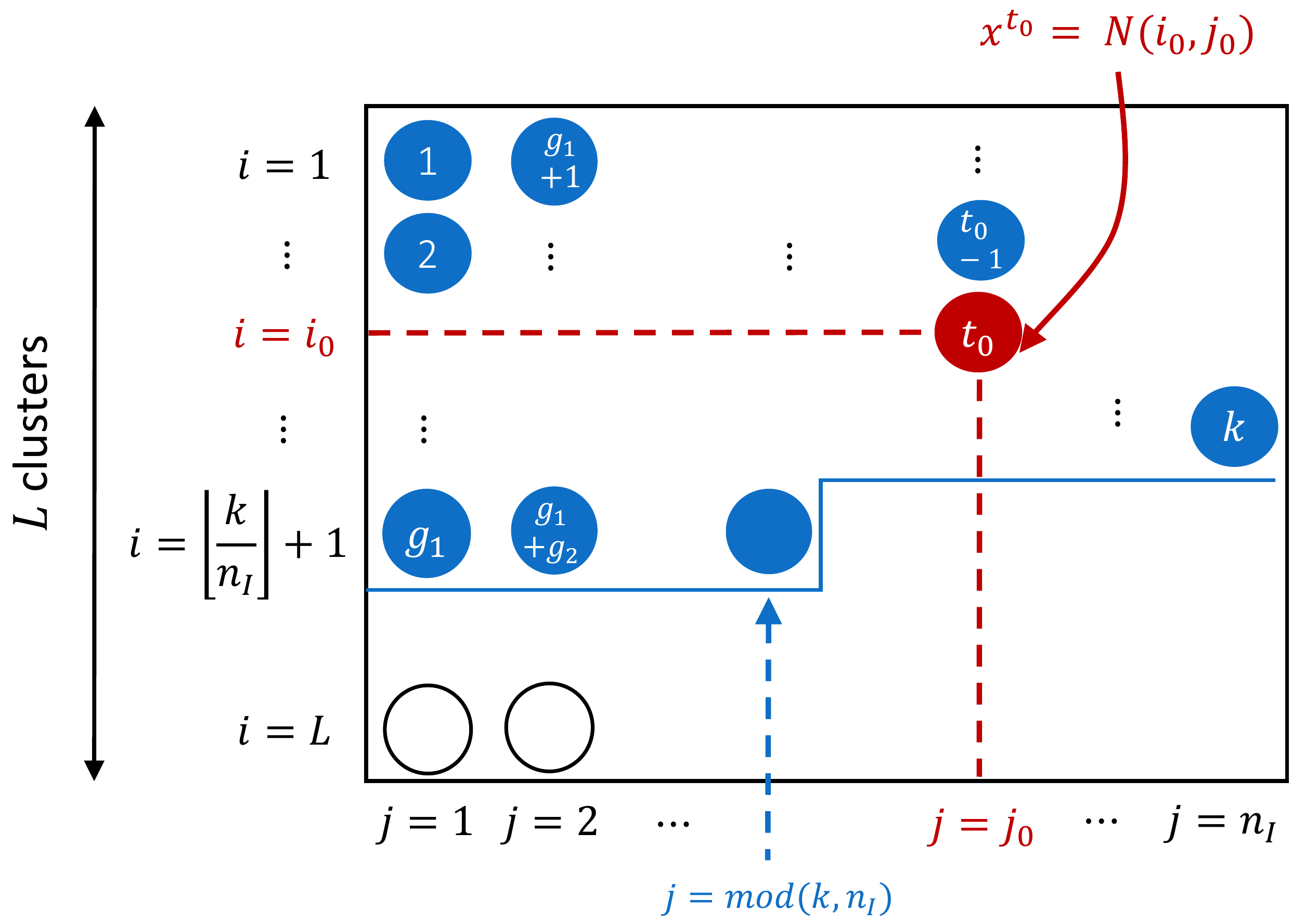}
	\caption{The location of $k$ newcomer nodes: $x^{n+1}, \cdots, x^{n+k}$}
	\label{Fig:partI_proof_Thm1_3}
\end{figure} 

\begin{figure}[!t]
	\centering
	\includegraphics[height=20mm]{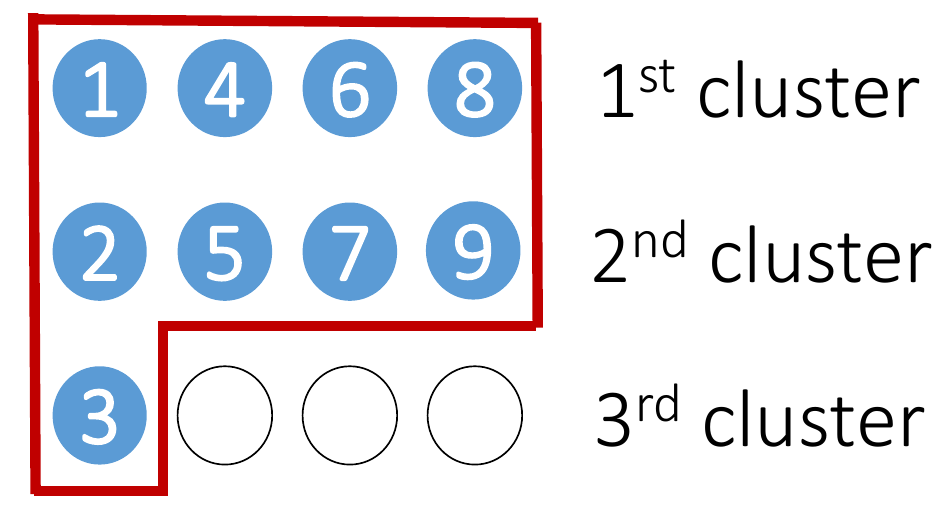}
	\caption{The location of $k$ newcomer nodes for $n=12, L=3, k=9$ case}
	\label{Fig:partI_proof_Thm1_4}
\end{figure} 

%The newcomer node $x^{n+t}$ connects to $n-1$ nodes: $x^{t+1}, \cdots, x^{n+t-1}$. In other words, for the failure event of $x^t$ node, it is repaired at the newcomer node $x^{n+t}$ which is helped by contacting $x^{t+1}, \cdots, x^{n+t-1}$ nodes.
%The data collector $t$ connects to last $k$ nodes labeled as $x^{n+1}, \cdots, x^{n+k}$. 

For the given graph $G^*$, now we consider a cut-set $c^* \in C(G^*)$ defined as below. 
The cut-set $c^*=(U,\overbar{U})$ can be defined by specifying $U$ and $\overbar{U}$ (complement of $U$), which partition the set of vertices in $G^*$.
First, let $x^{i}_{in}, x^{i}_{out} \in U$ for $i \in [n]$ and $x^{n+i}_{out} \in \overbar{U}$ for $ i \in [k]$.
\cmt{For $i \in [k]$, the input node $x^{n+i}_{in}$ is included in either $U$ or $\overbar{U}$, depending on the condition specified in the next paragraph.}
Moreover, let $S \in U$ and $DC \in \overbar{U}$. See Fig. \ref{Fig:cut_analysis}.

Let $U_0 = \bigcup_{i=1}^n \{x^i_{out}\} $. For $t \in [k]$,  let $\omega_t^*$ be the sum of capacities of edges from $U_0$ to $x^{n+t}_{in}$. If $\alpha \leq \omega_t^*$, then we include $x_{in}^{n+t}$ in $U$. Otherwise, we include $x^{n+t}_{in}$ in $\overbar{U}$. Then, the cut-set $c^*$ has the cut-value of 
\begin{equation}\label{Eqn:optimal_omega_t}
w(G^*,c^*) = \sum_{t=1}^{k} \min \{\alpha, \omega_t^*\}.
\end{equation} 
All that remains is to show that (\ref{Eqn:optimal_omega_t}) is equal to the expression in (\ref{Eqn:Capacity_value}). In other words, we will obtain the expression for $\omega_t^*$.

Recall that in the generation process of $G^*$, any newcomer node $x^{n+t}$ connects to $n-1$ helper nodes $\{x^{m}\}_{m=t+1}^{n+t-1}$ to regenerate $x^t$. 
Among the $n-1$ helper nodes, the $n_I - 1$ nodes reside in the same cluster with $x^t$, while the $n-n_I$ nodes are in other clusters. From our system setting in Section \ref{Section:Clustered_DSS}, the helper nodes in the same cluster as the failed node help by $\beta_I$, while the helper nodes in other clusters help by $\beta_c$. Therefore, the total repair bandwidth to regenerate any failed node is 
\begin{equation}\label{Eqn:gamma_recall}
\gamma = (n_I-1)\beta_I + (n-n_I)\beta_c
\end{equation}
as in (\ref{Eqn:gamma}). 

The newcomer node $x^{n+1}_{in}$ connects to $\{x_{out}^{m}\}_{m=2}^n$, all of which are included in $U_0$. Therefore,  
%\begin{equation*}
$\omega_1^* = \gamma = (n_I - 1)\beta_I + (n-n_I)\beta_c$
%\end{equation*}
holds. 
Next, $x^{n+2}_{in}$ connects to $n-2$ nodes $\{x_{out}^{m}\}_{m=3}^{n}$ from $U_0$ and one node $x_{out}^{n+1}$ from $\overbar{U}$. 
 Define variable $\beta_{lm}$ as the repair bandwidth from $x_{out}^{n+l}$ to $x_{in}^{n+m}$. Then, $\omega_2^* = \gamma - \beta_{12}$.
From (\ref{Eqn:mapping_btw_t_and_ij}), we have $x^{n+1} = N(1,1)$ and $x^{n+2} = N(1,2)$. Therefore, $x^{n+1}$ and $x^{n+2}$ are in different clusters, which result in $\beta_{12} = \beta_c$. 
Therefore, 
% \begin{equation*}
$\omega_2^* = \gamma - \beta_{12} = \gamma - \beta_c.$
%\end{equation*}

\begin{figure}
	\includegraphics[width=90mm]{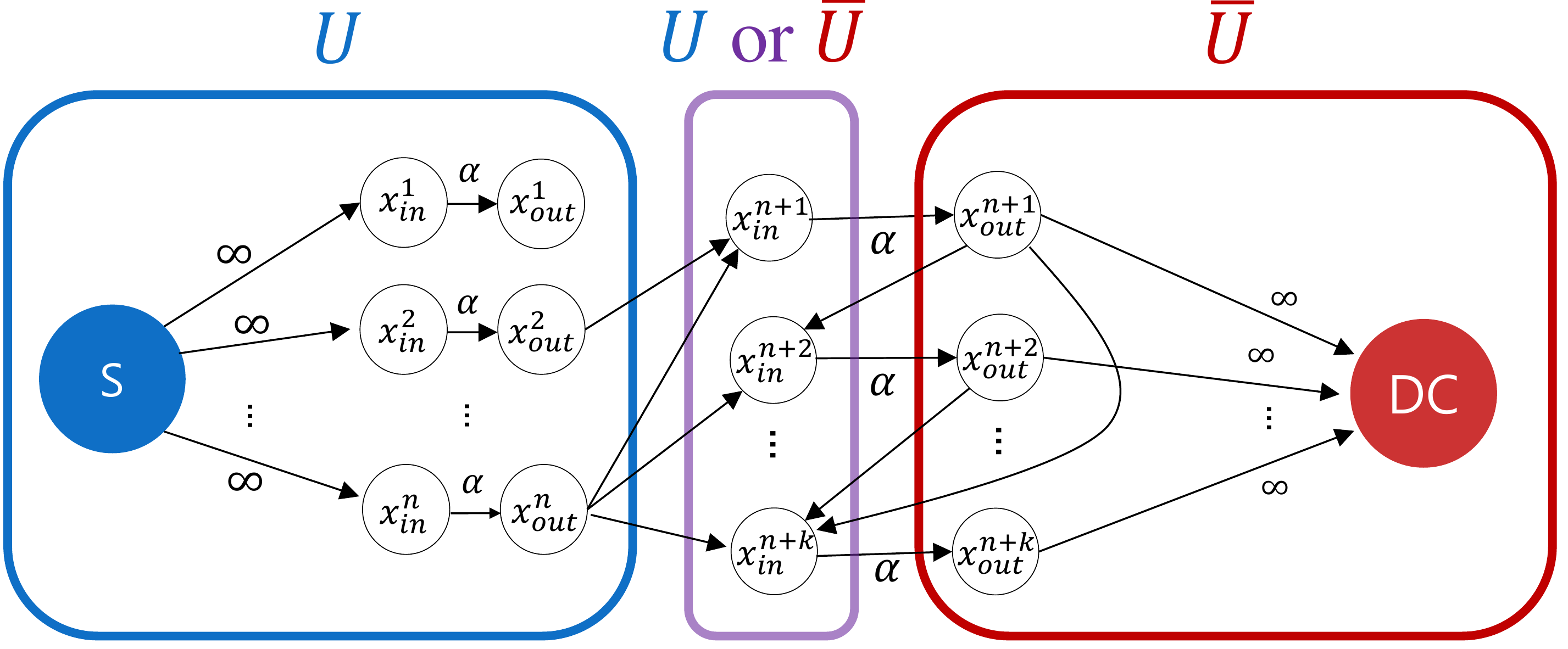}
	\caption{The cut-set $c^* = (U, \overbar{U})$ for the graph $G^*$}
	\label{Fig:cut_analysis}
	\vspace{-4mm}
\end{figure}

%Note that $x^{n+2}$, the replaced node of $x^{2}$, is located in $(i,j) = (1,2)$ point in the 2-dimensional structure. Similarly, $x^{n+1}$, the replaced node of $x^{1}$, is located in $(i,j) = (1,1)$ point in the 2-dimensional structure. Thus, 
%\begin{equation}
%\beta_{12} = \beta_c
%\end{equation}
%since $x^{n+2}$ and $x^{n+1}$ are in different clusters. 
%Therefore, we have
%\begin{equation}
%\omega_2^* = (n_I - 1)\beta_I + (n-n_I - 1)\beta_c.
%\end{equation} 

In general, $x^{n+t}_{in}$ connects to $n-t$ nodes $\{x_{out}^{m}\}_{m = t+1}^{n}$ from $U_0$, and $t-1$ nodes $\{x_{out}^{n+m}\}_{m=1}^{t-1}$ from $\overbar{U}$. Thus, $\omega_t^*$ for $t \in [k]$ can be expressed as 
%\begin{equation}\label{Eqn:omega_t}
$\omega_t^* = \gamma - \sum_{l=1}^{t-1} \beta_{lt}$
%\end{equation}
where
\begin{equation*}
	\beta_{lt} = 
	\begin{cases}
	\beta_I, & \text{ if } x^{n+l} \text{ and } x^{n+t} \text{ are in the same cluster} \\
	\beta_c, & \text{ otherwise.}
	\end{cases}
\end{equation*}

Recall Fig. \ref{Fig:partI_proof_Thm1_3}. For arbitrary newcomer node $x^{n+t} = N(i_t, j_t)$, the set $\{x^{n+m}\}_{m=1}^{t - 1}$ contains $i_t - 1$ nodes which reside in the same cluster with $x^{n+t}$, and $t - i_t$ nodes in other clusters. Therefore, $\omega_t^*$
%(\ref{Eqn:omega_t}) 
can be expressed as
\begin{equation*}
\omega_t^* = \gamma - (i_t-1)\beta_I - (t-i_t)\beta_c
\end{equation*}
where $i_t$ is defined in (\ref{Eqn:i_from_t}). Combined with (\ref{Eqn:gamma_recall}) and (\ref{Eqn:j_from_t}), we get
\begin{align*}
\omega_t^* &= (n_I - i_t)\beta_I + (n-n_I - t + i_t)\beta_c \nonumber\\
&= (n_I - i_t)\beta_I + (n-n_I - j_t - \sum_{m=1}^{i_t-1} g_m + i_t)\beta_c.
\end{align*}
Then, (\ref{Eqn:optimal_omega_t}) can be expressed as 
\begin{align} \label{Eqn:optimal_mincut}
%\sum_{t=1}^{k} min\{\alpha, \omega_t^*\} &= 
& w(G^*, c^*) = \nonumber\\
& \sum_{i=1}^{n_I} \sum_{t \in T_i} \min \{\alpha, (n_I-i)\beta_I + (n-n_I-j_t - \sum_{m=1}^{i-1}g_m + i) \beta_c\}
\end{align}
where $T_i = \{t \in [k]: i_t = i\}$. From the definition of $i_t$ in (\ref{Eqn:i_from_t}), we have
\begin{equation*}
T_i = \{\sum_{m=1}^{i-1} g_m + 1, \sum_{m=1}^{i-1} g_m + 2, \cdots, \sum_{m=1}^{i} g_m \}.
\end{equation*} 
Thus, $j_t = 1,2,\cdots, g_i$ for $t \in T_i$. Therefore, (\ref{Eqn:optimal_mincut}) can be expressed as
\begin{align*}
& w(G^*, c^*) = \\
& \sum_{i=1}^{n_I} \sum_{j=1}^{g_i} \min \{\alpha, (n_I-i)\beta_I + (n-n_I-j - \sum_{m=1}^{i-1}g_m + i) \beta_c\}, 
\end{align*}
which is identical to $T$ in (\ref{Eqn:Capacity_value}), where  $\rho_i$
%, $\phi_i^{(j)}$ 
used in this equation is defined in (\ref{Eqn:rho_i}).
% and (\ref{Eqn:phi_i_j}).
Therefore, the specified information flow graph $G^*$ and the specified cut-set $c^*$ satisfy $\omega(G^*, c^*) = \sum_{t=1}^k \min\{\alpha, \omega_t^*\} = T$.

%Depending on the range of $t$, $\omega_t^*$ can be represented as the equation below.
%\begin{equation}
%\omega_t^* =
%\begin{cases}
%(n_I-1) \beta_I + \phi_1^{(t)}\beta_c	& \text{ for } 1 \leq t \leq g_1 \\ 
%\rho_2 \beta_I + \phi_2^{(t-g_1)}\beta_c		& \text{ for } 1 \leq t-g_1 \leq g_2\nonumber\\
%\ \ \ \ \vdots& \ \ \ \ \vdots \nonumber\\
%x(n_I)\gamma_I & \text{ for }\\
%+ y(n_I, t-\displaystyle\sum_{i=1}^{n_I-1}g(i))\gamma_c	& 1 \leq t-\displaystyle\sum_{i=1}^{n_I - 1}g(i) \leq g(n_I)
%\end{cases}
%\end{equation}

%\begin{equation}
%\omega_t^* =
%\begin{cases}
%\rho_1 \beta_I + \phi_1^{(t)}\beta_c	& \text{ for } 1 \leq t \leq g_1 \\ 
%\rho_2 \beta_I + \phi_2^{(t-g_1)}\beta_c		& \text{ for } 1 \leq t-g_1 \leq g_2\nonumber\\
%\ \ \ \ \vdots& \ \ \ \ \vdots \nonumber\\
%x(n_I)\gamma_I & \text{ for }\\
%+ y(n_I, t-\displaystyle\sum_{i=1}^{n_I-1}g(i))\gamma_c	& 1 \leq t-\displaystyle\sum_{i=1}^{n_I - 1}g(i) \leq g(n_I)
%\end{cases}
%\end{equation}

\vspace{5mm}
\textit{Part II}. Show that for every information flow graph $G \in \mathcal{G}$ and for every cut-set $c \in C(G)$, the cut-value $w(G,c)$ is greater than or equal to $T$ in (\ref{Eqn:Capacity_value}).
In other words, $\forall G \in \mathcal{G}, \forall c \in C(G)$, we have $w(G,c) \geq T$.

The proof is divided into 2 sub-parts: Part II-1 and Part II-2.

\vspace{5mm}
\textit{Part II-1}. Show that $\forall G \in \mathcal{G}, \forall c \in C(G),$ we have $w(G,c) \geq B(G,c)$ where $B(G,c)$ is in (\ref{Eqn:Lower_bound}):

Consider an arbitrary information flow graph $G \in \mathcal{G}$ and an arbitrary cut-set $c \in C(G)$ of the graph $G$. Denote the cut-set as $c=(U, \overbar{U})$.
Consider an output node $x_{out}^{i}$ connected to $DC$. If $x_{out}^{i} \in U$, then the cut-value $w(G,c)$ is infinity, which is a trivial case for proving $w(G,c) \geq B(G,c)$. Therefore, the $k$ output nodes connected to $DC$ are assumed to be in $\overbar{U}$. In other words, at least $k$ output nodes exist in $\overbar{U}$.
Note that every directed acyclic graph can be topologically sorted \cite{bang2008digraphs}, where vertex $u$ is followed by vertex $v$ if there exists a directed edge from $u$ to $v$. Consider labeling the topologically first $k$ output nodes in $\overbar{U}$ as $v_{out}^{1}, \cdots,v_{out}^{k}$.
Similar to the notation for a storage node $x^{i} = (x^{i}_{in}, x^{i}_{out})$ in Section \ref{Section:Info_flow_graph}, we denote the storage node which contains $v^{i}_{out}$ as $v^{i} = (v^{i}_{in}, v^{i}_{out})$.
\cmt{Then, the set of ordered $k-$tuples $(v^{i})_{i=1}^k$ can be represented as}
%	Let $\mathcal{V}_k$ be the set of ordered $k-$tuples of $\mathcal{N}$, i.e.,
%we define the set of sequences having $k$ elements of different nodes,
\begin{align} \label{Eqn:V_k}
	\mathcal{V}_k = \{(v^1, v^2, \cdots, v^k) : & \quad v^t \in \mathcal{N} \text{ for } t \in [k] \nonumber\\
	& \quad  v^{t_1} \neq v^{t_2} \text{ for } t_1 \neq t_2 \}.
\end{align}
We also define $u_i$, the sum of capacities of edges from $U$ to $v_{in}^i$. See Fig. \ref{Fig:proof_thm1_part2_1}.

\begin{figure}[!t]
	\centering
	\includegraphics[height=30mm]{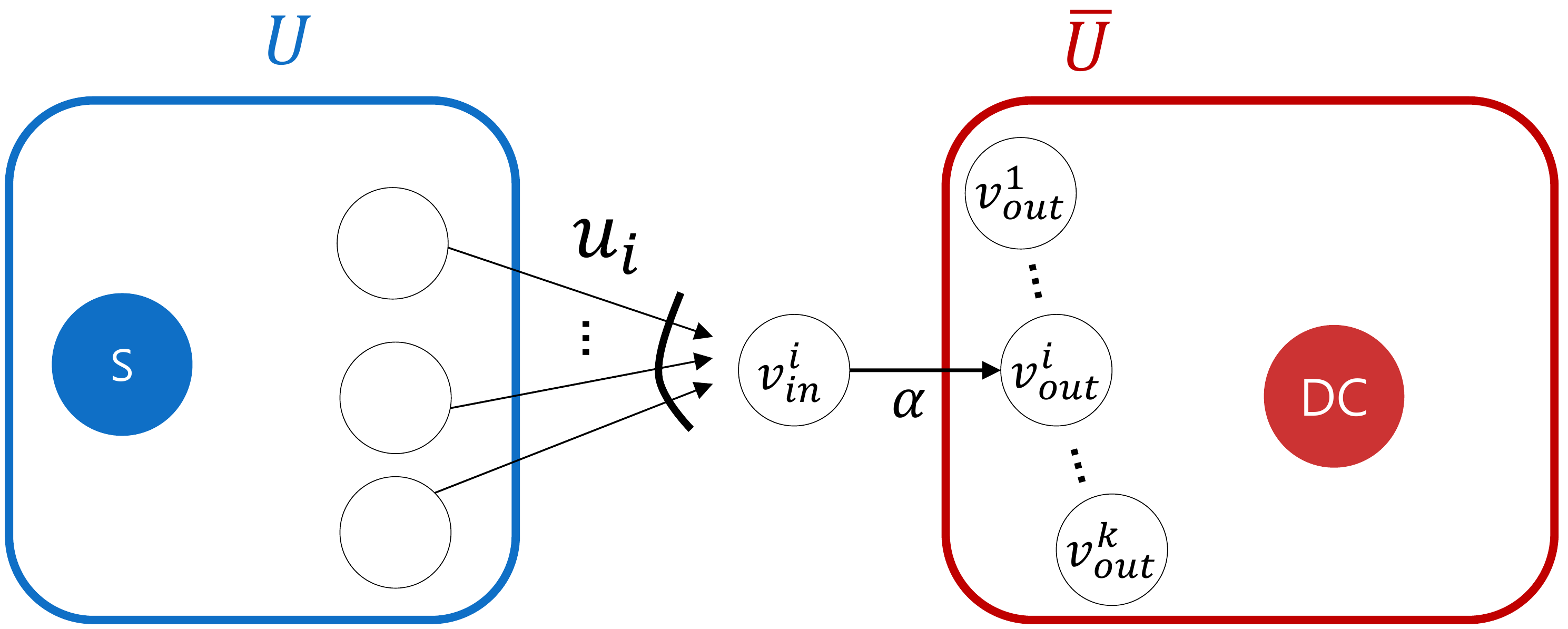}
	\caption{Arbitrary information flow graph $G \in \mathcal{G}$ and arbitrary cut-set $c \in C(G)$}
	\label{Fig:proof_thm1_part2_1}
\end{figure}

If $v_{in}^1 \in U$, then the cut-set $c =(U, \overbar{U})$ should include the edge from $v_{in}^1$ to $v_{out}^1$, which has the edge capacity $\alpha$. Otherwise (i.e., $v_{in}^1 \in \overbar{U}$), the cut-set $c =(U, \overbar{U})$ should include the edges from $U$ to $v_{in}^1$. If $v_{in}^1$ node is directly connected to the source node $S$, the cut-value $w(G,c)$ is infinity (trivial case for proving $w(G,c) \geq B(G,c)$). Therefore, $v_{in}^1$ node is assumed to be a newcomer node helped by $n-1$ helper nodes. Note that all helper nodes of $v_{in}^1$ are in $U$, since $v_{out}^1$ is the topologically first output node in $\overbar{U}$. Thus, the cut-set $c$ should include the edges from $U$ to $v_{in}^1$, where the sum of capacities of these edges are
\begin{equation*}
u_1 = \gamma = (n_I - 1)\beta_I + (n-n_I)\beta_c. 
\end{equation*}

If $v_{in}^2 \in U$, then the cut-set $c =(U, \overbar{U})$ should include the edge from $v_{in}^2$ to $v_{out}^2$, which has the edge capacity $\alpha$. Otherwise (i.e., $v_{in}^2 \in \overbar{U}$), the cut-set $c =(U, \overbar{U})$ should include the edges from $U$ to $v_{in}^2$. As we discussed in the case of $v_{in}^1$, we can assume $v_{in}^2$ is a newcomer node helped by $n-1$ helper nodes. Since $v_{in}^2$ is the topologically second node in $\overbar{U}$, it may have one helper node $v_{out}^1 \in \overbar{U}$; at least $n-2$ helper nodes in $U$ help to generate $v_{in}^2$. 
 Note that the total amount of data coming into $v_{in}^2$ is $\gamma = (n_I - 1)\beta_I + (n-n_I)\beta_c$, while the amount of information coming from $v_{out}^1$ to $v_{in}^2$, denoted $\beta_{12}$, is as follows: if $v^1$  and $v^2$ are in the same cluster, $\beta_{12} = \beta_I$, otherwise $\beta_{12} = \beta_c$.
Recall that the cut-set should include the edges from $U$ to $v_{in}^2$. The sum of capacities of these edges are 
\begin{equation*}
u_2 \geq \gamma - \beta_{12},
\end{equation*}
while the equality holds if and only if $v_{out}^1$ helps $v_{in}^2$. In a similar way, for $i \in [k]$, $u_i$ can be bounded as
\begin{equation}\label{Eqn:u_t_lower_bound}
u_i \geq \omega_i 
\end{equation}
where 
\begin{align}
\omega_i &= \gamma -  \displaystyle\sum_{j=1}^{i-1} \beta_{ji},\label{Eqn:omega_i}\\
\beta_{ji} &= 
\begin{cases}
\beta_I, & v^j \text{ and } v^i \text{ are in the same cluster} \\
\beta_c, & \text{otherwise.} \\
\end{cases}\label{Eqn:beta_ji_value}
\end{align}
The equality in (\ref{Eqn:u_t_lower_bound}) holds if and only if $v_{out}^{j}$ helps $v_{in}^i$ for $j \in [i-1]$.

Thus, $v_{out}^i$ contributes at least $\min \{\alpha, \omega_i\}$ to the cut value, for $i \in [k]$.
In summary, for arbitrary graph $G \in \mathcal{G}$, an arbitrary cut-set $c$ has cut-value $w(G,c)$ of at least $\sum_{i=1}^k \min \{\alpha, \omega_i\}$: 
\begin{equation} \label{Eqn:omega_bound}
w(G,c) \geq \sum_{i=1}^k \min \{\alpha, \omega_i\}, \quad \quad  \forall G \in \mathcal{G}, \forall c \in C(G).
\end{equation}
Note that $\{\omega_i\}$ depends on the relative position of $\{v_{out}^i\}_{i=1}^k$, which is determined when an arbitrary information flow graph $G \in \mathcal{G}$ and arbitrary cut-set $c \in C(G)$ are specified.
This relationship is illustrated in Fig. \ref{Fig:Dependency_graph}. Therefore, we define
\begin{equation}\label{Eqn:Lower_bound}
B(G,c) = \sum_{i=1}^k \min \{\alpha, \omega_i\} 
\end{equation}
for arbitrary $G \in \mathcal{G}$ and arbitrary $c \in C(G)$.
Combining (\ref{Eqn:omega_bound}) and (\ref{Eqn:Lower_bound}) completes the proof \textit{part II-1}.

\begin{figure}[!t]
	\centering
	\includegraphics[width=85mm]{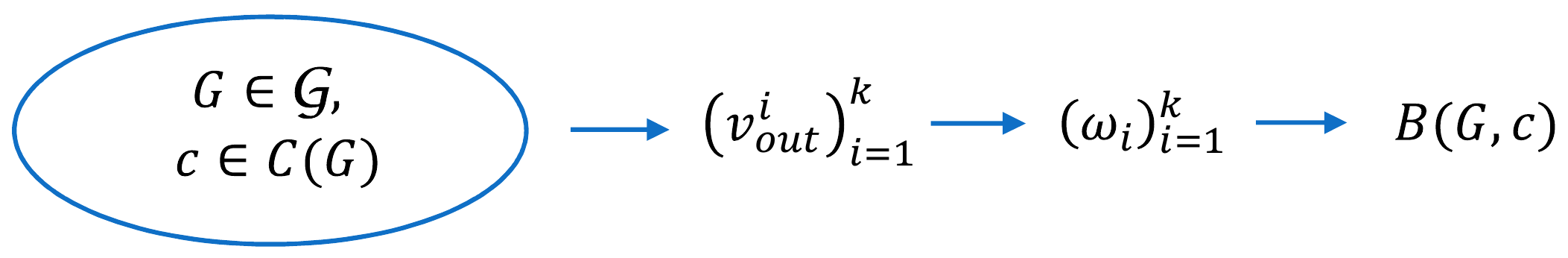}
	\caption{Dependency graph for variables in the proof of Part II.}
	\label{Fig:Dependency_graph}
\end{figure}

\vspace{5mm}
%\textit{Part II-2}: $\displaystyle\min_{G \in \mathcal{G}, \atop c \in C(G)} \widetilde L(G,c) = R$:

\textit{Part II-2}. $\displaystyle\min_{G \in \mathcal{G}} \ \min_{c \in C(G)} B(G,c) = R$:

%For fixed $\alpha$ and $k$, the lower bound $\widetilde L(G,c)$ in (\ref{Eqn:Lower_bound}) depends only on $\{\omega_i\}_{i=1}^k$, which is determined by the relative positions of $\{v_{out}^i\}_{i=1}^k$, the topologically first $k$ nodes in $\overbar{U}$. Check the dependency graph in Fig. \ref{Fig:Dependency_graph}.

Assume that $\alpha$ and $k$ are fixed. See Fig. \ref{Fig:Dependency_graph}.
Note that for a given graph $G \in \mathcal{G}$ and a cut-set $c = (U, \overbar{U})$ with $ c \in C(G)$, the sequence of topologically first $k$ output nodes $(v_{out}^i)_{i=1}^k$ in $\overbar{U}$ is determined. Moreover, for a given sequence $(v_{out}^i)_{i=1}^k$, we have a fixed $(\omega_i)_{i=1}^k$, which determines $B(G,c)$ in (\ref{Eqn:Lower_bound}).
Thus, $\displaystyle\min_{G \in \mathcal{G}} \ \min_{c \in C(G)} B(G,c)$ can be obtained by finding the optimal $(v_{out}^i)_{i=1}^k$ sequence which minimizes $B(G,c)$. It is identical to finding the optimal $(v^i)_{i=1}^k$, the sequence of $k$  different nodes out of $n$ existing nodes in the system.
Therefore, based on (\ref{Eqn:Lower_bound}) and (\ref{Eqn:omega_i}), we have 
\begin{equation}\label{minmin_equiv_1}
\displaystyle\min_{G \in \mathcal{G}} \ \min_{c \in C(G)} B(G,c) = \displaystyle\min_{(v^i)_{i=1}^k \in \mathcal{V}_k} \left(\sum_{i=1}^k \min \{\alpha, \gamma - \sum_{j=1}^{i-1}\beta_{ji}\}\right)
\end{equation}
where
\begin{equation}\label{Eqn:beta_ji_orig}
\beta_{ji} = 
\begin{cases}
\beta_I, & v^j \text{ and } v^i \text{ are in the same cluster} \\
\beta_c, & \text{otherwise.} \\
\end{cases}
\end{equation}
holds as defined in (\ref{Eqn:beta_ji_value}), and $\mathcal{V}_k$ is defined in (\ref{Eqn:V_k}).
In order to obtain the solution for RHS of (\ref{minmin_equiv_1}), all we need to do is to find the optimal sequence $(v^i)_{i=1}^k$ of $k$  different nodes, which can be divided into two sub-problems: 
$i)$ finding the optimal  way of selecting $k$ nodes $\{v^i\}_{i=1}^k$ out of $n$ nodes, and $ii)$ finding the optimal order of selected $k$ nodes. 
Note that there are $n \choose k$ selection methods and $k!$ ordering methods.
Each selection method can be assigned to a \textit{selection vector} $\bm{s}$ defined in Definition \ref{Def:Selection vector}, and each ordering method can be assigned to an \textit{ordering vector} $\bm{\pi}$ defined in Definition \ref{Def:Ordering vector}.

First, we define a selection vector $\bm{s}$ for a given $\{v^i\}_{i=1}^k$.
 \begin{definition}\label{Def:Selection vector}
 	Assume arbitrary $k$ nodes are selected as $\{v^{i}\}_{i=1}^k$. Label each cluster by the number of selected nodes in a descending order. In other words, the $1^{st}$ cluster contains a maximum number of selected nodes, and the $L^{th}$ cluster contains a minimum number of selected nodes. Under this setting, define the selection vector $\bm{s} = [s_1, s_2, \cdots, s_L]$ where $s_i$ is the number of selected nodes in the $i^{th}$ cluster. 
 \end{definition}

 \begin{figure}[!t]
 	\centering
 	\includegraphics[height=25mm]{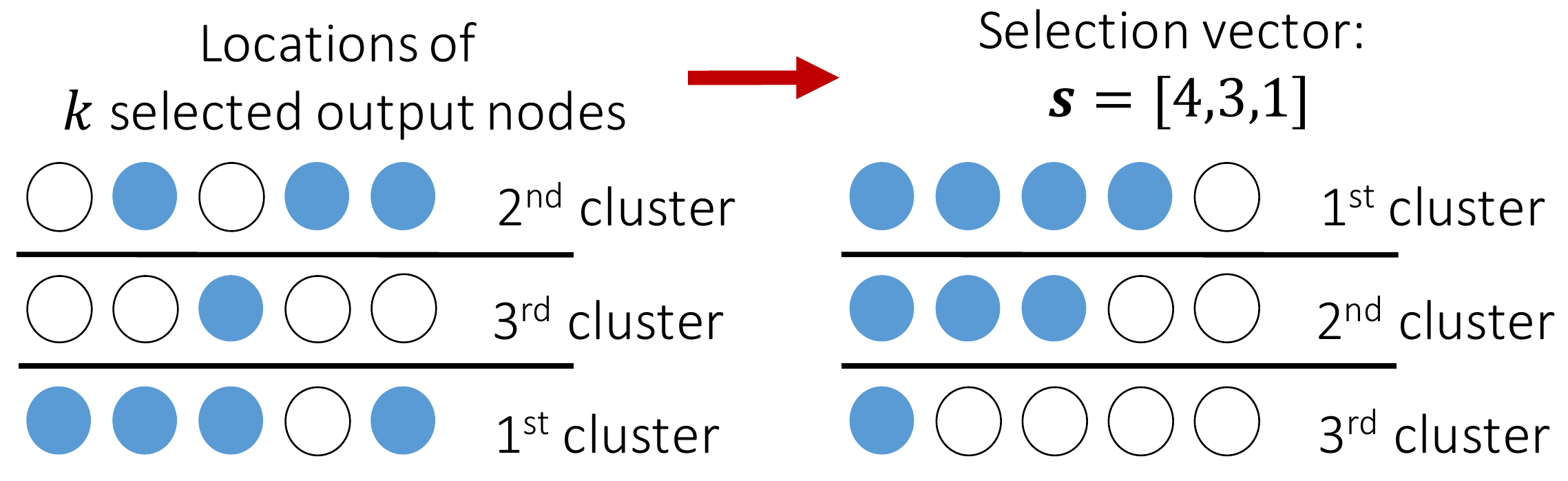}
 	\caption{Obtaining the selection vector for given $k$ output nodes $\{v_{out}^{i}\}_{i=1}^k$ ($n=15, k=8, L=3$)}
 	\label{Fig:Selection Vector}
 	\vspace{-1mm}
 \end{figure}

 Fig. \ref{Fig:Selection Vector} shows an example of selection vector $\bm{s}$ corresponding to the selected $k$ nodes $\{v^{i}\}_{i=1}^k$.
 From the definition of the selection vector, the set of possible selection vectors can be specified as follows.
 %\begin{definition}\label{Def:Selection vector set}
 %For given $n,k,L$ value, the possible set of selection vectors is
 \begin{align*}
 \mathcal{S} =  \big\{\bm{s} &=[s_1, \cdots, s_L] : 0 \leq s_i \leq n_I, s_{i+1} \leq s_{i}, \sum_{i=1}^{L} s_i = k \big\}.
 \end{align*}
 %\end{definition}
  % Edited version (explain why s vector is enough)
 %By using Definition \ref{Def:Selection vector}, every selection of $k$ nodes can be assigned to a selection vector $\bm{s} \in \mathcal{S}$.
 Note that even though ${n\choose k}$ different selections exist, the $\{\omega_i\}$ values in (\ref{Eqn:omega_i}) are only determined by the corresponding selection vector $\bm{s}$.
 This is because $\{\omega_i\}$ depends only on the relative positions of $\{v^{i}\}_{i=1}^k$, whether they are in the same cluster or in different clusters.
 % To be specific, consider different selections $\varsigma_1, \varsigma_2$ assigned to the same $\bm{s}$ vector. 
 %Then, every possible flow graph originated from $\varsigma_1$ is isomorphic to an element of possible flow graphs originated from $\varsigma_2$, and vice versa. Thus, $\varsigma_1$ and $\varsigma_2$ have the identical set of $\{\omega_i\}$ values.
 Therefore, comparing the $\{\omega_i\}$ values of all $\vert \mathcal{S} \vert$ possible selection vectors $\bm{s}$ is enough; it is not necessary to compare the $\{\omega_i\}$ values of ${n\choose k}$ selection methods.
 Now, we define the ordering vector $\bm{\pi}$ for a given selection vector $\bm{s}$. 
 \begin{definition}\label{Def:Ordering vector}
 	Let the locations of $k$ nodes $\{v^{i}\}_{i=1}^k$ be fixed with a corresponding selection vector $\bm{s} = [s_1, \cdots, s_L]$. Then, for arbitrary ordering of the selected $k$ nodes, define the ordering vector $\bm{\pi} = [\pi_1, \cdots, \pi_k]$ where $\pi_i$ is the index of the cluster which contains $v^{i}$.
 \end{definition}
 For a given $\bm{s}$, the ordering vector $\bm{\pi}$ corresponding to an arbitrary ordering of $k$ nodes is illustrated in Fig. \ref{Fig:Ordering Vector}. 
 In this figure (and the following figures in this paper), the number $i$ written inside each node means that the node is $v^{i}$.
  From the definition, an ordering vector $\bm{\pi}$ has $s_l$ components with value $l$, for all $l\in [L]$. 
 The set of possible ordering vectors can be specified as
 \begin{equation}\label{Eqn:set of ordering vectors}
 \Pi(\bm{s}) = \big\{ \bm{\pi} = [\pi_1, \cdots, \pi_k] : \sum_{i=1}^{k} \mathds{1}_{\pi_i = l} = s_l, \ \forall l \in [L] \big\} 
 \end{equation}
 %Here, $\mathds{1}_{\pi_i = l}$ is an indicator function which has value $1$ if $\pi_i = l$, and 0 otherwise.
 Note that for given $k$ selected nodes, there exists $k! $ different ordering methods. However, the $\{\omega_i\}$ values in (\ref{Eqn:omega_i}) are only determined by 
 the corresponding ordering vector $\bm{\pi} \in \Pi(\bm{s})$, by similar reasoning for compressing ${n\choose k}$ selection methods to $\vert \mathcal{S} \vert$ selection vectors. 
 Therefore, comparing the $\{\omega_i\}$ values of all possible ordering vectors $\bm{\pi}$ is enough; it is not necessary to compare the $\{\omega_i\}$ values of all $k!$ ordering methods.
 
 \begin{figure}[!t]
 	\centering
 	\includegraphics[height=25mm]{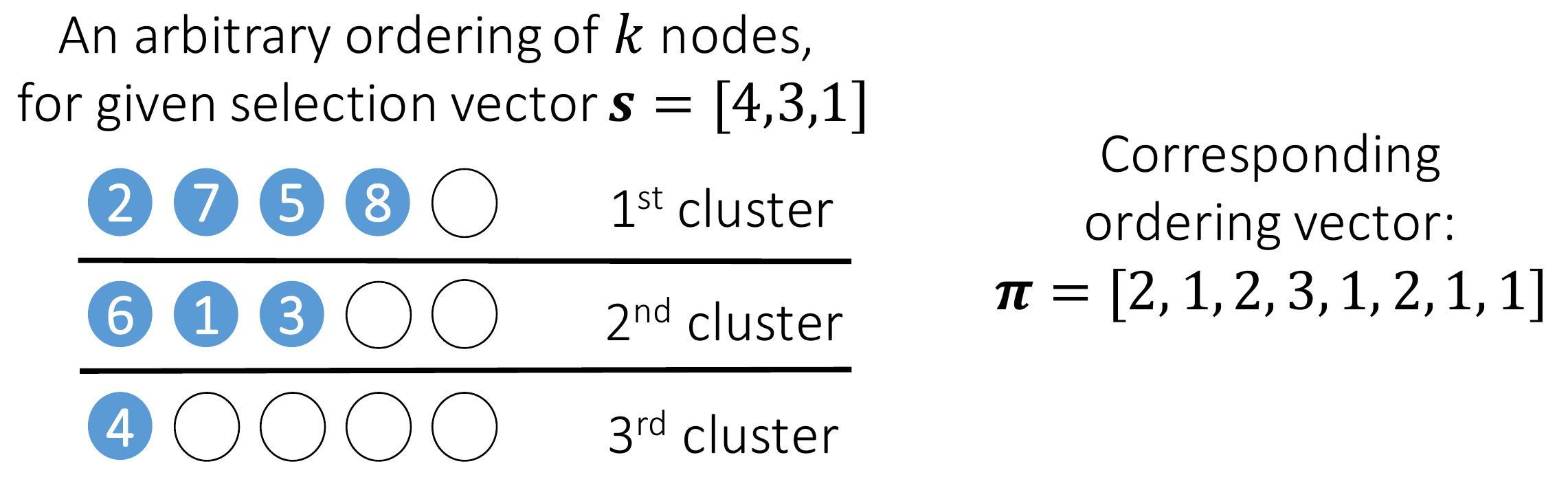}
 	\caption{Obtaining the ordering vector for given an arbitrary order of $k$ output nodes ($n=15, k=8, L=3$)}
 	\label{Fig:Ordering Vector}
 	\vspace{-3mm}
 \end{figure}

 Thus, finding the optimal sequence $(v^i)_{i=1}^k$ is identical to specifying the optimal ($\bm{s, \pi}$) pair, which is obtained as follows.
Recall that from the definition of $\bm{\pi} = [\pi_1, \pi_2, \cdots, \pi_k]$ in Definition \ref{Def:Ordering vector}, $\pi_i = \pi_j$ holds if and only if $v^i$ and $v^j$ are in the same cluster. Therefore,  $\beta_{ji}$ in  (\ref{Eqn:beta_ji_orig}) can be expressed by using $\bm{\pi}$ notation:
\begin{equation}\label{Eqn:beta_ji}
\beta_{ji} (\bm{\pi})= 
\begin{cases}
\beta_I	&	\text{ if } \pi_i = \pi_j\\
\beta_c &   \text{ otherwise}
\end{cases}
\end{equation}
Thus, the $\sum_{j=1}^{i-1}\beta_{ji}$ term in  (\ref{minmin_equiv_1}) can be expressed as
\begin{align}\label{Eqn:sum_beta_ji}
\sum_{j=1}^{i-1}\beta_{ji}(\bm{\pi}) &= (\sum_{j=1}^{i-1} \mathds{1}_{\pi_j = \pi_i}) \beta_I + (i - 1 - \sum_{j=1}^{i-1} \mathds{1}_{\pi_j = \pi_i}) \beta_c. \end{align}
Combining (\ref{Eqn:gamma}), (\ref{minmin_equiv_1}) and (\ref{Eqn:sum_beta_ji}), we have
\begin{equation*}
\displaystyle\min_{G \in \mathcal{G}} \ \min_{c \in C(G)} B(G,c) = \displaystyle\min_{\bm{s} \in S} \ \min_{\bm{\pi} \in \Pi(\bm{s})} L  (\bm{s},\bm{\pi})  
\end{equation*}
where 
\begin{align}
L  (\bm{s},\bm{\pi}) &= \sum_{i=1}^k \min\{\alpha, \omega_i(\bm{\pi})\}, \label{Eqn:lower_bound}\\
\omega_i(\bm{\pi}) &= \gamma - \sum_{j=1}^{i-1} \beta_{ji} (\bm{\pi})= a_i(\bm{\pi}) \beta_I + (n-i-a_i(\bm{\pi})) \beta_c, \label{Eqn:weight vector}\\
a_i(\bm{\pi}) &= n_I - 1 - \sum_{j=1}^{i-1} \mathds{1}_{\pi_j = \pi_i}. \label{Eqn:a_i}
\end{align}

Therefore, the rest of the proof in \textit{part II-2} shows that
\begin{equation}\label{Eqn:Capacity_expression_minmin}
\displaystyle\min_{\bm{s} \in S} \ \min_{\bm{\pi} \in \Pi(\bm{s})} L  (\bm{s},\bm{\pi}) = R
\end{equation} holds.
We begin by stating a property of $\omega_i(\bm{\pi})$ seen in (\ref{Eqn:weight vector}).
\begin{prop}\label{Prop:weight vector}
	Consider a fixed selection vector $\bm{s}$. We claim that $\sum_{i=1}^k \omega_i(\bm{\pi})$ is constant irrespective of the ordering vector $\bm{\pi} \in \Pi(\bm{s})$. 
	%	\begin{equation} \label{eqn:sum of weight}
	%	\Omega(\bm{s}) \triangleq \sum_{i=1}^k \omega_i
	%	\end{equation}
\end{prop}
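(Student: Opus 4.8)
The plan is to expand $\sum_{i=1}^k \omega_i(\bm{\pi})$ directly from the definition in (\ref{Eqn:weight vector}) and isolate the only term that could possibly depend on the ordering; this term will turn out to count the pairs of selected nodes lying in a common cluster, a quantity that is fixed once the selection vector $\bm{s}$ is fixed.

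First I would use $\omega_i(\bm{\pi}) = \gamma - \sum_{j=1}^{i-1}\beta_{ji}(\bm{\pi})$ to write
\[
\sum_{i=1}^k \omega_i(\bm{\pi}) = k\gamma - \sum_{1 \le j < i \le k}\beta_{ji}(\bm{\pi}),
\]
and then substitute the expression for $\beta_{ji}(\bm{\pi})$ from (\ref{Eqn:beta_ji}) in the form $\beta_{ji}(\bm{\pi}) = \beta_c + (\beta_I-\beta_c)\,\mathds{1}_{\pi_i = \pi_j}$, which gives
\[
\sum_{1 \le j < i \le k}\beta_{ji}(\bm{\pi}) = \binom{k}{2}\beta_c + (\beta_I-\beta_c)\,P(\bm{\pi}), \qquad P(\bm{\pi}) \coloneqq \sum_{1 \le j < i \le k}\mathds{1}_{\pi_i = \pi_j}.
\]

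The key step is then to show $P(\bm{\pi})$ is the same for every $\bm{\pi} \in \Pi(\bm{s})$. Since $\mathds{1}_{\pi_i = \pi_j} = 1$ exactly when the selected nodes $v^i$ and $v^j$ lie in the same cluster, $P(\bm{\pi})$ equals the number of unordered pairs of selected nodes that share a cluster; grouping this count by the host cluster $l$ and using the defining constraint $\sum_{i=1}^k \mathds{1}_{\pi_i = l} = s_l$ of $\Pi(\bm{s})$ in (\ref{Eqn:set of ordering vectors}), one obtains $P(\bm{\pi}) = \sum_{l=1}^L \binom{s_l}{2}$, which depends only on $\bm{s}$. Combining the displays yields
\[
\sum_{i=1}^k \omega_i(\bm{\pi}) = k\gamma - \binom{k}{2}\beta_c - (\beta_I-\beta_c)\sum_{l=1}^L \binom{s_l}{2},
\]
a quantity determined by $\bm{s}$ alone, which is exactly the claim.

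There is no real obstacle here: the only thing to notice is that a sum of the symmetric weights $\beta_{ji}(\bm{\pi})$ over ordered index pairs $j<i$ coincides with a sum over unordered node pairs, and that the number of same-cluster (``monochromatic'') pairs depends only on the cluster-occupancy profile, i.e.\ on $\bm{s}$, and not on how the $k$ nodes are ordered. An equivalent route, should one prefer to avoid binomial coefficients, is to sum the quantity $\sum_{j=1}^{i-1}\mathds{1}_{\pi_j = \pi_i}$ over $i$ cluster by cluster using $a_i(\bm{\pi})$ from (\ref{Eqn:a_i}); this telescopes to the same $\sum_{l=1}^L \binom{s_l}{2}$.
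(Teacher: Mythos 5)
Your proof is correct and rests on exactly the same combinatorial fact as the paper's: the number of same-cluster ordered pairs $\sum_{i=1}^k\sum_{j=1}^{i-1}\mathds{1}_{\pi_j=\pi_i}$ equals $\sum_{l=1}^L\binom{s_l}{2}$, which depends only on $\bm{s}$. The paper reaches this through the $a_i,b_i$ decomposition of $\omega_i$ (your ``equivalent route'' remark), while you expand $\omega_i = \gamma - \sum_{j<i}\beta_{ji}$ directly and split $\beta_{ji} = \beta_c + (\beta_I-\beta_c)\mathds{1}_{\pi_i=\pi_j}$; this is slightly more streamlined bookkeeping but not a genuinely different argument.
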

\begin{proof}
	Let $\bm{s}= [s_1, \cdots, s_L]$.
	For an arbitrary ordering vector $\bm{\pi} \in \Pi(\bm{s})$, let $b_i(\bm{\pi})= n-i-a_i(\bm{\pi})$ where $a_i(\bm{\pi})$ is as given in (\ref{Eqn:a_i}). 
	For simplicity, we denote $a_i(\bm{\pi})$, $b_i(\bm{\pi})$ and $\omega_i(\bm{\pi})$  as $a_i$, $b_i$ and $\omega_i$, respectively.
	Then, 
	\begin{equation}\label{Eqn:proof of proposition 2}
	\sum_{i=1}^{k}(a_i + b_i) = \sum_{i=1}^{k}(n-i) = \textit{constant (const.)}
	\end{equation}
	for fixed $n,k$.
	Note that
	\begin{equation}\label{Eqn:sum_a_i}
	\sum_{i=1}^{k}a_i = k(n_I - 1) - \sum_{i=1}^{k} \sum_{j=1}^{i-1} \mathds{1}_{\pi_j = \pi_i}
	\end{equation} 
	from (\ref{Eqn:a_i}). Also, from the definition of $\Pi(\bm{s})$ in (\ref{Eqn:set of ordering vectors}), an arbitrary ordering vector $\bm{\pi} \in \Pi(\bm{s})$ has $s_l$ components with value $l$, for all $l\in [L]$. 
	 If we define 
	 \begin{equation}\label{Eqn:I_l}
	 I_l (\bm{\pi}) = \{ i \in [k] : \pi_i = l \},
	 \end{equation}
	 then $|I_l(\bm{\pi})| = s_l$ holds for $l \in [L]$. Then, 
	\begin{equation*}
	\sum_{i\in I_l(\bm{\pi})} \sum_{j=1}^{i-1} \mathds{1}_{\pi_j = \pi_i} = 0 + 1 + \cdots + (s_l-1) = \sum_{t=0}^{s_l - 1} t.	
	\end{equation*}
	Therefore,
	\begin{equation*}
	\sum_{i=1}^{k} \sum_{j=1}^{i-1} \mathds{1}_{\pi_j = \pi_i} = \sum_{l=1}^{L} \sum_{i\in I_l(\bm{\pi})} \sum_{j=1}^{i-1} \mathds{1}_{\pi_j = \pi_i} = \sum_{l=1}^{L} \sum_{t=0}^{s_l-1} t = \textit{const.}
	\end{equation*}
	for fixed $L, \bm{s}$.
	Combining with (\ref{Eqn:sum_a_i}), 
	\begin{equation}\label{Eqn:sum_a_i_const}
	\sum_{i=1}^{k}a_i  = k(n_I - 1) - \sum_{l=1}^{L} \sum_{t=0}^{s_l-1} t = \textit{const.}
	\end{equation}
	for fixed $n,k,L,\bm{s}$.
	From (\ref{Eqn:proof of proposition 2}) and (\ref{Eqn:sum_a_i_const}) we get 
	\begin{equation*}
	\sum_{i=1}^k b_i = \textit{const.}
	\end{equation*}
	Therefore, from (\ref{Eqn:weight vector}),
	\begin{equation*}
	\sum_{i=1}^k \omega_i = (\sum_{i=1}^k a_i) \beta_I + (\sum_{i=1}^k b_i) \beta_c = \textit{const.}
	\end{equation*}	
	for every ordering vector $\bm{\pi} \in \Pi(\bm{s})$, if $n,k,L,\beta_I, \beta_c$ and $\bm{s}$ are fixed.
\end{proof}

Now, we define a special ordering vector $\bm{\pi}_v$ called \textit{vertical ordering vector}, which is shown to be the optimal ordering vector $\bm{\pi}$ which minimizes $L  (\bm{s},\bm{\pi}) $ for an arbitrary selection vector $\bm{s}$.

\begin{definition}\label{Def:vertical ordering vector}
	For a given selection vector $\bm{s} \in \mathcal{S}$, the corresponding vertical ordering vector $\bm{\pi}_v (\bm{s})$, or simply denoted as $\bm{\pi}_v$, is defined as the output of Algorithm \ref{Alg:vertical ordering}.
\end{definition}

The vertical ordering vector $\bm{\pi}_v$ is illustrated in Fig. \ref{Fig:vertical ordering}, for a given selection vector $\bm{s}$ as an example.
For $\bm{s}=[4,3,1]$, Algorithm 1 produces the corresponding vertical ordering vector $\bm{\pi}_v = [1,2,3,1,2,1,2,1]$. Note that the order of $k=8$ output nodes is  illustrated in Fig. \ref{Fig:vertical ordering}, as the numbers inside each node. 
Although the vertical ordering vector $\bm{\pi}_v$ depends on the selection vector $\bm{s}$, we use simplified notation $\bm{\pi}_v$ instead of $\bm{\pi}_v(\bm{s})$.
From Fig. \ref{Fig:vertical ordering}, obtaining $\bm{\pi}_v$ using Algorithm \ref{Alg:vertical ordering} can be analyzed as follows.
Moving from the leftmost column to the rightmost column, the algorithm selects one node per cluster. After selecting all $k$ nodes, $\pi_i$ stores the index of the cluster which contains the $i^{th}$ selected node.
Now, the following Lemma shows that the vertical ordering vector $\bm{\pi}_v$ is optimal in the sense of minimizing $L  (\bm{s},\bm{\pi}) $ for an arbitrary selection vector $\bm{s}$.

% Algorithm obtaining vertical ordering vector
\begin{algorithm}[t]
	\small
	\caption{Generate vertical ordering $\bm{\pi}_v  $}
	\label{Alg:vertical ordering}
	\begin{algorithmic}
		\STATE \textbf{Input:} $\bm{s} = [s_1, \cdots, s_L]$
		\STATE \textbf{Output:} $\bm{\pi}_v = [\pi_1, \cdots, \pi_k]$
		\STATE Initialization: $l \leftarrow 1$
		\FOR{$i=1$ to $k$}
		\IF{$s_l = 0$}
		\STATE $l \leftarrow 1$
		\ENDIF
		\STATE $\pi_i \leftarrow l$ \quad \quad \quad \quad \quad \quad  (Store the index of cluster)
		\STATE $s_{\pi_i} \leftarrow s_{\pi_i} - 1$ \quad \quad \quad (Update the remaining node info.)
		\STATE $l \leftarrow (l \Mod{L}) + 1$ \quad \  (Go to the next cluster)
		\ENDFOR
	\end{algorithmic}
\end{algorithm}

\begin{figure}[t]
	\centering
	\includegraphics[height=25mm]{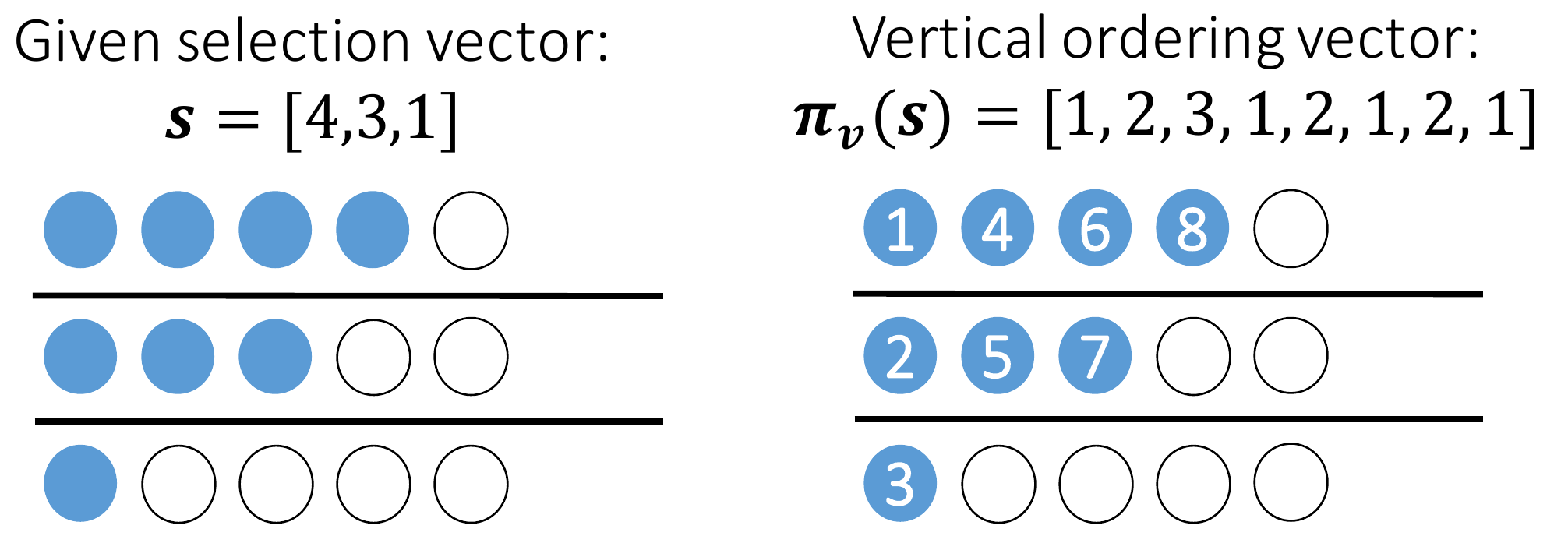}
	\caption{The vertical ordering vector $\bm{\pi}_v $ for the given selection vector $\bm{s}=[4,3,1]$ (for $n=15, k=8, L=3$ case) }
	\label{Fig:vertical ordering}
\end{figure}

% Lemma 1
\begin{lemma} \label{Lemma:optimal ordering vector}
	Let $\bm{s} \in \mathcal{S}$ be an arbitrary selection vector. Then, a vertical ordering vector $\bm{\pi}_v$ minimizes  $ L  (\bm{s},\bm{\pi}) $.
	In other words, $ L  (\bm{s},\bm{\pi}_v)   \leq  L  (\bm{s},\bm{\pi}) $ holds for arbitrary $\bm{\pi} \in \Pi(\bm{s})$.
\end{lemma}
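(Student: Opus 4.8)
The plan is to track, for each ordering $\bm{\pi}\in\Pi(\bm{s})$, the quantity $r_i(\bm{\pi}) \coloneqq \sum_{j<i}\mathds{1}_{\pi_j=\pi_i}$, i.e. the number of previously placed nodes lying in the same cluster as $v^i$. By (\ref{Eqn:a_i}) and (\ref{Eqn:weight vector}) we have $a_i(\bm{\pi}) = n_I-1-r_i(\bm{\pi})$ and hence
\[
\omega_i(\bm{\pi}) = (n_I-1-r_i(\bm{\pi}))(\beta_I-\beta_c) + (n-i)\beta_c ,
\]
which, since $\beta_I\ge\beta_c\ge 0$, is non-increasing in the position index $i$, non-increasing in $r_i(\bm{\pi})$, and jointly affine in $(i,r_i)$. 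Moreover, for every fixed cluster $l$ the $s_l$ positions occupied by that cluster receive the ranks $0,1,\dots,s_l-1$ in increasing order of position, so the multiset $\{r_i(\bm{\pi})\}_{i=1}^k$ is the same for all $\bm{\pi}\in\Pi(\bm{s})$ (this in particular reproves Proposition \ref{Prop:weight vector}). The goal then becomes: show $L(\bm{s},\bm{\pi})$ of (\ref{Eqn:lower_bound}) is minimized precisely when the sequence $(r_i(\bm{\pi}))_{i=1}^k$ is non-decreasing, and check that the output $\bm{\pi}_v$ of Algorithm \ref{Alg:vertical ordering} is such an ordering.

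The engine is an adjacent-transposition argument. Suppose $(r_i(\bm{\pi}))_i$ is not non-decreasing; then there is an index $i$ with $r_i(\bm{\pi})>r_{i+1}(\bm{\pi})$. Within one cluster the rank strictly increases with position, so $v^i$ and $v^{i+1}$ must lie in different clusters; hence the ordering $\bm{\pi}'$ obtained by swapping positions $i$ and $i+1$ again lies in $\Pi(\bm{s})$, and a direct check shows this swap exchanges $r_i\leftrightarrow r_{i+1}$ and leaves all other ranks, and all position indices, unchanged. Writing $f(x,y)\coloneqq(n_I-1-y)(\beta_I-\beta_c)+(n-x)\beta_c$ and $u\coloneqq r_{i+1}(\bm{\pi})<r_i(\bm{\pi})\eqqcolon v$, the two affected summands of $L$ change from $\min\{\alpha,f(i,v)\}+\min\{\alpha,f(i+1,u)\}$ to $\min\{\alpha,f(i,u)\}+\min\{\alpha,f(i+1,v)\}$. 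Because $f$ is affine, $f(i,u)+f(i+1,v)=f(i,v)+f(i+1,u)$, while monotonicity gives $f(i+1,v)\le f(i,v)\le f(i,u)$; thus the pair $(f(i,v),f(i+1,u))$ is a componentwise convex average of $(f(i,u),f(i+1,v))$. Since $t\mapsto\min\{\alpha,t\}$ is concave, applying Jensen's inequality to each of the two summands yields $L(\bm{s},\bm{\pi}')\le L(\bm{s},\bm{\pi})$.

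Each such swap removes the inversion at $(i,i+1)$ and changes the total number of inversions of $(r_i)_i$ by exactly $-1$, so after finitely many swaps we reach an ordering whose rank sequence is non-decreasing, all the while never increasing $L(\bm{s},\cdot)$. By the multiset-invariance observed above, the non-decreasing arrangement of $\{r_i\}$ is unique as a sequence, and since each summand of $L(\bm{s},\bm{\pi})$ depends only on $i$ and $r_i$, every ordering with non-decreasing rank sequence attains the same value of $L(\bm{s},\cdot)$ — which is therefore the global minimum over $\Pi(\bm{s})$. It remains to confirm that $\bm{\pi}_v$ has non-decreasing ranks: the vector of remaining counts stays non-increasing throughout Algorithm \ref{Alg:vertical ordering} (decrementing every positive entry of a sorted vector keeps it sorted, and the reset $l\leftarrow1$ therefore never skips a cluster that still has unplaced nodes), so the algorithm sweeps column by column, giving rank $0$ to the entire first column, rank $1$ to the entire second column, and so on. Hence $L(\bm{s},\bm{\pi}_v)\le L(\bm{s},\bm{\pi})$ for every $\bm{\pi}\in\Pi(\bm{s})$.

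I expect the main difficulty to be bookkeeping rather than conceptual: carefully verifying that an adjacent swap of two nodes in different clusters transposes exactly their ranks and perturbs nothing else (so the problem genuinely decouples into the two affected summands), and confirming the column-by-column behavior of Algorithm \ref{Alg:vertical ordering}. Both hinge on the single observation that the remaining-count vector is kept sorted in non-increasing order at every step of the algorithm.
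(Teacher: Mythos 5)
Your proof is correct, and it follows a genuinely different route from the paper's. The paper proceeds by defining two auxiliary classes of orderings, the ``running-sum maximizers'' $\Pi_r$ and the ``min-cut minimizers'' $\Pi_m$, and proves $\Pi_r\subseteq\Pi_m$ in two substeps: first showing (via an adjacent swap) that both classes sit inside the class $\Pi_p$ of orderings in which all indices with $\omega_i\ge\alpha$ precede all indices with $\omega_i<\alpha$, and then comparing partial sums for members of $\Pi_p$; a separate geometric ``optimal packing'' picture is then used to show $\bm{\pi}_v\in\Pi_r$, and the degenerate case $\beta_c=0$ is handled on its own at the very start. You instead track the within-cluster rank $r_i=\sum_{j<i}\mathds{1}_{\pi_j=\pi_i}$, observe that the multiset $\{r_i\}$ is determined by $\bm{s}$ alone (which, as you note, re-proves Proposition~\ref{Prop:weight vector} along the way), and use a single adjacent-transposition-plus-concavity (majorization) argument on the two affected summands of $L$: since $\omega_i$ is an affine, decreasing function of both the position $i$ and the rank $r_i$, swapping an inverted adjacent pair replaces a ``spread-out'' pair of $\omega$-values with a less spread-out pair of the same sum, and $\min\{\alpha,\cdot\}$ being concave makes the sum of $\min$'s not increase. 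Your approach is shorter, avoids introducing $\Pi_r$, $\Pi_p$, $\Pi_m$ and the packing picture, and subsumes $\beta_c=0$ without a case split; what it loses relative to the paper is the reusable intermediate characterization of $\bm{\pi}_v$ as a running-sum maximizer, which the paper's geometric Step~2 makes vivid. The bookkeeping points you flag at the end — that an adjacent swap of nodes from different clusters exchanges exactly the two ranks $r_i\leftrightarrow r_{i+1}$ and touches nothing else, and that Algorithm~\ref{Alg:vertical ordering} sweeps column by column because the remaining-count vector stays non-increasing — do hold exactly as you describe, so the proof is complete as written.
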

\begin{proof}
	In the case of $\beta_c = 0$, we show that $L  (\bm{s},\bm{\pi}) $ is constant for every $\bm{\pi} \in \Pi(\bm{s})$. 
	From (\ref{Eqn:lower_bound}), 
	\begin{equation}\label{Eqn:lower_bound_zero_betac}
	L(\bm{s}, \bm{\pi}) = \sum_{i=1}^k \min \{\alpha, a_i(\bm{\pi}) \beta_I\}
	\end{equation}
	holds for $\beta_c = 0$. 
	%For the given selection vector $\bm{s}=[s_1, s_2, \cdots, s_L]$, consider an arbitrary ordering vector $\bm{\pi} = [\pi_1, \pi_2, \cdots, \pi_k]$ which satisfies $\bm{\pi} \in \Pi(\bm{s})$.  
	Using $I_l(\bm{\pi})$ in (\ref{Eqn:I_l}), note that $[k]$ can be partitioned into $L$ disjoint subsets as
	%\begin{equation*}
	$[k] = \bigcup_{l=1}^L I_l(\bm{\pi}).$
	%\end{equation*}
	Therefore, (\ref{Eqn:lower_bound_zero_betac}) can be written as
	\begin{equation*}
		L(\bm{s}, \bm{\pi}) = \sum_{l=1}^L \sum_{i \in I_l(\bm{\pi})} \min \{\alpha, a_i(\bm{\pi}) \beta_I\}.
	\end{equation*}
	Recall that $\bm{\pi} \in \Pi(\bm{s})$ contains $s_l$ components with value $l$ for $l \in [L]$. Thus, from (\ref{Eqn:a_i}),
	\begin{equation*}
	\bigcup_{i\in I_l (\bm{\pi})} \{a_i(\bm{\pi})\} = \{n_I - 1, n_I-2, \cdots, n_I - s_l\}
	\end{equation*}
	for $l \in [L]$.
	Therefore, 
	%\begin{equation*}
	$\sum_{i \in I_l(\bm{\pi})} \min \{\alpha, a_i(\bm{\pi}) \beta_I\}$
	%\end{equation*}
	is constant $\forall \bm{\pi} \in \Pi(\bm{s})$ for arbitrary $l \in [L]$. In conclusion, $L  (\bm{s},\bm{\pi})$  in (\ref{Eqn:lower_bound_zero_betac}) is constant irrespective of $\bm{\pi} \in \Pi(\bm{s})$ for $\beta_c = 0$.

	The rest of the proof deals with the $\beta_c \neq 0$ case.
	For a given arbitrary $\bm{s} \in \mathcal{S}$, define two subsets of $\Pi(\bm{s})$ as
	\begin{align}
	&\Pi_r = \{\bm{\pi^*} \in \Pi(\bm{s}) :\ S_t(\bm{\pi}) \leq S_t (\bm{\pi^*}) , \forall t \in [k], \ \forall \bm{\pi} \in \Pi(\bm{s}) \}\label{Eqn:running_sum_maximizer}\\
	&\Pi_m = \{\bm{\pi^*} \in \Pi(\bm{s}) :  L  (\bm{s},\bm{\pi})  \geq  L  (\bm{s},\bm{\pi}^*), \forall \bm{\pi} \in \Pi(\bm{s}) \} \label{Eqn:min_cut_minimizer}
	\end{align} 
	where 
	%\begin{equation*}
	$S_t(\bm{\pi}) = \sum_{i=1}^t w_i (\bm{\pi})$
	%\end{equation*}
	is the running sum of $w_i(\bm{\pi})$.
	Here, we call $\Pi_r$ the \textit{running sum maximizer} and $\Pi_m$ the \textit{min-cut minimizer}. Now the proof proceeds in two steps. 
    The first step proves that the \textit{running sum maximizer} minimizes min-cut, i.e., $\Pi_r \subseteq \Pi_m$. 
	The second step proves that the vertical ordering vector is a running sum maximizer, i.e., $\bm{\pi}_v \in \Pi_r$.
	
	\textbf{Step 1.} Prove $\Pi_r \subseteq \Pi_m$:
	
	Define two index sets for a given ordering vector $\bm{\pi}$:
	\begin{align}\label{Eqn:Omega_L,s}
	\Omega_L (\bm{\pi}) &= \{ i \in [k] : w_i (\bm{\pi}) \geq \alpha \} \nonumber\\
	\Omega_s (\bm{\pi}) &= \{ i \in [k] : w_i (\bm{\pi}) < \alpha \} 
	\end{align}
	Now define a set of ordering vectors as
	\begin{equation}\label{Eqn:partitionable}
	\Pi_p = \{ \bm{\pi} \in \Pi(\bm{s}) : i \leq j \ \forall i \in \Omega_L (\bm{\pi}), \ \forall j \in \Omega_s (\bm{\pi}) \}.
	\end{equation}
	The rest of the proof is divided into 2  sub-steps.
	
	\textbf{Step 1-1.} Prove  $\Pi_m \subseteq \Pi_p$ and $ \Pi_r \subseteq \Pi_p$ by transposition:
	
	Consider arbitrary $\bm{\pi}=[\pi_1, \cdots, \pi_k] \in \Pi_p^c$. Use a short-hand notation $\omega_i$ to represent $\omega_i(\bm{\pi})$ for $i \in [k]$.
	From (\ref{Eqn:partitionable}), there exists $i > j$ such that $i \in \Omega_L(\bm{\pi})$ and $j \in \Omega_s(\bm{\pi})$. Therefore, there exists $t \in [k-1]$ such that $t+1 \in \Omega_L(\bm{\pi})$ and $t \in \Omega_s(\bm{\pi})$ hold. By (\ref{Eqn:Omega_L,s}), 
	\begin{equation}\label{Eqn:omega_relative}
	\omega_{t+1} \geq \alpha > \omega_{t}
	\end{equation} for some $t \in [k-1]$.
	Note that from (\ref{Eqn:weight vector}), $\pi_t = \pi_{t+1}$ implies $\omega_{t+1} = \omega_t - \beta_I < \omega_t$. Therefore, 
	\begin{equation}\label{Eqn:pi_relative}
	\pi_t \neq \pi_{t+1}
	\end{equation} should hold to satisfy (\ref{Eqn:omega_relative}).
	Define an ordering vector $\bm{\pi'} = [\pi'_1, \cdots, \pi'_k] $ as
	\begin{align}\label{Eqn:pi_and_pi_prime}
	\begin{cases}
	\pi'_i = \pi_i & i \neq t, t+1 \\
	\pi'_t = \pi_{t+1}	 \\
	\pi'_{t+1} = \pi_{t}.
	\end{cases}
	\end{align}
	Use a short-hand notation $\omega'_i$ to represent $\omega_i(\bm{\pi}')$ for $i \in [k]$.
	Note that $\{\omega'_i\}$ satisfies 
	\begin{align}\label{Eqn:omega_and_omega_prime}
	\begin{cases}
	\omega'_i = \omega_i & i \neq t, t+1 \\
	\omega'_{t} = \omega_{t+1} + \beta_c \\
	\omega'_{t+1} = \omega_{t} - \beta_c
	\end{cases}
	\end{align}
	for the following reason.
	First, use simplified notations $a_i$ and $a'_i$ to mean $a_i(\bm{\pi})$ and $a_i(\bm{\pi}')$, respectively. Then, using (\ref{Eqn:a_i}), (\ref{Eqn:pi_relative}) and (\ref{Eqn:pi_and_pi_prime}), we have
	\begin{align}\label{Eqn:a_t_a_t+1}
	a_{t}' &= n_I - 1 - \sum_{j=1}^{t-1}  \mathds{1}_{\pi_j' = \pi_{t}'} = n_I - 1 - \sum_{j=1}^{t-1}  \mathds{1}_{\pi_j = \pi_{t+1}}\nonumber\\
	&= n_I - 1 - \sum_{j=1}^{t}  \mathds{1}_{\pi_j = \pi_{t+1}} = a_{t+1}.
	\end{align}
	Similarly,
	\begin{align} \label{Eqn:a_t+1_a_t}
	a_{t+1}' &= n_I - 1 - \sum_{j=1}^{t}  \mathds{1}_{\pi_j' = \pi_{t+1}'} = n_I - 1 - \sum_{j=1}^{t-1}  \mathds{1}_{\pi_j' = \pi_{t+1}'}\nonumber\\
	&= n_I - 1 - \sum_{j=1}^{t-1}  \mathds{1}_{\pi_j = \pi_{t}} = a_{t}.
	\end{align}
	Therefore, from (\ref{Eqn:omega_i}), (\ref{Eqn:a_t_a_t+1}) and (\ref{Eqn:a_t+1_a_t}), we have
	\begin{align*}
	\omega'_t &= a'_t \beta_I + (n-t-a'_t)\beta_c \\
	& = a_{t+1}\beta_i + ( n-t-a_{t+1} ) \beta_c = \omega_{t+1} + \beta_c.
	\end{align*}
    Similarly, $\omega'_{t+1} = \omega_{t} - \beta_c$ holds.
	This proves (\ref{Eqn:omega_and_omega_prime}).
	Thus, from (\ref{Eqn:omega_relative}) and (\ref{Eqn:omega_and_omega_prime}),
	\begin{equation*}
	L(\bm{s}, \bm{\pi}) = \sum_{i=1}^k \min\{\alpha, \omega_i\} = \sum_{i \notin \{t,t+1\} } \min\{\alpha, \omega_i\} + \omega_t + \alpha,
	\end{equation*}
	\begin{align*}
	L(\bm{s}, \bm{\pi}') &= \sum_{i=1}^k \min\{\alpha, \omega_i'\} \nonumber\\
	&= \sum_{i \notin \{t,t+1\} } \min\{\alpha, \omega_i'\} + \alpha + (\omega_t - \beta_c) \nonumber\\
	&= \sum_{i \notin \{t,t+1\} } \min\{\alpha, \omega_i\} + \alpha + (\omega_t - \beta_c).	
	\end{align*}
	Therefore, 
	%\begin{equation*}
	$L(\bm{s}, \bm{\pi}) > L(\bm{s}, \bm{\pi}')$
	%\end{equation*}
	holds for $\beta_c > 0$. In other words, if $\pi \in \Pi_p^c$, then $\pi \in \Pi_m^c$. This proves that $\Pi_p^c \subseteq \Pi_m^c$ holds for $\beta_c \neq 0$.
	%\textbf{Step 1-2} Prove  $\Pi_r \subseteq \Pi_p$:
	%Suppose not. Then, there exists $\bm{\pi} \in \Pi_r \cup \Pi_p^c$. Denote its weight vector as $\bm{\omega(\pi)} = [\omega_1, \cdots, \omega_k]$. Then, $\omega_t > \alpha \geq \omega_{t+1}$ holds for some $t \in \{1, \cdots, k\}$. 
	
	Similarly, $\Pi_p^c \subseteq \Pi_r^c$ can be proved as follows. For the pre-defined ordering vectors $\bm{\pi}$ and $\bm{\pi'}$, we have $S_t(\bm{\pi}) = \sum_{i=1}^{t-1}\omega_i + \omega_t$ and  $S_t(\bm{\pi'}) = \sum_{i=1}^{t-1}\omega_i + \omega_{t+1} + \beta_c$. Using (\ref{Eqn:omega_relative}), we have $S_t(\bm{\pi}) < S_t(\bm{\pi'})$, so that $\bm{\pi}$ cannot be a running-sum maximizer. Therefore, $\Pi_p^c \subseteq \Pi_r^c$ holds. 
	
	\textbf{Step 1-2.} Prove that $L(\bm{s, \pi^*}) \leq L(\bm{s, \pi})$, $\forall \bm{\pi}^* \in \Pi_r, \forall \bm{\pi} \in \Pi_p \cap \Pi_r^c$ :
	
	Consider arbitrary $\bm{\pi}^* \in \Pi_r$ and $\bm{\pi} \in \Pi_p \cap \Pi_r^c$. For $i \in [k]$, let $\omega_i^*$ and $\omega_i$ be short-hand notations for $\omega_i(\bm{\pi}^*)$ and $\omega_i(\bm{\pi})$, respectively.
	Note that from Proposition \ref{Prop:weight vector}, 
	\begin{equation}\label{Eqn:omega_invariant}
	\sum_{i=1}^{k} \omega_i = \sum_{i=1}^{k} \omega_i^*.
	\end{equation}
	Let 
	\begin{align}\label{Eqn:t_and_t^*}
	t & =  \max \{i \in [k] : w_i \geq \alpha \}, \nonumber\\
	t^* & = \max \{i \in [k] : w_i^* \geq \alpha \}
	\end{align}
	Then, from (\ref{Eqn:running_sum_maximizer}), 
	\begin{equation}\label{Eqn:omega_running_sum_compare}
	\sum_{i=1}^{t} \omega_i \leq \sum_{i=1}^{t} \omega_i^*.
	\end{equation}
	Combining with (\ref{Eqn:omega_invariant}), we obtain
	\begin{equation}\label{Eqn:omega_compare_result}
	\sum_{i=t+1}^k (\omega_i - \omega_i^*) \geq 0.
	\end{equation}

	Note that from the result of Step 1-1, both $\bm{\pi}$ and $\bm{\pi}^*$ are in $\Pi_p$. Therefore, $\omega_i \geq \alpha$ for $i \in [t]$.  Similarly, $\omega_i^* \geq \alpha$ for $i \in [t^*]$. Therefore, (\ref{Eqn:lower_bound}) can be expressed as
	\begin{align}
	L(\bm{s}, \bm{\pi}) &= \sum_{i=1}^k \min\{\alpha, \omega_i\} = \sum_{i=1}^t \alpha + \sum_{i=t+1}^k \omega_i, \label{Eqn:lower_bound_pi}\\
	L(\bm{s}, \bm{\pi}^*) &= \sum_{i=1}^k \min\{\alpha, \omega_i^*\} = \sum_{i=1}^{t^*} \alpha + \sum_{i=t^*+1}^k \omega_i^*. \label{Eqn:lower_bound_pi^*}
	\end{align}

	If $t = t^*$, then we have 
	\begin{equation*}
	L(\bm{s, \pi}) - L(\bm{s, \pi^*}) = \sum_{i=t+1}^{k} (\omega_i - \omega_i^*)  \geq 0
	\end{equation*}
	from (\ref{Eqn:omega_compare_result}).
	If  $t > t^*$, we get 
	\begin{equation*}
	L(\bm{s, \pi}) - L(\bm{s, \pi^*}) = \sum_{i=t^* + 1}^t (\alpha - \omega_i^*) + \sum_{i=t + 1}^k (\omega_i - \omega_i^*).
	\end{equation*}
	From (\ref{Eqn:t_and_t^*}), $\omega_i^* < \alpha$ holds for $i > t^*$. Therefore, 
	\begin{equation*}
	L(\bm{s, \pi}) - L(\bm{s, \pi^*}) > \sum_{i=t + 1}^k (\omega_i - \omega_i^*) \geq 0
	\end{equation*}
	from (\ref{Eqn:omega_compare_result}).	 	
	In the case of $t < t^*$, define $\Delta = \sum_{i=1}^t (\omega_i - \alpha)$ and $\Delta^* = \sum_{i=1}^t (\omega_i^* - \alpha)$. 
	From (\ref{Eqn:lower_bound_pi}),
	\begin{equation*}
	 L(\bm{s, \pi}) = \sum_{i=1}^t \alpha + \sum_{i=t+1}^k \omega_i = ( \sum_{i=1}^k \omega_i ) - \Delta.
	\end{equation*}
	Similarly, from (\ref{Eqn:lower_bound_pi^*}),
	\begin{align*}
	L(\bm{s, \pi^*}) &=   \sum_{i=1}^k \omega_i^*  - \sum_{i=1}^{t^*} (\omega_i^* - \alpha)\\
	&=  \sum_{i=1}^k \omega_i^*  - \Delta^* - \sum_{i=t+1}^{t^*} (\omega_i^* - \alpha) \leq  \sum_{i=1}^k \omega_i^*  - \Delta^*
	\end{align*}  where the last inequality is from (\ref{Eqn:t_and_t^*}). Combined with (\ref{Eqn:omega_invariant}) and (\ref{Eqn:omega_running_sum_compare}), we obtain 
	\begin{equation*}
	 L(\bm{s, \pi}) - L(\bm{s, \pi^*}) \geq \Delta^* - \Delta \geq 0.
	\end{equation*}
	In summary, $L(\bm{s, \pi^*}) \leq L(\bm{s, \pi})$ irrespective of the $t,t^*$ values, which completes the proof for Step 1-2. 
	%	From Step 1-1, we have
%	\begin{align}
%	\Pi_m &\subseteq \Pi_p\nonumber\\
%	\Pi_r &\subseteq \Pi_p \nonumber\\
%	L(\bm{s}, \bm{\pi}^*) &\leq L(\bm{s}, \bm{\pi}) \quad \forall \pi^* \in \Pi_r, \forall \pi \in \Pi_p \cap \Pi_r^c. \label{Eqn:lower_bound_compare}
%	\end{align}
	From the results of Step 1-1 and Step 1-2, the relationship between the sets can be depicted as in Fig. \ref{Fig:set_relationship}.
	
		\begin{figure}[!t]
		\centering
		\includegraphics[height=30mm]{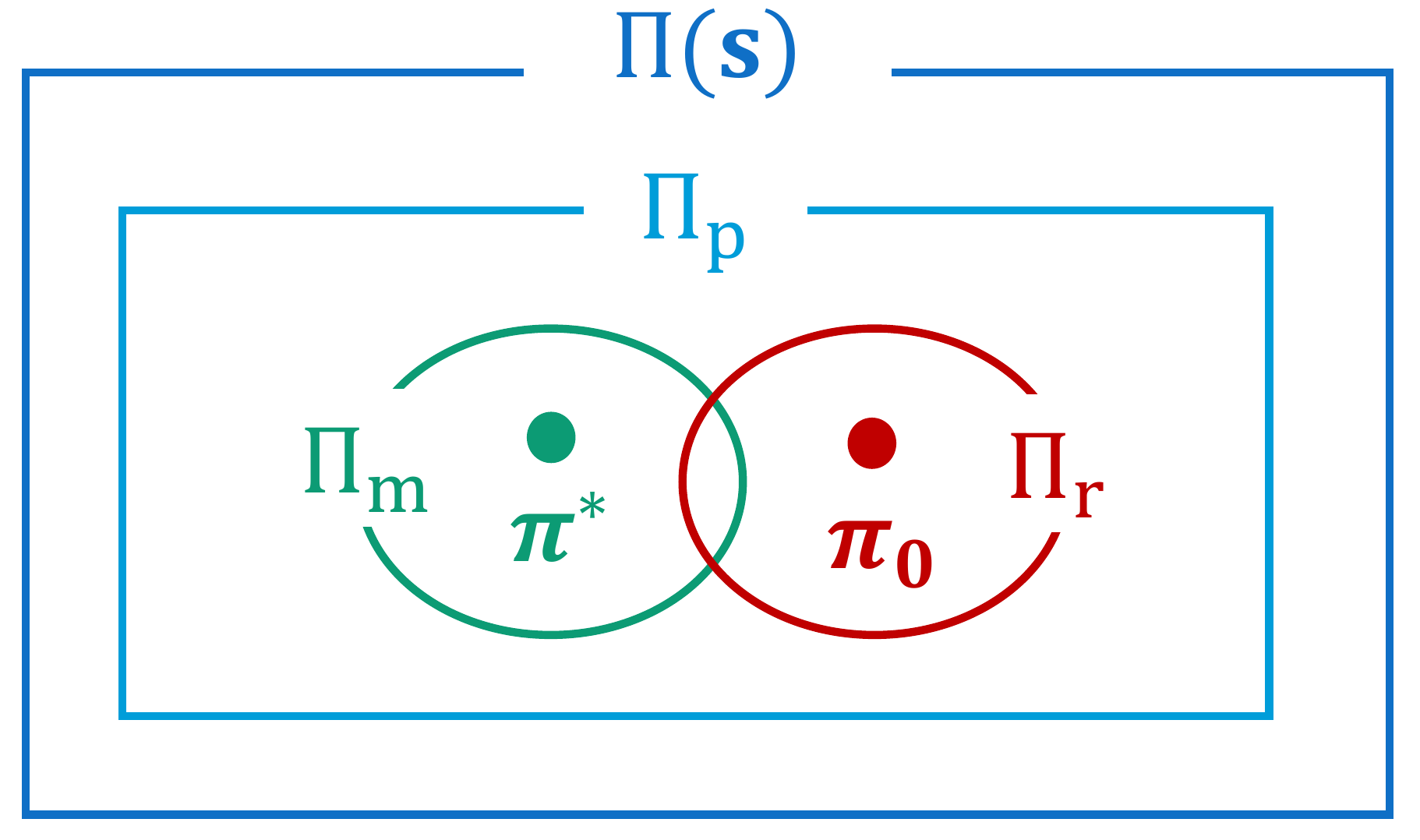}
		\caption{Relationship between sets}
		\label{Fig:set_relationship}
	\end{figure}

	Consider $\bm{\pi}_0 \in \Pi_r$ and $\bm{\pi}^* \in \Pi_m \cap \Pi_r^c$. 
	Then, $L(\bm{s}, \bm{\pi}_0) \leq L(\bm{s}, \bm{\pi}^*)$ from the result of Step 1-2. Based on the definition of $\Pi_m$ in (\ref{Eqn:min_cut_minimizer}), we can write
%	\begin{equation*}
$	L(\bm{s}, \bm{\pi_0}) \leq L(\bm{s}, \bm{\pi})$ for every $\bm{\pi} \in \Pi(\bm{s}).$
%	\end{equation*}
	In other words,
%	\begin{equation*}
$	\pi_0 \in \Pi_m$
%	\end{equation*}
	holds for arbitrary $\pi_0 \in \Pi_r$.
	Therefore, %from Fig. \ref{Fig:set_relationship}, we can conclude
%	\begin{equation*}
	$\Pi_r \subseteq \Pi_m$ holds.
%	\end{equation*}

	\textbf{Step 2.} Prove $\bm{\pi}_v \in \Pi_r$:
	
	\begin{figure}[!t]
		\centering
		\includegraphics[height=40mm]{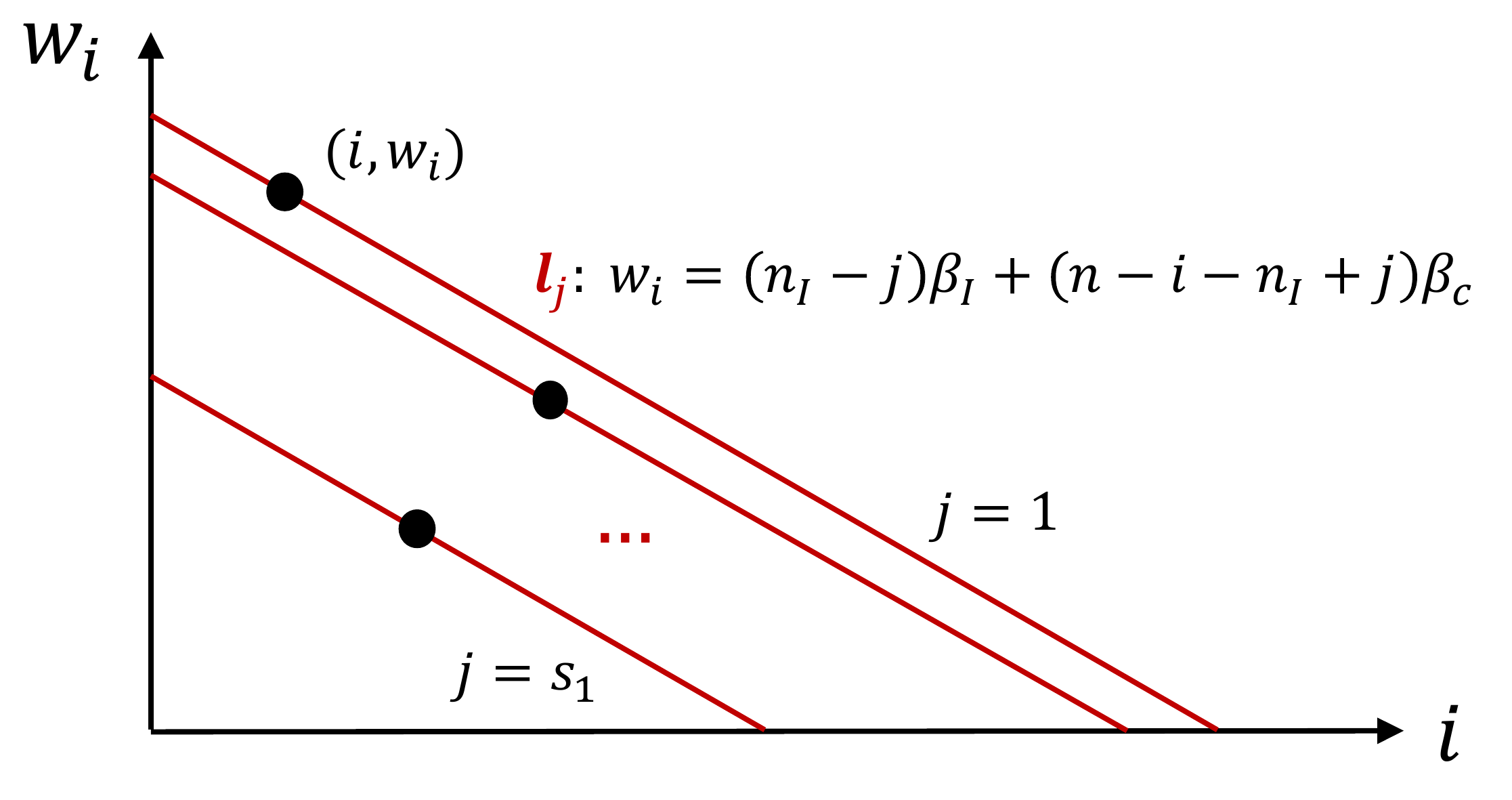}
		\caption{Set of $s_1$ lines where $(i,\omega_i)$ points can position}
		\label{Fig:omega_I_line}
	\end{figure} 
	
	For a given selection vector $\bm{s}=[s_1, \cdots, s_L]$, consider an arbitrary ordering vector $\bm{\pi} = [\pi_1, \cdots, \pi_k] \in \Pi(\bm{s})$.
	The corresponding $\omega_i(\bm{\pi})$ defined in (\ref{Eqn:weight vector}) is written as
	\begin{equation}\label{Eqn:omega_i_notation_j}
	w_i = (n_I - j) \beta_I + (n-i-n_I + j)\beta_c
	\end{equation}
	where $j = \sum_{t=1}^{i} \mathds{1}_{\pi_t = \pi_i} $. 	
	
	Consider a set of lines $\{l_j\}_{j=1}^{n_I}$, where line $l_j$ represents an equation: $w_i = (n_I - j) \beta_I + (n-i-n_I + j)\beta_c$. Since we assume $\beta_I \geq \beta_c$, these lines can be illustrated as in Fig. \ref{Fig:omega_I_line}. 
	For a given $\bm{\pi}$, consider marking a $(i,\omega_i)$ point for $i \in [k]$.
	Note that the $(i, \omega_i)$ point is on line $l_j$ if and only if 
	\begin{equation}\label{Eqn:l_j_condition}
	j = \sum_{t=1}^{i} \mathds{1}_{\pi_t = \pi_i}
	\end{equation}
	where the summation term in (\ref{Eqn:l_j_condition}) represents the number of occurrence of $\pi_i$ value in $\{\pi_t\}_{t=1}^{i}$. For the example in Fig. \ref{Fig:vertical ordering}, when $\bm{s} = [4, 3, 1]$ and $\bm{\pi} = [1, 2, 3, 1, 2, 1, 2, 1] \in \Pi(\bm{s})$, line $l_3$ contains the point $(i,\omega_i)=(6,\omega_6)$ since $3 = \sum_{t=1}^{6} \mathds{1}_{\pi_t = \pi_6}$.
	%(i.e., the number one appears for the third time in $\bm{\pi}$ when $i=6$).
	
%	Recall that 
%	\begin{equation*}
%	\sum_{t=1}^k \mathds{1}_{\pi_t = l} = s_l \quad \forall l \in [L]
%	\end{equation*}
%    for $\bm{\pi} = [\pi_1, \cdots,\pi_k]$. Also recall  
	Recall that 
%	\begin{equation*}
$	I_l (\bm{\pi}) = \{ i \in [k] : \pi_i = l \}$,
%	\end{equation*}
	as defined in (\ref{Eqn:I_l}), where $|I_l (\bm{\pi})| = s_l$ holds $\forall l \in [L]$.
	For $j \in [n_I]$, consider $l \in [L]$ with $s_l \geq j$. Let $i_0$ be the $j^{th}$ smallest element in $I_l (\bm{\pi})$. Then, 
%	\begin{align*}
$	j = \sum_{t=1}^{i_0} \mathds{1}_{\pi_t = l}$ %, \nonumber\\
and $\pi_{i_0} = l$
%	\end{align*}
	hold. Thus, the $(i_0, \omega_{i_0})$ point is on line $l_j$. Similarly, we can find 
%	Note that for $l \in \{1, \cdots, L\}$ with $s_l \geq j$, there exists $i \in \{1, \cdots, k\}$ such that $\sum_{t=1}^i \mathds{1}_{\pi_t = l} = j$ holds.
%	Likewise, for $l \in \{1, \cdots, L\}$ with $s_l < j$, we cannot find $i \in \{1, \cdots, k\}$ such that $\sum_{t=1}^i \mathds{1}_{\pi_t = l} = j$ holds. Therefore, $l_j$ contains 
	\begin{equation} \label{Eqn:point vector components}
	p_j = \sum_{l=1}^{L} \mathds{1}_{s_l \geq j}
	\end{equation}
	points on line $l_j$, irrespective of the ordering vector $\bm{\pi} \in \Pi(\bm{s})$.
	Note that 
	\begin{equation}\label{Eqn:point vector property1}
	\sum_{j=1}^{n_I} p_j = \sum_{l=1}^{L} \sum_{j=1}^{n_I} \mathds{1}_{s_l \geq j} = \sum_{l=1}^{L} s_l = k,
	\end{equation} 
	which confirms that Fig. \ref{Fig:omega_I_line} contains $k$ points.
	Moreover, 
	\begin{equation}\label{Eqn:point vector property2}
	\forall j \in [n_I-1], \ p_j \geq p_{j+1}
	\end{equation}
	holds from the definition in (\ref{Eqn:point vector components}).
	
	\begin{figure}[!t]
		\centering
		\includegraphics[height=42mm]{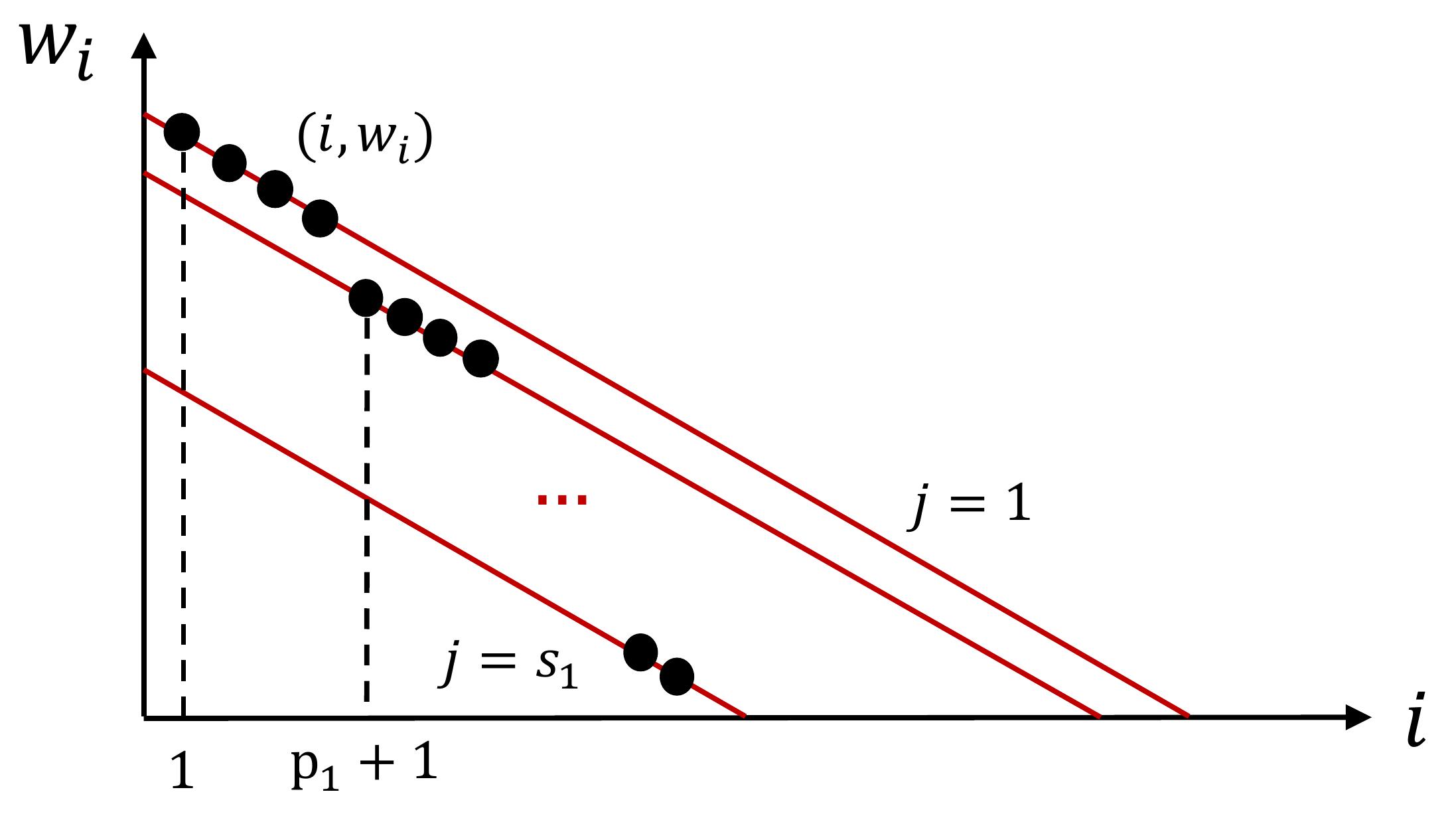}
		\caption{Optimal packing of $k$ points}
		\label{Fig:omega_I_line_packing}
	\end{figure} 
	
	In order to maximize the running sum $S_t(\bm{\pi}) = \sum_{i=1}^{t} w_i(\bm{\pi})$ for every $t$, the optimal ordering vector packs $p_1$ points in the leftmost area ($i=1, \cdots, p_1$), pack $p_2$ points in the leftmost remaining area ($i=p_1 + 1, \cdots, p_1 + p_2$), and so on. This packing method corresponds to Fig. \ref{Fig:omega_I_line_packing}.

	Note that from the definition of $p_j$ in (\ref{Eqn:point vector components}) and Fig. \ref{Fig:vertical ordering}, vertical ordering $\bm{\pi}_v$ in Definition \ref{Def:vertical ordering vector} first chooses $p_1$ points on line $l_1$, then chooses $p_2$ points on line 
	$l_2$, and so on. Thus, $\bm{\pi}_v$ achieves optimal packing in Fig. \ref{Fig:omega_I_line_packing}, which maximizes the running sum $S_t(\bm{\pi})$. 
	Therefore, vertical ordering is a running sum maximizer, i.e., $\bm{\pi}_v \in \Pi_r$.
	Combining Steps 1 and 2, we conclude that $\bm{\pi}_v$ minimizes $L  (\bm{s},\bm{\pi})$ among $\bm{\pi} \in \Pi(\bm{s})$ for arbitrary $\bm{s} \in S$.
\end{proof}

%\begin{figure}[!t]
%	\centering
%	\includegraphics[height=30mm]{omega_I_point.pdf}
%	\caption{Plot of $(i, \omega_i)$ points for $i=1, \cdots, k$}
%	\label{Fig:omega_I_point}
%\end{figure} 

Now, we define a special selection vector called the \textit{horizontal selection vector}, which is shown to be the optimal selection vector which minimizes $L  (\bm{s},\bm{\pi}_v)$.
\begin{definition}\label{Def:horizontal selection vector}
	The horizontal selection vector $\bm{s}_h = [s_1, \cdots, s_L] \in \mathcal{S}$ is defined as:
	\begin{equation}
	s_i =
	\begin{cases}
	n_I,    &    i \leq \floor{\frac{k}{n_I}} \nonumber\\
	(k \Mod{n_I}),    & i = \floor{\frac{k}{n_I}} + 1 \nonumber\\
	0		&     i > \floor{\frac{k}{n_I}} + 1.
	\end{cases}
	\end{equation} 
\end{definition}

\begin{figure}[t]
	\centering
	\includegraphics[height=25mm]{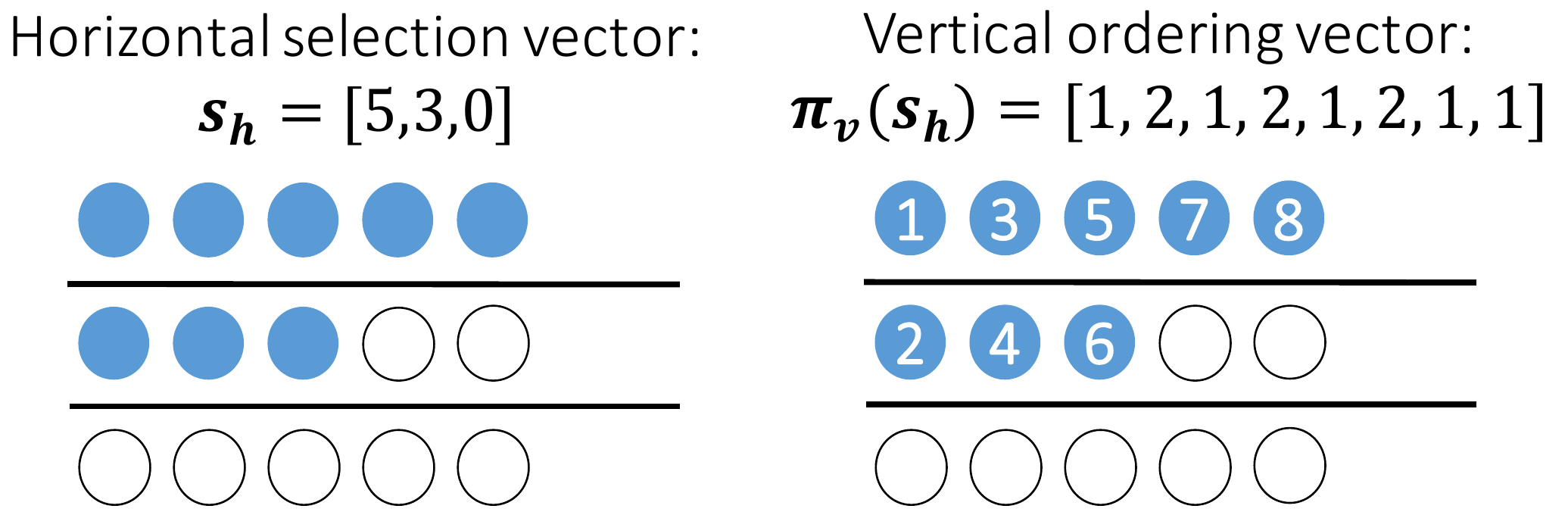}
	\caption{The optimal selection vector $\bm{s}_h$  and the optimal ordering vector $\bm{\pi}_v $ (for $n=15, k=8, L=3$ case)}
	\label{Fig:horizontal selection}
\end{figure}

The graphical illustration of the horizontal selection vector is on the left side of Fig. \ref{Fig:horizontal selection}, in the case of $n=15, k=8, L=3$.
The following Lemma states that the horizontal selection vector minimizes $L  (\bm{s},\bm{\pi}_v)$. 

% Lemma 2
\begin{lemma} \label{Lemma:optimal selection vector}
	Consider applying the vertical ordering vector $\bm{\pi}_v$.
	Then, the horizontal selection vector $\bm{s}_h$ minimizes the lower bound $L  (\bm{s},\bm{\pi})$ on the min-cut.
	In other words, $L(\bm{s}_h, \bm{\pi}_v) \leq L(\bm{s}, \bm{\pi}_v ) \ \forall \bm{s}\in \mathcal{S}$.
\end{lemma}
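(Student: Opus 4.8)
The plan is an \emph{unbalancing exchange argument} on the selection vector, carried out through its conjugate point-count sequence $\bm p=(p_1,\dots,p_{n_I})$, $p_j=\sum_{l=1}^{L}\mathds 1_{s_l\ge j}$, introduced in (\ref{Eqn:point vector components}). First I would record that, by the packing analysis already used in the proof of Lemma~\ref{Lemma:optimal ordering vector}, the quantity $L(\bm s,\bm\pi_v)$ depends on $\bm s$ only through $\bm p$: writing $P_j:=\sum_{m=1}^{j}p_m$, slot $i\in[k]$ lies on line $l_{j(i)}$ with $P_{j(i)-1}<i\le P_{j(i)}$, and $\omega_i(\bm\pi_v)=(n_I-j(i))\beta_I+(n-i-n_I+j(i))\beta_c$ by (\ref{Eqn:omega_i_notation_j}), so
\[
  L(\bm s,\bm\pi_v)=\sum_{j=1}^{n_I}\ \sum_{i=P_{j-1}+1}^{P_j}\min\bigl\{\alpha,\ (n_I-j)\beta_I+(n-i-n_I+j)\beta_c\bigr\}=:F(\bm p).
\]
The sequences $\bm p$ arising from $\bm s\in\mathcal S$ are exactly the non-increasing integer sequences of length $n_I$ with entries in $[0,L]$ summing to $k$, and a direct check gives $\bm p(\bm s_h)=(g_1,\dots,g_{n_I})$ with $g_m$ as in (\ref{Eqn:g_m}), the \emph{flattest} such sequence (each entry equal to $\lfloor k/n_I\rfloor$ or $\lfloor k/n_I\rfloor+1$). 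Thus the lemma reduces to showing $F$ is minimised at this flattest $\bm p$.

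The core step is a single exchange. Take $\bm s\in\mathcal S$ with $\bm s\neq\bm s_h$; then at least two clusters hold a number of nodes strictly between $0$ and $n_I$, say $c_1,c_2$ with $1\le s_{c_1}\le s_{c_2}\le n_I-1$. Moving one node from $c_1$ to $c_2$ yields $\bm s'\in\mathcal S$ (after relabelling clusters, which changes nothing), and in terms of conjugates this is precisely $p'_{b}=p_{b}-1$, $p'_{a}=p_{a}+1$ with $b:=s_{c_1}<a:=s_{c_2}+1$, all other entries unchanged; one checks $\bm p'$ is again admissible. Since $P'_j=P_j-1$ for $b\le j<a$ and $P'_j=P_j$ otherwise, exactly the $a-b$ slots $P_b,P_{b+1},\dots,P_{a-1}$ change line — slot $i=P_c$ passing from $l_c$ to $l_{c+1}$ with its index $i$ unchanged — while every other slot keeps its line. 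Because along each vertical $i=\mathrm{const}$ the line value decreases by exactly $\beta_I-\beta_c\ge 0$ when going from $l_c$ to $l_{c+1}$ (here the hypothesis $\beta_I\ge\beta_c$ enters), each affected $\omega_i$ weakly decreases, hence so does each $\min\{\alpha,\omega_i\}$; therefore $F(\bm p')\le F(\bm p)$, i.e.\ $L(\bm s',\bm\pi_v)\le L(\bm s,\bm\pi_v)$.

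Finally I would iterate: each such move strictly increases $\sum_l s_l^2$ (by $2(s_{c_2}-s_{c_1})+2\ge 2$), which is bounded above on $\mathcal S$, so after finitely many moves no further move is possible, and a vector of $\mathcal S$ admitting no such move is $\bm s_h$; since $L(\cdot,\bm\pi_v)$ never increased along the way, $L(\bm s_h,\bm\pi_v)\le L(\bm s,\bm\pi_v)$ for every $\bm s\in\mathcal S$. The main obstacle is the bookkeeping in the exchange step — pinning down \emph{exactly} which slots change line under $\bm p\to\bm p'$ and verifying each moves by one line in the direction that \emph{lowers} $\omega_i$, never the opposite. The reduction of $L(\bm s,\bm\pi_v)$ to $F(\bm p)$, the identification $\bm p(\bm s_h)=(g_m)$, and the termination of the iteration are all routine given the packing picture established for Lemma~\ref{Lemma:optimal ordering vector}.
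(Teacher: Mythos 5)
Your argument is correct, and it lands on the same underlying fact as the paper---that the partial sums of the conjugate point vector $\bm p$ can only go down (coordinatewise) when you move toward $\bm s_h$, and with $\beta_I\geq\beta_c$ this forces every $\omega_i$ down---but you establish it by a different mechanism. The paper proves the inequality $\sum_{j\le t}p_j^*\le\sum_{j\le t}p_j$ directly (base case plus induction, both by contradiction using $p_1\geq p_2\geq\cdots$), then applies it through the quantities $f_i,h_i$ of (\ref{Eqn:t_i}), (\ref{Eqn:t_i^*}) to compare $\omega_i$ termwise. You instead run a local exchange: move a node from a smaller cluster to a larger one, which in conjugate coordinates is $p_b\mapsto p_b-1$, $p_a\mapsto p_a+1$ with $b<a$, and you track exactly which of the $k$ slots changes line. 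The bookkeeping you flag as the ``main obstacle'' does in fact close cleanly: admissibility of $\bm p'$ forces $p_{a-1}\geq p_a+1\geq1$, hence $p_c\geq1$ for $b\le c\le a-1$, so $P_{b-1}<P_b<\cdots<P_{a-1}$ are distinct and each affected slot $i=P_c$ really does move from $l_c$ to $l_{c+1}$, dropping $\omega_i$ by exactly $\beta_I-\beta_c$. Termination via the strictly increasing potential $\sum_l s_l^2$ and the characterization of move-free vectors as $\bm s_h$ is also correct. What you buy with the exchange is a more transparent, local picture of why $\bm s_h$ wins (each step is a single line-shift of a small batch of slots); what the paper's route buys is brevity, since the global partial-sum inequality is established once and plugged into (\ref{Eqn:Lemma_2_proof_3})--(\ref{Eqn:Lemma_2_proof_4}) without iteration.
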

%The proof of this Lemma will be given elsewhere, but it is based on following logic. 
%For the given horizontal selection vector $\bm{s}_h$ and the corresponding vertical ordering vector $\bm{\pi}_v$, denote the $i^{th}$ weight value in (\ref{Eqn:weight vector}) as $\omega_i^*$.
%Similarly, for an arbitrary selection vector $\bm{s}$ and the corresponding vertical ordering vector $\bm{\pi}_v$, denote the $i^{th}$ weight value as $\omega_i$.
%Then, it can be shown that
%$\omega_i^* \leq \omega_i$ for $i=1, \cdots, k$. Therefore, the horizontal selection minimizes the min-cut directly from (\ref{Eqn:min_cut}).

%the analysis of $(i, \omega_i)$ plot in Fig.\ref{Fig:omega_I_point}. Under constraint of vertical ordering, $(i, \omega_i)$ plot changes depending on the choice of $\bm{s}$.

\begin{proof}
	From the proof of Lemma \ref{Lemma:optimal ordering vector}, the optimal ordering vector turns out to be the vertical ordering vector, where the corresponding $\omega_i(\bm{\pi}_v)$ sequence is illustrated in Fig. \ref{Fig:omega_I_line_packing}. Depending on the selection vector $\bm{s} = [s_1, \cdots, s_L]$, the number $p_j$ of points on each line $l_j$ changes.
	
	Consider an arbitrary selection vector $\bm{s}$. Define a point vector $\bm{p(s)}=[p_1,\cdots, p_{n_I}]$ where $p_j$ is the number of points on $l_j$, as defined in (\ref{Eqn:point vector components}). Similarly, define $\bm{p(s_h)}=[p_1^*,\cdots, p_{n_I}^*]$. Using Definition \ref{Def:horizontal selection vector} and (\ref{Eqn:point vector components}), we have 
	\begin{equation} \label{Eqn:point vector of horizontal selection vector}
	p_j^* = 
	\begin{cases}
	\floor{\frac{k}{n_I}} + 1,	& \	j \leq (k \Mod{n_I})\\
	\floor{\frac{k}{n_I}}	&	\ otherwise
	\end{cases}
	\end{equation}
	Now, we prove 
	\begin{equation}\label{Eqn:Lemma_2_proof_1}
	\forall t \in [n_I], \ \sum_{j=1}^t p_j^* \leq \sum_{j=1}^t p_j.
	\end{equation}
	The proof is divided into two steps: \textit{base case} and \textit{inductive step}.	
	
	\textbf{Base Case: } We wish to prove that $p_1^* \leq p_1$. Suppose $p_1^* > p_1$ (i.e., $p_1 \leq p_1^* - 1$). Then, 
	\begin{equation}\label{Eqn:base_case_eqn1}
	\sum_{l=1}^{n_I}p_l \leq \sum_{l=1}^{n_I}p_1 \leq \sum_{l=1}^{n_I} (p_1^* - 1)
	\end{equation}
	where the first inequality is from (\ref{Eqn:point vector property2}).
	Note that if $(k \Mod{n_I}) = 0$, then
	\begin{equation}\label{Eqn:base_case_eqn2}
		\sum_{l=1}^{n_I} p_l^* = \sum_{l=1}^{n_I} p_1^* > \sum_{l=1}^{n_I} (p_1^* - 1).
	\end{equation}
	Otherwise, 
	\begin{align}
		\sum_{l=1}^{n_I} p_l^* &= \sum_{l=1}^{(k \Mod{n_I})} p_l^*  + \sum_{l=(k \Mod{n_I}) + 1}^{n_I} p_l^*  \nonumber\\
		&= \sum_{l=1}^{(k \Mod{n_I})} (\floor{\frac{k}{n_I}} + 1)  + \sum_{l=(k \Mod{n_I}) + 1}^{n_I} \floor{\frac{k}{n_I}} \nonumber\\
		&> \sum_{l=1}^{n_I} \floor{\frac{k}{n_I}} = \sum_{l=1}^{n_I} (p_1^* - 1).\label{Eqn:base_case_eqn3}
	\end{align}
	Therefore, combining (\ref{Eqn:base_case_eqn1}), (\ref{Eqn:base_case_eqn2}) and (\ref{Eqn:base_case_eqn3}) results in
%	\begin{equation*}
	$\sum_{l=1}^{n_I}p_l < \sum_{l=1}^{n_I} p_l^* = k$,
%	\end{equation*}
	which contradicts (\ref{Eqn:point vector property1}). Therefore, $p_1^* \leq p_1$ holds.

	\textbf{Inductive Step: } Assume that $\sum_{l=1}^{l_0} p_l^* \leq \sum_{l=1}^{l_0} p_l$ for arbitrary $l_o \in [n_I-1]$. Now we prove that $\sum_{l=1}^{l_0+1} p_l^* \leq \sum_{l=1}^{l_0+1} p_l$ holds.
	Suppose not. Then, 
	\begin{equation}\label{Eqn:point_vector_inequality}
	p_{l_0 + 1}^* - \Theta - 1 \geq p_{l_0 + 1}
	\end{equation}
	holds where 
	\begin{equation}\label{Eqn:inductive_step_eqn2}
	\Theta = \sum_{l=1}^{l_0} (p_l - p_l^*).
	\end{equation} 
	Using (\ref{Eqn:point vector property2}) and (\ref{Eqn:point_vector_inequality}), we have
	\begin{align}
	\sum_{l=l_0 + 1}^{n_I}p_l &\leq \sum_{l=l_0 + 1}^{n_I}p_{l_0 + 1} \leq \sum_{l=l_0 + 1}^{n_I} (p_{l_0 + 1}^* - 1 - \Theta )\nonumber\\
	&\leq \sum_{l=l_0 + 1}^{n_I} (p_{l_0 + 1}^* - 1) - \Theta \label{Eqn:inductive_step_eqn1}
	\end{align}
	where equality holds for the last inequality iff $l_0 = n_I - 1$.
	Using analysis similar to (\ref{Eqn:base_case_eqn2}) and (\ref{Eqn:base_case_eqn3}) for the \textit{base case}, we can find that 
%	\begin{equation*}
$	\sum_{l=l_0 + 1}^{n_I} (p_{l_0 + 1}^* - 1) < \sum_{l=l_0 + 1}^{n_I} p_l^*.$
%	\end{equation*} 
Combining with (\ref{Eqn:inductive_step_eqn1}), we get
	\begin{equation}\label{Eqn:inductive_step_eqn3}
		\sum_{l=l_0+1}^{n_I} p_l < \sum_{l=l_0 + 1}^{n_I} p_l^* - \Theta.
	\end{equation}
	Equations (\ref{Eqn:inductive_step_eqn2}) and (\ref{Eqn:inductive_step_eqn3}) imply  
%	\begin{equation*}
$		\sum_{l=1}^{n_I}p_l < \sum_{l=1}^{n_I} p_l^* = k,$
%	\end{equation*}
	which contradicts (\ref{Eqn:point vector property1}). Therefore, (\ref{Eqn:Lemma_2_proof_1}) holds.

	Now define
	\begin{align}
	f_i &= \min \{s \in [n_I]: \sum_{l=1}^s p_l \geq i \} \label{Eqn:t_i}\\
	h_i &= \min \{s \in [n_I]: \sum_{l=1}^s p_l^* \geq i \} \label{Eqn:t_i^*}
	\end{align}
	for $i\in[k]$. 	
	Then,
	\begin{equation}\label{Eqn:Lemma_2_proof_2}
	\forall i \in [k], \ h_i \geq f_i
	\end{equation}
	holds directly from (\ref{Eqn:Lemma_2_proof_1}).
	Note that since $p_i^*$ in (\ref{Eqn:point vector of horizontal selection vector}) is identical to $g_i$ in (\ref{Eqn:g_m}), $h_i$ can be written as
	\begin{equation}
	h_i =  \min \{s \in [n_I]: \sum_{l=1}^s g_l \geq i \} \label{Eqn:t_i^*_reform}
	\end{equation}
	Consider $\bm{\pi}_v(\bm{s})$, the vertical ordering vector for a given selection vector $\bm{s}$. 
	Recall that as in Fig. \ref{Fig:omega_I_line_packing}, vertical ordering packs the leftmost $p_1$ points on line $l_1$, packs the next $p_2$ points on line $l_2$, and so on. Using (\ref{Eqn:omega_i_notation_j}), we can write 
	\begin{align*}
	\omega_i = 
	\begin{cases}
	(n_I-1)\beta_I + (n-i-n_I+1)\beta_c, \quad  & \text{if } i \in [p_1] \\
	(n_I-2)\beta_I + (n-i-n_I+2)\beta_c, \quad  & \text{if } i - p_1 \in [p_2] \\
	\quad \quad \vdots  \\
	0 \cdot \beta_I + (n-i)\beta_c, & \text{if } i - \sum_{t=1}^{n_I-1}p_t \\
	& \quad \quad \in [p_{n_I}]  \end{cases}
	\end{align*}
	Therefore, using (\ref{Eqn:t_i}), we further write
	\begin{equation}\label{Eqn:Lemma_2_proof_3}
	\omega_i(\bm{\pi}_v(\bm{s})) = (n_I - f_i) \beta_I + (n-i-n_I + f_i) \beta_c.
	\end{equation}
	Similarly, 
	\begin{equation}\label{Eqn:Lemma_2_proof_4}
	\omega_i(\bm{\pi}_v(\bm{s}_h)) = (n_I - h_i) \beta_I + (n-i-n_I + h_i) \beta_c.
	\end{equation}
	Combining (\ref{Eqn:Lemma_2_proof_2}), (\ref{Eqn:Lemma_2_proof_3}) and (\ref{Eqn:Lemma_2_proof_4}), we have
	\begin{equation*}
	\omega_i(\bm{\pi}_v(\bm{s}_h)) \leq \omega_i(\bm{\pi}_v(\bm{s})) \ \forall i=1,\cdots,k, \forall \bm{s} \in \mathcal{S},
	\end{equation*}
	since we assume $\beta_I \geq \beta_c$.
	Therefore, combining with (\ref{Eqn:lower_bound}), we conclude that $L(\bm{s}_h, \bm{\pi}_v) \leq L(\bm{s}, \bm{\pi}_v)$ for arbitrary $\bm{s} \in \mathcal{S}$, which completes the proof of Lemma \ref{Lemma:optimal selection vector}.
\end{proof}

From Lemmas \ref{Lemma:optimal ordering vector} and \ref{Lemma:optimal selection vector}, we have 
\begin{equation*}
\forall \bm{s} \in \mathcal{S}, \forall \bm{\pi} \in \Pi(\bm{s}), \ L(\bm{s}_h, \bm{\pi}_v) \leq L(\bm{s}, \bm{\pi}).
\end{equation*}
All that remains is to compute $L(\bm{s}_h, \bm{\pi}_v)$ and check that it is identical to (\ref{Eqn:Capacity_value}).

From (\ref{Eqn:lower_bound}), $L(\bm{s}_h, \bm{\pi}_v)$ can be written as 
\begin{equation}\label{Eqn:lower_bound_optimal}
L(\bm{s}_h, \bm{\pi}_v) = \sum_{i=1}^k \min \{\alpha, \omega_i(\bm{\pi}_v(\bm{s}_h) )\}
\end{equation}
where $\omega_i(\bm{\pi}_v(\bm{s}_h) )$
is defined in (\ref{Eqn:Lemma_2_proof_4}).
From $h_i$ in (\ref{Eqn:t_i^*_reform}), we have
\begin{equation}\label{Eqn:t_i^*_cases}
h_i = 
\begin{cases}
1,  &  i \in [g_1] \\
2, & i - g_1 \in [g_2] \\
& \vdots \\
n_I, & i - \sum_{t=1}^{n_I-1}g_t \in [g_{n_I}]  
\end{cases}
\end{equation}
If we define
\begin{equation*}
I_m^* = \{i \in [k] : h_i = m \},
\end{equation*}
then $L(\bm{s}_h, \bm{\pi}_v)$ in (\ref{Eqn:lower_bound_optimal}) can be expressed as
\begin{align}\label{Eqn:lower_bound_intermediate}
& L(\bm{s}_h, \bm{\pi}_v) = \sum_{i=1}^k \min\{\alpha, (n_I-h_i)\beta_I + (n-n_I-i+h_i)\beta_c\} \nonumber\\
&= \sum_{m=1}^{n_I} \sum_{i \in I_m^*} \min \{\alpha, (n_I-m)\beta_I + (n-n_I-i+m)\beta_c\}\nonumber\\
&= \sum_{m=1}^{n_I} \sum_{i \in I_m^*} \min \{\alpha, \rho_m\beta_I + (n-\rho_m-i)\beta_c\}
\end{align}
where $\rho_m$ is defined in (\ref{Eqn:rho_i}).

Using (\ref{Eqn:t_i^*_cases}), we have 
\begin{equation*}
I_m^* = \{\sum_{t=1}^{m-1}g_t + 1, \cdots, \sum_{t=1}^{m-1}g_t + g_m\}.
\end{equation*}
for $m \in [n_I]$. Therefore,  $i \in I_m^*$ can be represented as 
\begin{equation*}
i = \sum_{t=1}^{m-1} g_t + l
\end{equation*} 
for $l \in [g_m]$. 
Using this notation, (\ref{Eqn:lower_bound_intermediate}) can be written as
\begin{align}\label{Eqn:capacity_final}
L(\bm{s}_h, \bm{\pi}_v) = \sum_{m=1}^{n_I} \sum_{l=1}^{g_m} \min \{\alpha, \rho_m\beta_I + (n-\rho_m - s_m^{(l)})\beta_c\}
\end{align}
where $s_m^{(l)} = \sum_{t=1}^{m-1}g_t + l$. This expression reduces to 
(\ref{Eqn:Capacity_value}).
%Consider the $(i,\omega_i)$ points for horizontal selection vector $\bm{s}_h$ and vertical ordering vector $\bm{\pi}_v$. Then, it is identical to the Fig. \ref{Fig:omega_I_line_packing}, with $p_i$ replaced by $p_i^*$. In other words, applying horizontal selection vector $\bm{s}_h$ and vertical ordering vector $\bm{\pi}_v$ packs $p_1^*$ points to the leftmost area ($i=1, \cdots, p_1^*$), pack $p_2$ points to the leftmost remaining area ($p_1^* + 1, \cdots, p_1^* + p_2^*$), and so on. Since $p_i^*$ in (\ref{Eqn:point vector of horizontal selection vector}) is identical to the $g_i$ in Thm \ref{Thm:Capacity of clustered DSS}, computing $L(\bm{s}_h, \bm{\pi}_v) = \sum_{i=1}^k \min \{\alpha, \omega_i(\bm{\pi}_v (\bm{s}_h))\}$ results in (\ref{Eqn:Capacity_value}). 
This completes the proof of Part II-2.
Therefore, the storage capacity of clustered DSS is as stated in Theorem \ref{Thm:Capacity of clustered DSS}.

\section{Proof of Theorem \ref{Thm:condition_for_min_storage}}
\label{Section:proof_of_thm_condition_for_min_storage}

We begin with introducing properties of the parameters $z_t$ and $h_t$ defined in (\ref{Eqn:z_t}) and (\ref{Eqn:h_t}).
\begin{prop}
	\begin{align} 
	h_k &= n_I, \label{Eqn:h_k} \\
	z_k &= (n-k)\epsilon. \label{Eqn:z_k}
	\end{align}
\end{prop}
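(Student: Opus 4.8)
The plan is to first pin down $h_k$ from the definition $h_t = \min\{s \in [n_I] : \sum_{l=1}^{s} g_l \ge t\}$, and then simply substitute $t = k$ and the resulting value of $h_k$ into the definition of $z_t$ in (\ref{Eqn:z_t}).

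First I would show $h_k = n_I$. By Proposition \ref{Prop:omega_i_bounded_by_gamma} we have $\sum_{m=1}^{n_I} g_m = k$, so $s = n_I$ satisfies $\sum_{l=1}^{s} g_l \ge k$, hence $h_k \le n_I$. For the reverse inequality it suffices to check that $s = n_I - 1$ fails, i.e. $\sum_{l=1}^{n_I-1} g_l = k - g_{n_I} < k$, which amounts to $g_{n_I} \ge 1$. This is where the assumption $k > n_I$ from (\ref{Eqn:k_constraint}) enters: it gives $\floor{k/n_I} \ge 1$, and since $(k \Mod{n_I}) < n_I$ the index $n_I$ falls in the ``otherwise'' branch of (\ref{Eqn:g_m}), so $g_{n_I} = \floor{k/n_I} \ge 1$. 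Therefore $h_k = n_I$, which is (\ref{Eqn:h_k}).

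Next, plugging $t = k$ and $h_k = n_I$ into $z_t = (n - n_I - t + h_t)\epsilon + (n_I - h_t)$ yields
\begin{equation*}
z_k = (n - n_I - k + n_I)\epsilon + (n_I - n_I) = (n-k)\epsilon,
\end{equation*}
which is (\ref{Eqn:z_k}).

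There is no real obstacle here; the only point requiring a moment of care is justifying $g_{n_I} \ge 1$, i.e. that the last cluster in the horizontal selection receives at least one node, which is exactly what $k > n_I$ guarantees. Everything else is direct substitution.
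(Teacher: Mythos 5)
Your proof is correct and takes essentially the same approach as the paper: both establish $h_k = n_I$ by noting that $k > n_I$ forces $\floor{k/n_I} \ge 1$ so that $g_{n_I} \ge 1$, combine this with $\sum_{m=1}^{n_I} g_m = k$, and then obtain $z_k$ by direct substitution into (\ref{Eqn:z_t}). The paper states the slightly stronger fact $g_i \ge 1$ for all $i \in [n_I]$, while you isolate the only instance actually needed, $g_{n_I} \ge 1$; this is an inessential difference.
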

\begin{proof}
	Since we consider $k > n_I$ case as stated in (\ref{Eqn:k_constraint}), we have 
	\begin{equation*}
	g_i \geq 1 \quad \forall i \in [n_I]
	\end{equation*}
	for $\{g_i\}_{i=1}^{n_I}$ defined in (\ref{Eqn:g_m}).
	Combining with (\ref{Eqn:sum of g is k}) and (\ref{Eqn:h_t}), we can conclude that $h_k = n_I$. Finally, $z_k = (n-k)\epsilon$ is from (\ref{Eqn:z_t}) and (\ref{Eqn:h_k}).
\end{proof}

First, consider the $\epsilon \geq \frac{1}{n-k}$ case. From (\ref{Eqn:Feasible Points Result}), data $\mathcal{M}$ can be reliably stored with node storage $\alpha = \mathcal{M}/k$ if the repair bandwidth satisfies $\gamma \geq \gamma^*$, where
\begin{align*}
\gamma^* &= \frac{\mathcal{M} - (k-1)\mathcal{M}/k}{s_{k-1}} = \frac{\mathcal{M}}{k} \frac{1}{s_{k-1}}\nonumber\\
&= \frac{\mathcal{M}}{k} \frac{(n-k)\epsilon}{(n_I-1) + (n-n_I)\epsilon}
\end{align*}
where the last equality is from (\ref{Eqn:s_t}) and (\ref{Eqn:z_k}).
Thus, $\alpha = \mathcal{M}/k$ is achievable with finite $\gamma$, when $\epsilon \geq \frac{1}{n-k}$.

Second, we prove that it is impossible to achieve $\alpha = \mathcal{M}/k$ for $0 \leq \epsilon < \frac{1}{n-k}$, in order to reliably store file $\mathcal{M}$. 
%The following Lemma formally states this property. 
%The proof is divided into two cases: $\epsilon = 0$ and $0 < \epsilon < \frac{1}{n-k}$. 
Recall that the minimum storage for $0 \leq \epsilon < \frac{1}{n-k}$ is 
\begin{equation}\label{Eqn:alpha_MSR}
\alpha = \frac{M}{\tau + \sum_{i=\tau+1}^{k}z_i}
\end{equation} 
from \eqref{Eqn:MSR_point}.
From (\ref{Eqn:tau}) and (\ref{Eqn:z_k}), we have $z_i < 1$ for $i=\tau+1, \tau+2, \cdots, k$.
%\end{equation*}
Therefore, 
\begin{equation*}
\tau + \sum_{i=\tau+1}^{k}z_i < \tau + (k-(\tau+1)+1) = k
\end{equation*}
holds, which result in 
\begin{equation}
\frac{M}{\tau + \sum_{i=\tau+1}^{k}z_i} > \frac{\mathcal{M}}{k}. 
\end{equation}
%Thus, for the case of $0 \leq \epsilon < \frac{1}{n-k}$, it is impossible to achieve $\alpha = \mathcal{M}/k$.
Thus, the $0 \leq \epsilon \leq \frac{1}{n-k}$ case has the minimum storage $\alpha$ greater than $\mathcal{M}/k$, which completes the proof of Theorem \ref{Thm:condition_for_min_storage}.

\section{Proof of Theorem \ref{Thm:MSR_MBR}}\label{Proof_of_Thm_MSR_MBR}

First, we prove (\ref{Eqn:msr_property}). To begin with, we obtain the expression of $\alpha_{msr}^{(\epsilon)}$, for $\epsilon = 0, 1$.
From (\ref{Eqn:MSR_point}), we obtain
\begin{align}
\alpha_{msr}^{(0)} &= \frac{\mathcal{M}}{\tau + \sum_{i=\tau+1}^{k}z_i}, \label{Eqn:msr_epsilon_0_alpha} \\
\alpha_{msr}^{(1)} &= \frac{\mathcal{M}}{k} \nonumber.
\end{align}
We further simplify the expression for $\alpha_{msr}^{(0)}$ as follows. Recall
\begin{equation}\label{Eqn:z_t_epsilon0}
z_t = n_I - h_t
\end{equation} 
for $t \in [k]$ from \eqref{Eqn:z_t}, when $\epsilon=0$ holds. Note that we have
\begin{equation}\label{Eqn:z_t_epsilon0_cases}
z_t =
\begin{cases}
0, & t \geq k- \floor{\frac{k}{n_I}} + 1 \\
1, & t = k- \floor{\frac{k}{n_I}}
\end{cases}
\end{equation}
from the following reason. First, from \eqref{Eqn:h_t} and \eqref{Eqn:z_t_epsilon0}, $z_t = 0$ holds for \begin{align*}
t &\geq \sum_{l=1}^{n_I-1} g_l + 1 = \sum_{l=1}^{n_I} g_l - g_{n_I} + 1 = k -  \floor{\frac{k}{n_I}} + 1
\end{align*}
where the last equality is from \eqref{Eqn:sum of g is k} and \eqref{Eqn:g_m}. Similarly, we can prove that $z_t = 1$ holds for $t = k - \floor{\frac{k}{n_I}}$.
From \eqref{Eqn:z_t_epsilon0_cases} and \eqref{Eqn:tau}, we obtain 
\begin{equation}\label{Eqn:tau_epsilon0}
\tau = k - \floor{\frac{k}{n_I}}
\end{equation}
when $\epsilon=0$. Combining \eqref{Eqn:msr_epsilon_0_alpha}, \eqref{Eqn:z_t_epsilon0_cases} and \eqref{Eqn:tau_epsilon0}, we have 
\begin{equation}
\alpha_{msr}^{(0)} = \frac{\mathcal{M}}{k - \floor{\frac{k}{n_I}}}.
\end{equation}
Then, using $R=k/n$ and $\sigma = L^2/n$,
\begin{align*}
\frac{\alpha_{msr}^{(0)}}{\alpha_{msr}^{(1)}} &= \frac{k}{k-\floor{\frac{k}{n_I}}} = \frac{nR}{nR - \floor{RL}} \nonumber\\
&= \frac{nR}{nR - \floor{R\sqrt{n\sigma}}} = \frac{R}{R - \floor{R\sqrt{n\sigma}}/n}.
\end{align*}
Thus, for arbitrary fixed $R$ and $\sigma$, 
\begin{equation*}
\lim\limits_{n \rightarrow \infty} \frac{\alpha_{msr}^{(0)}}{\alpha_{msr}^{(1)}} = 1.
\end{equation*}
Therefore, $\alpha_{msr}^{(0)}$ is asymptotically equivalent to $\alpha_{msr}^{(1)}$.

Second, we prove (\ref{Eqn:mbr_property}). Note that from (\ref{Eqn:MBR_point}), $\alpha_{mbr}^{(\epsilon)} = \gamma_{mbr}^{(\epsilon)}$ holds for arbitrary $0 \leq \epsilon \leq 1$. Therefore, all we need to prove is
\begin{equation*}\label{Eqn:gamma_asymptote}
\gamma_{mbr}^{(0)} \rightarrow \gamma_{mbr}^{(1)}.
\end{equation*}
 To begin with, we obtain the expression for $\gamma_{mbr}^{(\epsilon)}$, when $\epsilon = 0, 1$.
For $\epsilon = 1$, $z_t$ in (\ref{Eqn:z_t}) is 
\begin{equation}\label{Eqn:z_t_kappa_1}
z_t = n-t
\end{equation} for $t \in [k]$. Moreover, from (\ref{Eqn:s_t}), 
\begin{equation}\label{Eqn:s_0}
s_0 = \frac{\sum_{i=1}^{k} z_i}{n-1}
\end{equation}
for $\epsilon = 1$.
Therefore, from (\ref{Eqn:MBR_point}), (\ref{Eqn:z_t_kappa_1}) and (\ref{Eqn:s_0}),
\begin{align}
\gamma_{mbr}^{(1)}=\frac{\mathcal{M}}{s_0} &= \frac{(n-1)\mathcal{M}}{\sum_{i=1}^{k} (n-i)} = \frac{\mathcal{M}}{k} \frac{2(n-1)}{2n-k-1}. \label{Eqn:gamma_mbr_1}
\end{align}

Now we focus on the case of $\epsilon=0$. First, let $q$ and $r$ be
\begin{align}
q &\coloneqq \floor{\frac{k}{n_I}}, \label{Eqn:quotient}\\
r &\coloneqq (k \Mod{n_I})	, \label{Eqn:remainder}
\end{align}
which represent the quotient and remainder of $k/n_I$. 
Note that 
\begin{equation}\label{Eqn:quotient_remainder_relation}
qn_I + r = k.
\end{equation} 
Then, we have
\begin{align}\label{Eqn:sum of weighted g}
\sum_{t=1}^{n_I} t g_t &= \sum_{t=1}^{r} (q+1) t + \sum_{t=r+1}^{n_I} qt = q \sum_{t=1}^{n_I} t + \sum_{t=1}^{r} t \nonumber\\
&= q \frac{n_I (n_I+1)}{2} +  \frac{r (r+1)}{2} = \frac{1}{2}(qn_I^2 + r^2 + k)
\end{align}
where the last equality is from \eqref{Eqn:quotient_remainder_relation}.
From \eqref{Eqn:z_t_epsilon0} and \eqref{Eqn:t_i^*_cases}, we have
\begin{align}\label{Eqn:sum_of_z}
\sum_{i=1}^k z_i &= \sum_{t=1}^{n_I} (n_I-t) g_t = n_I k - \frac{1}{2} (qn_I^2 + r^2 + k) \nonumber\\
&= (n_I-1) k - \frac{1}{2} (qn_I^2 + r^2 - k)
\end{align}
where the second last equality is from \eqref{Eqn:sum of g is k} and \eqref{Eqn:sum of weighted g}.
Moreover, using \eqref{Eqn:quotient_remainder_relation}, we have
\begin{align}\label{Eqn:bound}
qn_I^2 + r^2 - k &= qn_I^2 + r^2 - qn_I - r \nonumber\\
&= (qn_I + r) (n_I-1) - r(n_I-r) \nonumber\\
&\leq (qn_I+r)(n_I-1) = k(n_I-1)
\end{align}
where the equality holds if and only if $r=0$.
Furthermore, 
\begin{equation}\label{Eqn:s_0_0}
s_0 = \frac{\sum_{i=1}^{k} z_i}{n_I-1}
\end{equation}
for $\epsilon = 0$ from (\ref{Eqn:s_t}). Combining \eqref{Eqn:MBR_point}, \eqref{Eqn:sum_of_z}, \eqref{Eqn:bound} and \eqref{Eqn:s_0_0} result in 
\begin{align}
\gamma_{mbr}^{(0)}=\frac{\mathcal{M}}{s_0} &\leq  \frac{2\mathcal{M}}{k}. \label{Eqn:gamma_mbr_0}
\end{align}

From (\ref{Eqn:gamma_mbr_1}) and (\ref{Eqn:gamma_mbr_0}), we have
\begin{align*}\label{Eqn:gamma_ratio}
\gamma_{mbr}^{(0)} - \gamma_{mbr}^{(1)} 
& \leq \frac{\mathcal{M}}{k} \left( 2 - \frac{2(n-1)}{2n-k-1} \right) = \frac{\mathcal{M}}{k} \frac{2(n-k)}{2n-k-1} \nonumber\\
&= \frac{\mathcal{M}}{nR} \frac{2n(1-R)}{n(2-R)-1}
\end{align*}
where $R=k/n$.
Thus, for arbitrary $n$, 
\begin{equation*}
\gamma_{mbr}^{(0)} \rightarrow \gamma_{mbr}^{(1)}
\end{equation*}
as $R \rightarrow 1$. This completes the proof of (\ref{Eqn:mbr_property}).
Finally, (\ref{Eqn:gamma_mbr_1}) and (\ref{Eqn:gamma_mbr_0}) provides
\begin{align*}
\frac{\gamma_{mbr}^{(0)}}{\gamma_{mbr}^{(1)}} &\leq \frac{2n-k-1}{n-1} = 2- \frac{k-1}{n-1}, 
\end{align*}
which completes the proof of (\ref{Eqn:mbr_ratio}).

%
%For two extreme cases of $\epsilon = 0$ and $\epsilon = 1$, the minimum required storage $\alpha_{min}^{(\epsilon)}$ and the minimum required repair bandwidth $\gamma_{min}^{(\epsilon)}$ are obtained as the following corollary, the proof of which is in Appendix \ref{Section:Proof of Corollary alpha_gamma_min}.
%
%
%
%
%\begin{corollary}\label{Coro:alpha_gamma_min}
%	\begin{align}
%	\alpha_{min}^{(0)} = \alpha_{msr}^{(0)} &= \frac{\mathcal{M}}{k-\floor{\frac{k}{n_I}}}, \label{Eqn:alpha_min_0}\\
%	\alpha_{min}^{(1)} = \alpha_{msr}^{(1)} &= \frac{\mathcal{M}}{k}, \label{Eqn:alpha_min_1}\\
%	\gamma_{min}^{(0)} = \gamma_{mbr}^{(0)} &= \frac{\mathcal{M}}{\lambda_0}, \label{Eqn:gamma_min_0}\\
%	\gamma_{min}^{(1)}  = \gamma_{mbr}^{(1)} &= \frac{\mathcal{M}}{k} \frac{2(n-1)}{2n-k-1}. \label{Eqn:gamma_min_1}
%	\end{align}
%\end{corollary}
%
%
%Here, we compare the required resources ($\alpha$ and $\gamma$) for $\epsilon = 0, 1$ cases. When the system is allowed to use abundant cross-cluster bandwidth, it can choose symmetric repair ($\beta_I = \beta_c$, or $\epsilon = 1$). In this case, the minimum required node storage is $\alpha_{min}^{(1)}$, and the minimum required repair bandwidth is $\gamma_{min}^{(1)}$.
%However, in practical systems that cannot use abundant cross-cluster bandwidth, $\epsilon <1$ needs to be selected inevitably. In an extreme case of $\epsilon = 0$, the minimum required node storage is $\alpha_{min}^{(0)}$, and the minimum required repair bandwidth is $\gamma_{min}^{(0)}$.

\section{Proof of Theorem \ref{Thm:IRC_versus_LRC}}\label{Section:Proof_of_IRC_versus_LRC}
\cmt{Recall the definition of an $(n, l_0, m_0, \mathcal{M}, \alpha)$-LRC which appear right before Theorem \ref{Thm:IRC_versus_LRC}.}
Moreover, recall that the \textit{repair locality} of a code is defined as the number of nodes 
to be contacted in the node repair process \cite{papailiopoulos2014locally}. 
Since each cluster contains $n_I$ nodes, every node in a DSS with $\epsilon = 0$ has the repair locality of 
\begin{equation}\label{Eqn:locality_val}
l_0 = n_I - 1.
\end{equation}
Moreover, note that for any code with minimum distance $m$, the original file $\mathcal{M}$ can be retrieved by contacting $n-m+1$ coded symbols \cite{moon2005error}. Since the present paper considers DSSs such that contacting any $k$ nodes can retrieve the original file $\mathcal{M}$, we have the minimum distance of
\begin{equation}\label{Eqn:min_dist_val}
m_0 = n-k+1.
\end{equation} 
Thus, the intra-cluster repairable code defined in Section \ref{Section:IRC_versus_LRC} is a $(n,l_0,m_0,\mathcal{M},\alpha)$-LRC.

%
%\begin{definition} (Definition 1, \cite{papailiopoulos2014locally})\\
%An $(n, r, d, \mathcal{M}, \alpha)$-LRC is a code that takes a
%file of size $\mathcal{M}$ bits, encodes it in $n$ coded symbols of size $\alpha$ bits, and any of these $n$ coded symbols can be reconstructed by accessing and processing at most $r$ other symbols. Moreover, the minimum-distance of the code is $d$.
%\end{definition}

%The authors of \cite{papailiopoulos2014locally} considered  
%arbitrary network coding which encodes a file of size $\mathcal{M}$ bits into $n$ coded symbols, where each symbol contains $\alpha$ bits. 

Now we show that (\ref{Eqn:locality_ineq}) holds. Note that from Fig. \ref{Fig:tradeoff_min}, we obtain
%\begin{equation}
$\alpha \geq \alpha_{msr}^{(0)}$
%\end{equation}
for $\epsilon=0$, where
\begin{equation}\label{Eqn:alpha_val}
\alpha_{msr}^{(0)} = \frac{\mathcal{M}}{k-\floor{\frac{k}{n_I}}}
\end{equation}
holds according to (\ref{Eqn:msr_epsilon_0_alpha}). Thus, (\ref{Eqn:locality_ineq}) is proven by showing
\begin{equation}\label{Eqn:local_repair_code_ineq}
m_0 \leq n - \ceil[\bigg]{\frac{\mathcal{M}}{\alpha_{msr}^{(0)}}} - \ceil[\bigg]{\frac{\mathcal{M}}{l_0\alpha_{msr}^{(0)}}} + 2.
\end{equation}
%They showed that for any \textit{locally repairable code} with minimum distance $m$ (i.e., data can be retrieved by contacting any $n-m+1$ coded symbols) and locality $l$, 
%\begin{equation}\label{Eqn:local_repair_code_ineq}
%m \leq n - \ceil[\bigg]{\frac{\mathcal{M}}{\alpha}} - \ceil[\bigg]{\frac{\mathcal{M}}{l\alpha}} + 2
%\end{equation}
%holds.
By plugging (\ref{Eqn:locality_val}), (\ref{Eqn:min_dist_val}) and (\ref{Eqn:alpha_val}) into (\ref{Eqn:local_repair_code_ineq}), 
we have
\begin{align*}\label{Eqn:locality_confirm}
n-k+1 & \leq n-\ceil[\bigg]{k-\floor[\bigg]{\frac{k}{n_I}}} - \ceil[\bigg]{\frac{k-\floor{\frac{k}{n_I}}}{n_I-1}} + 2 \nonumber\\
& = n - k + \floor[\bigg]{\frac{k}{n_I}} - \ceil[\bigg]{\frac{k-\floor{\frac{k}{n_I}}}{n_I-1}} + 2.
\end{align*}

Therefore, all we need to prove is 
\begin{equation}\label{Eqn:LRC_WTP}
0 \leq \floor[\bigg]{\frac{k}{n_I}} - \ceil[\bigg]{\frac{k-\floor{\frac{k}{n_I}}}{n_I-1}} + 1,
\end{equation}
which is proved as follows. 
Using $q$ and $r$ defined in \eqref{Eqn:quotient} and \eqref{Eqn:remainder}, the right-hand-side (RHS) of (\ref{Eqn:LRC_WTP}) is
\begin{align*}
RHS &= q - \ceil[\bigg]{\frac{qn_I+r-q}{n_I-1}} + 1 \nonumber\\
&=
\begin{cases}
q - (q+1) + 1 = 0, & \quad r \neq 0\\
q - q + 1 = 1, & \quad r = 0\\
\end{cases}
\end{align*}
Thus, (\ref{Eqn:LRC_WTP}) holds, where the equality condition is $r \neq 0$, or equivalently $(k \Mod{n_I}) \neq 0$. Therefore, (\ref{Eqn:locality_ineq}) holds, where the equality condition is $\alpha = \alpha_{msr}^{(0)}$ and $(k \Mod{n_I}) \neq 0$. This completes the proof of Theorem \ref{Thm:IRC_versus_LRC}.

%Here, we confirm that (\ref{Eqn:locality_confirm}) holds (i.e., the mathematical result of \cite{papailiopoulos2014locally} holds for $\epsilon=0$ case of the present paper), and find the condition for the equality of (\ref{Eqn:locality_confirm}). \textcolor{red}{blajhblah....}

\section{Proof of Theorem \ref{Thm:beta_c_alpha_trade}}\label{Section:Proof_of_Thm_beta_c_alpha_trade}

According to (\ref{Eqn:lower_bound}) and (\ref{Eqn:capacity_final})  in Appendix B, the capacity can be expressed as
\begin{equation}\label{Eqn:cap_final}
\mathcal{C} = \sum_{i=1}^k \min \{\alpha, \omega_i(\bm{\pi}_v (\bm{s}_h)) \}.
\end{equation}
Using (\ref{Eqn:Lemma_2_proof_4}) and (\ref
{Eqn:t_i^*_cases}), $\omega_i(\bm{\pi}_v (\bm{s}_h))$ in (\ref{Eqn:cap_final}), or simply $\omega_i$ has the following property:
\begin{equation}\label{Eqn:omega_i_relation}
\omega_{i+1} = 
\begin{cases}
\omega_i - \beta_I, & i \in I_{G} \\
\omega_i - \beta_c, & i \in [k] \setminus I_{G}
\end{cases}
\end{equation}
where
%\begin{equation*}
$I_{G} = \{g_m\}_{m=1}^{n_I-1}.$
%\end{equation*}
Note that $g_{n_I} = \floor{\frac{k}{n_I}}$ from (\ref{Eqn:g_m}). Therefore, $k_0$ in (\ref{Eqn:k_0}) can be expressed as
\begin{equation}\label{Eqn:k_0_revisited}
k_0 = k - \floor{\frac{k}{n_I}} = k - g_{n_I} \leq k - 1
\end{equation}
where the last inequality is from (\ref{Eqn:k_constraint}). Combining (\ref{Eqn:h_t}) and (\ref{Eqn:k_0_revisited}) result in 
\begin{equation}\label{Eqn:h_k_0}
h_{k_0} = n_I - 1.
\end{equation}
From (\ref{Eqn:Lemma_2_proof_4}), (\ref{Eqn:k_0_revisited}) and (\ref{Eqn:h_k_0}), we have
\begin{align}\label{Eqn:omega_k_0}
\omega_{k_0} &= (n_I - h_{k_0}) \beta_I + (n-k_0- n_I + h_{k_0}) \beta_c \nonumber\\
&= \beta_I + (n-k_0 -1) \beta_c \geq \beta_I + (n-k)\beta_c \geq \beta_I =\alpha.
\end{align}
where the last equality holds due to the assumption of $\beta_I = \alpha$ in the setting of Theorem \ref{Thm:beta_c_alpha_trade}.
Since $(\omega_i)_{i=1}^k$ is a decreasing sequence from (\ref{Eqn:omega_i_relation}), the result of (\ref{Eqn:omega_k_0}) implies that
\begin{equation}
\omega_i \geq \alpha,  \quad \quad 1 \leq i \leq k_0.
\end{equation}
Thus, the capacity expression in (\ref{Eqn:cap_final}) can be expressed as

\begin{align}\label{Eqn:cap_final_cases}
\mathcal{C} &= 
\begin{cases}
k_0 \alpha + \sum_{i=k_0+1}^{k} \omega_i, & \omega_{k_0+1} \leq \alpha \\
m \alpha + \sum_{i=m+1}^{k} \omega_i, & \omega_{m+1} \leq \alpha < \omega_{m} \\
&\ \ \ (k_0 + 1 \leq m \leq k-1)\\
k\alpha, &  \alpha < \omega_k
\end{cases}
\end{align}

Note that from (\ref{Eqn:Lemma_2_proof_4}) and (\ref{Eqn:t_i^*_cases}), we have
%\begin{equation}
$\omega_i = (n-i)\beta_c$
%\end{equation}
for $i = k_0 + 1, k_0 + 2, \cdots, k$. Therefore, $\mathcal{C}$ in (\ref{Eqn:cap_final_cases}) is
\begin{align}\label{Eqn:cap_final_cases_revisited}
\mathcal{C} &= 
\begin{cases}
k_0 \alpha + \sum_{i=k_0+1}^{k} (n-i)\beta_c, & 0 \leq \beta_c \leq \frac{\alpha}{n-k_0-1}\\
m \alpha + \sum_{i=m+1}^{k} (n-i)\beta_c, & \frac{\alpha}{n-m} < \beta_c \leq \frac{\alpha}{n-m-1} \\
&\ \ \ (k_0 + 1 \leq m \leq k-1)\\
k\alpha, &  \frac{\alpha}{n-k} < \beta_c,
\end{cases}
\end{align}
which is illustrated as a piecewise linear function of $\beta_c$ in Fig. \ref{Fig:beta_c_cap}. Based on (\ref{Eqn:cap_final_cases_revisited}), the sequence  $(T_m)_{m=k_0}^{k}$ in this figure has the following expression:
\begin{equation}\label{Eqn:T_m}
T_m = 
\begin{cases}
k_0 \alpha , & m = k_0 \\
(m + \frac{\sum_{i=m+1}^{k} (n-i)}{n-m}) \alpha, & k_0+1 \leq m \leq k-1 \\
k \alpha, & m = k 
\end{cases}
\end{equation}

\begin{figure}[!t]
	\centering
	\includegraphics[width=85mm]{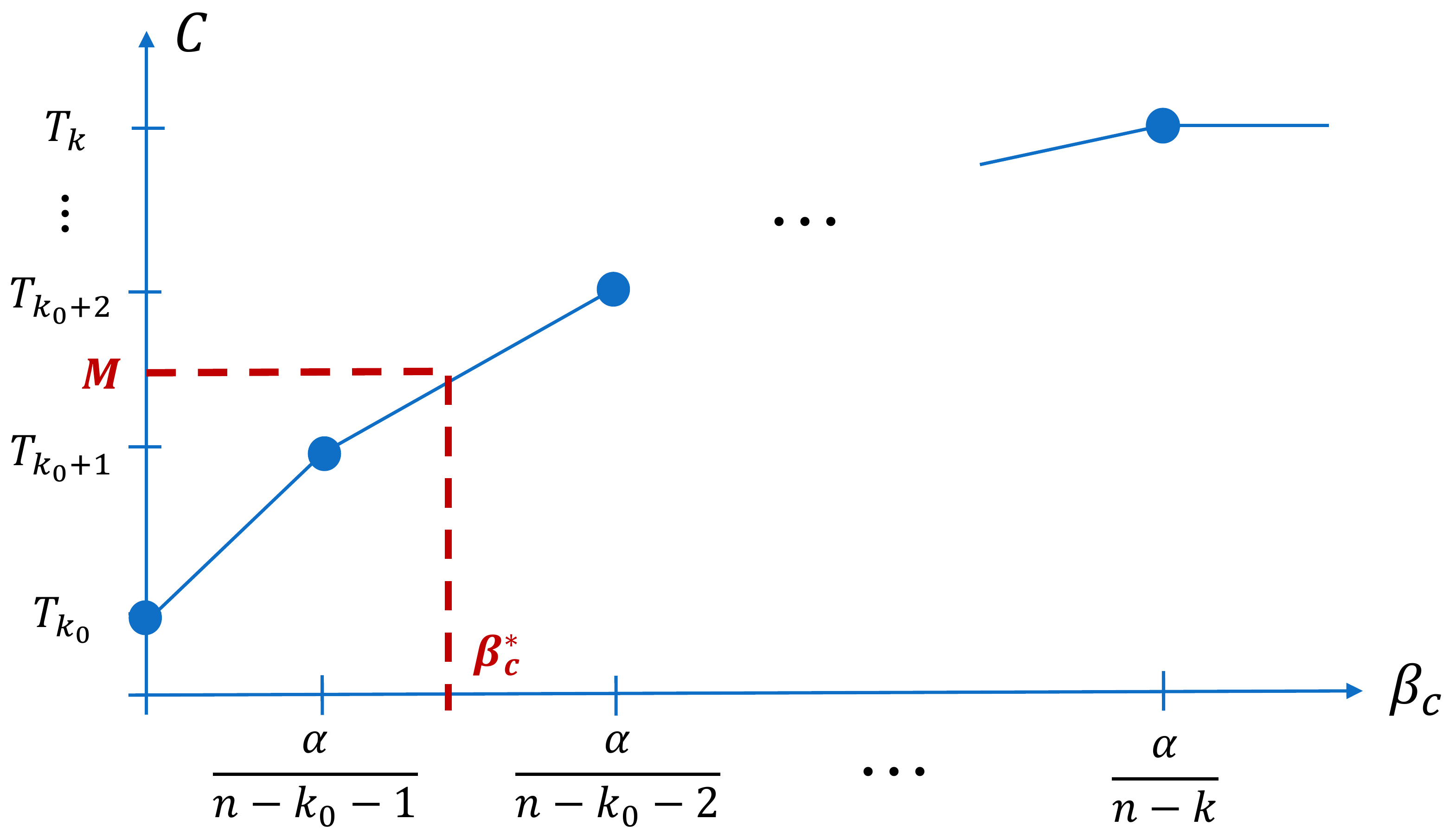}
	\caption{Capacity as a function of $\beta_c$}
	\label{Fig:beta_c_cap}
\end{figure}

From Fig. \ref{Fig:beta_c_cap}, we can conclude that $\mathcal{C} \geq \mathcal{M}$ holds if and only if $\beta_c \geq \beta_c^*$ where
\begin{equation}\label{Eqn:beta_c_star_prev}
\beta_c^* = 
\begin{cases}
0, & \mathcal{M} \in [0, T_{k_0}] \\
\frac{\mathcal{M} - m \alpha}{\sum_{i=m+1}^{k} (n-i)}, & \mathcal{M} \in (T_{m}, T_{m+1}] \\
& \quad (m = k_0, k_0 + 1, \cdots, k-1) \\
\infty, & \mathcal{M} \in (T_k, \infty).
\end{cases}
\end{equation}
Using $T_m$ in (\ref{Eqn:T_m}) and $f_m$ in (\ref{Eqn:f_m}), (\ref{Eqn:beta_c_star_prev}) reduces to (\ref{Eqn:beta_c_star}), which completes the proof.

\section{Proofs of Corollaries}

\subsection{Proof of Corollary \ref{Corollary:Feasible Points_large_epsilon}}
\label{Section: Proof of Corollary:Feasible Points}

From the proof of Theorem \ref{Thm:Capacity of clustered DSS}, the capacity expression is equal to (\ref{Eqn:lower_bound_intermediate}), which is
\begin{equation*}
\mathcal{C} = \sum_{i=1}^k \min \{\alpha, (n_I-h_i)\beta_I + (n-n_I-i+h_i)\beta_c\}.
\end{equation*}
where $h_i$ is defined in (\ref{Eqn:t_i^*}).
Using $\epsilon = \beta_c/\beta_I$ and (\ref{Eqn:gamma}), this can be rewritten as
\begin{equation}\label{Eqn:capacity_kappa_general}
\mathcal{C} = \sum_{i=1}^k \min \{\alpha, \frac{(n-n_I-i+h_i) \epsilon + (n_I - h_i)}{(n-n_I)\epsilon + (n_I-1)} \gamma \}.
\end{equation}

Using $\{z_t\}$ and $\{y_t\}$ defined in (\ref{Eqn:z_t}) and (\ref{Eqn:y_t}), the capacity expression reduces to   
%\begin{equation*}
%\mathcal{C} = \sum_{t=1}^{k} \min \{\alpha, \frac{z_t}{(n_I-1)+\epsilon(n-n_I)} \gamma \}.
%\end{equation*}
%Using the definition of $\{y_t\}$ in (\ref{Eqn:y_t}), it can be reduced to 
\begin{equation}\label{Eqn:capacity_simple}
\mathcal{C} (\alpha, \gamma)= \sum_{t=1}^{k} \min \{\alpha, \frac{\gamma}{y_t} \},
\end{equation}
which is a continuous function of $\gamma$.

\begin{remark}\label{Rmk:z_dec_y_inc}
$\{z_t\}$ in (\ref{Eqn:z_t}) is a decreasing sequence of $t$. Moreover, $\{y_t\}$ in (\ref{Eqn:y_t}) is an increasing sequence.
\end{remark}
\begin{proof}
Note that from (\ref{Eqn:t_i^*_reform}),
\begin{equation*}
h_{t+1} = 
\begin{cases}
h_t + 1, & t \in T \\
h_t, & t \in [k-1] \setminus T
\end{cases}
\end{equation*}
where
%\begin{equation*}
$T = \{g_1, g_1+g_2, \cdots, \sum_{m=1}^{n_I-1} g_m\}. $
%\end{equation*}
Therefore, $\{z_t\}$ in (\ref{Eqn:z_t}) is a decreasing function of $t$, which implies that $\{y_t\}$ is an increasing sequence.
\end{proof}

Moreover, note that 
%\begin{equation}\label{Eqn:beta_I_constraint}
$\beta_I \leq \alpha$
%\end{equation} 
holds from the definition of $\beta_I$ and $\alpha$ in Table \ref{Table:Params}. 
Thus, combined with $\epsilon = \beta_I/\beta_c$, it is shown that $\gamma$ in (\ref{Eqn:gamma}) is lower-bounded as
\begin{align*}
\gamma &= (n_I-1)\beta_I + (n-n_I)\beta_c  \nonumber\\
&= \{n_I - 1 + (n-n_I)\epsilon\} \beta_I \leq \{n_I - 1 + (n-n_I)\epsilon\} \alpha.
\end{align*}
Here, we define
\begin{equation*}
\overbar{\gamma} = \{n_I - 1 + (n-n_I)\epsilon\} \alpha.
\end{equation*} 
Then, the valid region of $\gamma$ is expressed as 
%\begin{equation*}
$\gamma \leq \overbar{\gamma}, $
%\end{equation*}
as illustrated in Figs. \ref{Fig:corollary_proof_1} and \ref{Fig:corollary_proof_2}.
The rest of the proof depends on the range of $\epsilon$ values; we first consider the 
$\frac{1}{n-k} \leq \epsilon \leq 1$ case, and then consider the $0 \leq \epsilon < \frac{1}{n-k}$ case.

\subsubsection{If $ \frac{1}{n-k} \leq \epsilon \leq 1$}

Using (\ref{Eqn:z_k}),
%\begin{equation*}
$z_k = (n-k)\epsilon \geq 1$
%\end{equation*}	 
holds.
Combining with (\ref{Eqn:y_t}), we have
%\begin{equation*}
$y_k \leq n_I-1 + \epsilon (n-n_I),$
%\end{equation*}
or equivalently,
%\begin{equation*}
$y_k \alpha \leq \overbar{\gamma}.$
%\end{equation*}
If $y_t \alpha  < \gamma \leq y_{t+1}\alpha$ for some $t \in [k-1]$, 
then (\ref{Eqn:capacity_simple}) can be expressed as 
\begin{align*}
\mathcal{C} (\alpha, \gamma)&= t\alpha + \sum_{m=t+1}^k \frac{\gamma}{y_m}  \\
&= t\alpha +  \frac{\gamma \ (\sum_{m=t+1}^k z_m)}{(n_I-1) + \epsilon (n-n_I)} = t\alpha + s_t \gamma
\end{align*}
where $\{s_t\}$ is defined in (\ref{Eqn:s_t}). 
If $0 \leq \gamma \leq y_1\alpha$, then 
\begin{align*}
\mathcal{C}(\alpha,\gamma) &= \sum_{m=1}^k \frac{\gamma}{y_m}  =  \frac{\gamma \ (\sum_{m=1}^k z_m)}{(n_I-1) + \epsilon (n-n_I)} = s_0 \gamma.
\end{align*}
If $y_k \alpha < \gamma \leq \overbar{\gamma}$, then 
%\begin{equation*}
$\mathcal{C}(\alpha, \gamma) = \sum_{m=1}^k \alpha = k\alpha. $
%\end{equation*}
In summary, capacity is
\begin{align}\label{Eqn:capacity_cases}
\mathcal{C}(\alpha, \gamma) &= 
\begin{cases}
s_0 \gamma, & 0 \leq \gamma \leq y_1\alpha \\
t\alpha + s_t \gamma, & y_t \alpha < \gamma \leq y_{t+1}\alpha \\
& \ \ \ \ (t = 1, 2, \cdots, k-1)\\
k\alpha, & y_k \alpha < \gamma \leq \overbar{\gamma}
\end{cases}
\end{align}
which is illustrated in Fig. \ref{Fig:corollary_proof_1}.
Since $\{z_t\}$ is a decreasing sequence from Remark \ref{Rmk:z_dec_y_inc}, we have
%\begin{equation*}
$	z_t \geq z_k = (n-k)\epsilon > 0$
%\end{equation*}
for $t \in [k]$. Thus, $\{s_t\}_{t=1}^k$ defined in (\ref{Eqn:s_t}) is a monotonically decreasing, non-negative sequence. 
This implies that the curve in Fig. \ref{Fig:corollary_proof_1} is a monotonic increasing function of $\gamma$.

\begin{figure}[!t]
	\centering
	\includegraphics[width=70mm]{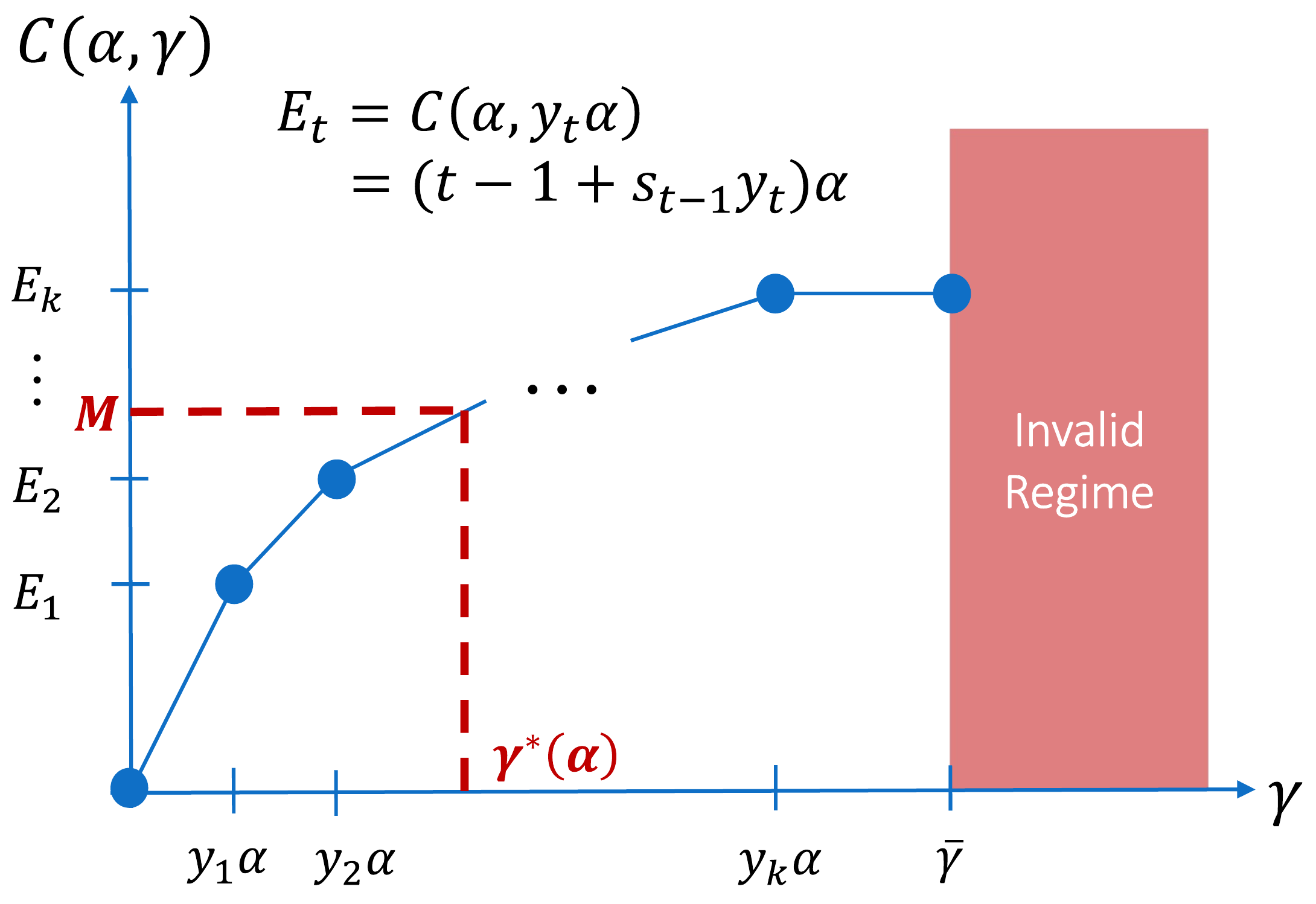}
	\caption{Capacity as a function of $\gamma$, for $\frac{1}{n-k} \leq \epsilon \leq 1$}
	\label{Fig:corollary_proof_1}
%	\vspace{-5mm}
\end{figure}

From Fig. \ref{Fig:corollary_proof_1}, it is shown that $\mathcal{C}(\alpha, \gamma) \geq \mathcal{M}$ holds if and only if $\gamma \geq \gamma^*(\alpha)$. From (\ref{Eqn:capacity_cases}), the threshold value $\gamma^*(\alpha)$ can be expressed as 
\begin{equation}\label{Eqn:gamma_th}
\gamma^*(\alpha) =
\begin{cases}
\frac{\mathcal{M}}{s_0}, & \mathcal{M} \in [0, E_1]\\
\frac{\mathcal{M}- t \alpha}{s_t}, & \mathcal{M} \in (E_{t}, E_{t+1}]\\
& \ \ \ \ (t = 1, 2, \cdots, k-1)\\
\infty, & \mathcal{M} \in (E_k, \infty).
\end{cases}
\end{equation} 
where 
\begin{equation}\label{Eqn:E_t}
E_t = \mathcal{C}(\alpha, y_t \alpha) = (t-1+s_{t-1}y_t) \alpha
\end{equation}
for $t \in [k]$.
%Note that 
%\begin{equation*}
%	y_1 = \frac{(n_I-1)+\epsilon(n-n_I)}{z_1} = 1
%\end{equation*}
%and
%\begin{align}\label{Eqn:useful_1}
%	1+s_1y_1 &= 1+ s_1 = 1+ \frac{\sum_{i=2}^k z_i}{(n_I-1)+\epsilon(n-n_I)} \nonumber\\
%	&= \frac{(n_I-1)+\epsilon(n-n_I) + \sum_{i=2}^k z_i}{(n_I-1)+\epsilon(n-n_I)}\nonumber\\
%	&= \frac{\sum_{i=1}^k z_i}{(n_I-1)+\epsilon(n-n_I)} = s_0 = s_0 y_1
%\end{align}
%hold. 
%Using (\ref{Eqn:useful_1}), $\{E_t\}$ in (\ref{Eqn:E_t}) 
%can be expressed as
%\begin{equation*}
%E_t = (t+s_ty_t) \alpha.
%\end{equation*}
The threshold value $\gamma^*(\alpha)$ in (\ref{Eqn:gamma_th}) can be expressed as (\ref{Eqn:Feasible Points Result}), which completes the proof.

\vspace{5mm}

\subsubsection{Otherwise (if $0 \leq \epsilon < \frac{1}{n-k}$)}

Using (\ref{Eqn:z_k}),
\begin{equation} \label{Eqn:z_k_small_epsilon}
z_k = (n-k)\epsilon < 1
\end{equation}	 
holds. Since $\{z_t\}$ is a decreasing sequence from Remark \ref{Rmk:z_dec_y_inc}, there exists $\tau \in \{0, 1, \cdots, k-1\}$ such that 
%\begin{equation*}
$z_{\tau+1} < 1 \leq z_{\tau}$
%\end{equation*}
holds, or equivalently,
%\begin{equation*}
$y_{\tau}\alpha \leq \overbar{\gamma} < y_{\tau+1}\alpha.$
%\end{equation*}

Using the analysis similar to the $\frac{1}{n-k} \leq \epsilon \leq 1$ case, we obtain 
\begin{align}\label{Eqn:capacity_cases_2}
\mathcal{C}(\alpha, \gamma) &= 
\begin{cases}
s_0 \gamma, & 0 \leq \gamma \leq y_1\alpha \\
t\alpha + s_t \gamma, & y_t \alpha < \gamma \leq y_{t+1}\alpha \\
& \ \ \ \ (t = 1, 2, \cdots, \tau-1)\\
\tau\alpha + s_{\tau} \gamma, & y_{\tau} \alpha < \gamma \leq \overbar{\gamma}
\end{cases}
\end{align}
which is illustrated in Fig. \ref{Fig:corollary_proof_2}.

\begin{figure}[!t]
	\centering
	\includegraphics[width=70mm]{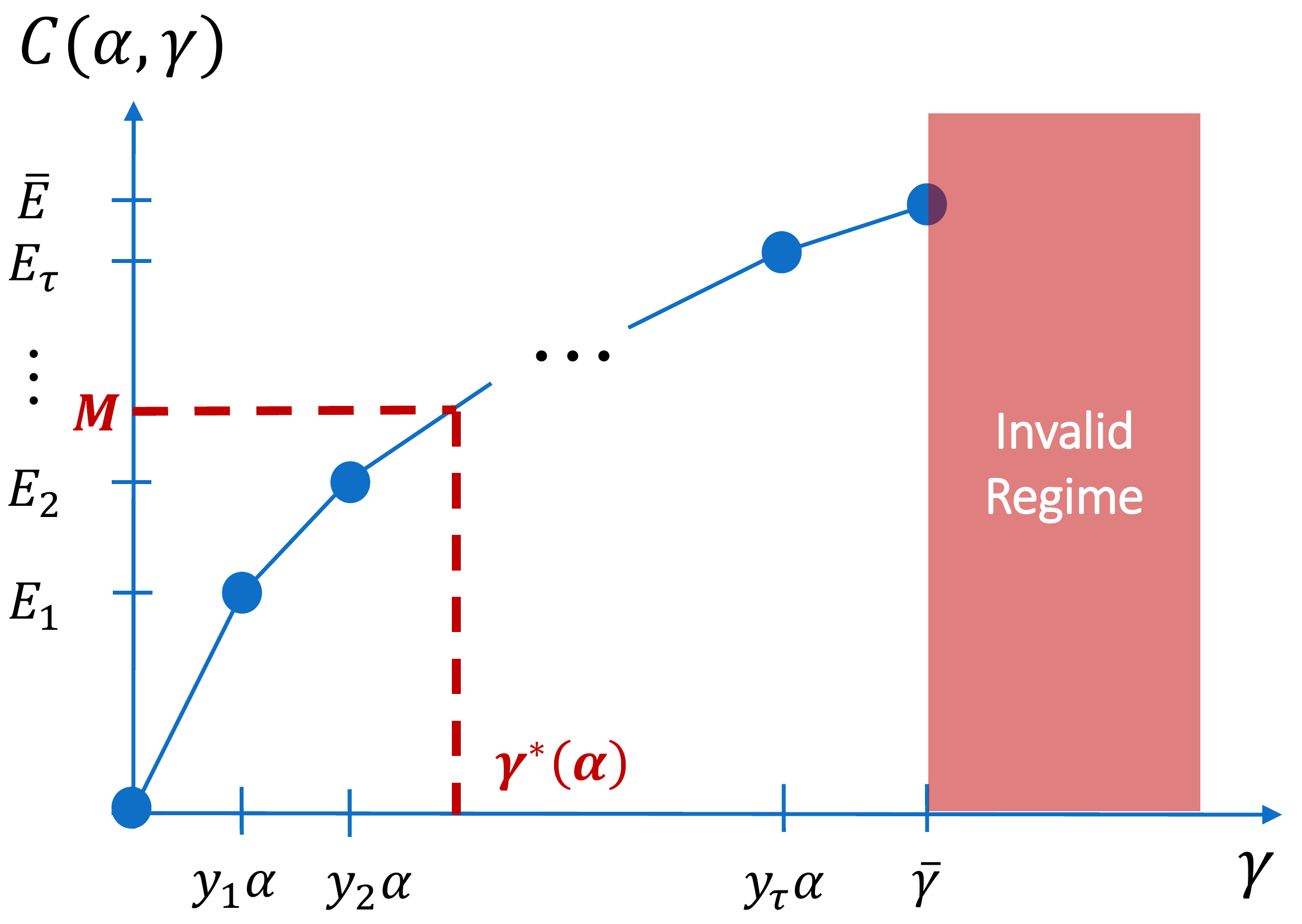}
	\caption{Capacity as a function of $\gamma$, for $0 \leq \epsilon < \frac{1}{n-k} $ case}
	\label{Fig:corollary_proof_2}
\end{figure}

From Fig. \ref{Fig:corollary_proof_2}, it is shown that $\mathcal{C}(\alpha, \gamma) \geq \mathcal{M}$ holds if and only if $\gamma \geq \gamma^*(\alpha)$. From (\ref{Eqn:capacity_cases_2}), the threshold value $\gamma^*(\alpha)$ can be expressed as 
\begin{equation}\label{Eqn:gamma_th_2}
\gamma^*(\alpha) =
\begin{cases}
\frac{\mathcal{M}}{s_0}, & \mathcal{M} \in [0, E_1]\\
\frac{\mathcal{M}- t \alpha}{s_t}, & \mathcal{M} \in (E_{t}, E_{t+1}]\\
& \ \ \ \ (t = 1, 2, \cdots, \tau-1)\\
\frac{\mathcal{M}- \tau \alpha}{s_{\tau}}, & \mathcal{M} \in (E_{\tau}, \overbar{E}]\\
\infty, & \mathcal{M} \in (\overbar{E}, \infty).
\end{cases}
\end{equation} 
where 
$\{E_t\}$ is defined in (\ref{Eqn:E_t}), and
\begin{equation*}
\overbar{E} = \mathcal{C}(\alpha, \overbar{\gamma}) = \tau \alpha + s_{\tau} \alpha \{n_I-1 + (n-n_I)\epsilon\}.
\end{equation*}
The threshold value $\gamma^*(\alpha)$ in (\ref{Eqn:gamma_th_2}) can be expressed as (\ref{Eqn:Feasible Points Result_intermediate_epsilon}), which completes the proof.

\subsection{Proof of Corollary \ref{Coro:msr_mbr_points}}\label{Section:Proof of Corollary_msr_mbr_points}

	First, we focus on the MSR point illustrated in Fig. \ref{Fig:MBR_MSR_points}. From (\ref{Eqn:Feasible Points Result}), the MSR point for $\frac{1}{n-k} \leq \epsilon \leq 1$ is 
	\begin{align}
	(\alpha_{msr}^{(\epsilon)}, \gamma_{msr}^{(\epsilon)}) &= (\frac{\mathcal{M}}{k+s_ky_k}, \frac{\mathcal{M} -  (k-1)\alpha_{msr}^{(\epsilon)}}{s_{k-1}}) \nonumber\\
	&= (\frac{\mathcal{M}}{k}, \frac{\mathcal{M}}{k} \frac{1}{s_{k-1}}) \label{Eqn:proof_of_corollary_MSR}
	\end{align}
	where the last equality is from $s_k = 0$ in (\ref{Eqn:s_t}). 
	Moreover, from (\ref{Eqn:Feasible Points Result_intermediate_epsilon}), the MSR point for $0 \leq \epsilon < \frac{1}{n-k}$ is 
	\begin{align}
	(\alpha_{msr}^{(\epsilon)}, \gamma_{msr}^{(\epsilon)}) &= (\frac{M}{\tau + \sum_{i=\tau+1}^{k}z_i}, \frac{\mathcal{M} -   \tau\alpha_{msr}^{(\epsilon)}}{s_{\tau}}) \nonumber\\
	&= (\frac{M}{\tau + \sum_{i=\tau+1}^{k}z_i}, \frac{M}{\tau + \sum_{i=\tau+1}^{k}z_i} \frac{\sum_{i=\tau+1}^{k}z_i}{s_{\tau}}). \label{Eqn:proof_of_corollary_MSR_2}
	\end{align}
	Equations (\ref{Eqn:proof_of_corollary_MSR}) and (\ref{Eqn:proof_of_corollary_MSR_2}) proves (\ref{Eqn:MSR_point}). 
	The expression (\ref{Eqn:MBR_point}) for the MBR point is directly obtained from Corollary \ref{Corollary:Feasible Points_large_epsilon} and Fig. \ref{Fig:MBR_MSR_points}.

\section{Proofs of Propositions}
\label{Section:proofs_of_remarks}

\subsection{Proof of Proposition \ref{Prop:max_helper_nodes}}\label{Section:max_helper_node_assumption}

As in (\ref{Eqn:Capacity_expression_minmin}), capacity $\mathcal{C}$ for maximum $d_I, d_c$ setting ($d_I = n_I-1, d_c = n-n_I$) is expressed as  
\begin{equation}
\mathcal{C} = \displaystyle\min_{\bm{s} \in S, \bm{\pi} \in \Pi(\bm{s})} L  (\bm{s},\bm{\pi})
\end{equation}
where 
\begin{align}
L  (\bm{s},\bm{\pi}) &= \sum_{i=1}^k \min\{\alpha, \omega_i(\bm{\pi})\}, \nonumber\\
\omega_i(\bm{\pi}) &= \gamma -  e_i(\bm{\pi}) \beta_I - (i - 1 - e_i(\bm{\pi})) \beta_c, \label{Eqn:omega_i_max}\\
e_i(\bm{\pi}) &= \sum_{j=1}^{i-1} \mathds{1}_{\pi_j = \pi_i},\nonumber
\end{align}
as in (\ref{Eqn:lower_bound}), (\ref{Eqn:weight vector}) and (\ref{Eqn:sum_beta_ji}).

Consider a general $d_I, d_c$ setting,
where each newcomer node is helped by $d_I$ nodes in the same cluster, receiving $\beta_I$ information from each node, and $d_c$ nodes in other clusters, receiving $\beta_c$ information from each node.
Under this setting, the coefficient of $\beta_I$ in (\ref{Eqn:omega_i_max}) cannot exceed $d_I$. Similarly, the coefficient of $\beta_c$ in (\ref{Eqn:omega_i_max}) cannot exceed $d_c$.  Thus, the capacity for general $d_I, d_c$ is expressed as
\begin{equation}\label{Eqn:capacity_general_helper_nodes}
\mathcal{C}(d_I, d_c) = \displaystyle\min_{\bm{s} \in S, \bm{\pi} \in \Pi(\bm{s})} \  L  (d_I, d_c, \bm{s},\bm{\pi})
\end{equation}
where 
\begin{align}
L  (d_I, d_c,\bm{s},\bm{\pi}) &= \sum_{i=1}^{k} \min \{\alpha, \omega_i(d_I, d_c, \bm{s}, \bm{\pi})\}, \nonumber\\
\omega_i(d_I, d_c, \bm{s}, \bm{\pi}) &= \gamma - \min\{d_I, e_i(\bm{\pi})\}\beta_I \nonumber\\
& \quad \quad - \min\{d_c, i-1-e_i(\bm{\pi})\} \beta_c,\label{Eqn:omega_i_general}\\
e_i(\bm{\pi}) &= \sum_{j=1}^{i-1} \mathds{1}_{\pi_j = \pi_i}.\nonumber
\end{align}

Consider arbitrary fixed $\bm{s}, \bm{\pi}$ and $d_c$. 
Since $\gamma$ and $\gamma_c=d_c\beta_c$ are fixed in the basic setting of Proposition \ref{Prop:max_helper_nodes}, only $d_I$ and $\beta_I$ are variables in (\ref{Eqn:omega_i_general}), while other parameters are constants.
Then, (\ref{Eqn:omega_i_general}) can be expressed as
\begin{align}
\omega_i(d_I, d_c, \bm{s}, \bm{\pi}) 
&= C_1 - \min\{d_I, e_i(\bm{\pi})\}\beta_I \nonumber\\
& = C_1 - \frac{\min\{d_I, e_i(\bm{\pi})\}}{d_I}C_2 \label{Eqn:omega_general_constants}
\end{align}
where
%\begin{align*}
$C_1 = \gamma - \min\{d_c, i-1-e_i(\bm{\pi})\} \beta_c $ and
$C_2 = \gamma_I = \gamma - \gamma_c$
%\end{align*}
are constants. 
Note that
\begin{equation}
\frac{\min\{d_I, e_i(\bm{\pi})\}}{d_I} = 
\begin{cases}
1, & \text{ if } d_I \leq e_i(\bm{\pi}), \\
\frac{e_i(\bm{\pi})}{d_I}, & \text{ otherwise }
\end{cases}
\end{equation}
is a non-increasing function of $d_I$. Thus, $\omega_i(d_I, d_c, \bm{s}, \bm{\pi})$ in (\ref{Eqn:omega_general_constants}) is a non-decreasing function of $d_I$ for arbitrary fixed $\bm{s}, \bm{\pi}, d_c$ and $i \in [k]$.
Since the maximum $d_I$ value is $n_I-1$, we have
\begin{align}
L(d_I,d_c,\bm{s},\bm{\pi}) &= \sum_{i=1}^{k} \min \{\alpha, \omega_i(d_I, d_c, \bm{s}, \bm{\pi})\} \nonumber\\ 
&\leq  \sum_{i=1}^{k} \min \{\alpha, \omega_i(n_I-1, d_c, \bm{s}, \bm{\pi})\} \nonumber\\ 
&= L(n_I-1,d_c,\bm{s},\bm{\pi})
\end{align}
for $d_I \in [n_I-1] $.
In other words, for all $\bm{s}, \bm{\pi}, d_c$, we have
\begin{equation*}
\underset{d_I \in [n_I-1]}{\arg \max}\ L(d_I,d_c,\bm{s},\bm{\pi}) = n_I-1.
%\arg\max_{d_I} L(d_I,d_c,\bm{s},\bm{\pi}) = n_I-1.
\end{equation*}
Similarly, for all $\bm{s}, \bm{\pi}, d_I$, 
\begin{equation*}
\underset{d_c \in [n-n_I]}{\arg \max}\  L(d_I,d_c,\bm{s},\bm{\pi}) = n-n_I
%\arg\max_{d_c} L(d_I,d_c,\bm{s},\bm{\pi}) = n-n_I
\end{equation*}
holds. Therefore, for all $\bm{s},\bm{\pi}$,
\begin{equation} \label{Eqn:lower_bound_argmax}
\underset{[d_I, d_c]}{\arg \max}\  L(d_I,d_c,\bm{s},\bm{\pi}) = [n_I-1, n-n_I].
%\arg\max_{[d_I, d_c]} L(d_I,d_c,\bm{s},\bm{\pi}) = [n_I-1, n-n_I]
\end{equation}
Let
\begin{equation}\label{Eqn:optimal_s_pi}
[\bm{s}^*, \bm{\pi}^*] = 
\underset{\bm{s} \in S, \bm{\pi} \in \Pi(\bm{s})}{\arg \min}\ \  L  (n_I-1, n-n_I, \bm{s},\bm{\pi}).
%\arg\min_{\bm{s} \in S, \bm{\pi} \in \Pi(\bm{s})} \  L  (n_I-1, n-n_I, \bm{s},\bm{\pi}).
\end{equation}
Then, from (\ref{Eqn:capacity_general_helper_nodes}), (\ref{Eqn:lower_bound_argmax}) and (\ref{Eqn:optimal_s_pi}),
\begin{align}
\mathcal{C}(d_I, d_c) &= \displaystyle\min_{\bm{s} \in S, \bm{\pi} \in \Pi(\bm{s})} \  L  (d_I, d_c, \bm{s},\bm{\pi}) \nonumber\\
& \leq L  (d_I, d_c, \bm{s}^*,\bm{\pi}^*) \leq L  (n_I-1, n-n_I, \bm{s}^*,\bm{\pi}^*) \nonumber\\
&= \displaystyle\min_{\bm{s} \in S, \bm{\pi} \in \Pi(\bm{s})} \  L  (n_I-1, n-n_I, \bm{s},\bm{\pi}) \nonumber\\
& = \mathcal{C}(n_I-1, n-n_I)
\end{align}
for all $d_I \in [n_I-1] $ and $d_c \in [n-n_I] $.
Therefore, choosing $d_I = n_I -1$ and $d_c = n-n_I$ maximizes storage capacity when the available resources, $\gamma$ and $\gamma_c$, are given.

\subsection{Proof of Proposition \ref{Prop:omega_i_bounded_by_gamma}}\label{Section:proof_of_omega_bound_gamma}

First, we prove (\ref{Eqn:omega_is_bounded_by_gamma}).
Recall $\rho_i$ and $g_m$ defined in (\ref{Eqn:rho_i}) and (\ref{Eqn:g_m}). 
Consider the \textit{support set} $S$, which is defined as
\begin{equation}
S = \{i \in [n_I] : g_i \geq 1  \}.\label{Eqn:Support_Set}
\end{equation}
Then, we have 
\begin{align}
\rho_i = n_I - i &\leq n_I - 1, \label{rho_i_inequality} \\
\sum_{m=1}^{i-1}g_m  &\geq i-1 \label{Eqn:t_ij_inequality}
\end{align}
for  every $i \in S$. 
%The inequality in (\ref{Eqn:t_ij_inequality}) is from the definition of $S$ in (\ref{Eqn:Support_Set}). 
Therefore, by combining (\ref{Eqn:t_ij_inequality}) and (\ref{Eqn:rho_i}), 
\begin{align}\label{phi_ij_inequality}
n-\rho_i-j-\sum_{m=1}^{i-1}g_m & \leq n- (i-1) - j - \rho_i \nonumber\\
&= n-n_I - (j-1) \leq n-n_I
\end{align}
holds for every $i \in S, j \in [g_i] $.
Combining (\ref{Eqn:gamma}), (\ref{rho_i_inequality}) and (\ref{phi_ij_inequality}) results in 
\begin{equation}\label{Eqn:omega_i_bounded_by_gamma}
\rho_i \beta_I + (n-\rho_i-j-\sum_{m=1}^{i-1}g_m)\beta_c \leq (n_I-1)\beta_I + (n-n_I)\beta_c = \gamma
\end{equation}
for arbitrary $i \in S, j \in [g_i] $.
Since $ [g_i]  = \emptyset$ holds for $i \in [n_I] \setminus S$, we conclude that (\ref{Eqn:omega_i_bounded_by_gamma}) holds for $(i,j)$ with $i \in [n_I]$, $j \in [g_i]$. 

Second, we prove (\ref{Eqn:sum of g is k}).	
Using $q$ and $r$ in (\ref{Eqn:quotient}) and (\ref{Eqn:remainder}),
\begin{equation}\label{Eqn:g_m_simple}
g_i = 
\begin{cases}
q+1, & i \leq r \\
q, & \text{otherwise}
\end{cases}
\end{equation}
Therefore, 
%\begin{equation*}
$\sum_{i=1}^{n_I}g_i = (q+1)r + q(n_I-r) = r + qn_I = k,$
%\end{equation*}
where the last equality is from (\ref{Eqn:quotient_remainder_relation}).

\section{Proofs of Lemmas}\label{Section:Proofs_of_Lemmas}

\subsection{Proof of Lemma \ref{Lemma:cap_upper_lower}}\label{Section:proof_of_prop_bound}

Using
%\begin{equation}
%\gamma = \gamma_I + \gamma_c = (n_I-1)\beta_I + (n-n_I)\beta_c
%\end{equation}
(\ref{Eqn:gamma}), $\underline{C}$ in (\ref{Eqn:cap_lower}) can be expressed as 
\begin{align}
\underline{C} &= \frac{k}{2} \left( (n_I-1)\beta_I + (n-n_I)(1+\frac{n-k}{n(1-1/L)}) \beta_c \right) \nonumber\\
&= \frac{k}{2} \left( (n_I-1)\beta_I + (n-n_I)(1+\frac{n-k}{n-n_I}) \beta_c \right) \nonumber\\
&= \frac{k}{2} \left\{ (n_I-1)\beta_I + (2n-n_I-k) \beta_c \right\}. \label{Eqn:cap_lower_revisited}
\end{align} 
According to (\ref{Eqn:lower_bound}) and (\ref{Eqn:capacity_final})  in Appendix B, the capacity can be expressed as
\begin{equation}\label{Eqn:cap}
\mathcal{C} = L(\bm{s}_h, \bm{\pi}_v) = \sum_{i=1}^{k} \min \{\alpha, \omega_i (\bm{\pi}_{v} (\bm{s}_h))\}.
\end{equation}
%or simlply written as 
%\begin{equation}\label{Eqn:cap}
%\mathcal{C} = \sum_{i=1}^{k} \min \{\alpha, \omega_i \},
%\end{equation}
From  (\ref{Eqn:weight vector}), we have 
\begin{equation}
\omega_i (\bm{\pi}_{v} (\bm{s}_h))  \leq \gamma
\end{equation}
for $i \in [k]$. 
Therefore, when $\alpha = \gamma$, the capacity expression in (\ref{Eqn:cap}) reduces to
\begin{equation}\label{Eqn:cap_bw_limited}
\mathcal{C} = \sum_{i=1}^k \omega_i (\bm{\pi}_{v} (\bm{s}_h)).
\end{equation}
Recall (\ref{Eqn:Lemma_2_proof_4}):
\begin{equation}\label{Eqn:opt_omega_i_revisited}
\omega_i(\bm{\pi}_v(\bm{s}_h)) = (n_I - h_i) \beta_I + (n-i-n_I + h_i) \beta_c.
\end{equation}
From the expression of $(h_i)_{i=1}^k$ in (\ref{Eqn:t_i^*_cases}), 
we have
\begin{equation}\label{Eqn:A_0_part}
\sum_{i=1}^k (n_I - h_i)  = \sum_{s=1}^{n_I} g_s (n_I-s)
\end{equation}
since $\sum_{m=1}^k g_m = k$ from (\ref{Eqn:sum of g is k}), 

Using (\ref{Eqn:opt_omega_i_revisited}) and (\ref{Eqn:A_0_part}), 
the capacity expression in (\ref{Eqn:cap_bw_limited}) is 
expressed as
\begin{align}\label{Eqn:cap_bw_limited_factored}
\mathcal{C} &= \left( \sum_{i=1}^k (n_I-h_i) \right) \beta_I + \left( \sum_{i=1}^k (n-i - (n_I-h_i)) \right) \beta_c \nonumber\\
&= A_0 \beta_I + B_0 \beta_c
\end{align}
where $A_0 = \sum_{s=1}^{n_I} g_s (n_I-s)$ and
\begin{align}
%A_0 &= \sum_{s=1}^{n_I} g_s (n_I-s), \label{Eqn:A_0}\\
B_0 &= \sum_{i=1}^k (n-i) - A_0. \label{Eqn:B_0}
\end{align}
Similarly, $\underline{C}$ in (\ref{Eqn:cap_lower_revisited}) can be expressed as 
\begin{equation}\label{Eqn:cap_underbar_factored}
\underline{C} = A_0'\beta_I + B_0'\beta_c
\end{equation}
where 
$A_0' = (n_I-1)\frac{k}{2}$ and
\begin{align}
%A_0' &= (n_I-1)\frac{k}{2}, \nonumber\\
B_0' &= (2n-n_I-k)\frac{k}{2} = \sum_{i=1}^{k}(n-i)-A_0'. \label{Eqn:B_0_prime}
\end{align}

First, we show that $\mathcal{C} \geq \underline{C}$ holds.
From (\ref{Eqn:cap_bw_limited_factored}), (\ref{Eqn:B_0}),  (\ref{Eqn:cap_underbar_factored}) and (\ref{Eqn:B_0_prime}), 
\begin{align}\label{Eqn:c-c'}
\mathcal{C} &- \underline{C} = (A_0 - A_0')\beta_I + (B_0 - B_0')\beta_c \nonumber\\
&= (A_0 - A_0')\beta_I - (A_0 - A_0')\beta_c 	= (A_0-A_0') (\beta_I - \beta_c).
\end{align}
Since we consider the $\beta_I \geq \beta_c $ case, all we need to prove is 
\begin{equation*}
A_0-A_0' \geq 0.
\end{equation*}
Using $(g_i)_{i=1}^k$ expression in (\ref{Eqn:g_m_simple}), 
$A_0$ %in (\ref{Eqn:A_0}) 
can be rewritten as 
%\begin{equation} \nonumber
%p_i^* =
%\begin{cases}
%\lfloor \frac{k}{n_I} \rfloor +1, & i \leq mod(k,n_I) \\
%\lfloor \frac{k}{n_I}, & otherwise
%\end{cases}
%\end{equation}
\begin{align*} 
A_0 & = \sum_{s=1}^{n_I}g_s(n_I-s)= q \sum_{s=1}^{n_I} (n_I-s) +  \sum_{s=1}^{r} (n_I-s) \nonumber\\
&= q \frac{n_I(n_I-1)}{2} +\sum_{s=1}^{r} (n_I-s) \nonumber \\
&= q \frac{n_I(n_I-1)}{2} + \frac{r((n_I-1)+(n_I-r))}{2} \nonumber \\
&= \left(qn_I + r\right) \frac{n_I-1}{2} + \frac{r(n_I-r) }{2} \nonumber\\
& = \frac{k(n_I-1)}{2} + \frac{r(n_I-r) }{2} = A_0' + \frac{r(n_I-r) }{2}.
\end{align*}
Since $0 \leq r < n_I$, we have
\begin{equation} \label{A-A'}
A_0-A_0' = \frac{r(n_I-r)}{2} \geq 0.
\end{equation}
Therefore, $\mathcal{C} \geq \underline{C} $ holds.

Second, we prove  $\mathcal{C} \leq \underline{C} + n_I^2(\beta_I-\beta_c)/8$. Note that $A_0-A_0'$ in (\ref{A-A'}) is maximized when $r = \lfloor n_I/2 \rfloor$ holds.
Thus, %$A_0-A_0'$ is upper bounded by
%\begin{equation} \nonumber
$A_0-A_0' \leq \frac{\lfloor n_I/2 \rfloor (n_I - \lfloor n_I/2 \rfloor)}{2} \leq n_I^2/8.$
%\end{equation}
Combining with (\ref{Eqn:c-c'}),
\begin{align*}
\mathcal{C}  - \underline{C} &= (A_0-A_0')(\beta_I-\beta_c) \leq n_I^2(\beta_I-\beta_c)/8.
\end{align*} 
%which completes the proof.
% of $\mathcal{C} \leq \underline{C} + n_I^2(\beta_I-\beta_c)/8$.

%\textcolor{red}{Not here...}
%
%We now prove that $\underline{C}$ is a monotonically decreasing function of $L$. Consider simplified notations,
%\begin{align*}
%\omega_1^* &\triangleq \omega_1(\bm{\pi}_v(\bm{s}_h))  \nonumber\\
%\omega_k^* &\triangleq \omega_k(\bm{\pi}_v(\bm{s}_h)) 
%\end{align*}
%From (\ref{Eqn:t_i^*_reform}), (\ref{Eqn:Lemma_2_proof_4}) and (\ref{Eqn:h_k}),
%\begin{align*}
%\omega_1^* &= (n_I-1)\beta_I + (n-n_I)\beta_c = \gamma,  \nonumber\\
%\omega_k^* &= (n-k)\beta_c 
%\end{align*}
%hold. Thus, $\underline{C}$ in (\ref{Eqn:cap_underbar}) can be expressed as
%\begin{align}\label{Eqn:cap_underbar_revisited}
%\underline{C} &= \frac{k}{2} (\omega_1^* + \omega_k^*)  = \frac{k}{2} (\gamma + (n-k)\beta_c)  \nonumber\\
%&= \frac{k}{2} ( \gamma + \frac{n-k}{n(1-\frac{1}{L}) }\gamma_c ),
%\end{align}
%where the last equality is from $\gamma_c = (n-n_I)\beta_c$.
%Thus, for given $n, k$ and $\gamma$, $\underline{C}$ is a monotonic decreasing function of $L$ with scale $O(1)$.  

\subsection{Proof of Lemma \ref{Lemma:Scale}}\label{Section:proof_of_prop_scale}
	
\cmt{Recall that $\xi= \gamma_c/\gamma, \gamma$ and $R = k/n$ value are all fixed.}
The expression for $\underbar{C}$ in (\ref{Eqn:cap_lower}) can be expressed as 
\begin{align}
\underline{C}  &=  \frac{k}{2} ( \gamma + \frac{n-k}{n(1-\frac{1}{L}) }\gamma_c ) =  \frac{nR}{2} ( \gamma + \frac{1-R}{1-\frac{1}{L}}\gamma_c ) \nonumber\\ 
&=  \gamma \frac{nR}{2} ( 1 + \frac{1-R}{(1-\frac{1}{L}) }\xi) \label{Eqn:C_underbar_monotone} 
\end{align}
Note that (\ref{Eqn:C_underbar_monotone}) is a monotonic decreasing function of $L$. Moreover, we consider the $L \geq 2$ case, as mentioned in (\ref{Eqn:L_constraint}). Thus, $\underline{C}$ is upper/lower bounded by expressions for $L=2$ and $L=\infty$, respectively:
\begin{equation}
\gamma \frac{nR}{2} ( 1 + (1-R)\xi) < \underline{C} \leq \gamma \frac{nR}{2} ( 1 + 2(1-R)\xi).
\end{equation}
Therefore, 
%\begin{equation}
$\underline{C} = \Theta(n)$
%\end{equation}
holds.
Moreover, the expression for $\delta$ in (\ref{Eqn:cap_delta_val}) is 
\begin{align}\nonumber
\delta & = \frac{n_I^2(\beta_I-\beta_c)}{8} = \frac{n_I^2}{8}(\frac{\gamma_I}{n_I-1} - \frac{\gamma_c}{n-n_I}) \nonumber \\
&= \frac{n_I^2}{8}(\frac{\gamma (1-\xi)}{n_I-1} - \frac{ \gamma\xi}{n-n_I}) \nonumber\\
%= \frac{n_I^2}{8} \frac{\gamma}{\xi}(\frac{\xi -1}{n_I-1} - \frac{1}{n-n_I}) \nonumber \\
&= \frac{n_I^2 \gamma}{8}  \frac{(1-\xi)(n-n_I) - \xi(n_I-1)}{(n_I-1) (n-n_I)}. \label{Eqn:delta}
\end{align}
Putting $n = n_IL$ into (\ref{Eqn:delta}), we get
\begin{align}
\delta &= \frac{n_I \gamma}{8} \frac{(1 - \xi)n_I(L-1) - \xi(n_I-1)}{(n_I-1)(L-1)} %\nonumber\\
%&=  \frac{n_I \gamma}{8\xi} \frac{(\xi-1)n_I L - (\xi-1)n_I - (n_I-1)}{(n_I -1)L - (n_I-1)} 
= O(n_I). \nonumber
\end{align}

% you can choose not to have a title for an appendix
% if you want by leaving the argument blank
%\section{}
%Appendix two text goes here.

% use section* for acknowledgment
%\section*{Acknowledgment}
%
%
%The authors would like to thank...

% Can use something like this to put references on a page
% by themselves when using endfloat and the captionsoff option.
\ifCLASSOPTIONcaptionsoff
  \newpage
\fi

% trigger a \newpage just before the given reference
% number - used to balance the columns on the last page
% adjust value as needed - may need to be readjusted if
% the document is modified later
%\IEEEtriggeratref{8}
% The "triggered" command can be changed if desired:
%\IEEEtriggercmd{\enlargethispage{-5in}}

% references section

% can use a bibliography generated by BibTeX as a .bbl file
% BibTeX documentation can be easily obtained at:
% http://mirror.ctan.org/biblio/bibtex/contrib/doc/
% The IEEEtran BibTeX style support page is at:
% http://www.michaelshell.org/tex/ieeetran/bibtex/
%\bibliographystyle{IEEEtran}
% argument is your BibTeX string definitions and bibliography database(s)
%\bibliography{IEEEabrv,../bib/paper}
%
% <OR> manually copy in the resultant .bbl file
% set second argument of \begin to the number of references
% (used to reserve space for the reference number labels box)
%\begin{thebibliography}{1}
%
%\bibitem{IEEEhowto:kopka}
%H.~Kopka and P.~W. Daly, \emph{A Guide to \LaTeX}, 3rd~ed.\hskip 1em plus
%  0.5em minus 0.4em\relax Harlow, England: Addison-Wesley, 1999.
%
%\end{thebibliography}
\bibliographystyle{IEEEtran}
\bibliography{IEEEabrv,IEEE_T_IT2017}

% biography section
% 
% If you have an EPS/PDF photo (graphicx package needed) extra braces are
% needed around the contents of the optional argument to biography to prevent
% the LaTeX parser from getting confused when it sees the complicated
% \includegraphics command within an optional argument. (You could create
% your own custom macro containing the \includegraphics command to make things
% simpler here.)
%\begin{IEEEbiography}[{\includegraphics[width=1in,height=1.25in,clip,keepaspectratio]{mshell}}]{Michael Shell}
% or if you just want to reserve a space for a photo:

\renewenvironment{IEEEbiography}[1]
{\IEEEbiographynophoto{#1}}
{\endIEEEbiographynophoto}

\begin{IEEEbiography}{Jy-yong Sohn}
	(S'15) received the B.S. and M.S. degrees in electrical engineering from the Korea Advanced Institute of Science and Technology (KAIST), Daejeon, Korea, in 2014 and 2016, respectively. He is currently pursuing the Ph.D. degree in KAIST. His research interests include coding for distributed storage and distributed computing, massive MIMO effects on wireless multi cellular system and 5G Communications. He is a recipient of the IEEE international conference on communications (ICC) best paper award in 2017.
\end{IEEEbiography}

\begin{IEEEbiography}{Beongjun Choi}
	(S'17) received the B.S. and M.S. degrees in mathematics and electrical engineering from the Korea Advanced Institute of Science and Technology (KAIST), Daejeon, Korea, in 2014 and 2017. He is currently pursuing the electrical engineering Ph.D degree in KAIST. His research interests include error-correcting codes, distributed storage system and information theory. He is a co-recipient of the IEEE international conference on communications (ICC) best paper award in 2017.
\end{IEEEbiography}

\begin{IEEEbiography}{Sung Whan Yoon}
	(M'17) received the M.S. and Ph.D. degrees in electrical engineering from the Korea Advanced Institute of Science and Technology (KAIST), Daejeon, South Korea, in 2013 and 2017 respectively. He is currently a postdoctoral researcher in KAIST from 2017.
	His research interests are in the area of coding theory, distributed system and artificial intelligence, with focusing on polar codes, distributed storage system and meta-learning algorithm of neural network. Especially for the area of artificial intelligence, his primary interests include information theoretic analysis and algorithmic development of meta-learning. He was a co-recipient of the IEEE International Conference on Communications best Paper Award in 2017.
\end{IEEEbiography}

\begin{IEEEbiography}{Jaekyun Moon}
	(F'05) received the Ph.D degree in electrical and computer engineering at Carnegie Mellon University, Pittsburgh, Pa, USA. He is currently a Professor of electrical engineering at KAIST. From 1990 through early 2009, he was with the faculty of the Department of Electrical and Computer Engineering at the University of Minnesota, Twin Cities. He consulted as Chief Scientist for DSPG, Inc. from 2004 to 2007. He also worked as Chief Technology Officer at Link-A-Media Devices Corporation. His research interests are in the area of channel characterization, signal processing and coding for data storage and digital communication. Prof. Moon received the McKnight Land-Grant Professorship from the University of Minnesota. He received the IBM Faculty Development Awards as well as the IBM Partnership Awards. He was awarded the National Storage Industry Consortium (NSIC) Technical Achievement Award for the invention of the maximum transition run (MTR) code, a widely used error-control/modulation code in commercial storage systems. He served as Program Chair for the 1997 IEEE Magnetic Recording Conference. He is also Past Chair of the Signal Processing for Storage Technical Committee of the IEEE Communications Society. He served as a guest editor for the 2001 IEEE JSAC issue on Signal Processing for High Density Recording. He also served as an Editor for IEEE TRANSACTIONS ON MAGNETICS in the area of signal processing and coding for 2001-2006. He is an IEEE Fellow.
\end{IEEEbiography}\vfill

% that's all folks
\end{document}